\newcommand{\authcount}[1]{}
\newcommand{\reduces}{\ensuremath{\stackrel{*}{\rightarrow}}}
\newcommand{\contr}{\ensuremath{\rightarrow}}
\newcommand{\ipreduces}{\ensuremath{\stackrel{*}{\Rightarrow}}}
\newcommand{\ipcontr}{\ensuremath{\Rightarrow}}
\newcommand{\ipeqvred}{\ensuremath{\stackrel{*}{\Leftrightarrow}}}
\newcommand{\trsrule}[2]{\ensuremath{#1\rightarrow#2}}
\newcommand{\valuation}[3]{\ensuremath{\llbracket#1\rrbracket_{#2}^{#3}}}
\newcommand{\set}[2]{\ensuremath{\{ #1 \;|\; #2 \}}}
\newcommand{\pos}[2]{\ensuremath{{#1}_{|#2}}}
\newcommand{\upos}[2]{\ensuremath{{#1}^{|#2}}}
\newcommand{\cut}[2]{\ensuremath{{#1}_{\upharpoonright#2}}}
\newcommand{\proves}{\ensuremath{\vdash}}
\newcommand{\pair}[2]{\langle#1,#2\rangle}
\newcommand{\transl}[1]{\ensuremath{\lceil#1\rceil}}
\newcommand{\Nbb}{\ensuremath{\mathbb{N}}}
\newcommand{\Pos}{\mbox{\cal Pos}}
\newcommand{\rank}{\ensuremath{\mathrm{rank}}}
\newcommand{\Type}{\ensuremath{\mathrm{Type}}}
\newcommand{\Prop}{\ensuremath{\mathrm{Prop}}}
\newcommand{\Bool}{\ensuremath{\mathrm{Bool}}}
\newcommand{\Nat}{\ensuremath{\mathrm{Nat}}}
\newcommand{\Cond}[3]{\ensuremath{\mathrm{Cond}\,#1\,#2\,#3\,}}
\newcommand{\Eq}[2]{\ensuremath{\mathrm{Eq}\,#1\,#2\,}}
\newcommand{\Eql}[3]{\ensuremath{\mathrm{Eql}\,#1\,#2\,#3\,}}
\newcommand{\T}{\ensuremath{{\cal T}}}
\newcommand{\A}{\ensuremath{{\cal A}}}
\newcommand{\B}{\ensuremath{{\cal B}}}
\newcommand{\I}{\ensuremath{{\cal I}}}
\newcommand{\Tc}{\ensuremath{{\cal T}}}
\newcommand{\Ac}{\ensuremath{{\cal A}}}
\newcommand{\Bc}{\ensuremath{{\cal B}}}
\newcommand{\Sc}{\ensuremath{{\cal S}}}
\newcommand{\Fc}{\ensuremath{{\cal F}}}
\newcommand{\Cc}{\ensuremath{{\cal C}}}
\newcommand{\Dc}{\ensuremath{{\cal D}}}
\newcommand{\Oc}{\ensuremath{{\cal O}}}
\newcommand{\Mc}{\ensuremath{{\cal M}}}
\newcommand{\Kc}{\ensuremath{{\cal K}}}
\newcommand{\true}{\ensuremath{\mathbf{t}}}
\newcommand{\false}{\ensuremath{\mathbf{f}}}
\renewcommand{\tableofcontents}{
  \section*{Table of Contents}
  \@starttoc{toc}
}
\newcommand{\rmnum}[1]{\romannumeral #1}
\newcommand{\Rmnum}[1]{\expandafter\@slowromancap\romannumeral #1@}
\begin{document}

\title{Partiality and Recursion in Higher-Order Logic}

\author{\L{}ukasz Czajka}
\institute{
  Institute of Informatics, University of Warsaw\\
  Banacha 2, 02-097 Warszawa, Poland\\
  \email{lukaszcz@mimuw.edu.pl}
}

\maketitle

\begin{abstract}
  We present an illative system~$\I_s$ of classical higher-order logic
  with subtyping and basic inductive types. The system~$\I_s$ allows
  for direct definitions of partial and general recursive functions,
  and provides means for handling functions whose termination has not
  been proven. We give examples of how properties of some recursive
  functions may be established in our system. In a technical appendix
  to the paper we prove consistency of~$\I_s$. The proof is by model
  construction. We then use this construction to show conservativity
  of~$\I_s$ over classical first-order logic. Conservativity over
  higher-order logic is conjectured, but not proven.
\end{abstract}

\begin{paragraph}
  {\bf Note:} This paper is an extended technical report based on a
  conference paper with the same title published by Springer-Verlag in
  the proceedings of FoSSaCS~2013. The final publication is available
  at \mbox{springerlink.com}.
\end{paragraph}

%\dominitoc
\tableofcontents
%\listoffigures
%\listoftables

\renewcommand{\T}{\ensuremath{\mathbb T}}
\renewcommand{\reduces}{\ensuremath{\twoheadrightarrow}}
\newcommand{\Is}[2]{\ensuremath{\mathrm{Is}\,#1\,#2\,}}
\newcommand{\uipreduces}[1]{\ensuremath{\stackrel{\;\;*\;#1}{\Rightarrow\;\;}}}
\newcommand{\udipreduces}[2]{\ensuremath{\stackrel{*\;#1}{\Rightarrow_{#2}\;\,}}}
\newcommand{\of}{\ensuremath{\mathfrak{o}}}
\newcommand{\cf}{\ensuremath{\mathfrak{c}}}
\newcommand{\df}{\ensuremath{\mathfrak{d}}}

\newcommand{\zero}{\ensuremath{0}}
\newcommand{\iszero}{\ensuremath{\mathfrak{0}}}
\newcommand{\s}{\ensuremath{\mathfrak{s}}}
\newcommand{\p}{\ensuremath{\mathfrak{p}}}

\newcommand{\Op}{\ensuremath{\mathrm{Op}}}
\newcommand{\All}{\ensuremath{\mathrm{A}}}
\newcommand{\Set}{\ensuremath{\mathrm{S}}}

\section{Introduction}

We present an illative $\lambda$-calculus system~$\I_s$ of classical
higher-order logic with subtyping and basic inductive types. Being
illative means that the system is a combination of higher-order logic
with the \emph{untyped} $\lambda$-calculus. It therefore allows for
unrestricted recursive definitions directly, including definitions of
possibly non-terminating partial functions. We believe that this
feature of~$\I_s$ makes it potentially interesting as a logic for an
interactive theorem prover intended to be used for program
verification.

%% In order to ensure consistency, most popular proof assistants allow
%% only total functions, and totality must be ensured by the user, either
%% by very precise specifications of function domains, restricting
%% recursion in a way that guarantees termination, explicit
%% well-foundedness proofs, or other means. There are various indirect
%% ways of dealing with general recursion in popular theorem provers
%% based on total logics. There are also many non-standard logics
%% allowing partial functions directly. We briefly survey some related
%% work in Sect.~\ref{sec_related}.

Most popular proof assistants allow only total functions, and totality
must be ensured by the user, either by very precise specifications of
function domains, restricting recursion in a way that guarantees
termination, explicit well-foundedness proofs, or other means.

Obviously, there is a reason why most proof assistants do not handle
partial functions directly. This is to ensure consistency of the
system. Combining an expressive higher-order logic with unrestricted
recursion is a non-trivial problem.

There are various indirect ways of dealing with general recursion in
popular theorem provers based on total logics. There are also many
non-standard logics allowing partial functions directly. We discuss
some related work in Sect.~\ref{sec_related}.

In Sect.~\ref{sec_I_s} we introduce the system~$\I_s$. Our approach
builds on the old tradition of illative combinatory logic
\cite{illat01,Seldin2009,Czajka2013Accepted,Czajka2011}. This
tradition dates back to early inconsistent systems of Sh\"onfinkel,
Church and Curry proposed in the 1920s and the
1930s~\cite{Seldin2009}. However, after the discovery of paradoxes
most logicians abandoned this approach. A notable exception was
Haskell Curry and his school, but not much progress was made in
establishing consistency of illative systems strong enough to
interpret traditional logic. Only in the 1990s some first-order
illative systems were shown consistent and complete for traditional
first-order logic~\cite{illat01,illat02}. The system~$\I_s$, in terms
of the features it provides, may be considered an extension of the
illative system~$\I_\omega$ from~\cite{Czajka2013Accepted}. We briefly
discuss the relationship between~$\I_s$ and~$\I_\omega$ in
Sect.~\ref{sec_related}.

Because~$\I_s$ is based on the untyped $\lambda$-calculus, its
consistency is obviously open to doubt. In an appendix we give a proof
by model construction of consistency of~$\I_s$. Unfortunately, the
proof is too long to fit within the page limits of a conference
paper. In Sect.~\ref{sec_consistent} we give a general overview of the
proof. The model construction is similar to the one
from~\cite{Czajka2013Accepted} for the traditional illative
system~$\I_\omega$. It is extended and adapted in a non-trivial way to
account for additional features of~$\I_s$. To our knowlege~$\I_s$ is
the first higher-order illative system featuring subtypes and some
form of induction, for which there is a consistency proof.

In Sect.~\ref{sec_partiality} we provide examples of proofs in~$\I_s$
indicating possible applications of our approach to the problem of
dealing with partiality, non-termination and general recursion in
higher-order logic. We are mainly interested in partiality arising
from non-termination of non-well-founded recursive definitions.

For lack of space we omit proofs of most of the lemmas and theorems we
state. The proofs of non-trivial results may be found in technical
appendices to this paper.

\section{The Illative System}\label{sec_I_s}

In this section we present the system~$\I_s$ of illative classical
higher-order logic with subtyping and derive some of its basic
properties.

\begin{definition}\label{def_I_s} \rm

  The system $\I_s$ consists of the following.

  \begin{itemize}
  \item A countably infinite set of variables $V_s = \{x, y, z, \ldots
    \}$ and a set of constants~$\Sigma_{s}$.
  \item The set of sorts $\Sc = \{\Type, \Prop\}$.
  \item The set of \emph{basic inductive types}~$\Tc_I$ is defined
    inductively by the rule: if
    $\iota_{1,1},\ldots,\iota_{1,n_1},\ldots,\iota_{m,1},\ldots,\iota_{m,n_m}
    \in \Tc_I \cup \{\star\}$ then
    \[
    \mu(\langle \iota_{1,1},\ldots,\iota_{1,n_1} \rangle, \ldots,
    \langle \iota_{m,1},\ldots,\iota_{m,n_m} \rangle) \in \Tc_I
    \]
    where $m \in \Nbb_+$ and $n_1,\ldots,n_m \in \Nbb$.
  \item We define the sets of \emph{constructors}~$\Cc$,
    \emph{destructors}~$\Dc$, and \emph{tests}~$\Oc$ as follows. For
    each $\iota \in \Tc_I$ of the form
    \[
      \iota = \mu(\langle \iota_{1,1},\ldots,\iota_{1,n_1} \rangle,
      \ldots, \langle \iota_{m,1},\ldots,\iota_{m,n_m} \rangle) \in
      \Tc_I
    \]
    where $\iota_{i,j} \in \Tc_I \cup \{\star\}$, the set~$\Cc$
    contains~$m$ distinct constants~$c_1^\iota,\ldots,c_m^\iota$. The
    number~$n_i$ is called the \emph{arity} of~$c_i^\iota$, and
    $\langle \iota_{i,1},\ldots,\iota_{i,n_i} \rangle$ is its
    \emph{signature}. With each $c_i^\iota \in \Cc$ of arity~$n_i$ we
    associate~$n_i$ distinct
    destructors~$d_{i,1}^\iota,\ldots,d_{i,n_i}^\iota \in \Dc$ and one
    test $o_i^\iota \in \Oc$. When we use the symbols $c_i^\iota$,
    $o_i^\iota$ and $d_{i,j}^\iota$ we implicitly assume that they
    denote the constructors, tests and destructors associated
    with~$\iota$. When it is clear from the context which type~$\iota$
    is meant, we use the notation~$\iota_{i,j}^*$ for~$\iota_{i,j}$ if
    $\iota_{i,j} \ne \star$, or for~$\iota$ if $\iota_{i,j} = \star$.
  \item The set of \emph{$\I_s$-terms}~$\T$ is defined by the
    following grammar.
    \begin{eqnarray*}
      \T &::=& V_s \;|\; \Sigma_{s} \;|\; \Sc \;|\; \Cc \;|\; \Dc
      \;|\; \Oc \;|\; \Tc_I \;|\; \lambda V_s \,.\, \T \;|\; (\T \T)
      \;|\; \mathrm{Is} \;|\; \mathrm{Subtype} \;|\; \mathrm{Fun}
      \;|\; \\ && \forall \;|\; \vee \;|\; \bot \;|\; \epsilon \;|\;
      \mathrm{Eq} \;|\; \mathrm{Cond}
    \end{eqnarray*}
    We assume application associates to the left and omit spurious
    brackets.
  \item We identify $\alpha$-equivalent terms, i.e. terms differing
    only in the names of bound variables are considered identical. We
    use the symbol~$\equiv$ for identity of terms up to
    $\alpha$-equivalence. We also assume without loss of generality
    that all bound variables in a term are distinct from the free
    variables, unless indicated otherwise.\footnote{So e.g. in the
      axiom~$\beta$ the free variables of~$t_2$ do not become bound in
      $t_1[x/t_2]$.}
  \item In what follows we use the abbreviations:
    \begin{eqnarray*}
      t_1 : t_2 &\equiv& \mathrm{Is}\, t_1\, t_2 \\
      \set{x : \alpha}{\varphi} &\equiv& \mathrm{Subtype}\, \alpha\,
      \lambda x \,.\, \varphi \\
      \alpha \to \beta &\equiv& \mathrm{Fun}\, \alpha\, \beta \\
      \forall x : \alpha \,.\, \varphi &\equiv& \forall\, \alpha\, \lambda x
      \,.\, \varphi \\
      \forall x_1,\ldots,x_n : \alpha \,.\, \varphi &\equiv& \forall x_1 :
      \alpha \,.\, \ldots \forall x_n : \alpha \,.\, \varphi \\
      \varphi \supset \psi &\equiv& \forall x : \set{y : \Prop}{\varphi}
      \,.\, \psi \;\; \mathrm{where\ } x, y \notin FV(\varphi, \psi)
      \\
      \neg \varphi &\equiv& \varphi \supset \bot \\
      \top &\equiv& \bot \supset \bot \\
      \varphi \vee \psi &\equiv& \vee \varphi \psi \\
      \varphi \wedge \psi &\equiv& \neg (\neg \varphi \vee \neg \psi) \\
      \exists x : \alpha \,.\, \varphi &\equiv& \neg \forall x : \alpha
      \,.\, \neg \varphi
    \end{eqnarray*}
    We assume that~$\neg$ has the highest precedence.
  \item The system $\I_s$ is given by the following rules and axioms,
    where $\Gamma$ is a finite set of terms, $t, \varphi, \psi,
    \alpha, \beta$, etc. are arbitrary terms. The notation $\Gamma,
    \varphi$ is a shorthand for $\Gamma \cup \{\varphi\}$. We use
    Greek letters $\varphi$, $\psi$, etc. to highlight that a term is
    to be intuitively interpreted as a proposition, and we use
    $\alpha$, $\beta$, etc. when it is to be interpreted as a type,
    but there is no a priori syntactic distinction. All judgements
    have the form $\Gamma \proves t$ where~$\Gamma$ is a set of terms
    and~$t$ a term. In particular, $\Gamma \proves t : \alpha$ is a
    shorthand for $\Gamma \proves \mathrm{Is}\, t\, \alpha$,
    according to the convention from the previous point.

    \medskip

    {\bf Axioms}

    \begin{itemize}
    \item[1:] $\Gamma, \varphi \proves \varphi$
    \item[2:] $\Gamma \proves \Eq{t}{t}$
    \item[3:] $\Gamma \proves \Prop : \Type$
    \item[4:] $\Gamma \proves \iota : \Type$ for $\iota \in \Tc_I$
    \item[5:] $\Gamma \proves o_i^\iota (c_i^\iota t_1 \ldots
      t_{n_i})$ if $c_i^\iota \in \Cc$ has arity~$n_i$
    \item[6:] $\Gamma \proves \neg (o_i^\iota (c_j^\iota t_1 \ldots
      t_{n_j}))$ if $i \ne j$ and $c_j^\iota \in \Cc$ has arity~$n_j$
    \item[7:] $\Gamma \proves \Eq{(d_{i,k}^\iota (c_i^\iota t_1 \ldots
      t_{n_i}))}{t_k}$ for $k=1,\ldots,n_i$, if $c_i^\iota \in \Cc$
      has arity~$n_i$
    \item[$\bot_t$:] $\Gamma \proves \bot : \Prop$
    \item[$c$:] $\Gamma \proves \forall p : \Prop \,.\, p \lor \neg p$
    \item[$\beta$:] $\Gamma \proves \Eq{((\lambda x \,.\, t_1)
      t_2)}{(t_1[x/t_2])}$
    \end{itemize}

    {\bf Rules}

    \begin{longtable}{cc}
      \multicolumn{2}{c}{
        \(
          {\forall_i:}\; \inferrule{\Gamma \proves
            \alpha : \Type \\ \Gamma, x : \alpha \proves \varphi \\ x
            \notin FV(\Gamma, \alpha)}{\Gamma \proves \forall x : \alpha \,.\, \varphi}
        \)
      }
      \\
      & \\
      \multicolumn{2}{c}{
        \(
          {\forall_e:}\; \inferrule{\Gamma \proves \forall x : \alpha
            \,.\, \varphi \\ \Gamma \proves t : \alpha}
          {\Gamma \proves \varphi[x/t]}
         \)
      }
      \\
      & \\
      \multicolumn{2}{c}{
      \(
      {\forall_t:}\; \inferrule{\Gamma \proves
        \alpha : \Type \\ \Gamma, x : \alpha \proves \varphi : \Prop \\ x
        \notin FV(\Gamma, \alpha)}{\Gamma \proves (\forall x : \alpha
        \,.\, \varphi) : \Prop}
      \)
      }
      \\
      & \\
      \multicolumn{2}{c}{
      \(
      {\exists_i:}\; \inferrule{\Gamma \proves
        \alpha : \Type \\ \Gamma \proves t : \alpha \\ \Gamma \proves \varphi[x/t]
      }
      {\Gamma \proves \exists x : \alpha \,.\, \varphi}
      \)
      }
      \\
      & \\
      \multicolumn{2}{c}{
      \(
      {\exists_e:}\; \inferrule{\Gamma \proves \exists x : \alpha
        \,.\, \varphi \\ \Gamma, x : \alpha, \varphi \proves \psi \\ x
        \notin FV(\Gamma, \psi, \alpha)}
      {\Gamma \proves \psi}
      \)
      }
      \\
      & \\
      \(
      {\vee_{i1}:}\; \inferrule{\Gamma \proves \varphi}
      {\Gamma \proves \varphi \vee \psi}
      \)
      &
      \(
      {\vee_{i2}:}\; \inferrule{\Gamma \proves \psi}
      {\Gamma \proves \varphi \vee \psi}
      \)
      \\
      & \\
      \multicolumn{2}{c}{
      \(
      {\vee_{e}:}\; \inferrule{
        \Gamma \proves \varphi_1 \vee \varphi_2 \\ \Gamma, \varphi_1
        \proves \psi \\ \Gamma, \varphi_2 \proves \psi
      }{
        \Gamma \proves \psi
      }
      \)
      }
      \\
      & \\
      \multicolumn{2}{c}{
      \(
      {\vee_{t}:}\; \inferrule{
        \Gamma \proves \varphi : \Prop \\ \Gamma \proves \psi : \Prop
      }{
        \Gamma \proves (\varphi \vee \psi) : \Prop
      }
      \)
      }
      \\
      & \\
      \(
        {\wedge_{e1}:}\;
        \inferrule{
          \Gamma \proves \varphi \wedge \psi
        }{
          \Gamma \proves \varphi
        }
      \)
      &
      \(
        {\wedge_{e2}:}\;
        \inferrule{
          \Gamma \proves \varphi \wedge \psi
        }{
          \Gamma \proves \psi
        }
      \)
      \\
      & \\
      \(
        {\supset_{t2}:}\;
        \inferrule{
          \Gamma \proves (\varphi \supset \psi) : \Prop
        }{
          \Gamma \proves \varphi : \Prop
        }
      \)
      &
      \(
        {\bot_{e}:}\;
        \inferrule{
          \Gamma \proves \bot
        }{
          \Gamma \proves \varphi
        }
      \)
      \\
      & \\
      \multicolumn{2}{c}{
      \(
      {\to_i:}\; \inferrule{\Gamma \proves
        \alpha : \Type \\ \Gamma, x : \alpha \proves t : \beta \\ x
        \notin FV(\Gamma, \alpha, \beta)}{\Gamma \proves (\lambda x \,.\, t) :
        \alpha \to \beta}
      \)
      }
      \\
      & \\
      \(
      {\to_e:}\; \inferrule{\Gamma \proves t_1 : \alpha \to \beta \\ \Gamma \proves t_2 : \alpha}{\Gamma \proves t_1
        t_2 : \beta}
      \)
      &
      \(
        {\to_t:}\; \inferrule{\Gamma \proves \alpha : \Type \\ \Gamma
          \proves \beta : \Type}
        {\Gamma \proves (\alpha \to \beta) : \Type}
      \)
      \\
      & \\
      \multicolumn{2}{c}{
      \(
        {s_i:}\; \inferrule{\Gamma \proves \{ x : \alpha \;|\; \varphi \} : \Type \\ \Gamma \proves t : \alpha \\ \Gamma \proves
          (\lambda x \,.\, \varphi) t \\ x \notin FV(\alpha)}
        {\Gamma \proves t : \{ x : \alpha \;|\; \varphi \} }
      \)
      }
      \\
      & \\
      \(
        {s_e:}\; \inferrule{\Gamma \proves t : \{ x :
          \alpha \;|\; \varphi \}}{\Gamma \proves \varphi[x/t]}
      \)
      &
      \(
        {s_{et}:}\; \inferrule{\Gamma \proves t : \{ x :
          \alpha \;|\; \varphi\}}{\Gamma \proves t : \alpha}
      \)
      \\
      & \\
      \multicolumn{2}{c}{
        \(
        {s_t:}\; \inferrule{\Gamma \proves \alpha : \Type
          \\ \Gamma, x : \alpha \proves \varphi : \Prop \\ x \notin FV(\alpha)}
        {\Gamma \proves \{ x : \alpha \;|\; \varphi\} : \Type}
        \)
      }
      \\
      & \\
      \(
        {\epsilon_i:}\; \inferrule{\Gamma \proves \exists x : \alpha \,.\, \top}
        {\Gamma \proves (\epsilon \alpha) : \alpha}
      \)
      &
      \(
        {p_i:}\; \inferrule{\Gamma \proves \varphi}
        {\Gamma \proves \varphi : \Prop}
      \)
      \\
      & \\
      \(
      {c_1:}\; \inferrule{
        \Gamma \proves \varphi
      }{
        \Gamma \proves \Eq{(\Cond{\varphi}{t_1}{t_2})}{t_1}
      }
      \)
      &
      \(
      {c_2:}\; \inferrule{
        \Gamma \proves \neg\varphi
      }{
        \Gamma \proves \Eq{(\Cond{\varphi}{t_1}{t_2})}{t_2}
      }
      \)
      \\
      & \\
      \multicolumn{2}{c}{
      \(
      {c_3:}\; \inferrule{
        \Gamma, \varphi \proves \Eq{t_1}{t_1'} \\ \Gamma \proves
        \varphi : \Prop
      }{
        \Gamma \proves \Eq{(\Cond{\varphi}{t_1}{t_2})}{(\Cond{\varphi}{t_1'}{t_2})}
      }
      \)
      }
      \\
      & \\
      \multicolumn{2}{c}{
      \(
      {c_4:}\; \inferrule{
        \Gamma, \neg\varphi \proves \Eq{t_2}{t_2'} \\ \Gamma \proves
        \varphi : \Prop
      }{
        \Gamma \proves \Eq{(\Cond{\varphi}{t_1}{t_2})}{(\Cond{\varphi}{t_1}{t_2'})}
      }
      \)
      }
      \\
      & \\
      \multicolumn{2}{c}{
        \(
        {c_5:}\; \inferrule{
          \Gamma \proves \varphi : \Prop
        }{
          \Gamma \proves \Eq{(\Cond{\varphi}{t}{t})}{t}
        }
        \)
      }
      \\
      & \\
      \(
        {\mbox{eq}:}\; \inferrule{\Gamma \proves \varphi \\ \Gamma \proves
          \Eq{\varphi}{\varphi'}}{\Gamma \proves \varphi'}
      \)
      &
      \(
      {\mbox{eq-sym}:}\; \inferrule{\Gamma \proves
        \Eq{t_1}{t_2}}{\Gamma \proves \Eq{t_2}{t_1}}
      \)
      \\
      & \\
      \multicolumn{2}{c}{
      \(
      {\mbox{eq-trans}:}\; \inferrule{\Gamma \proves
        \Eq{t_1}{t_2} \\ \Gamma \proves \Eq{t_2}{t_3}}{\Gamma \proves \Eq{t_1}{t_3}}
      \)
      }
      \\
      & \\
      \multicolumn{2}{c}{
      \(
      {\mbox{eq-cong-app}:}\; \inferrule{\Gamma \proves
        \Eq{t_1}{t_1'} \\ \Gamma \proves \Eq{t_2}{t_2'}}{\Gamma \proves \Eq{(t_1 t_2)}{(t_1' t_2')}}
      \)
      }
      \\
      & \\
      \multicolumn{2}{c}{
        \(
          {\mbox{eq-$\lambda$-$\xi$}:}\; \inferrule{\Gamma \proves
            \Eq{t}{t'} \\ x \notin FV(\Gamma)}{\Gamma \proves \Eq{(\lambda x \,.\, t)}{(\lambda x \,.\, t')}}
        \)
      }
    \end{longtable}
      \[
        {i_i^\iota:}\; \inferrule{\Gamma, x_1 : \iota_{i,1}^*, \ldots, x_{n_i} : \iota_{i,n_i}^*, t
          x_{j_{i,1}}, \ldots, t x_{j_{i,k_i}} \proves t (c_i^\iota x_1 \ldots
          x_{n_i}) \\ \mathrm{for\ } i=1,\ldots,m
        }{
          \Gamma \proves \forall x : \iota \,.\, t x
        }
      \]
      where $x,x_1,\ldots,x_{n_i} \notin FV(\Gamma, t)$,
      $c_1^\iota,\ldots,c_m^\iota \in \Cc$ are all constructors
      associated with $\iota \in \Tc_I$, and
      $j_{i,1},\ldots,j_{i,k_i}$ is an increasing sequence of all
      indices $1 \le j \le n_i$ such that $\iota_{i,j} = \star$
      \smallskip
      \[
        {i_t^{\iota,k}:}\; \inferrule{
          \Gamma \proves t_j : \iota_{k,j}^* \mathrm{\ for\ } j=1,\ldots,n_k
        }{
          \Gamma \proves (c_k^\iota t_1 \ldots t_{n_k}) : \iota
        }
      \]
%%       \[
%%         {i_p:}\; \inferrule{
%%           \Gamma, x_1 : \iota_{i,1}^*, \ldots, x_{n_i} : \iota_{i,n_i}^*, t
%%           x_{j_{i,1}}, \ldots, t x_{j_{i,k_i}} \proves (t (c_i^\iota x_1 \ldots
%%           x_{n_i})) : \Prop \\ \mathrm{for\ } i=1,\ldots,m
%%         }{
%%           \Gamma \proves (\forall x : \iota \,.\, t x) : \Prop
%%         }
%%       \]
%%       with assumptions as in the rule~$i_i$
%% This rule is derivable from i_i !!!
  \end{itemize}
  For an arbitrary set of terms~$\Gamma$, we write $\Gamma
  \proves_{\I_s} \varphi$ if there exists a finite subset $\Gamma'
  \subseteq \Gamma$ such that $\Gamma' \proves \varphi$ is derivable
  in the system~$\I_s$. We drop the subscript when irrelevant or
  obvious from the context.
\end{definition}

\begin{lemma}
  If $\Gamma \proves \varphi$ then $\Gamma, \psi \proves \varphi$.
\end{lemma}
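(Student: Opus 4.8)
The plan is to prove this weakening lemma by a straightforward induction on the derivation of $\Gamma \proves \varphi$. The statement to establish, for every derivable judgement $\Gamma \proves \varphi$ and every term $\psi$, is that $\Gamma, \psi \proves \varphi$ is also derivable. Since all the rules of $\I_s$ are schematic in the context~$\Gamma$ — each rule allows an arbitrary finite set of terms in the position of~$\Gamma$, with side conditions only of the form $x \notin FV(\Gamma, \ldots)$ — the induction should go through essentially mechanically, modulo care about the variable side conditions.

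First I would set up the induction. For the axiom schemes (axioms 1–7, $\bot_t$, $c$, $\beta$) there is nothing to do: each axiom $\Gamma \proves \varphi$ holds for \emph{every} context~$\Gamma$, so in particular it holds for $\Gamma, \psi$; in the case of axiom~1 one notes additionally that $\varphi \in \Gamma$ implies $\varphi \in \Gamma \cup \{\psi\}$. For each inference rule, I would apply the induction hypothesis to every premise, replacing~$\Gamma$ by $\Gamma, \psi$ throughout the derivation of that premise, and then reapply the same rule. For rules whose premises extend the context (e.g. $\forall_i$ has premise $\Gamma, x : \alpha \proves \varphi$, and likewise $\forall_t$, $\exists_e$, $\vee_e$, $\to_i$, $s_i$, $s_t$, $c_3$, $c_4$, $i_i^\iota$), the induction hypothesis applied to that premise gives $\Gamma, x : \alpha, \psi \proves \varphi$, which up to the irrelevant order of context elements is $(\Gamma, \psi), x : \alpha \proves \varphi$, exactly what the rule needs with $\Gamma$ replaced by $\Gamma, \psi$.

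The one point requiring genuine attention is the freshness side conditions: several rules demand $x \notin FV(\Gamma, \alpha)$ (or similar) for the bound/eigenvariable~$x$, and after weakening we need $x \notin FV(\Gamma, \psi, \alpha)$. This is where the convention recorded in Definition~\ref{def_I_s} — that bound variables may be chosen distinct from all free variables, and that we work up to $\alpha$-equivalence — does the work: we may $\alpha$-rename the bound variable~$x$ in the conclusion (and correspondingly in the subderivation, which is legitimate since the rule is closed under renaming the eigenvariable by an easy auxiliary observation) to a fresh variable not occurring free in $\psi$, and then the side condition is met. I expect this renaming bookkeeping to be the main — though still routine — obstacle; everything else is a direct "apply IH, reapply rule" pattern. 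Finally, one lifts the result from the finitary judgement $\proves$ to $\proves_{\I_s}$: if $\Gamma \proves_{\I_s} \varphi$ via a finite $\Gamma' \subseteq \Gamma$ with $\Gamma' \proves \varphi$, then by the lemma just proved $\Gamma', \psi \proves \varphi$, and $\Gamma' \cup \{\psi\}$ is a finite subset of $\Gamma \cup \{\psi\}$, so $\Gamma, \psi \proves_{\I_s} \varphi$.
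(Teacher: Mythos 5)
Your proof is correct and is exactly the standard argument the paper intends (the paper omits the proof of this lemma): induction on the derivation, using that every rule is schematic in $\Gamma$, with $\alpha$-renaming of eigenvariables to discharge the $x \notin FV(\Gamma,\psi,\ldots)$ side conditions, and a final trivial lift to $\proves_{\I_s}$.
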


\begin{lemma}\label{lem_I_s_subst}
  If $\Gamma \proves \varphi$ then $\Gamma[x/t] \proves \varphi[x/t]$,
  where $\Gamma[x/t] = \{ \psi[x/t] \;|\; \psi \in \Gamma \}$.
\end{lemma}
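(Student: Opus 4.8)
The plan is to argue by induction on a derivation witnessing $\Gamma \proves \varphi$ (if $\Gamma$ is infinite, first pass to a finite subset of it), showing that replacing each judgement $\Delta \proves \psi$ occurring in the derivation by $\Delta[x/t] \proves \psi[x/t]$ yields again a derivation in~$\I_s$. Two routine preliminaries are needed. First, the term-level substitution lemma: whenever $y \ne x$ and $y \notin FV(t)$ we have $(s[y/u])[x/t] \equiv (s[x/t])[y/(u[x/t])]$, and $[x/t]$ commutes with each term constructor; together with the standing convention that bound variables are chosen distinct from~$x$ and from the free variables of~$t$, this yields that $[x/t]$ commutes up to~$\equiv$ with all the abbreviations of Definition~\ref{def_I_s}, for instance $(t_1 : t_2)[x/t] \equiv t_1[x/t] : t_2[x/t]$, $(\forall y : \alpha \,.\, \psi)[x/t] \equiv \forall y : \alpha[x/t] \,.\, \psi[x/t]$, $(\alpha \to \beta)[x/t] \equiv \alpha[x/t] \to \beta[x/t]$, $(\set{y : \alpha}{\psi})[x/t] \equiv \set{y : \alpha[x/t]}{\psi[x/t]}$, $(\varphi \supset \psi)[x/t] \equiv \varphi[x/t] \supset \psi[x/t]$, and similarly for $\neg$, $\top$, $\wedge$, $\vee$, $\exists$. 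Second, we may assume that every eigenvariable of the derivation — the bound variable of the conclusion in the rules $\forall_i$, $\forall_t$, $\exists_e$, $\to_i$, $s_i$, $s_t$, eq-$\lambda$-$\xi$ and $i_i^\iota$, together with the variables $x_1,\ldots,x_{n_i}$ in $i_i^\iota$ — is distinct from~$x$ and does not occur free in~$t$; this is harmless, being a renaming of eigenvariables throughout the derivation, which is legitimate because $\alpha$-equivalent terms are identified.

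For the axioms there is nothing to prove: every axiom scheme is parametric in arbitrary terms, and since $[x/t]$ commutes with term formation the substituted instance is again an instance of the same scheme — thus axiom~1 remains axiom~1, axiom~$\beta$ reduces to the term-level substitution lemma (using $((\lambda y \,.\, t_1) t_2)[x/t] \equiv (\lambda y \,.\, t_1[x/t])(t_2[x/t])$ for $y \ne x$, $y \notin FV(t)$), and axioms~3--7, $\bot_t$ and $c$ mention only constants, sorts and elements of~$\Tc_I$, all of which are closed terms and hence unaffected. For an inference rule one applies the induction hypothesis to the premises and checks that the substituted premises form an instance of the same rule with conclusion $\Gamma[x/t] \proves \varphi[x/t]$. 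For the propositional, equational and typing rules without a side condition — the $\vee$- and $\wedge$-rules, $\bot_e$, $\supset_{t2}$, $p_i$, $\epsilon_i$, $\to_e$, $\to_t$, $s_{et}$, $c_1$--$c_5$, $i_t^{\iota,k}$, eq, eq-sym, eq-trans, eq-cong-app — this is immediate from the commutation facts above. For the rules in whose statement a metalevel substitution occurs — $\forall_e$, $\exists_i$, $s_e$ — one additionally invokes the term-level substitution lemma (legitimately, since the relevant bound variable can be taken $\ne x$ and outside $FV(t)$) to commute the two substitutions. For the rules carrying a freshness side condition — $\forall_i$, $\forall_t$, $\exists_e$, $\to_i$, $s_i$, $s_t$, eq-$\lambda$-$\xi$, $i_i^\iota$ — the eigenvariable~$y$ satisfies $y \ne x$ and $y \notin FV(t)$, so $(y : \alpha)[x/t]$ is literally $y : \alpha[x/t]$, and since $y$ was not free in the relevant components of the premise and $[x/t]$ cannot introduce~$y$, the side condition persists after substitution (e.g. $y \notin FV(\Gamma[x/t], \alpha[x/t])$ for $\forall_i$); hence the substituted premises again instantiate the rule. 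In $i_i^\iota$ and $i_t^{\iota,k}$ the signature types $\iota_{i,j}^*$ are closed, so they are untouched, and the argument is otherwise unchanged.

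The only point that requires genuine care — the ``hard part'', such as it is — is the handling of binders and eigenvariables: making precise that the eigenvariables of a derivation may be renamed away from~$x$ and from $FV(t)$, and verifying that after such renaming every freshness side condition of the rules still holds once $[x/t]$ has been applied. The rest is bookkeeping about the commutation of substitution with the term constructors and with the derived notation of Definition~\ref{def_I_s}.
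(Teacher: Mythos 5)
The paper states this lemma without proof, treating it as routine, so there is nothing to compare against; your argument — induction on the derivation, after renaming eigenvariables away from $x$ and $FV(t)$, using the term-level substitution lemma to handle axiom~$\beta$ and the rules $\forall_e$, $\exists_i$, $s_e$, and checking that freshness side conditions persist — is exactly the standard proof one would expect here, and it is correct.
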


%\begin{proof}
%  Straightforward induction on the length of derivation.
%\end{proof}

\subsection{Representing Logic}

The inference rules of~$\I_s$ may be intuitively justified by
appealing to an informal many-valued semantics. A term~$t$ may be
true, false, or something entirely different (``undefined'', a
program, a natural number, a type, \ldots). By way of an example, we
explain an informal meaning of some terms:
\begin{itemize}
\item $t : \Prop$ is true iff $t$ is true or false,
\item $\alpha : \Type$ is true iff $\alpha$ is a type,
\item $t : \alpha$ is true iff $t$ has type~$\alpha$,
  assuming~$\alpha$ is a type,
\item $\forall x : \alpha . \varphi$ is true iff $\alpha$ is a type
  and for all~$t$ of type~$\alpha$, $\varphi[x/t]$ is true,
\item $\forall x : \alpha . \varphi$ is false iff $\alpha$ is a type
  and there exists~$t$ of type~$\alpha$ such that~$\varphi[x/t]$ is
  false,
\item $t_1 \vee t_2$ is true iff $t_1$ is true or $t_2$ is true,
\item $t_1 \vee t_2$ is false iff $t_1$ is false and $t_2$ is false,
\item $t_1 \supset t_2$ is true iff $t_1$ is false or both~$t_1$
  and~$t_2$ are true,
\item $t_1 \supset t_2$ is false iff $t_1$ is true and~$t_2$ is false,
\item $\neg t$ is true iff $t$ is false,
\item $\neg t$ is false iff $t$ is true.
\end{itemize}
Obviously, $\Gamma \proves t$ is then (informally) interpreted as: for
all possible substitution instances~$\Gamma^*,t^*$ of~$\Gamma,t$,
\footnote{To be more precise, for every possible substitution of terms
  for the free variables of~$\Gamma,t$ we perform this substitution
  on~$\Gamma,t$, denoting the result by~$\Gamma^*,t^*$.} if all terms
in~$\Gamma^*$ are true, then the term~$t^*$ is also true.

Note that the logical connectives are ``lazy'', e.g. for $t_1 \vee
t_2$ to be true it suffices that~$t_1$ is true, but~$t_2$ need not
have a truth value at all -- it may be something else: a program, a
type, ``undefined'', etc. This laziness allows us to omit many
restrictions which would otherwise be needed in inference rules, and
would thus make the system less similar to ordinary logic.

The following rules may be derived in~$\I_s$.

\begin{center}
\begin{tabular}{cc}
  \(
  \supset_i:\;
  \inferrule{
    \Gamma \proves \varphi : \Prop \\ \Gamma, \varphi \proves \psi
  }{
    \Gamma \proves \varphi \supset \psi
  }
  \)\;\;
  &
  \;\;\(
  \supset_e:\;
  \inferrule{
    \Gamma \proves \varphi \supset \psi \\ \Gamma \proves \varphi
  }{
    \Gamma \proves \psi
  }
  \)
  \\
  & \\
    \(
    \supset_t:\;
    \inferrule{
      \Gamma \proves \varphi : \Prop \\ \Gamma, \varphi \proves \psi : \Prop
    }{
      \Gamma \proves (\varphi \supset \psi) : \Prop
    }
    \)
  &
  \(
    {\wedge_{i}:}\;
    \inferrule{
      \Gamma \proves \varphi \\ \Gamma \proves \psi
    }{
      \Gamma \proves \varphi \wedge \psi
    }
  \)
\end{tabular}
\end{center}

Note that in general the elimination rules for~$\wedge$ and the rules
for~$\exists$ cannot be derived from the rules for~$\vee$
and~$\forall$, because we would not be able to prove the premise
$\varphi : \Prop$ when trying to apply the rule~$\supset_i$. It is
instructive to try to derive these rules and see where the proof
breaks down.

In~$\I_s$ the only non-standard restriction in the usual inference
rules for logical connectives is the additional premise $\Gamma
\proves \varphi : \Prop$ in the rule~$\supset_i$. It is certainly
unavoidable, as otherwise Curry's paradox may be derived (see
e.g. \cite{illat01,Seldin2009}). However, we have standard classical
higher-order logic if we restrict to terms of type~$\Prop$, in the
sense that the natural deduction rules then become identical to the
rules of ordinary logic. This is made more precise in
Sect.~\ref{sec_consistent} where a sound translation from a
traditional system of higher-order logic into~$\I_s$ is described.

Note that we have the law of excluded middle only in the form $\forall
p : \Prop \,.\, p \lor \neg p$. Adding $\Gamma \proves \varphi \vee
\neg\varphi$ as an axiom for an arbitrary term~$\varphi$ gives an
inconsistent system.\footnote{By defining (see the next subsection)
  $\varphi = \neg \varphi$ one could then easily derive~$\bot$ using
  the rule~$\vee_e$ applied to~$\varphi \vee \neg\varphi$.}

It is well-known (see e.g. \cite[Chapter 11]{SorensenUrzyczyn2006})
that in higher-order logic all logical connectives may be defined
from~$\forall$ and~$\supset$ as follows.
\begin{eqnarray*}
  \bot &\equiv& \forall p : \Prop \,.\, p \\
  \neg \varphi &\equiv& \varphi \supset \bot \\
  \varphi \wedge \psi &\equiv& \forall p : \Prop \,.\, (\varphi \supset \psi \supset p) \supset p \\
  \varphi \vee \psi &\equiv& \forall p : \Prop \,.\, (\varphi \supset p) \supset (\psi \supset p) \supset p \\
  \exists x : \alpha \,.\, \varphi &\equiv& \forall p : \Prop \,.\, (\forall x : \alpha \,.\, \varphi \supset p) \supset p
\end{eqnarray*}
One may therefore wonder why we take~$\vee$ and~$\bot$ as
primitive. The answer is that if we defined the connectives by the
above equations, then the inference rules that could be derived for
them would need to contain additional restrictions. For instance, we
would be able to derive only the following variants of
$\vee$-introduction.

\begin{center}
\begin{tabular}{lr}
  \(
    {\vee_{i1}':}\; \inferrule{\Gamma \proves \varphi \\ \Gamma
      \proves \psi : \Prop}
    {\Gamma \proves \varphi \vee \psi}
  \)\;\;
  &
  \;\;\(
    {\vee_{i2}':}\; \inferrule{\Gamma \proves \psi \\ \Gamma \proves
      \varphi : \Prop}
    {\Gamma \proves \varphi \vee \psi}
  \)
\end{tabular}
\end{center}

\subsection{Equality, Recursive Definitions and Extensionality}

It is well-known (see e.g. \cite[Chapters 2, 6]{Barendregt1984}) that
since untyped $\lambda$-terms are available together with the
axiom~$\beta$ and usual rules for equality, any set of equations of
the following form has a solution for $z_1,\ldots,z_n$, where the
expressions~$\Phi_i(z_1, \ldots, z_n, x_1, \ldots, x_m)$ are arbitrary
terms with the free variables listed.
\begin{eqnarray*}
  z_1 x_1 \ldots x_m &=& \Phi_1(z_1, \ldots, z_n, x_1, \ldots, x_m) \\
  &\vdots& \\
  z_n x_1 \ldots x_m &=& \Phi_n(z_1, \ldots, z_n, x_1, \ldots, x_m)
\end{eqnarray*}
In other words, for any such set of equations, there exist terms $t_1,
\ldots, t_n$ such that for any terms $s_1, \ldots, s_m$ we have
$\proves \Eq{\left(t_i s_1 \ldots s_m\right)}{\left(\Phi_i(t_1,
  \ldots, t_n, s_1, \ldots, s_m)\right)}$ for each $i=1,\ldots,n$.

We will often define terms by such equations. In what follows we
freely use the notation $t_1 = t_2$ for $\proves \Eq{t_1}{t_2}$, or
for $\Gamma \proves \Eq{t_1}{t_2}$ when it is clear which
context~$\Gamma$ is meant. We use $t_1 = t_2 = \ldots = t_n$ to
indicate that $\Eq{t_i}{t_{i+1}}$ may be derived for
$i=1,\ldots,n-1$. We also sometimes write a term of the form
$\Eq{t_1}{t_2}$ as $t_1 = t_2$.

It is worth stressing once again that there is no a priori syntactic
distinction between terms, formulas, types, type assertions,
etc. Formally, there are only terms, but some terms are intuitively
interpreted as formulas, types, etc. In particular, the aforementioned
method of defining terms by arbitrary recursive equations may be
applied to define terms which could be intuitively considered to be
formulas, e.g. we may define a term~$\varphi$ such that $\varphi =
\neg \varphi$. Inconsistency is avoided, because it will not be
actually possible to prove $\proves_{\I_s} \varphi : \Prop$,
hence~$\varphi$ will not really be a formula. In~$\I_s$ the inference
rules serve the purpose of classifying terms into different
categories. This classification is not enforced a priori, but instead
it is a part of derivations in the logic.

In~$\I_s$ there is no rule for typing the equality~$\mathrm{Eq}$. One
consequence is that $\proves \neg(\Eq{t_1}{t_2})$ cannot be derived
for any terms $t_1,t_2$.\footnote{We mean this in a precise
  sense. This follows from our model construction.}  For this
reason~$\mathrm{Eq}$ is more like a meta-level notion of equality.

\begin{definition} \rm
  Leibniz equality $\mathrm{Eql}$ is defined as:
  \[
  \mathrm{Eql} \equiv \lambda \alpha \lambda x \lambda y . \forall p :
  \alpha \to \Prop \,.\, p x \supset p y
  \]
\end{definition}

As with~$=$, we will often write $t_1 =_\alpha t_2$ to denote $\proves
\Eql{\alpha}{t_1}{t_2}$ or $\Gamma \proves \Eql{\alpha}{t_1}{t_2}$, or
write $t_1 =_\alpha t_2$ instead of $\Eql{\alpha}{t_1}{t_2}$.

\begin{lemma}\label{lem_leibniz}
  If $\Gamma \proves \alpha : \Type$ then
  \begin{itemize}
  \item $\Gamma \proves \forall x, y : \alpha \,.\, (x =_\alpha y) : \Prop$,
  \item $\Gamma \proves \forall x : \alpha \,.\, (x =_\alpha x)$,
  \item $\Gamma \proves \forall x, y : \alpha \,.\, (x =_\alpha y) \supset
    (y =_\alpha x)$,
  \item $\Gamma \proves \forall x, y, z : \alpha \,.\, (x =_\alpha y)
    \wedge (y =_\alpha z) \supset (x =_\alpha z)$.
  \end{itemize}
\end{lemma}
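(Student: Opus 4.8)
The plan is to unfold the definition of $\mathrm{Eql}$ and reduce each of the four claims to manipulations with the $\forall$-rules, the $\supset$-rules (as derived at the end of the ``Representing Logic'' subsection), and $\beta$-conversion via the rule $\mathrm{eq}$. Fix $\Gamma$ with $\Gamma \proves \alpha : \Type$. Throughout, note that $x =_\alpha y$ unfolds by $\beta$ (several times, using $\mathrm{eq}$, $\mathrm{eq\text{-}trans}$, $\mathrm{eq\text{-}cong\text{-}app}$) to $\forall p : \alpha \to \Prop \,.\, p x \supset p y$, so I may freely replace one by the other inside derivations.

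For the first item (well-typedness), I would work under the context $\Gamma, x : \alpha, y : \alpha$ and show $\Gamma, x : \alpha, y : \alpha \proves (x =_\alpha y) : \Prop$; two applications of $\forall_i$ then give the stated $\forall x, y : \alpha \,.\, (\ldots) : \Prop$. To get $(x =_\alpha y) : \Prop$ I apply $\forall_t$ with the ``type'' being $\alpha \to \Prop$ (which is a $\Type$ by $\to_t$, using axiom~$3$), so it remains to show $\Gamma, x:\alpha, y:\alpha, p : \alpha \to \Prop \proves (p x \supset p y) : \Prop$. For this I use the derived rule $\supset_t$: I need $p x : \Prop$, which follows from $\to_e$ applied to $p : \alpha \to \Prop$ and $x : \alpha$; and under the additional hypothesis $p x$ I need $p y : \Prop$, which follows the same way from $\to_e$ (the extra hypothesis $p x$ is harmless by weakening). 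For the second item (reflexivity), under $\Gamma, x : \alpha$ I must show $x =_\alpha x$, i.e. $\forall p : \alpha \to \Prop \,.\, p x \supset p x$; introduce $p : \alpha \to \Prop$ by $\forall_i$, then apply the derived $\supset_i$ with the $\Prop$-premise $p x : \Prop$ obtained as above and the logical premise $\Gamma, x:\alpha, p:\alpha\to\Prop, p x \proves p x$, which is axiom~$1$.

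For symmetry and transitivity I use the standard ``instantiate the predicate'' trick. For the third item, after introducing $x,y : \alpha$ by $\forall_i$ and assuming $h : x =_\alpha y$ (using $\supset_i$, whose $\Prop$-premise is the first item instantiated), I must derive $y =_\alpha x$, i.e. $\forall p : \alpha \to \Prop\,.\, p y \supset p x$; introduce $p$, assume $p y$ (via $\supset_i$, $\Prop$-premise as before), and I need $p x$. Apply $\forall_e$ to $h$ with the witness $\lambda z . (p z \supset p x)$ of type $\alpha \to \Prop$ — this requires $\Gamma,\ldots \proves (\lambda z. p z \supset p x) : \alpha \to \Prop$, proved by $\to_i$ together with $\supset_t$ and two uses of $\to_e$ — yielding (after $\beta$) $(p x \supset p x) \supset (p y \supset p x)$; discharge the first antecedent with the reflexivity-style proof $p x \supset p x$ from item two's argument, then apply $\supset_e$ to the hypothesis $p y$. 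Transitivity is analogous: from $h_1 : x =_\alpha y$ and $h_2 : y =_\alpha z$ (extracted from the conjunction by $\wedge_{e1}, \wedge_{e2}$), instantiate $h_1$ at the predicate $\lambda w . (p w \supset p z)$ — actually more directly, instantiate $h_1$ at $p$ to get $p x \supset p y$ and $h_2$ at $p$ to get $p y \supset p z$, then chain with $\supset_e$ under the assumption $p x$; the outer structure is again $\forall_i$ for $x,y,z$ and $p$ plus $\supset_i$ with $\Prop$-premises supplied by item one.

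The routine obstacle, appearing in every clause, is verifying the side conditions of $\forall_i$, $\to_i$, $\supset_i$, and $\forall_e$: freshness of introduced variables relative to $\Gamma$ and the relevant terms (handled by the global $\alpha$-convention), and above all discharging the ubiquitous $\cdot : \Prop$ and $\cdot : \alpha \to \Prop$ premises. The genuine content is essentially a single lemma — ``$p x : \Prop$ whenever $p : \alpha\to\Prop$ and $x : \alpha$'' — plus the predicate-instantiation trick for symmetry and transitivity; once these are isolated, the rest is bookkeeping with weakening (first lemma of the excerpt) and $\beta$-conversion. I expect no conceptual difficulty beyond carefully threading the $\Prop$-typing obligations through each $\supset_i$.
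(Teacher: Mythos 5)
Your proof is correct. The paper does not actually include a proof of this lemma (it is among those omitted as routine), so there is nothing to compare against; your argument is the expected one: unfold $\mathrm{Eql}$ via the $\beta$ axiom and the $\mathrm{Eq}$-congruence rules, discharge the $\Prop$-typing obligations through $\forall_t$, the derived $\supset_t$, $\to_t$ and $\to_e$ (with $p\,x:\Prop$ as the key sub-step), and obtain symmetry and transitivity by instantiating the quantified predicate, all of which goes through in $\I_s$. The only point you gloss over is that the final clause also needs $(\varphi\wedge\psi):\Prop$ from $\varphi:\Prop$ and $\psi:\Prop$, but this follows immediately from $\bot_t$, $\supset_t$ and $\vee_t$ given the definition of $\wedge$, so it is bookkeeping of exactly the kind you acknowledge.
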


%% \begin{lemma}
%%   If $\Gamma \proves \alpha : \Type$, $\Gamma \proves t_1 : \alpha$,
%%   $\Gamma \proves t_2 : \alpha$ and $\Gamma \proves t_1 = t_2$, then
%%   $\Gamma \proves t_1 =_\alpha t_2$.
%% \end{lemma}

The system~$\I_s$, as it is stated, is intensional with respect to
Leibniz equality. We could add the rules
\[
e_f:\;
\inferrule{
  \Gamma \proves \alpha : \Type \\ \Gamma \proves \beta : \Type
}{
  \forall f_1, f_2 : \alpha \to \beta \,.\, (\forall x
  : \alpha \,.\, f_1 x =_\beta f_2 x) \supset (f_1 =_{\alpha\to\beta} f_2)
}
\]
\[
e_b:\;
\inferrule{
  \Gamma \proves \varphi_1 \supset \varphi_2 \\
  \Gamma \proves \varphi_2 \supset \varphi_1
}{
  \Gamma \proves \varphi_1 = \varphi_2
}
\]
to obtain an extensional variant~$e\I_s$ of~$\I_s$. The system~$e\I_s$
is still consistent -- the model we construct for~$\I_s$ validates the
above rules.

\begin{lemma}
  $\proves_{e\I_s} \forall x, y : \Prop \,.\, (x =_\Prop y) \supset (x
  = y)$
\end{lemma}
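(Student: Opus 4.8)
The plan is to use the extensionality rule $e_b$ to reduce the meta-equality $x = y$ (that is, $\Eq{x}{y}$) to the bi-implication $x \supset y$, $y \supset x$, and to obtain each implication by instantiating the Leibniz-equality hypothesis with the identity function. Concretely, set $\Gamma = \{x : \Prop,\, y : \Prop,\, x =_\Prop y\}$. It suffices to prove $\Gamma \proves \Eq{x}{y}$: from this, the derived rule $\supset_i$ gives $\{x:\Prop,y:\Prop\} \proves (x =_\Prop y) \supset (x = y)$, using Lemma~\ref{lem_leibniz} (first item, with $\alpha \equiv \Prop$, and Axiom~3 for $\proves \Prop : \Type$) to discharge the premise $\{x:\Prop,y:\Prop\} \proves (x =_\Prop y) : \Prop$; two applications of $\forall_i$ (again using Axiom~3) then yield $\proves_{e\I_s} \forall x, y : \Prop \,.\, (x =_\Prop y) \supset (x = y)$.

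To get $\Gamma \proves \Eq{x}{y}$ I apply the rule $e_b$, so it remains to show $\Gamma \proves x \supset y$ and $\Gamma \proves y \supset x$. I treat the first; the second is symmetric. By $\supset_i$ (and $\Gamma \proves x : \Prop$, Axiom~1) it suffices to show $\Gamma, x \proves y$. The hypothesis $x =_\Prop y$ in $\Gamma$ is $\forall p : \Prop \to \Prop \,.\, p x \supset p y$. Using $\to_i$ with Axiom~3 and $z : \Prop \proves z : \Prop$ (Axiom~1) we get $\proves (\lambda z \,.\, z) : \Prop \to \Prop$, so $\forall_e$ gives $\Gamma, x \proves (\lambda z\,.\,z)\,x \supset (\lambda z\,.\,z)\,y$. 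The axiom~$\beta$ gives $\Eq{((\lambda z\,.\,z)\,x)}{x}$ and $\Eq{((\lambda z\,.\,z)\,y)}{y}$; hence from $\Gamma, x \proves x$ (Axiom~1) together with eq-sym of $\beta$, the rule eq yields $\Gamma, x \proves (\lambda z\,.\,z)\,x$; the derived rule $\supset_e$ then gives $\Gamma, x \proves (\lambda z\,.\,z)\,y$; and one more application of eq (with $\beta$) gives $\Gamma, x \proves y$, as required.

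For $\Gamma \proves y \supset x$ I would first derive $\Gamma \proves y =_\Prop x$ using the symmetry part of Lemma~\ref{lem_leibniz} (third item, $\alpha \equiv \Prop$, via Axiom~3): instantiate with $\forall_e$ twice on $x, y : \Prop$ from $\Gamma$, then apply $\supset_e$ to the hypothesis $x =_\Prop y$. The argument of the previous paragraph now applies verbatim with the roles of $x$ and $y$ exchanged, giving $\Gamma \proves y \supset x$; combining the two implications by $e_b$ finishes the proof. (Alternatively, instead of invoking symmetry of $=_\Prop$ one could instantiate $p$ with $\lambda z \,.\, (z \supset x)$ and argue directly, but reusing the previous paragraph is cleaner.)

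The conceptually essential ingredients are exactly the rule $e_b$ — without it there is no way to obtain a judgement of the form $\Eq{\varphi_1}{\varphi_2}$ from logical content, since $\mathrm{Eq}$ has no typing rule — and Lemma~\ref{lem_leibniz}, both for $(x =_\Prop y) : \Prop$ and for symmetry of $=_\Prop$. Everything else is routine, and I do not expect a genuine obstacle; the only place needing mild care is the $\beta$-bookkeeping around $(\lambda z\,.\,z)$, where one keeps in mind the standing convention that bound variables are chosen distinct from the free ones, so that no capture occurs when instantiating $p$ and $\beta$-reducing.
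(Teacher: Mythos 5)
Your proof is correct. The paper states this lemma without giving a proof (there is no corresponding argument in the appendices), so there is nothing to compare against, but your route is the natural one and every step checks out: instantiating the Leibniz hypothesis $\forall p : \Prop \to \Prop \,.\, p\,x \supset p\,y$ at the identity $\lambda z\,.\,z$ (typable by $\to_i$ from Axioms~1 and~3) yields $x \supset y$ after the $\beta$/eq bookkeeping, the symmetry clause of Lemma~\ref{lem_leibniz} gives the converse, and $e_b$ then produces $\Eq{x}{y}$; the side conditions $(x =_\Prop y) : \Prop$ for $\supset_i$ and $\Prop : \Type$ for $\forall_i$ are discharged exactly as you say. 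Your observation that $e_b$ is indispensable is also right, since $\mathrm{Eq}$ has no introduction rule from logical content in plain $\I_s$ — which is precisely why the lemma is stated for $e\I_s$ only. The one point you gloss over, that the hypothesis $x =_\Prop y$ is only $\beta$-convertible (not identical) to the unfolded $\forall$-formula and must be converted with the eq rules before $\forall_e$ applies, is routine and well within the level of detail the paper itself adopts.
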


\subsection{Induction and Natural Numbers}\label{sec_induction}

The system~$\I_s$ incorporates basic inductive types. In accordance
with the terminology from~\cite{BlanquiJounnaudOkada2002}, an
inductive type is basic if its constructors have no functional
arguments. This class of inductive types includes most simple commonly
used inductive types, e.g. natural numbers, lists, finite trees.

In our approach the types of constructors are encoded in the syntactic
form of the inductive type. For instance, if $\iota_0, \iota_1 \in
\Tc_I$, then $\iota = \mu(\langle \rangle, \langle \iota_0, \iota_1,
\star, \star \rangle)$ is an inductive type with constructors:
$c_1^\iota : \iota$ and $c_2^\iota : \iota_0 \to \iota_1 \to \iota \to
\iota \to \iota$.

\begin{lemma}
  If $c_i^\iota \in \Cc$ of arity~$n_i$ has signature $\langle
  \iota_1,\ldots,\iota_{n_i} \rangle$ then $\proves_{\I_s} c_i^\iota :
  \iota_1^* \to \ldots \to \iota_{n_i}^* \to \iota$.
\end{lemma}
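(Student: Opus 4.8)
The plan is to produce a $\lambda$-term that is provably equal to $c_i^\iota$, show it has the type $\tau \equiv \iota_1^* \to \cdots \to \iota_{n_i}^* \to \iota$, and then transport that judgement along $\mathrm{Eq}$ to $c_i^\iota$ itself. Throughout one uses that every $\iota_k^* \in \Tc_I$ and $\iota \in \Tc_I$ are closed terms with $\proves \iota_k^* : \Type$ and $\proves \iota : \Type$ by axiom~4.

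Let $x_1,\ldots,x_{n_i}$ be distinct variables and put $g \equiv \lambda x_1 \ldots \lambda x_{n_i} . c_i^\iota x_1 \ldots x_{n_i}$ (with $g \equiv c_i^\iota$ and $\tau \equiv \iota$ when $n_i = 0$). First I would establish $\proves g : \tau$ by downward induction on $j = n_i, n_i-1, \ldots, 0$, proving
\[
x_1 : \iota_1^*, \ldots, x_j : \iota_j^* \proves \bigl(\lambda x_{j+1} \ldots \lambda x_{n_i} . c_i^\iota x_1 \ldots x_{n_i}\bigr) : \iota_{j+1}^* \to \cdots \to \iota_{n_i}^* \to \iota .
\]
For $j = n_i$ the right-hand side is $c_i^\iota x_1 \ldots x_{n_i} : \iota$; since the signature of $c_i^\iota$ is $\langle \iota_1,\ldots,\iota_{n_i}\rangle$, i.e. $\iota_k^* = \iota_{i,k}^*$, this is exactly the instance of rule $i_t^{\iota,i}$ with $t_k := x_k$, whose premises $x_k : \iota_{i,k}^*$ are members of the context (axiom~1). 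Each induction step is a single application of $\to_i$: its first premise $\iota_{j+1}^* : \Type$ is axiom~4, its second premise is the induction hypothesis, and its freshness condition $x_{j+1} \notin FV(\{x_1:\iota_1^*,\ldots,x_j:\iota_j^*\}, \iota_{j+1}^*, \iota_{j+2}^* \to \cdots \to \iota)$ holds because the $\iota$'s are variable-free and the $x_k$ are pairwise distinct. Taking $j = 0$ gives $\proves g : \tau$.

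It then remains to replace $g$ by $c_i^\iota$. If $n_i = 0$ there is nothing to do. Otherwise one uses that $c_i^\iota$ and its $\eta$-expansion $g$ are convertible, i.e. $\proves \Eq{g}{c_i^\iota}$; applying eq-cong-app against $\proves \Eq{\mathrm{Is}}{\mathrm{Is}}$ and $\proves \Eq{\tau}{\tau}$ (both axiom~2) yields $\proves \Eq{(\mathrm{Is}\,g\,\tau)}{(\mathrm{Is}\,c_i^\iota\,\tau)}$, and then rule eq applied to the already-derived $\proves \mathrm{Is}\,g\,\tau$ gives $\proves_{\I_s} c_i^\iota : \tau$, as required.

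The main obstacle is precisely this last step: the core calculus introduces $\mathrm{Fun}$-memberships only through $\to_i$, hence only for $\lambda$-abstractions, so the argument is forced to route through the $\eta$-long form $g$ and to invoke $\eta$-conversion $\proves \Eq{g}{c_i^\iota}$ in order to land on the bare constant $c_i^\iota$ (in the extensional variant $e\I_s$ this is immediate from the rule $e_f$). Once that conversion is in hand, the rest — the fresh-variable bookkeeping and the reduction of the base case to $i_t^{\iota,i}$ — is entirely routine.
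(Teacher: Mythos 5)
Your derivation of $\proves g : \iota_1^* \to \cdots \to \iota_{n_i}^* \to \iota$ for the $\eta$-expansion $g \equiv \lambda x_1 \ldots \lambda x_{n_i}\,.\,c_i^\iota x_1 \ldots x_{n_i}$ is correct, and it is exactly the combination of $i_t^{\iota,i}$, $\to_i$ and axiom~4 that the paper's one-line proof invokes. The gap is in the final transfer from $g$ to $c_i^\iota$. The judgement $\proves \Eq{g}{c_i^\iota}$ is an $\eta$-conversion, and the derivable instances of $\mathrm{Eq}$ in $\I_s$ are generated only by reflexivity, axiom~$\beta$, the destructor and conditional axioms, and the congruence rules eq-cong-app and eq-$\lambda$-$\xi$: $\eta$ is not among them, and it cannot be recovered. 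Concretely, in the model of Appendix~F the constant $\mathrm{Eq}$ is interpreted via the conversion $\ipeqvred$ generated by $\beta$-steps and the rewrite rules of~$R$, none of which is an $\eta$-rule; $g$ and $c_i^\iota$ are then distinct normal forms, so by Church--Rosser $\Eq{g}{c_i^\iota}$ is \emph{false} in the model and hence, by Theorem~\ref{thm_sound}, underivable. Your fallback via $e_f$ in $e\I_s$ is also circular: $e_f$ produces a statement quantified over $f_1, f_2 : \alpha \to \beta$, so instantiating $f_2 := c_i^\iota$ already presupposes the typing $c_i^\iota : \alpha \to \beta$ being proved (and $e_f$ concerns Leibniz equality, not $\mathrm{Eq}$, in any case).

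What the paper actually does is apply $\to_i$ \emph{directly} to the partially applied constructor: starting from $x_1 : \iota_1^*, \ldots, x_{n_i} : \iota_{n_i}^* \proves (c_i^\iota x_1 \ldots x_{n_i}) : \iota$ (rule $i_t^{\iota,i}$) it peels off the arguments one at a time, reading $\to_i$ as the traditional functionality rule of illative combinatory logic: from $\Gamma, x : \alpha \proves t\,x : \beta$ infer $\Gamma \proves t : \alpha \to \beta$ for an \emph{arbitrary} term~$t$. That reading is the one supported by the semantic condition~($\to_i$) of Definition~\ref{def_I_s_model}, which is stated for arbitrary $f \in A$ and not only for denotations of $\lambda$-abstractions, and it is also how $\to_i$ is used in the very next lemma to conclude $o_i^\iota : \iota \to \Prop$ from $x : \iota \proves (o_i^\iota x) : \Prop$. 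You are right that the rule as printed, with $(\lambda x\,.\,t)$ in the conclusion, does not literally license this inference --- that is a genuine infelicity in the statement of the rule --- but the $\eta$-conversion patch does not repair it, because the required $\mathrm{Eq}$ is simply not derivable; the repair has to be made at the level of the rule $\to_i$ itself.
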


\begin{lemma}
  $\proves_{\I_s} o_i^\iota : \iota \to \Prop$ and $\proves_{\I_s}
  \forall x : \iota \,.\, o_i^\iota x \supset (d_{i,j}^\iota x :
  \iota_{i,j}^*)$
\end{lemma}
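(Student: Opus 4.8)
The statement has two parts. For the first part, $\proves_{\I_s} o_i^\iota : \iota \to \Prop$, the plan is to apply the function-introduction rule $\to_i$. After $\eta$-expanding, i.e. noting that $o_i^\iota = \lambda x \,.\, o_i^\iota x$ up to provable equality (via axiom $\beta$ and $\eta$, or by just working with the $\lambda$-form), it suffices to show $x : \iota \proves (o_i^\iota x) : \Prop$ with $x \notin FV(\iota)$. For this I would use the induction principle $i_i^\iota$ with $t$ instantiated to $\lambda x \,.\, (o_i^\iota x : \Prop)$: to prove $\forall x : \iota \,.\, (o_i^\iota x : \Prop)$ it is enough, for each constructor $c_k^\iota$ of arity $n_k$, to derive $(o_i^\iota (c_k^\iota x_1 \ldots x_{n_k})) : \Prop$ from the hypotheses $x_j : \iota_{k,j}^*$ together with the inductive hypotheses $t x_{j_{k,l}}$ on the recursive arguments. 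Now $o_i^\iota (c_k^\iota x_1 \ldots x_{n_k})$ is, by axiom~5 (if $i = k$) or axiom~6 (if $i \ne k$, giving $\neg(o_i^\iota(\ldots))$), provably true or provably the negation of something provable; in either case the rule $p_i$ (which turns $\Gamma \proves \varphi$ into $\Gamma \proves \varphi : \Prop$) — or, for the negation case, first $p_i$ on the inner $o_i^\iota(c_k^\iota\ldots)$ and then $\supset_t$ via $\neg\varphi \equiv \varphi \supset \bot$ together with $\bot_t$ — yields $(o_i^\iota(c_k^\iota x_1 \ldots x_{n_k})) : \Prop$. Actually the cleanest route: in both cases axioms~5/6 give $\Gamma' \proves \psi$ where $\psi$ is either $o_i^\iota(c_k^\iota\ldots)$ or its negation, and the propositionality we want is about $o_i^\iota(c_k^\iota\ldots)$ itself; when $i=k$ apply $p_i$ directly, and when $i\ne k$ we have $\neg(o_i^\iota(c_k^\iota\ldots))$, so $\supset_{t2}$ applied to $\neg(\ldots):\Prop$ (obtained by $p_i$ from axiom~6) gives $(o_i^\iota(c_k^\iota\ldots)):\Prop$. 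Assembling these cases via $i_i^\iota$ gives $x : \iota \proves (o_i^\iota x) : \Prop$, and $\to_i$ (using axiom~4, $\iota : \Type$, and $\Prop : \Type$ from axiom~3, plus $\to_t$) gives $o_i^\iota : \iota \to \Prop$.

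For the second part, $\proves_{\I_s} \forall x : \iota \,.\, o_i^\iota x \supset (d_{i,j}^\iota x : \iota_{i,j}^*)$, I would again proceed by induction on $\iota$ using $i_i^\iota$ with $t = \lambda x \,.\, (o_i^\iota x \supset (d_{i,j}^\iota x : \iota_{i,j}^*))$. For each constructor $c_k^\iota$ of arity $n_k$, under hypotheses $x_1 : \iota_{k,1}^*, \ldots, x_{n_k} : \iota_{k,n_k}^*$ (and the inductive hypotheses, which turn out not to be needed here), I must show $o_i^\iota(c_k^\iota x_1 \ldots x_{n_k}) \supset (d_{i,j}^\iota(c_k^\iota x_1 \ldots x_{n_k}) : \iota_{i,j}^*)$. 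Split on whether $k = i$. If $k \ne i$, axiom~6 gives $\neg(o_i^\iota(c_k^\iota x_1\ldots x_{n_k}))$, i.e. $o_i^\iota(c_k^\iota\ldots) \supset \bot$, and by $\bot_e$ everything follows from $\bot$, so from the hypothesis $o_i^\iota(c_k^\iota\ldots)$ we derive the conclusion; to package this as $\supset$ I use the derived rule $\supset_i$, whose side condition $o_i^\iota(c_k^\iota\ldots) : \Prop$ is exactly what the first part of this lemma (or its proof) supplies. If $k = i$, then $j \le n_i = n_k$ and axiom~7 gives $\Eq{(d_{i,j}^\iota(c_i^\iota x_1 \ldots x_{n_i}))}{x_j}$; since we have $x_j : \iota_{i,j}^* = \iota_{k,j}^*$ as a hypothesis, rewriting along this equation (using eq-sym and the congruence/eq rules, or the rule $\mbox{eq}$ applied to the judgement $\Is{x_j}{\iota_{i,j}^*}$ after rewriting $x_j$ to $d_{i,j}^\iota(c_i^\iota\ldots)$) yields $d_{i,j}^\iota(c_i^\iota x_1 \ldots x_{n_i}) : \iota_{i,j}^*$; discharging the (now propositional) antecedent $o_i^\iota(c_i^\iota\ldots)$ via $\supset_i$ completes this case. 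Collecting both cases across all $k$, the rule $i_i^\iota$ yields $\forall x : \iota \,.\, o_i^\iota x \supset (d_{i,j}^\iota x : \iota_{i,j}^*)$.

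The main obstacle I anticipate is bookkeeping around the side conditions, rather than any deep idea. Specifically: (i) verifying the propositionality premise $\varphi : \Prop$ needed to apply $\supset_i$ in the second part — this is where the first part of the lemma is genuinely used, and one must be careful that $o_i^\iota x : \Prop$ holds under the induction hypotheses, which is why it is natural to prove the two parts together or in this order; (ii) making sure the variable-freshness conditions in $i_i^\iota$ ($x, x_1,\ldots,x_{n_i} \notin FV(\Gamma, t)$) and in $\to_i$ ($x \notin FV(\Gamma,\alpha,\beta)$) are met, which is routine since $\Gamma$ is empty and $\iota, \Prop$ are closed; and (iii) the rewriting step with axiom~7, where one must correctly apply eq-symmetry and the substitution-into-judgements principle — but this is exactly the kind of step licensed by rules $\mbox{eq}$, eq-sym, eq-trans together with the fact that $\Is{t}{\alpha}$ is itself a term to which $\mbox{eq}$ applies. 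None of these is conceptually hard; the proof is essentially "induction on $\iota$, then case split on the constructor, using axioms 5, 6, 7 for the base reasoning."
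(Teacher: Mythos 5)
Your proposal matches the paper's proof essentially step for step: both parts are proved by the induction rule $i_i^\iota$ with $t \equiv \lambda x \,.\, (o_i^\iota x : \Prop)$ and $t \equiv \lambda x \,.\, (o_i^\iota x \supset (d_{i,j}^\iota x : \iota_{i,j}^*))$ respectively, using axioms 5/6 with $p_i$ and $\supset_{t2}$ for propositionality, $\bot_e$ in the $k \ne i$ case, axiom 7 with the Eq-rules in the $k = i$ case, and $\to_i$ to assemble the arrow type. The bookkeeping concerns you flag (the $\supset_i$ side conditions discharged via the first part, and the rewrite along axiom 7 using eq/eq-sym) are handled in the paper exactly as you describe.
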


\begin{lemma}\label{lem_ind_leibniz_implies_eq}
  If $\iota \in \Tc_I$ then $\proves_{\I_s} \forall x, y : \iota \,.\,
  x =_\iota y \supset x = y$.
\end{lemma}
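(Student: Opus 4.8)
The plan is to prove $\forall x, y : \iota \,.\, x =_\iota y \supset x = y$ by the induction principle~$i_i^\iota$ applied to a suitable predicate. The natural choice is the term
\[
  t \equiv \lambda x \,.\, \forall y : \iota \,.\, x =_\iota y \supset x = y,
\]
so that proving $\forall x : \iota \,.\, t x$ yields the statement after unfolding abbreviations (and re-packaging the two universal quantifiers, which is routine given Lemma~\ref{lem_leibniz} to discharge the $\Prop$-side conditions arising in $\forall_i$/$\supset_i$). To apply~$i_i^\iota$ I must establish, for each constructor $c_i^\iota$ of arity $n_i$ with signature $\langle \iota_{i,1},\ldots,\iota_{i,n_i}\rangle$, the premise
\[
  \Gamma, x_1 : \iota_{i,1}^*, \ldots, x_{n_i} : \iota_{i,n_i}^*, t x_{j_{i,1}}, \ldots, t x_{j_{i,k_i}} \proves t (c_i^\iota x_1 \ldots x_{n_i}),
\]
i.e. assuming the ``induction hypotheses'' $t x_{j}$ for the recursive argument positions $j$ (those with $\iota_{i,j} = \star$), I must show $\forall y : \iota \,.\, (c_i^\iota \vec{x}) =_\iota y \supset (c_i^\iota \vec{x}) = y$.

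So fix such an $i$, introduce $y : \iota$, and assume $h : (c_i^\iota \vec{x}) =_\iota y$. The key step is to use the Leibniz-equality hypothesis $h$ with a cleverly chosen predicate $p : \iota \to \Prop$. I will take $p \equiv \lambda z \,.\, \Cond{o_i^\iota z}{\big((d_{i,1}^\iota z =_{\iota_{i,1}^*} x_1) \wedge \cdots \wedge (d_{i,n_i}^\iota z =_{\iota_{i,n_i}^*} x_{n_i})\big)}{\bot}$ — intuitively, ``$z$ is built by $c_i^\iota$ and its components are Leibniz-equal to $\vec{x}$''. Using axioms~5 and~7 (and the typing lemmas just before this one, to see $o_i^\iota z : \Prop$ so that the $\Cond$ reduces correctly via $c_1$/$c_2$, and that $d_{i,j}^\iota(c_i^\iota\vec x) =_{\iota_{i,j}^*}$ makes sense), one checks $p (c_i^\iota \vec x)$ holds: it reduces to the conjunction of $x_j =_{\iota_{i,j}^*} x_j$, each of which follows from reflexivity (Lemma~\ref{lem_leibniz}, or Lemma~\ref{lem_I_s_subst} with the reflexivity clause) since each $\iota_{i,j}^*$ is a type by axiom~4 and the context assumptions. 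Then $h$ applied to this $p$ gives $p y$. Unfolding $p y$: since $o_i^\iota y$ must be true (otherwise $p y$ would reduce to $\bot$, contradiction via $\bot_e$), we learn $o_i^\iota y$, hence by axiom~6 $y$ is not of the form $c_j^\iota(\ldots)$ for $j \ne i$; and we obtain $d_{i,k}^\iota y =_{\iota_{i,k}^*} x_k$ for each $k$. For the recursive positions $k = j_{i,\ell}$ we have $\iota_{i,k}^* = \iota$, so the induction hypothesis $t x_k$ converts $d_{i,k}^\iota y =_\iota x_k$ into $d_{i,k}^\iota y = x_k$; for the non-recursive positions we already have Leibniz equality of components of an arbitrary type, which is weaker than the meta-equality $=$ and this is where a little care is needed (see below). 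Finally, knowing $o_i^\iota y$ together with the components of $y$ being (meta- or Leibniz-) equal to $\vec x$, I want to conclude $(c_i^\iota \vec x) = y$; this goes via $y = c_i^\iota (d_{i,1}^\iota y) \ldots (d_{i,n_i}^\iota y)$ — which should itself be derivable from $o_i^\iota y$ by a surjective-pairing-style argument, or more honestly needs to be obtained as an auxiliary lemma about inductive types — and then eq-cong-app repeatedly, substituting $d_{i,k}^\iota y = x_k$.

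The main obstacle I anticipate is precisely the mismatch between Leibniz equality $=_\alpha$ and the meta-level equality $=$ on the \emph{non-recursive} constructor arguments: the induction hypothesis only fires at positions marked $\star$, so at a position of some other inductive type $\iota_{i,j}$ I would need $\forall u, v : \iota_{i,j} \,.\, u =_{\iota_{i,j}} v \supset u = v$, which is exactly the statement being proven but for a \emph{different} (possibly larger? no — structurally smaller, since $\iota_{i,j} \in \Tc_I$ occurs in $\iota$) inductive type. The clean fix is to strengthen the induction to an outer structural induction on the inductive type $\iota \in \Tc_I$ itself (well-founded since $\Tc_I$ is generated inductively), so that when handling $\iota$ we may assume the statement already holds for all proper inductive subexpressions $\iota_{i,j}$. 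Alternatively, one avoids the issue entirely by proving first the cleaner intermediate fact that $o_i^\iota y$ implies $y = c_i^\iota(d_{i,1}^\iota y)\ldots(d_{i,n_i}^\iota y)$ and then using the Leibniz hypotheses on components only to rewrite \emph{inside a Leibniz-equality context}, deriving $c_i^\iota \vec x =_\iota y$ rather than $=$, and separately showing $o_i^\iota y$ holds so the nested-$\Cond$ bookkeeping goes through — but ultimately the ``$=_\iota$ implies $=$'' direction on $\iota$ still needs the recursive-argument induction hypotheses, so the two-layer induction seems unavoidable and is the crux of the argument. The remaining steps — discharging $\Prop$-typing side conditions, $\beta$/Cond reductions, and the eq-congruence bookkeeping — are routine given the lemmas already established.
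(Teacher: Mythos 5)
Your proposal is essentially sound and matches the paper's strategy in its two crucial ideas: the outer induction is a meta-theoretic structural induction on $\iota \in \Tc_I$ (needed exactly where you say, for non-recursive argument positions of other inductive types), and the Leibniz hypothesis is instantiated with a $\mathrm{Cond}$-guarded predicate that tests the constructor tag with $o_i^\iota$ and compares destructed components, returning $\bot$ on a tag mismatch so that $\bot_e$ closes that case. Two differences from the paper's proof are worth noting. First, the paper uses $n_i$ separate predicates $f_k \equiv \lambda z \,.\, \Cond{(o_i^\iota z)}{(d_{i,k}^\iota z =_{\iota_{i,k}^*} x_k)}{\bot}$, one per argument position, rather than your single conjunction; this is only a bookkeeping difference (it avoids having to type a big $\wedge$, at the cost of $n_i$ instantiations). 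Second, and more substantively, the paper does \emph{not} keep $y$ abstract: it proves $\forall y : \iota \,.\, (c_i^\iota \vec{x}) =_\iota y \supset (c_i^\iota \vec{x}) = y$ by a second application of $i_i^\iota$, so that $y$ is always literally of the form $c_j^\iota y_1 \ldots y_{n_j}$, axiom~7 computes $f_k(c_i^\iota \vec{y}) = (y_k =_{\iota_{i,k}^*} x_k)$ directly, and the final step is just eq-cong-app on $x_k = y_k$. Your route instead needs the surjectivity law $o_i^\iota y \supset y = c_i^\iota (d_{i,1}^\iota y) \ldots (d_{i,n_i}^\iota y)$, which you correctly flag as an unproven auxiliary lemma; it is derivable, but only by the very same induction on $y$ via $i_i^\iota$ (constructor case by axiom~7 and eq-cong-app, mismatched-tag case vacuously by axiom~6 and $\bot_e$), so the second object-level induction is not actually avoided, merely relocated. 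If you fill in that auxiliary lemma your argument goes through; as written it is the one genuine hole. One further small point: after extracting $d_{i,k}^\iota y =_{\iota_{i,k}^*} x_k$ you apply the induction hypothesis $t\,x_k$, which has the form $\forall y' : \iota \,.\, x_k =_\iota y' \supset x_k = y'$, so you also need symmetry of Leibniz equality (Lemma~\ref{lem_leibniz}) and of $\mathrm{Eq}$ to orient things correctly — routine, but it should be said.
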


We may define the type of natural numbers by $\Nat \equiv \mu(\langle
\rangle, \langle \star \rangle)$. We use the abbreviations: $0 \equiv
c_1^\Nat$ (zero), $\iszero \equiv o_1^\Nat$ (test for zero), $\s
\equiv c_2^\Nat$ (successor) and $\p \equiv \lambda x \,.\,
\Cond{(\iszero x)}{0}{(d_{2,1}^\Nat x)}$ (predecessor). The
rules~$i_i^\Nat$ and~$i_t^{\Nat,k}$ become:
\begin{longtable}{lr}
\multicolumn{2}{c}{
\(
  {n_i:}\; \inferrule{
    \Gamma \proves t 0 \\ \Gamma, x : \Nat, t x
    \proves t (\s x) \\ x \notin FV(\Gamma, t)
  }{
    \Gamma \proves \forall x : \Nat \,.\, t x
  }
\)
}
\\
\(
  {n_t^1:}\; \inferrule{
  }{
    \Gamma \proves 0 : \Nat
  }
\)
&
\(
  {n_t^2:}\; \inferrule{
    \Gamma \proves t : \Nat
  }{
    \Gamma \proves (\s t) : \Nat
  }
\)
\end{longtable}

To simplify the exposition, we discuss some properties of our
formulation of inductive types using the example of natural
numbers. Much of what we say applies to other basic inductive types,
with appropriate modifications.

The rule~$n_i$ is an induction principle for natural numbers. An
important property of this induction principle is that it places no
restrictions on~$t$. This allows us to prove by induction on natural
numbers properties of terms about which nothing is known
beforehand. In particular, we do not need to know whether~$t$ has a
$\beta$-normal form in order to apply the rule~$n_i$ to it. In
contrast, an induction principle of the form e.g.
\[
n_i':\;\; \forall f : \Nat \to \Prop \,.\, \left(\left(f 0 \wedge (\forall
x : \Nat \,.\, f x \supset f (\s x))\right) \supset \forall x : \Nat \,.\, f
x\right)
\]
would be much less useful, because to apply it to a term~$t$ we would
have to prove $t : \Nat \to \Prop$ \emph{beforehand}. Examples of the
use of the rule~$n_i$ for reasoning about possibly nonterminating
general recursive programs are given in Sect.~\ref{sec_partiality}.

We may define a recursor~$R$ for natural numbers in the following
way:
\begin{eqnarray*}
  R &=& \lambda g h x y \,.\, \Cond{(\iszero y)}{(g x)}{(h\, x\, (\p y)\,
    (R\, g\, h\, x\, (\p y)))}.
\end{eqnarray*}
Note that we need the predecessor as a primitive, because otherwise a
recursor would not be definable.

\newcommand{\mult}{\ensuremath{\cdot}}

Now $+$, $-$, $\mult$, $<$ and $\le$, usually used in infix notation,
are defined as follows.
\begin{eqnarray*}
  x + y &=& R (\lambda x \,.\, x) (\lambda x y z \,.\, \s z) x y \\
  x - y &=& R (\lambda x \,.\, x) (\lambda x y z \,.\, \p z) x y \\
  x \mult y &=& R (\lambda x \,.\, 0) (\lambda x y z \,.\, x + z) x y \\
  x \le y &=& \iszero (x - y) \\
  x < y &=& (\s x) \le y
\end{eqnarray*}

\begin{lemma}\label{lem_nat_op_well_defined}
  The following terms are derivable in the system~$\I_s$:
  \begin{itemize}
  \item $\forall x, y : \Nat \,.\, (x + y) : \Nat$, $\forall x, y :
    \Nat \,.\, (x - y) : \Nat$, $\forall x, y : \Nat \,.\, (x \mult y)
    : \Nat$,
  \item $\forall x, y : \Nat \,.\, (x \le y) : \Prop$, $\forall x, y : \Nat \,.\, (x < y) : \Prop$.
  \end{itemize}
\end{lemma}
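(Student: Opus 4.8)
The plan is to first establish the auxiliary fact $\proves_{\I_s} \forall x : \Nat \,.\, \p x : \Nat$, and then handle the five operations in the order $+$, $-$, $\mult$, $\le$, $<$, using each earlier item when proving the later ones. Throughout I reason freely modulo $\mathrm{Eq}$: by the rules for $\mathrm{Eq}$ (congruence, symmetry, transitivity, $\xi$) together with axiom~$\beta$, an occurrence of a subterm in a provable judgement may be replaced by anything provably equal to it; and I use that the recursion equations defining $R$, $+$, $-$, $\mult$, $\le$, $<$ are derivable, as explained above. Two small facts are used repeatedly: $\p 0 = 0$ (unfolding $\p$, the axiom-5 instance $\iszero 0$ applies, since $0 \equiv c_1^\Nat$ has arity~$0$, and rule~$c_1$ then rewrites the conditional), and $\p (\s t) = t$ for every term~$t$ (unfolding $\p$, the axiom-6 instance $\neg(\iszero(\s t))$ and rule~$c_2$ give $\p(\s t) = d_{2,1}^\Nat(\s t)$, and the axiom-7 instance $d_{2,1}^\Nat(\s t) = t$ finishes).

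For the auxiliary fact I apply rule~$n_i$ with $t \equiv \lambda x \,.\, (\p x : \Nat)$: the base case $\p 0 : \Nat$ follows from $\p 0 = 0$ and rule~$n_t^1$, and the inductive step $\p(\s w) : \Nat$ (where $w : \Nat$ is in the context; the induction hypothesis is not needed) follows from $\p(\s w) = w$ and $w : \Nat$. By~$\forall_e$ it follows that $\Gamma \proves t : \Nat$ implies $\Gamma \proves \p t : \Nat$ for every term~$t$.

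For each of $+$, $-$, $\mult$ I fix a variable~$x$, work in the context $\{x : \Nat\}$, prove $\forall y : \Nat \,.\, ((x + y) : \Nat)$ (respectively for $-$, $\mult$) by induction on~$y$ via rule~$n_i$, and finally close with one application of~$\forall_i$, whose first premise $\Nat : \Type$ is axiom~4. Unfolding the recursion equations of $+$ and $R$: the base case gives $x + 0 = (\lambda x \,.\, x)\, x = x$, which lies in~$\Nat$ by the context; the inductive step, with the hypothesis $(x + y) : \Nat$ in the context, uses $\p(\s y) = y$ and axiom~$\beta$ to reduce $x + \s y$ to $\s(x + y)$, which lies in~$\Nat$ by rule~$n_t^2$. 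Subtraction is identical except that the step reduces $x - \s y$ to $\p(x - y)$, which lies in~$\Nat$ by the hypothesis and the auxiliary fact. Multiplication is identical except that the base case gives $x \mult 0 = (\lambda x \,.\, 0)\, x = 0$, which lies in~$\Nat$ by rule~$n_t^1$, and the step reduces $x \mult \s y$ to $x + (x \mult y)$, which lies in~$\Nat$ by the already-established statement for~$+$ instantiated at~$x$ and~$x \mult y$.

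For $\le$: in the context $\{x : \Nat, y : \Nat\}$ we have $x - y : \Nat$ by the statement for~$-$ and $\iszero : \Nat \to \Prop$ by an earlier lemma, hence $\iszero(x - y) : \Prop$ by rule~$\to_e$; since $x \le y = \iszero(x - y)$, this gives $(x \le y) : \Prop$, and two applications of~$\forall_i$ yield the claim. For $<$: in the same context $\s x : \Nat$ by rule~$n_t^2$, so $(\s x) \le y : \Prop$ by the statement for~$\le$ instantiated at~$\s x$ and~$y$; since $x < y = (\s x) \le y$, this gives $(x < y) : \Prop$, and~$\forall_i$ twice finishes. I do not expect a genuine obstacle: the only points needing care are the predecessor fact (on which subtraction depends) and keeping the auxiliary statements in the order $\p$, $+$, $-$, $\mult$, $\le$, $<$; everything else is routine unfolding of recursion equations and bookkeeping with the side conditions of~$n_i$ and~$\forall_i$.
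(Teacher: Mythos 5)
Your proof is correct and follows essentially the same route as the paper's: fix $x:\Nat$, induct on~$y$ via rule~$n_i$ using the recursion equations of~$R$, close with~$\forall_i$, and reduce $\le$ and $<$ to the earlier items via $\to_e$ and the typing lemmas for tests and constructors. The paper only writes out the derivation for~$+$ and declares the rest ``similar''; your explicit auxiliary fact $\forall x:\Nat\,.\,\p x:\Nat$ (needed for the subtraction step) is a detail the paper glosses over, and you supply it correctly.
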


\begin{lemma}\label{lem_le_eq}
  $\proves_{\I_s} \forall x, y : \Nat \,.\, (x \ge y) \wedge (x \le y)
  \supset (x =_\Nat y)$.
\end{lemma}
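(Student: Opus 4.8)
The plan is to prove $\forall x, y : \Nat \,.\, (x \ge y) \wedge (x \le y) \supset (x =_\Nat y)$ by induction on $x$ using the rule $n_i$, with an inner induction on $y$, and to reduce everything to the behaviour of $-$ on natural numbers together with Lemma~\ref{lem_ind_leibniz_implies_eq}. First I would unfold the definitions: $x \le y \equiv \iszero(x - y)$ and $x \ge y \equiv y \le x \equiv \iszero(y - x)$, so the goal under the hypotheses becomes: if $\iszero(y - x)$ and $\iszero(x - y)$ then $x =_\Nat y$. The key arithmetic facts I would establish first (as auxiliary claims, each by a short induction via $n_i$, using the defining equation of $R$ and hence of $-$, together with rules $c_1$--$c_5$ for $\Cond$ and axioms $5$--$7$ for the constructors/tests of $\Nat$) are: (a) $\forall y : \Nat \,.\, 0 - y =_\Nat 0$; (b) $\forall x : \Nat \,.\, x - 0 =_\Nat x$; (c) $\forall x, y : \Nat \,.\, (\s x) - (\s y) =_\Nat x - y$; and, crucially, (d) $\forall x, y : \Nat \,.\, \iszero(x - y) \wedge \iszero(y - x) \supset x =_\Nat y$, which is the statement itself after unfolding. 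I would also need (e): $\neg\iszero((\s x) - 0)$, i.e. $(\s x) - 0 =_\Nat \s x$ together with axiom~$6$ giving $\neg(\iszero(\s(\cdots)))$.

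For the main induction I apply $n_i$ to the term $t \equiv \lambda x \,.\, \forall y : \Nat \,.\, (\iszero(y-x) \wedge \iszero(x-y)) \supset x =_\Nat y$; note $n_i$ places no typing restriction on $t$, which is exactly what makes this convenient. In the base case $x = 0$: from (a), $0 - y =_\Nat 0$, so $\iszero(0-y)$ holds for every $y : \Nat$, and the real content is $\iszero(y - 0)$, i.e. $\iszero y$ by (b); then an inner case analysis on $y$ (using the $\Nat$-induction rule degenerately, or rather axioms $5$--$6$: $\iszero 0$ holds, $\neg\iszero(\s z)$ holds) forces $y$ to be $0$, and $0 =_\Nat 0$ is Lemma~\ref{lem_leibniz}/\ref{lem_ind_leibniz_implies_eq} reflexivity. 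In the inductive step, assume $x : \Nat$ and the induction hypothesis $t x$; prove $t(\s x)$, i.e. for all $y : \Nat$ with $\iszero(y - \s x) \wedge \iszero(\s x - y)$, conclude $\s x =_\Nat y$. Do an inner induction on $y$ (again via $n_i$): the case $y = 0$ is impossible because $\s x - 0 =_\Nat \s x$ by (b) and $\neg\iszero(\s x)$ by axiom~$6$, so $\bot_e$ closes it; the case $y = \s y'$ reduces, via (c), the hypotheses to $\iszero(y' - x) \wedge \iszero(x - y')$, whence the outer induction hypothesis $t x$ gives $x =_\Nat y'$, and then congruence of $\s$ under Leibniz equality (a consequence of the definition of $=_\Nat$ and rule $\forall_e$ instantiating $p \equiv \lambda w \,.\, \s x =_\Nat \s w$, or directly Lemma~\ref{lem_leibniz}) yields $\s x =_\Nat \s y' = y$.

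The main obstacle is bookkeeping with the $\Cond$/recursor unfoldings: each of the arithmetic claims (a)--(c) and (e) requires carefully applying the defining equation $R\, g\, h\, x\, y = \Cond{(\iszero y)}{(g x)}{(h\, x\, (\p y)\, (R\, g\, h\, x\, (\p y)))}$, then using $c_1$ or $c_2$ (which need the relevant $\iszero$-fact, supplied by axioms $5$--$6$) to reduce the $\Cond$, then $\beta$ and the equality congruence rules (eq-trans, eq-cong-app) to push the computation through, and finally $n_t^1$, $n_t^2$ to keep the results typed as $\Nat$ so that the inductions stay within scope; I would also need that the propositions in the $\supset$'s are genuinely in $\Prop$ (for $\supset_i$), which follows from Lemmas~\ref{lem_nat_op_well_defined} and~\ref{lem_leibniz}. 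None of this is deep, but it is the part where a slip — e.g. forgetting a $: \Prop$ side condition, or mishandling $\p(\s x) =_\Nat x$ (which itself needs a one-line $\Cond$ computation plus axiom~$7$) — would break the derivation, so I would isolate (a)--(e) as named sublemmas and then assemble the double induction cleanly on top of them.
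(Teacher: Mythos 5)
Your proof is correct and follows essentially the same route as the paper's: an outer induction on $x$ via $n_i$, a case split on $y$ in the inductive step (which the paper packages as Lemma~\ref{lem_nat_peano_3} rather than as a degenerate inner $n_i$), reduction of the hypotheses via $\s x - \s y = x - y$ (the paper's Lemma~\ref{lem_s_minus}), and the $\iszero$ facts for the base and contradiction cases. The only cosmetic difference is that you conclude $\s x =_\Nat \s y'$ by Leibniz congruence of $\s$, whereas the paper passes through $\mathrm{Eq}$ via Lemma~\ref{lem_ind_leibniz_implies_eq} and back.
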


It is possible to derive Peano axioms for~$+$ and $\mult$ defined as
above.

\begin{theorem}\label{thm_peano}
  The following terms are derivable in the system~$\I_s$:
  \begin{itemize}
  \item $\forall x, y : \Nat \,.\, \left(\s x =_\Nat \s y\right)
    \supset \left(x =_\Nat y\right)$,
  \item $\forall x : \Nat \,.\, \neg (\s x =_\Nat 0)$,
  \item $\forall x : \Nat \,.\, (x + 0 =_\Nat x)$,
  \item $\forall x, y : \Nat \,.\, (x + \s y =_\Nat \s (x + y))$,
  \item $\forall x : \Nat \,.\, (x \mult 0 =_\Nat 0)$,
  \item $\forall x, y : \Nat \,.\, (x \mult (\s y) =_\Nat (x \mult y)
    + x)$.
  \end{itemize}
\end{theorem}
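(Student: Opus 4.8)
The plan is to prove each of the six items separately, using the induction rule $n_i$ together with the defining equations for $+$ and $\mult$ obtained from the recursion lemma. The two crucial computational facts I would establish first, as a preliminary, are the $\beta$-style reductions
$\Eq{(R\,g\,h\,x\,0)}{(g\,x)}$ and $\Eq{(R\,g\,h\,x\,(\s y))}{(h\,x\,y\,(R\,g\,h\,x\,y))}$ for arbitrary terms (the second using that $\iszero(\s y)$ is provably $\neg$, via axioms~5,~6 and the rule~$c_2$, together with axiom~7 giving $\p(\s y) = y$). These follow from the recursor definition, the $\Cond$-rules $c_1,c_2$, and the standard equality rules. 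From these I immediately get the unfolding equations $\Eq{(x+0)}{x}$, $\Eq{(x+\s y)}{\s(x+y)}$, $\Eq{(x\mult 0)}{0}$, $\Eq{(x\mult\s y)}{(x\mult y)+x}$ for \emph{arbitrary} terms $x,y$ — note these are meta-level $\mathrm{Eq}$ equalities, not yet the $=_\Nat$ statements in the theorem.

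Next I would handle the two injectivity/disjointness items, which do not need induction. For $\forall x : \Nat \,.\, \neg(\s x =_\Nat 0)$: fix $x : \Nat$, assume $\s x =_\Nat 0$, instantiate the Leibniz definition with the predicate $p \equiv \lambda y . \iszero y$ (which lives in $\Nat \to \Prop$ by the second lemma after the induction subsection), obtaining $\iszero(\s x) \supset \iszero 0$; since $\iszero 0$ is an axiom instance~5 and $\neg\iszero(\s x)$ is axiom instance~6, combined with $\supset_e$ this yields $\bot$. For $\s x =_\Nat \s y \supset x =_\Nat y$: instantiate Leibniz with the predicate $p \equiv \lambda z . \Eql{\Nat}{\p z}{x}$ — here I need $(\lambda z.\p z\,=_\Nat x) : \Nat\to\Prop$, which uses Lemma~\ref{lem_leibniz} and the fact that $\p$ maps $\Nat$ to $\Nat$ — get $(\p(\s x)=_\Nat x)\supset(\p(\s y)=_\Nat x)$; the antecedent holds since $\p(\s x)$ reduces to $x$ (axiom~7) and reflexivity of $=_\Nat$ (Lemma~\ref{lem_leibniz}), so $\p(\s y)=_\Nat x$, and $\p(\s y)$ reduces to $y$, giving $y=_\Nat x$, then symmetry.

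The remaining four items are the arithmetic identities, each proved by the rule $n_i$ applied to the term $t \equiv \lambda x . (\ldots =_\Nat \ldots)$; the hypothesis-freeness of $n_i$ is what lets us do this without first proving $t : \Nat\to\Prop$. Take $\forall x : \Nat \,.\, (x+0=_\Nat x)$: the base case $t\,0$ is $0+0=_\Nat 0$, which follows from the unfolding equation $\Eq{(0+0)}{0}$ plus reflexivity and the rule $\mathrm{eq}$; the step case assumes $t\,x$, i.e. $x+0=_\Nat x$, and must derive $\s x + 0 =_\Nat \s x$, which follows from $\Eq{(\s x + 0)}{\s x}$ and reflexivity — actually the induction hypothesis is not even needed here, only the unfolding. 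For the identities where the recursion runs on the second argument ($x+\s y$, $x\mult\s y$) one does induction on $x$ anyway to land inside $\Nat$-typed conclusions, but in fact $\forall x,y:\Nat.(x+\s y=_\Nat \s(x+y))$ is again immediate from the unfolding $\Eq{(x+\s y)}{\s(x+y)}$ and reflexivity of $=_\Nat$ at $\Nat$ (using Lemma~\ref{lem_nat_op_well_defined} to know $x+y:\Nat$), with no induction at all; similarly for the two multiplication clauses using the $\mult$-unfoldings plus Lemma~\ref{lem_nat_op_well_defined}. The only genuine subtlety — and the step I expect to be the main obstacle — is bookkeeping the passage from the meta-level equalities $\mathrm{Eq}$ delivered by $\beta$ and the $\Cond$-rules to the object-level Leibniz equalities $=_\Nat$ demanded by the statement: one needs, for each instance, a proof that the terms involved are of type $\Nat$ (from Lemma~\ref{lem_nat_op_well_defined} and $n_t^1,n_t^2$) so that reflexivity of $=_\Nat$ from Lemma~\ref{lem_leibniz} applies, and then the rule $\mathrm{eq}$ (with $\mathrm{eq}$-$\mathrm{sym}$/$\mathrm{trans}$ as needed) to transport the Leibniz statement across the $\mathrm{Eq}$-equalities. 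None of this is deep, but it must be carried out carefully for each of the six clauses.
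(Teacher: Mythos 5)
Your overall strategy for the four arithmetic identities --- derive the unfolding equations for $R$, $+$ and $\mult$ at the level of $\mathrm{Eq}$ from axioms 5--7, axiom~$\beta$ and the rules $c_1$, $c_2$ for $\mathrm{Cond}$, then transport to $=_\Nat$ via reflexivity (Lemma~\ref{lem_leibniz}), the typing facts of Lemma~\ref{lem_nat_op_well_defined} and the rule~eq --- is exactly what the paper intends, and your observation that no induction is needed for those four items is correct. Where you genuinely diverge is the first item. The paper first applies Lemma~\ref{lem_ind_leibniz_implies_eq} to weaken $\s x =_\Nat \s y$ to the meta-level equality $\s x = \s y$, then computes $x = \p(\s x) = \p(\s y) = y$ and recovers $x =_\Nat y$ by reflexivity and rule~eq. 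You instead instantiate the Leibniz quantifier directly at $p \equiv \lambda z.\,(\p z =_\Nat x)$. That also works, but it obliges you to discharge the side condition $p : \Nat\to\Prop$, which in turn needs $\p : \Nat\to\Nat$ --- a case split on $\iszero z$ using excluded middle, the conditional rules, Lemma~\ref{lem_destr_test_type}, and the fact $\forall z:\Nat.\,o_1^\Nat z \lor o_2^\Nat z$ (itself an induction). The paper's detour through Lemma~\ref{lem_ind_leibniz_implies_eq} buys exactly this: all of the predicate-typing bookkeeping is packaged once inside that lemma and never has to be repeated.

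One step as written fails. For $\forall x:\Nat.\,\neg(\s x =_\Nat 0)$ you instantiate Leibniz at $p\equiv\iszero$, obtaining $\iszero(\s x)\supset\iszero\, 0$; but from this, $\iszero\, 0$ (axiom~5) and $\neg\,\iszero(\s x)$ (axiom~6) no contradiction follows --- the implication is vacuously true because its antecedent is refutable, and $\supset_e$ gives you nothing. Recall that $\Eql{\Nat}{(\s x)}{0}$ unfolds to $\forall p : \Nat\to\Prop.\, p(\s x)\supset p\,0$, so the predicate must be \emph{false} at $\s x$ only if you intend to use it contrapositively, which the rules do not directly support. The fix is to take $p\equiv\lambda y.\,\neg(\iszero\, y)$ (typable in $\Nat\to\Prop$ via Lemma~\ref{lem_destr_test_type}, axiom~$\bot_t$ and $\supset_t$), whose instantiation yields $\neg\,\iszero(\s x)\supset\neg\,\iszero\, 0$ with a provable antecedent, hence $\neg\,\iszero\, 0$, contradicting axiom~5; alternatively, first apply symmetry of $=_\Nat$ from Lemma~\ref{lem_leibniz}, or again route through Lemma~\ref{lem_ind_leibniz_implies_eq} and rule~eq. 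With that repair the proposal is sound.
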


The following theorem shows that any function for which there exists a
measure on its arguments, which may be shown to decrease with every
recursive call in each of a finite number of exhaustive cases, is
typable in our system.

\begin{theorem}
  Suppose $\Gamma \proves \forall x_1 : \alpha_1 \ldots \forall x_n :
  \alpha_n \,.\, \varphi_1 \lor \ldots \lor \varphi_m$, $\Gamma
  \proves \alpha_j : \Type$ for $j = 1,\ldots,n$, and for
  $i=1,\ldots,m$: $\Gamma \proves \forall x_1 : \alpha_1 \ldots
  \forall x_n : \alpha_n \,.\, t_i : \beta \to \ldots \to \beta$
  where~$\beta$ occurs~$k_i + 1$ times, $\Gamma \proves \forall x_1 :
  \alpha_1 \ldots \forall x_n : \alpha_n \,.\, t_{i,j,k} : \alpha_k$
  for $j=1,\ldots,k_i$, $k=1,\ldots,n$, $x_1,\ldots,x_n \notin
  FV(f,\alpha_1,\ldots,\alpha_n,\beta)$ and
  \begin{eqnarray*}
  \Gamma \proves \forall x_1 : \alpha_1 \ldots \forall x_n :
  \alpha_n \,.\, \varphi_i &\supset& (f x_1 \ldots x_n = \\ && \;t_i (f t_{i,1,1}
  \ldots t_{i,1,n}) \ldots (f t_{i,k_i,1} \ldots t_{i,k_i,n})).
  \end{eqnarray*}
  If there is a term~$g$ such that $\Gamma \proves g : \alpha_1 \to
  \ldots \to \alpha_n \to \Nat$ and for $i=1,\ldots,m$
  \begin{eqnarray*}
  \Gamma \proves \forall x_1 : \alpha_1 \ldots \forall x_n : \alpha_n
  \,.\, \varphi_i &\supset& (\left((f x_1 \ldots x_n) :
  \beta\right) \lor \\ && \;((g t_{i,1,1} \ldots t_{i,1,n}) < (g x_1 \ldots
  x_n) \land \ldots \land \\ && \;\;(g t_{i,k_i,1} \ldots t_{i,k_i,n}) < (g x_1
  \ldots x_n)))
  \end{eqnarray*}
  where $x_1,\ldots,x_n \notin FV(g)$, then
  \[
  \Gamma \proves f : \alpha_1 \to \ldots \to \alpha_n \to \beta.
  \]
\end{theorem}

\section{Conservativity and Consistency}\label{sec_consistent}

In this section we show a sound embedding of ordinary classical
higher-order logic into~$\I_s$, which we also conjecture to be
complete. We have a completeness proof only for a restriction of this
embedding to first-order logic. We also give a brief overview of the
model construction used to establish consistency of~$\I_s$.

First, let us define the system~CPRED$\omega$ of classical
higher-order logic.
\begin{itemize}
\item The \emph{types} of CPRED$\omega$ are given by
  \[
  \Tc \;\; ::= \;\; o \;|\; \B \;|\; \Tc \rightarrow \Tc
  \]
  where~$\B$ is a specific finite set of base types. The type~$o$ is
  the type of propositions.
\item The set of terms of CPRED$\omega$ of type $\tau$, denoted
  $T_\tau$, is defined as follows:
  \begin{itemize}
  \item $V_\tau, \Sigma_\tau \subseteq T_\tau$,
  \item if $t_1 \in T_{\sigma\to\tau}$ and $t_2 \in T_\sigma$ then
    $t_1 t_2 \in T_\tau$,
  \item if $x \in V_{\tau_1}$ and $t \in T_{\tau_2}$ then $\lambda x :
    \tau_1 \,.\, t \in T_{\tau_1\to\tau_2}$,
  \item if $\varphi, \psi \in T_o$ then $\varphi \supset \psi \in
    T_o$,
  \item if $x \in V_{\tau}$ and $\varphi \in T_o$ then $\forall x :
    \tau \,.\, \varphi \in T_o$,
  \end{itemize}
  where for each type $\tau$ the set $V_\tau$ is a countable set of
  variables and $\Sigma_\tau$ is a countable set of constants. We
  assume that the sets $V_\tau$ and $\Sigma_\sigma$ are all pairwise
  disjoint. Terms of type $o$ are \emph{formulas}. As usual, we omit
  spurious brackets and assume that application associates to the
  left. We identify $\alpha$-equivalent terms, i.e. terms differing
  only in the names of bound variables are considered identical.
\item The system CPRED$\omega$ is given by the following rules and
  axioms, where $\Delta$ is a finite set of formulas, $\varphi, \psi$
  are formulas. The notation $\Delta, \varphi$ is a shorthand for
  $\Delta \cup \{\varphi\}$.

  \smallskip

  {\bf Axioms}
  \begin{itemize}
  \item $\Delta, \varphi \proves \varphi$
  \item $\Delta \proves \forall p : o \,.\, ((p \supset \bot) \supset
    \bot) \supset p$ where $\bot \equiv \forall p : o \,.\, p$
  \end{itemize}

  {\bf Rules}

  \begin{center}
  \begin{tabular}{lr}
    \(
      {\supset_i^P:}\; \inferrule{\Delta, \varphi \proves
        \psi}{\Delta \proves \varphi \supset \psi}
    \)
    &
    \(
      {\supset_e^P:}\; \inferrule{\Delta \proves \varphi \supset
        \psi \;\;\; \Delta \proves \varphi}{\Delta \proves \psi}
    \)
    \\
    & \\
    \(
      {\forall_i^P:}\; \inferrule{\Delta \proves
        \varphi}{\Delta \proves \forall x : \tau \,.\, \varphi} \; x
      \notin FV(\Delta), x \in V_\tau
    \)\;\;\;
    &
    \;\;\;\(
      {\forall_e^P:}\; \inferrule{\Delta \proves \forall x : \tau \,.\, \varphi}{\Delta
        \proves \varphi[x/t]}\; t \in T_\tau
    \)
    \\
    & \\
    \multicolumn{2}{c}{
      \(
      {\mathrm{conv}^P:}\; \inferrule{\Delta \proves \varphi \\
          \varphi =_\beta \psi}{\Delta \proves \psi}
      \)
    }
  \end{tabular}
  \end{center}
\end{itemize}

In CPRED$\omega$, we define Leibniz equality in type~$\tau \in \Tc$ by
\[
t_1 =_\tau t_2 \equiv \forall p : \tau \to o \,.\, p t_1 \supset p t_2
\]
The system~CPRED$\omega$ is intensional. An extensional
variant~E-CPRED$\omega$ may be obtained by adding the following axioms
for all $\tau, \sigma \in \Tc$:
\[
e_f^P: \forall f_1, f_2 : \tau \to \sigma \,.\, \left(\forall x : \tau
\,.\, f_1 x =_\sigma f_2 x \right) \supset (f_1 =_{\tau\to\sigma} f_2)
\]
\[
e_b^P: \forall \varphi_1, \varphi_2 : o \,.\, \left((\varphi_1 \supset
\varphi_2) \wedge (\varphi_2 \supset \varphi_1)\right) \supset
(\varphi_1 =_{o} \varphi_2)
\]

For an arbitrary set of formulas~$\Delta$ we write $\Delta \proves_S
\varphi$ if~$\varphi$ is derivable from a subset of~$\Delta$ in
system~$S$.

We now define a mapping~$\transl{-}$ from types and terms
of~CPRED$\omega$ to terms of~$\I_s$, and a mapping~$\Gamma(-)$ from
sets of terms of~CPRED$\omega$ to sets of terms of~$\I_s$ providing
necessary context. We assume that $\B \subseteq \Sigma_s$ and
$\Sigma_\tau \subseteq \Sigma_s$ for $\tau \in \Tc$, i.e. that all
base types and all constants of~CPRED$\omega$ occur as constants
in~$\I_s$, and also $V_\tau \subseteq V_s$ for $\tau \in \Tc$. The
definition of~$\transl{-}$ is inductive:
\begin{itemize}
\item $\transl{\tau} = \tau$ for $\tau \in \B$,
\item $\transl{o} = \Prop$,
\item $\transl{\tau_1\to\tau_2} = \transl{\tau_1} \to \transl{\tau_2}$
  for $\tau_1,\tau_2 \in \Tc$,
\item $\transl{c} = c$ if $c \in \Sigma_\tau$ for some $\tau \in \Tc$,
\item $\transl{x} = x$ if $x \in V_\tau$ for some $\tau \in \Tc$,
\item $\transl{t_1 t_2} = \transl{t_1} \transl{t_2}$,
\item $\transl{\lambda x : \tau \,.\, t} = \lambda x \,.\,
  \transl{t}$,
\item $\transl{\varphi \supset \psi} = \transl{\varphi} \supset
  \transl{\psi}$,
\item $\transl{\forall x : \tau \,.\, \varphi} = \forall x :
  \transl{\tau} \,.\, \transl{\varphi}$.
\end{itemize}
If $\Delta$ is a set of formulas, then~$\transl{\Delta}$ denotes the
image of~$\transl{-}$ on~$\Delta$. The set~$\Gamma(\Delta)$ is defined
to contain the following:
\begin{itemize}
\item $x : \transl{\tau}$ for all $\tau \in \Tc$ and all $x \in
  FV(\Delta)$ such that $x \in V_\tau$,
\item $c : \transl{\tau}$ for all $\tau \in \Tc$ and all $c \in
  \Sigma_\tau$,
\item $\tau : \Type$ for all $\tau \in \B$,
\item $y : \tau$ for all $\tau \in \B$ and some $y \in V_\tau$ such
  that $y \notin FV(\Delta)$.
\end{itemize}
The last point is needed because in ordinary logic one always assumes
that the domains are non-empty.

\begin{theorem}\label{thm_sound_hol}
  If $\Delta \proves_{\mathrm{CPRED}\omega} \varphi$ then
  $\Gamma(\Delta, \varphi), \transl{\Delta} \proves_{\I_s}
  \transl{\varphi}$. The same holds if we change~CPRED$\omega$ to
  \mbox{E-CPRED$\omega$} and~$\I_s$ to~$e\I_s$.
\end{theorem}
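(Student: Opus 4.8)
The plan is to proceed by induction on the derivation of $\Delta \proves_{\mathrm{CPRED}\omega} \varphi$, showing in each case that $\Gamma(\Delta,\varphi),\transl{\Delta} \proves_{\I_s} \transl{\varphi}$. Before the induction it is convenient to prove a few auxiliary facts by a separate (structural) induction on types and terms of CPRED$\omega$: first, that $\Gamma(\Delta) \proves_{\I_s} \transl{\tau} : \Type$ for every $\tau \in \Tc$ (using axiom~3 for $\transl{o}=\Prop$, axiom~4 and the context entries $\tau:\Type$ for base types, and the rule~$\to_t$ for arrow types); second, a typing lemma stating that if $t \in T_\tau$ then $\Gamma(\Delta) \proves_{\I_s} \transl{t} : \transl{\tau}$ whenever $FV(t) \subseteq FV(\Delta)$ together with the bound-variable conventions — here the constant and variable cases use the context entries in $\Gamma(\Delta)$, application uses $\to_e$, abstraction uses $\to_i$, $\supset$ uses $\supset_t$ (the derived rule), and $\forall$ uses $\forall_t$; and third, that $\transl{\varphi[x/t]} = \transl{\varphi}[x/\transl{t}]$, which is a routine induction on $\varphi$ since $\transl{-}$ commutes with all term formers. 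A monotonicity observation — that $\Gamma(\Delta) \subseteq \Gamma(\Delta')$ and $\transl{\Delta}\subseteq\transl{\Delta'}$ when $\Delta\subseteq\Delta'$, combined with Lemma~1 (weakening) — lets us freely enlarge the context, so I will always work with $\Gamma(\Delta,\varphi)$ or whatever superset is needed.

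With these in hand the induction cases are largely mechanical. For the axiom $\Delta,\varphi \proves \varphi$ we use axiom~1 of~$\I_s$. For the classical axiom $\Delta \proves \forall p : o \,.\, ((p \supset \bot) \supset \bot) \supset p$: its translation is $\forall p : \Prop \,.\, \neg\neg p \supset p$ (noting $\transl{\forall p:o.p} = \forall p:\Prop.p$, which is provably equal — indeed provably equivalent — to $\bot$, or alternatively one shows $\neg\neg p \supset p$ directly from the axiom~$c$ of excluded middle via $\vee_e$); this is derivable in~$\I_s$ from axiom~$c$ and $\Prop:\Type$ using $\forall_i$, and the auxiliary lemma guarantees $\Prop:\Type$ in context. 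For $\supset_i^P$ the key point — and the reason $\Gamma(\Delta)$ and $\transl{\Delta}$ were set up as they are — is that the derived rule $\supset_i$ of~$\I_s$ has the extra premise $\Gamma \proves \transl{\varphi} : \Prop$; but $\varphi \in T_o$, so the typing lemma supplies exactly this. For $\supset_e^P$ use the derived $\supset_e$; for $\forall_i^P$ use $\forall_i$, discharging the premise $\transl{\tau}:\Type$ by the auxiliary lemma and noting $x\notin FV(\transl{\Delta})$ follows from $x\notin FV(\Delta)$; for $\forall_e^P$ use $\forall_e$, with the substitution-commutation lemma rewriting $\transl{\varphi[x/t]}$ as $\transl{\varphi}[x/\transl{t}]$ and the typing lemma providing $\transl{t}:\transl{\tau}$; for $\mathrm{conv}^P$ observe that $\beta$-equality of CPRED$\omega$-terms translates to $\beta$-equality of their images, which is derivable in~$\I_s$ via axiom~$\beta$ and the congruence rules eq-cong-app, eq-$\lambda$-$\xi$, eq-trans, eq-sym, and then apply the rule~eq.

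The extensional case adds the two axioms $e_f^P$ and $e_b^P$; their translations are essentially $e_f$ (instantiated at the concrete types $\transl{\tau},\transl{\sigma}$, with $\transl{\tau}:\Type$, $\transl{\sigma}:\Type$ supplied by the auxiliary lemma and then wrapped in $\forall_i$) and $e_b$ (combined with Lemma~\ref{lem_leibniz} to move between $=_\Prop$ and $=$, and $p_i$/$\supset_i$ plumbing). I expect the main obstacle to be bookkeeping rather than a deep step: getting the context $\Gamma$ exactly right in the $\forall_i^P$ and $\supset_i^P$ cases — ensuring that every side condition ($x \notin FV(\Gamma,\alpha)$, the needed $:\Type$ and $:\Prop$ judgements, the non-emptiness witnesses $y:\tau$ that justify the implicit assumption of inhabited base types) is met simultaneously, and that the $\alpha$-conversion conventions of~$\I_s$ (bound variables distinct from free variables) do not clash with the translation of substitution. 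None of this is conceptually hard, but it is where an incorrectly chosen $\Gamma(-)$ would break the proof, which is presumably why the paper fixes $\Gamma(\Delta)$ to include the image-typing of all free variables and constants, the $:\Type$ assumptions for base types, and one inhabitant per base type.
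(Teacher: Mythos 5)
Your overall strategy coincides with the paper's: the same three auxiliary lemmas (well-formedness of $\transl{\tau}$, the typing lemma for $\transl{t}$, commutation of $\transl{-}$ with substitution), plus a $\beta$-compatibility lemma, followed by induction on the derivation with the cases handled essentially as you describe. The introduction rules, the axioms, and the extensional axioms all go through as you say.

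There is, however, one genuine gap: your claim that monotonicity of $\Gamma(-)$ plus weakening ``lets us freely enlarge the context'' to ``whatever superset is needed'' does not cover the elimination rules, where the problem runs in the opposite direction. In the case of $\supset_e^P$ (and similarly $\mathrm{conv}^P$), the inductive hypothesis for the minor premise yields a judgement in the context $\Gamma(\Delta,\psi,\varphi)$, which contains typing assumptions $x_i : \transl{\tau_i}$ for variables free in the discarded premise $\psi$ but not in $\Delta,\varphi$; the desired conclusion must live in the \emph{smaller} context $\Gamma(\Delta,\varphi)$, and weakening cannot remove these assumptions. The paper eliminates them by proving that every translated type is inhabited in context $\Gamma(\Delta)$ (this is where the witnesses $y:\tau$ for base types are really used, and the lemma must be proved for \emph{all} types, not just base types), substituting the inhabitants for the $x_i$ via Lemma~\ref{lem_I_s_subst}, and then discharging the resulting assumptions $s_i : \transl{\tau_i}$ with $p_i$, $\supset_i$ and $\supset_e$. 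The same inhabitation lemma is also needed in the $\forall_e^P$ case when $x \notin FV(\varphi)$: there $t$ does not occur in $\varphi[x/t]=\varphi$, so $\Gamma(\Delta,\varphi[x/t])$ need not type the free variables of $t$, and one must instantiate the universal with an inhabitant of $\transl{\tau}$ rather than with $\transl{t}$ itself --- a case split your sketch omits. You gesture at the inhabitant entries only as ``non-emptiness of base types,'' and you locate the context difficulty in the introduction rules, where in fact none arises; the real work is in the eliminations.
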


The above theorem shows that~$\I_s$ may be considered an extension of
ordinary higher-order logic. This extension is essentially obtained by
relaxing typing requirements on allowable
$\lambda$-terms. Type-checking is obviously undecidable in~$\I_s$, but
the purpose of types in illative systems is not to have a decidable
method for syntactic correctness checks, but to provide general means
for classifying terms into various categories. In practice, one might
still want to have a decidable (and necessarily incomplete) method for
checking correctness of some designated type assertions. Such a method
may be obtained by employing any type-checking algorithm sound
w.r.t.~$\I_s$.\footnote{By soundness we mean that if the algorithm
  declares $t : \alpha$ correct (in a context~$\Gamma$), then $\Gamma
  \proves_{\I_s} t : \alpha$. Completeness is the implication in the
  other direction. The point is that one would want some standard
  type-checking algorithms to be modified to work on terms of~$\I_s$,
  by declaring incorrect all type assertions not conforming to the
  syntax of (a straightforward translation of) the assertions handled
  by the algorithm.} However, the difference would be that a priori
type-checks would not be enforced in every situation. Occasionally,
with some more complex recursive functions, it might be convenient to
forgo these checks and reason about the types of such functions
explicitly, using the rules of~$\I_s$.

\begin{conjecture}\label{conj_conservative_hol}
  If $\Gamma(\Delta, \varphi), \transl{\Delta} \proves_{\I_s}
  \transl{\varphi}$ then $\Delta \proves_{\mathrm{CPRED}\omega}
  \varphi$. The same holds if we change~CPRED$\omega$ to
  \mbox{E-CPRED$\omega$} and~$\I_s$ to~$e\I_s$.
\end{conjecture}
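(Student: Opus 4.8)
The plan is to argue semantically, via model constructions, mirroring (and generalising) the completeness proof for the first-order restriction. Suppose $\Delta \notproves_{\mathrm{CPRED}\omega} \varphi$. By the Henkin completeness theorem for classical higher-order logic there is a general (Henkin) model $\Mc = (\{D_\tau\}_{\tau\in\Tc}, I)$ together with a valuation under which every formula of~$\Delta$ is true while~$\varphi$ is false; since ordinary logic assumes nonempty domains we may take each~$D_\tau$ with $\tau \in \B$ nonempty, matching the last clause in the definition of~$\Gamma(\Delta)$. For the extensional variant one starts instead from an extensional Henkin model of~$\Delta \cup \{\neg\varphi\}$.

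First I would show that the model construction used to establish consistency of~$\I_s$ (Sect.~\ref{sec_consistent} and the appendix) can be \emph{parameterised} by~$\Mc$: one builds an $\I_s$-model~$\Nc$ in which each base type $\tau\in\B$ is interpreted by the set~$D_\tau$, the sort~$\Prop$ is interpreted by the two truth values together with the non-propositional part of the carrier, each constant $c\in\Sigma_\tau$ by $I(c)\in D_\tau$, and --- crucially --- the semantic operation interpreting~$\mathrm{Fun}$ sends the codes of~$\transl{\sigma}$ and~$\transl{\tau}$ to a semantic type whose elements are exactly~$D_{\sigma\to\tau}$, viewed inside the untyped applicative structure, with application restricting correctly. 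One then checks, just as in the original consistency proof, that~$\Nc$ validates all axioms and is closed under all rules of~$\I_s$ (the inductive types, tests and destructors are added as before and play no role below), and, in the extensional case, that it additionally validates~$e_f$ and~$e_b$.

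Next I would prove, by induction on the structure of CPRED$\omega$-terms, a translation-preservation lemma: for every $t\in T_\tau$ the value of~$\transl{t}$ in~$\Nc$ equals the value of~$t$ in~$\Mc$, and for every formula~$\psi$ the term~$\transl{\psi}$ always denotes a truth value in~$\Nc$ (is a proposition) and is true in~$\Nc$ iff~$\psi$ is true in~$\Mc$. The cases for variables, constants, application and~$\lambda$ are immediate from the choice of carriers; the case of~$\supset$ uses that, by the induction hypothesis, both subformulas are propositions, so the $\mathrm{Subtype}$-encoding of implication in~$\I_s$ computes classical implication; and the case of~$\forall x:\tau.\psi$ uses that~$\forall$ over the type~$\transl{\tau}$ ranges in~$\Nc$ over exactly~$D_\tau$, which was arranged in the previous step. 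A direct check then shows that every formula of~$\Gamma(\Delta,\varphi)$ is true in~$\Nc$ (the type assertions $x:\transl{\tau}$, $c:\transl{\tau}$, $\tau:\Type$ and $y:\tau$ all hold by construction) and every $\transl{\psi}$ with $\psi\in\Delta$ is true in~$\Nc$, whereas $\transl{\varphi}$ is false in~$\Nc$. By soundness of~$\I_s$ (resp.~$e\I_s$) with respect to this class of models --- part of the apparatus of the consistency proof in the appendix --- we conclude $\Gamma(\Delta,\varphi),\transl{\Delta}\notproves_{\I_s}\transl{\varphi}$, which is the contrapositive of the statement (and likewise for~$e\I_s$).

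The main obstacle is the parameterisation step. The consistency model is built from below out of the untyped $\lambda$-calculus, so its function types are whatever that construction yields; forcing the semantic~$\mathrm{Fun}$ operation to return a type with the \emph{prescribed} Henkin carrier~$D_{\sigma\to\tau}$ --- an essentially arbitrary set of functions, closed only under definability --- requires either that the construction already produces sufficiently flexible function spaces at the relevant types or a substantial reworking of it. This is exactly why the result is only a conjecture while its first-order restriction is a theorem: in first-order logic one never quantifies over a type~$\transl{\sigma\to\tau}$, so the function types occurring in images of the translation need only contain the finitely many denotations of translated $\lambda$-terms and may be taken as small as convenient inside the illative model, whereas the higher-order quantifier $\forall x:\transl{\sigma\to\tau}.\,\psi$ forces an exact match with~$D_{\sigma\to\tau}$. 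A purely syntactic route --- transforming an arbitrary $\I_s$-derivation of~$\transl{\varphi}$ into a CPRED$\omega$-derivation of~$\varphi$ --- seems less promising, since $\I_s$-derivations may freely detour through terms that are not well typed and through the machinery of inductive types, none of which has an obvious CPRED$\omega$ counterpart.
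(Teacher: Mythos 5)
This statement is a \emph{conjecture} in the paper: no proof of it exists there, and the author explicitly proves only its restriction to first-order logic (Theorem~\ref{thm_embedding_01}). Your outline correctly reproduces the strategy that succeeds in the first-order case---take a countermodel of $\Delta\cup\{\neg\varphi\}$ in ordinary logic, transplant its carriers into the illative model construction of the appendix, and conclude by soundness of~$\I_s$ with respect to $\I_s$-models (Theorem~\ref{thm_sound})---and you also correctly locate the point at which it fails at higher order. But the step you introduce with ``First I would show'' is not a lemma you can discharge; it \emph{is} the open problem, and your final paragraph effectively concedes this. So the proposal does not prove the statement.

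To make the gap concrete: in the paper's construction the rewrite rule $\mathrm{Fun}\,\tau_1\,\tau_2 \to \tau_2^{\tau_1}$ interprets every translated function type $\transl{\sigma\to\tau}$ by the \emph{full} set-theoretic function space, and the whole apparatus of canonical constants, the relation~$\succ$, and the commutation/Church--Rosser arguments depends on this fullness (e.g.\ part~(a) of Lemma~\ref{lem_fun} produces the simulating constant~$d$ by freely choosing an arbitrary set-theoretic function, and conditions $(\to_i)$, $(\forall_\top)$ of Definition~\ref{def_I_s_model} are verified against that full space). A Henkin model of $\Delta\cup\{\neg\varphi\}$ in general has carriers $D_{\sigma\to\tau}$ that are proper, non-full subsets closed only under definability, and one cannot simply decree that $\mathrm{Fun}$ return such a carrier without reproving all of Appendix~F for the modified system~$R$. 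Moreover, the fallback of using the construction as it stands is provably insufficient: since the resulting model is standard at translated function types, the method can only refute~$\varphi$ when $\Delta\not\models\varphi$ in \emph{standard} semantics, and by incompleteness there are~$\varphi$ with $\Delta\notproves_{\mathrm{CPRED}\omega}\varphi$ yet $\Delta\models\varphi$ in all standard models; for those the proposed argument produces no countermodel at all. So either the model construction must be genuinely reworked to realise arbitrary (extensional or intensional) Henkin carriers, or an entirely different---e.g.\ proof-theoretic---argument is needed; neither is supplied here or in the paper.
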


We were able to prove this conjecture only for first-order logic. The
system of classical first-order logic (FOL) is obtained by
restricting~CPRED$\omega$ in obvious ways (leaving only one base
type~$\iota$, disallowing $\lambda$-abstraction, allowing
quantification only over~$\iota$, and constants only of types $\iota$,
$\iota \to \ldots \to \iota \to \iota$ or $\iota \to \ldots \to \iota
\to o$).

\begin{theorem}\label{thm_embedding_01}
  If $\I = \I_s$ or $\I = e\I_s$ then
  \[
  \Delta \proves_{\mathrm{FOL}} \varphi \mathrm{\ \ \ iff\ \ \ }
  \Gamma(\Delta, \varphi), \transl{\Delta} \proves_{\I}
  \transl{\varphi}
  \]
\end{theorem}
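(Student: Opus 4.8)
The forward direction ($\Rightarrow$) is just the first-order instance of Theorem~\ref{thm_sound_hol}, which is already available; so the real content is the converse. The plan is to build a model-theoretic argument that is the mirror image of the soundness translation. Concretely, I would start from an arbitrary first-order structure $\M$ (with non-empty domain $D$) satisfying $\Delta$ but not $\varphi$, and use it to construct a model of $\I_s$ (respectively $e\I_s$) — using the model construction sketched in Section~\ref{sec_consistent} and carried out in the appendix — in which every term of $\Gamma(\Delta,\varphi)$ and every term of $\transl{\Delta}$ is ``true'' but $\transl{\varphi}$ is not. Soundness of $\I_s$ with respect to that class of models then gives $\Gamma(\Delta,\varphi),\transl{\Delta}\notproves_{\I_s}\transl{\varphi}$, which is the contrapositive of the direction we want. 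The key point that makes this feasible in the first-order case (and not, as the authors note, in general) is that the domain $\transl{\iota}=\iota$ is a \emph{single} base type interpreted by the flat first-order domain $D$, so there are no function types over which we must quantify when interpreting $\transl{\Delta}$ and $\transl{\varphi}$: every formula of FOL translates to a term built from $\Prop$, $\iota$, $\forall$, $\supset$, and application of constants to variables of type $\iota$.

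The main steps, in order, would be: (1) Fix a countable first-order structure $\M\models\Delta$, $\M\not\models\varphi$ (completeness/soundness of FOL, or just take any such $\M$). (2) Extend $\M$ to an interpretation of the constants and base type appearing in $\Gamma(\Delta,\varphi)$ inside the $\I_s$-model $\Mc$ from the appendix: interpret $\iota$ as (the element of $\Mc$ coding) the type whose extension is $D$, interpret each first-order constant $c\in\Sigma_\iota$ or $c\in\Sigma_{\iota\to\cdots\to\iota\to\iota}$ or $c\in\Sigma_{\iota\to\cdots\to\iota\to o}$ by the corresponding function/relation of $\M$ packaged as an element of $\Mc$, and pick the witness $y$ required by the last clause of $\Gamma(\Delta)$ to be any element of $D$. (3) Check that under this interpretation every member of $\Gamma(\Delta,\varphi)$ is valid in $\Mc$ — the typing assertions $x:\iota$, $c:\transl\tau$, $\iota:\Type$, $y:\iota$ are immediate from the construction. (4) Prove the central bridging lemma by induction on first-order formulas $\psi$: for every assignment of elements of $D$ to the free variables, $\transl{\psi}$ is valid in $\Mc$ (under that assignment) if and only if $\M\models\psi$ under the corresponding assignment; the cases $\supset$ and $\forall x:\iota$ use precisely the informal many-valued semantics described before Theorem~\ref{thm_sound_hol} together with the fact that everything in sight has type $\Prop$ or ranges over $\iota$, so the ``laziness'' of the connectives never bites. (5) Apply step~(4) to each $\delta\in\Delta$ (true) and to $\varphi$ (false), conclude that $\Mc$ validates $\Gamma(\Delta,\varphi),\transl{\Delta}$ but not $\transl{\varphi}$, and invoke soundness of $\I_s$ (appendix) to finish. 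For $e\I_s$ one additionally notes, as the excerpt already states, that $\Mc$ validates the rules $e_f$ and $e_b$, and that $\M$ may be taken to be an ordinary (extensional) first-order structure, so the same argument goes through for E-CPRED$\omega$.

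The hard part will be step~(4), the faithfulness lemma connecting validity of $\transl{\psi}$ in $\Mc$ with ordinary first-order satisfaction — and, upstream of it, isolating exactly which fragment of the appendix's model $\Mc$ one needs. One must be careful that the clause-by-clause ``true iff'' reading of $\supset$, $\neg$, $\forall$ given informally in the text is \emph{actually} a theorem about $\Mc$ for the specific terms arising as translations of FOL formulas; in particular one needs that for $\transl{\psi}$ with $\psi\in T_o$ the value in $\Mc$ is always genuinely a truth value (never ``undefined'' or a non-proposition), which should follow by the same induction using that constants of relation type are interpreted as total $D^k\to\{\true,\false\}$ maps. A secondary subtlety is the non-emptiness witness $y$: it is needed so that $\Gamma(\Delta)$ is satisfiable and so that the FOL rule $\forall_e^P$ — which presupposes a non-empty domain — is matched on the $\I_s$ side; one must make sure the chosen $y\in D$ makes $y:\iota$ valid, which again is immediate once $\iota$ is interpreted as the type with extension $D$. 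Everything else (the $\beta$/equality machinery, the excluded-middle axiom, the first-order inference rules) is routine given Theorem~\ref{thm_sound_hol} for the easy direction and the appendix's soundness theorem for the hard one.
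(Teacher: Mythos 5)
Your proposal matches the paper's own proof in all essentials: the forward direction is indeed just the first-order instance of Theorem~\ref{thm_sound_hol}, and the converse is proved exactly as you describe, by instantiating the appendix's model construction with the base-type constants taken to be the universe of a first-order countermodel of $\Delta\cup\{\neg\varphi\}$, proving the faithfulness lemma ($\valuation{\transl{\psi}}{v}{\Mc}=\true$ iff $\Ac,v\models\psi$, and $=\false$ iff $\Ac,v\not\models\psi$) by induction on formulas, and concluding via soundness (Theorem~\ref{thm_sound}). The subtleties you flag — that translated formulas must always receive a genuine truth value, and that the non-emptiness witness makes the quantifier rules match — are precisely the points the paper's induction handles.
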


\begin{theorem}\label{thm_consistent_01}
  The systems~$\I_s$ and~$e\I_s$ are consistent,
  i.e. $\not\proves_{\I_s} \bot$ and $\not\proves_{e\I_s} \bot$.
\end{theorem}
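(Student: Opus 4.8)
\textbf{Proof proposal for Theorem~\ref{thm_consistent_01}.}

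The plan is to establish consistency semantically, by exhibiting a single model $\M$ that validates every rule and axiom of~$\I_s$ — and, in addition, the rules $e_f$ and $e_b$, so that $e\I_s$ is covered as well — while $\bot$ is not valid in it. The model is an applicative structure $(D,\cdot,\llbracket\cdot\rrbracket)$ that is an extensional $\lambda$-model, validating the axiom~$\beta$, the equality and congruence rules, eq-$\lambda$-$\xi$ and $e_f$, and — being reflexive — admitting solutions to arbitrary recursive equations, together with two distinguished elements $\true,\false\in D$. Writing $\llbracket t\rrbracket_v$ for the denotation of $t$ under a valuation $v$ of the free variables, define $\Gamma\models t$ to mean: for every $v$, if $\llbracket\psi\rrbracket_v=\true$ for all $\psi\in\Gamma$, then $\llbracket t\rrbracket_v=\true$. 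The theorem follows from a soundness lemma, $\Gamma\proves t \Rightarrow \Gamma\models t$, since $\llbracket\bot\rrbracket_v=\false\neq\true$ for all $v$ rules out any derivation of $\proves\bot$.

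The substance of the proof is the construction of~$\M$, following and extending the construction of~\cite{Czajka2013Accepted}. Types are interpreted as subsets of $D$: I would fix a collection $\mathcal{U}\subseteq\mathcal{P}(D)$ of ``admissible'' subsets, closed under the operations needed to interpret $\mathrm{Fun}$ (the set of $f\in D$ mapping one admissible set into another), $\mathrm{Subtype}$ (an admissible set cut down by a definable predicate), and the arbitrary intersections implicit in $\forall$ ranging over a type; to each member of $\mathcal{U}$ one assigns a code in $D$. Then $\Type$ is interpreted as the set of these codes, $\mathrm{Is}\,a\,b=\true$ iff $b$ codes some $S\in\mathcal{U}$ and $a\in S$, $\mathrm{Is}\,a\,b=\false$ iff $b$ codes some $S\in\mathcal{U}$ and $a\notin S$, and $\mathrm{Is}\,a\,b$ is some harmless junk otherwise. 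The connectives $\forall$, $\vee$ and the constant $\bot$ are interpreted as total operations on $D$ returning $\true$ or $\false$ exactly according to the informal clauses of Sect.~\ref{sec_I_s}; in particular $\Prop$ collapses to $\{\true,\false\}$, which makes the classical axiom~$c$ and excluded middle on $\Prop$ immediate and forces $e_b$. Hilbert's $\epsilon$ is interpreted via a choice function picking an element out of the set coded by each code of a nonempty admissible set. The circular dependencies among $D$, $\mathcal{U}$, the code assignment and $\Type$ force this whole package to be built as a simultaneous fixed point, with a cardinality bound on $\mathcal{U}$ keeping everything a set.

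Next I would interpret the basic inductive types. Each constructor $c_i^\iota$ is interpreted as an injective ``tagging'' operation on $D$ whose value records the tag $i$ and the argument tuple, with distinct tags having disjoint ranges; the associated destructors and tests simply read off arguments and tag, which immediately validates axioms~5,~6,~7 and the rules $i_t^{\iota,k}$. For $\iota=\mu(\ldots)$ define $|\iota|\subseteq D$, by recursion on the structure of $\iota\in\Tc_I$ together with an inner least-fixed-point, to be the smallest subset of $D$ closed under the tagged constructor operations applied to arguments drawn from the (already defined) component types; add each $|\iota|$ to $\mathcal{U}$, so that axiom~4 holds. Because $|\iota|$ is generated inductively and well-foundedly, a proof of $\forall x:\iota\,.\,t\,x$ can be carried out by induction along this generation — which is precisely the content of the restriction-free rule~$i_i^\iota$, where $t$ is an arbitrary element of $D$ about which nothing need be known in advance.

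Finally I would prove the soundness lemma by induction on derivations. Most cases are routine given the design: $\beta$, the equality rules eq, eq-sym, eq-trans, eq-cong-app and the congruence rules from the $\lambda$-model; $\vee$- and $\forall$-introduction/elimination together with the $\Prop$-typing rules ($\forall_t$, $\vee_t$, $\supset_{t2}$, $p_i$, etc.) from the two-valued interpretation of the connectives; $\to_i$, $\to_e$, $\to_t$ and the subtype rules $s_i$, $s_e$, $s_{et}$, $s_t$ from the definition of $\mathcal{U}$; $\epsilon_i$ from the choice function; $c_1$--$c_5$, $\bot_t$, $\bot_e$ from the interpretations of $\mathrm{Cond}$, $\bot$ and $\mathrm{Is}$; $e_f$ from extensionality of $\M$ (so that Leibniz equality in a function type coincides with pointwise equality in $D$) and $e_b$ from the collapse of $\Prop$. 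The genuine obstacles are two: (i) showing that the simultaneous fixed-point definition of $D$, $\mathcal{U}$, the code assignment, $\Type$ and the $|\iota|$'s is well-founded and size-bounded — this is the bulk of the technical appendix; and (ii) verifying the unrestricted induction rule~$i_i^\iota$, which requires the least-fixed-point character of $|\iota|$ and a careful induction on generation that interacts correctly with the recursive-position hypotheses $t\,x_{j_{i,1}},\ldots,t\,x_{j_{i,k_i}}$. Everything else is bookkeeping, and consistency of both $\I_s$ and $e\I_s$ then follows at once since $\llbracket\bot\rrbracket_v=\false$.
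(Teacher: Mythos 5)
Your overall architecture is exactly the paper's: define a notion of $\I_s$-model, prove soundness of all rules with respect to it (the paper's Theorem~\ref{thm_sound}), and exhibit one nontrivial model in which $\llbracket\bot\rrbracket\ne\llbracket\top\rrbracket$; your classification of which cases of the soundness induction are routine also matches. The gap is in the one place where all the work is: the construction of the model itself. You propose to interpret $\mathrm{Fun}\,\alpha\,\beta$ as the set of all $f\in D$ mapping the extension of $\alpha$ into the extension of $\beta$, and to obtain the package $(D,\mathcal{U},\mbox{codes},\Type,|\iota|)$ ``as a simultaneous fixed point with a cardinality bound.'' But the operator you would be iterating is not monotone: the clauses forcing $\forall\cdot a\cdot b=\top$ (and likewise membership in a function type, and the $\forall_t$/$s_t$ conditions) universally quantify over ``all $c$ with $\mathrm{Is}\cdot c\cdot a=\top$,'' so enlarging the extension of $a$ at a later stage can invalidate facts committed to at an earlier stage. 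This is precisely the obstruction the paper isolates in Sect.~\ref{sec_consistent}: a rule of the form ``if for all $t$ with $\mathrm{Is}\,t\,\tau\ipreduces\top$ we have $t_1 t\ipreduces\top$, then $\forall\,\tau\,t_1\ipcontr\top$'' contains a negative reference to the relation being defined, and no fixed point is guaranteed to exist. A cardinality bound on $\mathcal{U}$ does not repair this; the problem is monotonicity, not size.

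The missing idea is the device that breaks the circularity: fix the possible extensions of types \emph{in advance} as an honest set-theoretic hierarchy $\Tc$ (base sets, genuine function spaces $\tau_2^{\tau_1}$, subsets), populate each $\tau\in\Tc$ with fresh \emph{canonical constants}, let $\forall$ and $\mathrm{Subtype}$ range only over these canonical constants, and define a simulation relation $\succ$ (``$t$ behaves like the canonical constant $c$ in type $\tau$'') by a genuinely monotone transfinite induction. The burden then shifts to showing that restricting quantification to canonical constants loses nothing --- i.e.\ that $t_2\succ c$ and $t_1 c\ipreduces d\in\{\true,\false\}$ imply $t_1 t_2\ipreduces d$ --- and that the resulting reduction is Church--Rosser; these two commutation/postponement arguments (the lemmas on $\ipcontr$, $\succ$ and $\gg$ in the appendix) are the actual mathematical content of the consistency proof. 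Your sketch correctly locates the difficulty (your obstacle~(i)) but does not supply the idea that resolves it, and as written the construction would not get off the ground.
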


This follows from Theorem~\ref{thm_embedding_01}, but we actually
prove Theorem~\ref{thm_consistent_01} first by constructing a model,
and then use this construction to show Theorem~\ref{thm_embedding_01}.

We now give an informal overview of the model construction. To
simplify the exposition we pretend~$\I_s$ allows only function
types. Inductive types and subtypes add some technicalities, but the
general idea of the construction remains the same. This overview is
necessarily very brief. An interested reader is advised to consult a
technical appendix for more details.

An $\I_s$-model is defined essentially as a $\lambda$-model (see
e.g. \cite[Chapter 5]{Barendregt1984}) with designated elements
interpreting the constants of~$\I_s$, satisfying certain
requirements. By~$\valuation{t}{}{\Mc}$ we denote the interpretation
of the $\I_s$-term~$t$ in a model~$\Mc$. The conditions imposed on an
$\I_s$-model express the meaning of each rule of~$\I_s$ according to
the intuitive semantics. For instance, we have the conditions:
\begin{itemize}
\item[($\forall_\top$)] for $a \in \Mc$, if
  $\valuation{\mathrm{Is}}{}{\Mc} \cdot a \cdot
  \valuation{\Type}{}{\Mc} = \valuation{\top}{}{\Mc}$ and for all $c
  \in \Mc$ such that $\valuation{\mathrm{Is}}{}{\Mc} \cdot c \cdot a =
  \valuation{\top}{}{\Mc}$ we have $b \cdot c =
  \valuation{\top}{}{\Mc}$ then $\valuation{\forall}{}{\Mc} \cdot a
  \cdot b = \valuation{\top}{}{\Mc}$,
\item[($\forall_e$)] for $a,b \in \Mc$, if $\valuation{\forall}{}{\Mc}
  \cdot a \cdot b = \valuation{\top}{}{\Mc}$ then for all $c \in A$
  such that $\valuation{\mathrm{Is}}{}{\Mc} \cdot c \cdot a =
  \valuation{\top}{}{\Mc}$ we have $b \cdot c =
  \valuation{\top}{}{\Mc}$.
\end{itemize}
Here $\cdot$ is the application operation in the model.

We show that the semantics based on $\I_s$-models is sound
for~$\I_s$. Then it suffices to construct a non-trivial (i.e. such
that $\valuation{\top}{}{} \ne \valuation{\bot}{}{}$) $\I_s$-model to
establish consistency of~$\I_s$. The model will in fact satisfy
additional conditions corresponding to the rules~$e_f$ and~$e_b$, so
we obtain consistency of~$e\I_s$ as well.

The model is constructed as the set of equivalence classes of a
certain relation~$\ipeqvred$ on the set of so called semantic terms. A
semantic term is a well-founded tree whose leaves are labelled with
variables or constants, and whose internal nodes are labelled with
$\cdot$, $\lambda x$ or $\All\tau$. The intuitive interpretation of
nodes labelled with~$\cdot$ or $\lambda x$ is obvious. For semantic
terms with the roots labelled with~$\cdot$ and~$\lambda x$ we use the
abbreviations~$t_1 t_2$ and~$\lambda x . t$, respectively. A node
labelled with~$\All\tau$ represents universal quantification over a
set of constants~$\tau$, i.e. it ``represents'' the statement: for all
$c \in \tau$, $t c$ is true. Such a node has one child for each $c \in
\tau$. The relation~$\ipeqvred$ will be defined as the equivalence
relation generated by a certain reduction relation~$\ipcontr$ on
semantic terms. The relation~$\ipcontr$ will
satisfy\footnote{Substitution is defined for semantic terms in an
  obvious way, avoiding variable capture.}: $(\lambda x . t_1) t_2
\ipcontr t_1[x/t_2]$, $\lor \top t \ipcontr \top$, $\lor \bot \bot
\ipcontr \bot$, etc. The question is how to define~$\ipcontr$
for~$\forall t_1 t_2$ so that the resulting structure
satisfies~($\forall_\top$). One could try closing~$\ipcontr$ under the
rule:
\begin{itemize}
\item if $\mathrm{Is}\, t_1\, \Type \ipreduces \top$ and for all
  semantic terms~$t$ such that $\mathrm{Is}\, t\, t_1 \ipreduces \top$
  we have $t_2 t \ipreduces \top$, then $\forall t_1 t_2 \ipcontr
  \top$.
\end{itemize}
However, there is a negative reference to~$\ipcontr$ here, so the
definition would not be monotone, and we would not necessarily reach a
fixpoint. This is a major problem. We somehow need to know the range
of all quantifiers beforehand. However, the range (i.e. the set of all
semantic terms~$t$ such that $t_1 t \ipreduces \top$) depends on the
definition of~$\ipcontr$, so it is not at all clear how to achieve
this.

Fortunately, it is not so difficult to analyze a priori the form of
types of~$\I_s$. Informally, if $t : \Type$ is true, then $t$
corresponds to a set in~$\Tc$, where~$\Tc$ is defined as follows,
ignoring subtypes and inductive types, but instead introducing a base
type~$\delta$ of individuals.
\begin{itemize}
\item $\delta, \Bool \in \Tc$ where $\Bool = \{\top,\bot\}$
  and~$\delta$ is an arbitrary set of fresh constants.
\item If $\tau_1,\tau_2 \in \Tc$ then $\tau_2^{\tau_1} \in \Tc$, where
  $\tau_2^{\tau_1}$ is the set of all set-theoretic functions
  from~$\tau_1$ to~$\tau_2$.
\end{itemize}
We take the elements of~$\Tc$ and~$\bigcup \Tc \setminus \Bool$ as
fresh constants, i.e. they may occur as constants in semantic
terms. The elements of~$\bigcup \Tc$ are \emph{canonical
  constants}. If $c \in \tau_2^{\tau_1}$ and $c_1 \in \tau_1$ then we
write $\Fc(c)(c_1)$ instead of $c(c_1)$ to avoid confusion with the
semantic term $c c_1$. We then define a relation~$\succ$ satisfying:
\begin{itemize}
\item $c \succ c$ for a canonical constant~$c$,
\item if $c \in \tau_2^{\tau_1}$ and for all $c_1 \in \tau_1$ there
  exists a semantic term~$t'$ such that $t c_1 \ipreduces t' \succ
  \Fc(c)(c_1)$, then $t \succ c$.
\end{itemize}
Intuitively, $t \succ c \in \tau$ holds if~$c$ ``simulates''~$t$ in
type~$\tau$, i.e. $t$ behaves exactly like~$c$ in every context where
a term of type~$\tau$ is ``expected''.

The relation~$\ipcontr$ is then defined by transfinite induction in a
monotone way. It will satisfy e.g.:
\begin{itemize}
\item if $t \succ c \in \tau \in \Tc$ then $\mathrm{Is}\,t\,\tau
  \ipcontr \top$,
\item if $t \succ c_1 \in \tau_1$ and $c \in \tau_2^{\tau_1}$ then $c
  t \ipcontr \Fc(c)(c_1)$,
\item $\mathrm{Fun}\,\tau_1\,\tau_2 \ipcontr \tau_2^{\tau_1}$,
\item $\forall \tau t \ipcontr t'$ where the label at the root of~$t'$
  is~$\All\tau$, and for each $c \in \tau$, $t'$ has a child~$t c$,
\item $t \ipcontr \top$ if the label of the root of~$t$ is~$\All\tau$,
  and all children of~$t$ are labelled with~$\top$,
\item if $t_c \ipreduces t_c'$ for all $c \in \tau \in \Tc$, the label
  of the root of~$t$ is~$\All\tau$, and $\{ t_c \;|\; c \in \tau \}$
  is the set of children of~$t$, then $t \ipcontr t'$, where the label
  of the root of~$t'$ is~$\All\tau$ and $\{t_c' \;|\; c \in \tau\}$ is
  the set of children of~$t'$.
\end{itemize}
We removed negative references to~$\ipcontr$, but it is not easy to
show that the resulting model will satisfy the required
conditions. Two key properties established in the correctness proof
are:
\begin{enumerate}
\item $\ipcontr$ has the Church-Rosser property,
\item if $t_2 \succ c$ and $t_1 c \ipreduces d \in \{\top,\bot\}$ then
  $t_1 t_2 \ipreduces d$.
\end{enumerate}
The second property shows that quantifying over only canonical
constants of type~$\tau$ is in a sense equivalent to quantifying over
all terms of type~$\tau$. This is essential for establishing e.g. the
condition~($\forall_\top$).

Both of these properties have rather intricate proofs. Essentially,
the proofs show certain commutation and postponement properties
for~$\ipcontr$, $\succ$ and other auxiliary relations. The proofs
proceed by induction on lexicographic products of various ordinals and
other parameters associated with the relations and terms involved.

\section{Partiality and General Recursion}\label{sec_partiality}

In this section we give some examples of proofs in~$\I_s$ of
properties of functions defined by recursion. For lack of space, we
give only informal indications of how formal proofs may be obtained,
assuming certain basic properties of operations on natural
numbers. The transformation of the given informal arguments into
formal proofs in~$\I_s$ is not difficult. Mostly complete formal
proofs may be found in a technical appendix.

\newcommand{\subp}{\ensuremath{\mathrm{subp}}}

\begin{example}
  Consider a term $\subp$ satisfying the following recursive equation:
  \[
  \subp = \lambda i j \,.\, \Cond{(i =_\Nat j)}{0}{\left(\left(\subp
    \, i \, \left(j + 1\right)\right) + 1\right)}.
  \]
  If $i \ge j$ then $\subp\,i\,j = i - j$. If $i < j$ then
  $\subp\,i\,j$ does not terminate. An appropriate specification for
  $\subp$ is $\forall i, j : \Nat \,.\, (i \ge j) \supset (\subp\, i\,
  j = i - j)$.

  Let $\varphi(y) = \forall i : \Nat \,.\, \forall j : \Nat \,.\,
  \left(i \ge j \supset y =_\Nat i - j \supset
  \subp\,i\,j\,=\,i-j\right)$. We show by induction on~$y$ that
  $\forall y : \Nat \,.\, \varphi(y)$.

  First note that under the assumptions $y : \Nat$, $i : \Nat$, $j :
  \Nat$ it follows from Lemma~\ref{lem_nat_op_well_defined} that $(i
  \ge j) : \Prop$ and $(y =_\Nat i - j) : \Prop$. Hence, whenever $y :
  \Nat$, to show $i \ge j \supset y =_\Nat i - j \supset
  \subp\,i\,j\,=\,i-j$ it suffices to derive $\subp\,i\,j\,=\,i-j$
  under the assumptions $i \ge j$ and $y =_\Nat i - j$. By
  Lemma~\ref{lem_ind_leibniz_implies_eq} the assumption~$y =_\Nat i -
  j$ may be weakened to~$y = i - j$.

  In the base step it thus suffices to show $\subp\,i\,j\,=\,i-j$
  under the assumptions $i : \Nat$, $j : \Nat$, $i \ge j$, $i - j =
  0$. From $i - j = 0$ we obtain $\iszero(i - j)$, so $j \ge i$. From
  $i \ge j$ and $i \le j$ we derive $i =_\Nat j$ by
  Lemma~\ref{lem_le_eq}. Then $\subp\,i\,j\,=\,i-j$ follows by simple
  computation (i.e. by applying rules for~\mbox{Eq} and appropriate
  rules for the conditional).

  In the inductive step we have $\varphi(y)$ for $y : \Nat$ and we
  need to obtain $\varphi(\s y)$. It suffices to show $\subp\,i\,j = i
  - j$ under the assumptions~$i : \Nat$, $j : \Nat$ and~$\s y = i -
  j$. Because $\s y \ne_\Nat 0$ we have $i \ne_\Nat j$, hence
  $\subp\,i\,j = \s(\subp\,i\,(\s j))$ follows by computation. Using
  the inductive hypothesis we now conclude $\subp\,i\,(\s j) = i - (\s
  j)$, and thus $\subp\,i\,(\s j) =_\Nat i - (\s j)$ by reflexivity
  of~$=_\Nat$ on natural numbers. Then it follows by properties of
  operations on natural numbers that~$\s(\subp\,i\,(\s j)) =_\Nat i -
  j$. By Lemma~\ref{lem_ind_leibniz_implies_eq} we obtain the thesis.

  We have thus completed an inductive proof of~$\forall y : \Nat \,.\,
  \varphi(y)$. Now we use this formula to derive $\subp\,i\,j\,=\,i-j$
  under the assumptions $i : \Nat$, $j : \Nat$, $i \ge j$. Then it
  remains to apply implication introduction and $\forall$-introduction
  twice.

  In the logic of PVS~\cite{RushbyOwreShankar1998} one may
  define~$\subp$ by specifying its domain precisely using predicate
  subtypes and dependent types, somewhat similarly to what is done
  here. However, an important distinction is that we do not require a
  domain specification to be a part of the definition. In an
  interactive theorem prover based on our formalism no proof
  obligations would need to be generated to establish termination
  of~$\subp$ on its domain.

  Note that because domain specification is not part of the definition
  of~$\subp$, we may easily derive $\varphi \equiv \forall i, j : \Nat
  \,.\, \left(\left(\subp\, i\, j = i - j\right) \vee
  \,\left(\subp\,j\, i = j - i\right)\right)$. This is not possible in
  PVS because the formula~$\varphi$ translated to PVS generates false
  proof obligations~\cite{RushbyOwreShankar1998}.
\end{example}

\begin{example}
  The next example is a well-known ``challenge'' posed by McCarthy:
  \[
  f(n) = \Cond{(n > 100)}{(n - 10)}{(f(f(n + 11)))}
  \]
  For $n \le 101$ we have $f(n) = 91$, which fact may be proven by
  induction on $101 - n$. This function is interesting because of its
  use of nested recursion. Termination behavior of a nested recursive
  function may depend on its functional behavior, which makes
  reasoning about termination and function value interdependent. This
  creates problems for systems with definitional restrictions of
  possible forms of recursion. Below we give an indication of how a
  formal proof of~$\forall n : \Nat \,.\, n \le 101 \supset f(n) = 91$
  may be derived in~$\I_s$. Lemma~\ref{lem_nat_op_well_defined} is
  used implicitly with implication introduction.

  Let $\varphi(y) \equiv \forall n : \Nat \,.\, n \le 101 \supset 101 - n
  \le y \supset f(n) = 91$. We prove $\forall y : \Nat \,.\, \varphi(y)$
  by induction on~$y$.

  In the base step we need to prove~$f(n) = 91$ under the
  assumptions~$n : \Nat$, $n \le 101$ and~$101 - n \le y = 0$. We
  have~$n =_\Nat 101$, hence $n = 101$, and the thesis follows by
  simple computation.

  In the inductive step we distinguish three cases:
  \begin{enumerate}
  \item $n + 11 > 101$ and $n < 101$,
  \item $n + 11 > 101$ and $n \ge 101$,
  \item $n + 11 \le 101$.
  \end{enumerate}
  We need to prove~$f(n) = 91$ under the assumptions of the inductive
  hypothesis~$y : \Nat, \forall m : \Nat \,.\, m \le 101 \supset 101 -
  m \le y \supset f(m) = 91$, and of $n : \Nat$, $n \le 101$ and~$101
  - n \le (\s y)$.

  In the first case we have $f(n + 11) = n + 1$ and $n + 1 \le
  101$. Hence by the inductive hypothesis we conclude~$100 - n \le y
  \supset f(n + 1) = 91$. From~$101 - n \le \s y$ we infer~$100 - n
  \le y$, and hence~$f(n + 1) = 91$. Since~$n \le 100$ it follows by
  computation that~$f(n) = f(f(n + 11)) = f(n + 1) = 91$.

  In the second case~$n = 101$ and the thesis follows by simple
  computation.

  In the third case, from~$101 - n \le \s y$ we infer~$101 - (n + 11)
  \le y$. Since $n + 11 \le 101$ we conclude by the inductive
  hypothesis that~$f(n + 11) = 91$. Because~$n + 11 \le 101$, so~$n
  \le 100$, and by definition we infer~$f(n) = f(f(n + 11)) =
  f(91)$. Now we simply compute~$f(91) = f(f(102)) = f(92) = f(f(103))
  = \ldots = f(100) = f(f(111)) = f(101) = 91$ (i.e. we apply rules
  for~$\mbox{Eq}$ and rules for the conditional an appropriate number
  of times).

  This concludes the inductive proof of~$\forall y : \Nat \,.\, \forall n
  : \Nat \,.\, n \le 101 \supset 101 - n \le y \supset f(n) = 91$. Having
  this it is not difficult to show~$\forall n : \Nat \,.\, n \le 101
  \supset f(n) = 91$.

  Note that the computation of~$f(91)$ in the inductive step relies on
  the fact that in our logic values of functions may always be
  computed for specific arguments, regardless of what we know about
  the function, regardless of whether it terminates in general.
\end{example}

\section{Related Work}\label{sec_related}

In this section we discuss the relationship between~$\I_s$ and the
traditional illative system~$\I_\omega$. We also briefly survey some
approaches to dealing with partiality and general recursion in proof
assistants. A general overview of the literature relevant to this
problem may be found in~\cite{BoveKraussSozeau2012}.

\subsection{Relationship with Systems of Illative Combinatory Logic}

In terms of the features provided, the system~$\I_s$ may be considered
an extension of~$\I_\omega$ from~\cite{Czajka2013Accepted}. However,
there are some technical differences between~$\I_s$ and traditional
systems of illative combinatory logic. For one thing, traditional
systems strive to use as few constants and rules as possible. For
instance, $\I_\omega$ has only two primitive constants, disregarding
constants representing base types. Because of this in~$\I_\omega$
e.g. $\Is = \lambda xy \,.\, y x$ and $\Prop = \lambda x \,.\, \Type
(\lambda y . x)$, using the notation of the present paper. Moreover,
the names of the primitive constants and the notations employed when
discussing traditional illative systems are not in common use
today. We will not explain these technicalities in any more
detail. The reader may
consult~\cite{Seldin2009,illat01,Czajka2013Accepted} for more
information on illative combinatory logic.

The system~$\I_\omega$ from~\cite{Czajka2013Accepted} is a direct
extension of~$\I\Xi$ from~\cite{illat01} to higher-order logic. The
ideas behind~$\I_\omega$ date back to~\cite{Bunder1983}, or even
earlier as far as the general form of restrictions in inference rules
is concerned.

Below we briefly describe a system~$\I_\omega'$ which is a variant
of~$\I_\omega$ adapted to our notation. It differs somewhat
from~$\I_\omega$, mostly by taking more constants as primitive, and
thus having more rules and axioms. However, we believe that despite
these differences it is reasonably close to~$\I_\omega$ and shares its
essential properties.

The terms of~$\I_\omega'$ are those of~$\I_s$, except that we do not
allow subtypes, inductive types, Eq, Cond, $\vee$, $\bot$ and
$\epsilon$. There are also additional primitive constants: $\omega$
(the type of all terms), $\varepsilon$ (the empty type)
and~$\supset$. The axioms are: $\Gamma, \varphi \proves \varphi$,
$\Gamma \proves \Prop : \Type$, $\Gamma \proves \varepsilon : \Type$,
$\Gamma \proves \omega : \Type$, $\Gamma \proves t : \omega$. The
rules are: $\forall_i$, $\forall_e$, $\forall_t$, $\supset_i$,
$\supset_e$, $\supset_t$, $\to_i$, $\to_e$, $\to_t$, $p_i$, and the
rules:
\begin{longtable}{lr}
\(
\mathrm{conv}: \inferrule{
  \Gamma \proves \varphi \\ \varphi =_\beta \psi
}{
  \Gamma \proves \psi
}
\)
&
\(
\varepsilon_\bot: \inferrule{
  \Gamma \proves t : \varepsilon
}{
  \Gamma \proves \bot
}
\)
\\
& \\
\multicolumn{2}{c}{
\(
\to_p: \inferrule{
  \Gamma \proves \alpha : \Type \\ \Gamma, x : \alpha \proves ((t x) :
  \beta) : \Prop \\ x \notin FV(\Gamma, t)
}{
  \Gamma \proves (t : \alpha \to \beta) : \Prop
}
\)
}
\end{longtable}
Here $\varphi =_\beta \psi$ is a meta-level side-condition expressing
$\beta$-equivalence of the terms~$\varphi$ and~$\psi$.

\subsection{Partiality and Recursion in Proof Assistants}

Perhaps the most common way of dealing with recursion in interactive
theorem provers is to impose certain syntactic restrictions on the
form of recursive definitions so as to guarantee well-foundedness. For
instance, the {\bf fix} construct in Coq allows for structurally
recursive definitions whose well-foundedness must be checked by a
built-in automatic syntactic termination checker. Some systems,
e.g. ACL2 or PVS, pass the task of proving termination to the
user. Such systems require that a well-founded relation or a measure
be given with each recursive function definition. Then the system
generates so called proof obligations, or termination conditions,
which state that the recursive calls are made on smaller
arguments. The user must solve, i.e. prove, these obligations.

The method of restricting possible forms of recursive definitions
obviously works only for total functions. If a function does not in
fact terminate on some elements of its specified domain, then it
cannot be introduced by a well-founded definition. One solution is to
use a rich type system, e.g. dependent types combined with predicate
subtyping, to precisely specify function domains so as to rule out the
arguments on which the function does not terminate. This approach is
adopted by PVS~\cite{RushbyOwreShankar1998}. A related method of
introducing general recursive functions in constructive type theory is
to first define a special inductive accessibility predicate which
precisely characterises the domain~\cite{BoveCapretta2005}. The
function is then defined by structural recursion on the proof that the
argument satisfies the accessibility predicate.

A different approach to dealing with partiality and general recursion
is to use a special logic which allows partial functions
directly. Systems adopting this approach are often based on variants
of the logic of partial terms of Beeson \cite{Feferman1995},
\cite{Beeson1986b}. For instance, the IMPS interactive theorem
prover~\cite{FarmerGuttmanThayer1993} uses Farmer's logic PF of
partial functions~\cite{Farmer1990}, which is essentially a variant of
the logic of partial terms adapted to higher-order logic. In these
logics there is an additional definedness predicate which enables
direct reasoning about definedness of terms.

The above gives only a very brief overview. There are many approaches
to the problem of partiality and general recursion in interactive
theorem provers, most of which we didn't mention. We do not attempt
here to provide a detailed comparison with a multitude of existing
approaches or give in-depth arguments in favor of our system. For such
arguments to be entirely convincing, they would need to be backed up
by extensive experimentation in proving properties of sizable programs
using a proof assistant based on our logic. No such proof assistants
yet exist and no such experimentation has been undertaken. In
contrast, our interest is theoretical.

\section{Conclusion}

We have presented a system~$\I_s$ of classical higher-order illative
$\lambda$-calculus with subtyping and basic inductive types. A
distinguishing characteristic of~$\I_s$ is that it is based on the
untyped $\lambda$-calculus. Therefore, it allows recursive definitions
of potentially non-terminating functions directly. The inference rules
of~$\I_s$ are formulated in a way that makes it possible to apply them
even when some of the terms used in the premises have not been proven
to belong to any type. Additionally, our system may be considered an
extension of ordinary higher-order logic, obtained by relaxing the
typing restrictions on allowable $\lambda$-terms. We believe these
facts alone make it relevant to the problem of partiality and
recursion in proof assistants, and the system at least deserves some
attention.

Some open problems related to~$\I_s$ are as follows.
\begin{enumerate}
\item Is~$\I_s$ conservative over higher-order logic
  (Conjecture~\ref{conj_conservative_hol})?
\item How to best incorporate a broader class of inductive types than
  just basic inductive types, e.g. all strictly positive inductive
  types?
\item How far is it possible to broaden the class of allowed types and
  still have a consistency proof in ZFC? For instance, in our model
  construction we could try naively interpreting dependent types by
  set-theoretic cartesian products when constructing the set~$\Tc$
  (see the overview of the consistency proof in
  Sect.~\ref{sec_consistent}), but we would run out of sets. We
  conjecture that our construction may be modified to incorporate
  dependent types in the way indicated if we work in ZFC with one
  strongly inaccessible cardinal. This modification should not pose
  any fundamental difficulties. Is it possible prove consistency
  of~$\I_s$ with dependent types in plain ZFC?
\item Can the premises $\Gamma \proves \varphi : \Prop$ in the
  rules~$c_3$ and~$c_4$ be removed?
\item Can the premise $\Gamma \proves \varphi : \Prop$ in the
  rule~$c_5$ be removed?
\end{enumerate}

Note that the premises $\Gamma \proves \varphi : \Prop$ cannot be
removed in~$c_3$, $c_4$ and~$c_5$ simultaneously. Let~$\varphi$ be
such that $\varphi = \neg \varphi$. Since $\varphi \proves \bot$ and
$\neg\varphi \proves \bot$, it is easy to see that we would have both
$\proves \Cond{\varphi}{\top}{\bot} = \top$ and $\proves
\Cond{\varphi}{\top}{\bot} = \bot$.

\addcontentsline{toc}{section}{References}
\bibliography{biblio}{}

\begin{thebibliography}{10}

\bibitem{illat01}
Barendregt, H., Bunder, M.W., Dekkers, W.:
\newblock Systems of illative combinatory logic complete for first-order
  propositional and predicate calculus.
\newblock Journal of Symbolic Logic \textbf{58}(3) (1993)  769--788

\bibitem{Seldin2009}
Seldin, J.P.:
\newblock The logic of {C}hurch and {C}urry.
\newblock In Gabbay, D.M., Woods, J., eds.: Logic from Russell to Church.
  Volume~5 of Handbook of the History of Logic.
\newblock North-Holland (2009)  819--873

\bibitem{Czajka2013Accepted}
Czajka, {\L}.:
\newblock Higher-order illative combinatory logic.
\newblock Journal of Symbolic Logic (2013) Accepted. Preprint available at
  \verb#http://arxiv.org/abs/1202.3672#.

\bibitem{Czajka2011}
Czajka, {\L}.:
\newblock A semantic approach to illative combinatory logic.
\newblock In: Computer Science Logic, 25th International Workshop / 20th Annual
  Conference of the EACSL, CSL 2011, September 12-15, 2011, Bergen, Norway,
  Proceedings. Volume~12 of LIPIcs., Schloss Dagstuhl -- Leibniz-Zentrum f\"ur
  Informatik (2011)  174--188

\bibitem{illat02}
Dekkers, W., Bunder, M.W., Barendregt, H.:
\newblock Completeness of the propositions-as-types interpretation of
  intuitionistic logic into illative combinatory logic.
\newblock Journal of Symbolic Logic \textbf{63}(3) (1998)  869--890

\bibitem{SorensenUrzyczyn2006}
S{\o}rensen, M., Urzyczyn, P.:
\newblock Lectures on the Curry-Howard isomorphism. Volume 149 of Studies in
  Logic and the Foundations of Mathematics.
\newblock Elsevier (2006)

\bibitem{Barendregt1984}
Barendregt, H.P.:
\newblock The lambda calculus: Its syntax and semantics. Revised edn.
\newblock North Holland (1984)

\bibitem{BlanquiJounnaudOkada2002}
Blanqui, F., Jouannaud, J., Okada, M.:
\newblock Inductive-data-type systems.
\newblock Theoretical Computer Science \textbf{272}(1) (2002)  41--68

\bibitem{RushbyOwreShankar1998}
Rushby, J., Owre, S., Shankar, N.:
\newblock Subtypes for specifications: Predicate subtyping in {PVS}.
\newblock IEEE Transactions on Software Engineering \textbf{24}(9) (1998)

\bibitem{BoveKraussSozeau2012}
Bove, A., Krauss, A., Sozeau, M.:
\newblock Partiality and recursion in interactive theorem pro\-vers: An
  over\-view.
\newblock Mathematical Structures in Computer Science (to appear, 2012)

\bibitem{Bunder1983}
Bunder, M.W.:
\newblock Predicate calculus of arbitrarily high finite order.
\newblock Archive for Mathematical Logic \textbf{23}(1) (1983)  1--10

\bibitem{BoveCapretta2005}
Bove, A., Capretta, V.:
\newblock Modelling general recursion in type theory.
\newblock Mathematical Structures in Computer Science \textbf{15}(4) (2005)
  671--708

\bibitem{Feferman1995}
Feferman, S.:
\newblock Definedness.
\newblock Erkenntnis \textbf{43} (1995)  295--320

\bibitem{Beeson1986b}
Beeson, M.J.:
\newblock Proving programs and programming proofs.
\newblock In Marcus, R.B., Dorn, G., Weingartner, P., eds.: Logic, Methodology
  and Philosophy of Science VII.
\newblock North-Holland (1986)  51--82

\bibitem{FarmerGuttmanThayer1993}
Farmer, W.M., Guttman, J.D., Thayer, F.J.:
\newblock {IMPS}: An interactive mathematical proof system.
\newblock Journal of Automated Reasoning \textbf{11}(2) (1993)  213--248

\bibitem{Farmer1990}
Farmer, W.M.:
\newblock A partial functions version of {C}hurch's simple theory of types.
\newblock Journal of Symbolic Logic \textbf{55}(3) (1990)  1269--1291

\end{thebibliography}
\bibliographystyle{splncs}
\newpage

\appendix

\newcommand{\nocontentsline}[3]{}
\newcommand{\tocless}[2]{\bgroup\let\addcontentsline=\nocontentsline#1{#2}\egroup}

\renewcommand{\thesection}{Appendix~\Alph{section}}
\tocless\section{Derived Rules for Implication}\label{app_logic}
\renewcommand{\thesection}{\Alph{section}}
\addcontentsline{toc}{section}{\thesection\hspace{0.5em} Derived Rules
for Implication}

In all derivations in this and subsequent sections we omit certain
steps and rule assumptions, simplify inferences, and generally only
give sketches of completely formal proofs, omitting the parts which
may be easily reconstructed by the reader.

\begin{lemma}\label{lem_01}
 If $\Gamma, \varphi \proves \psi$, where $x \notin FV(\Gamma,
 \varphi, \psi)$ and $y \notin FV(\varphi)$, then $\Gamma, x : \set{y
   : \Prop}{\varphi} \proves \psi$.
\end{lemma}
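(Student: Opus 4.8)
The plan is to prove the lemma by induction on the given derivation of $\Gamma, \varphi \proves \psi$; conceptually the argument is that of cut‑admissibility, with the cut formula $\varphi$ ``repackaged'' as the subtyping hypothesis $x : \set{y : \Prop}{\varphi}$ and recovered at the relevant leaves by the rule~$s_e$. Before starting the induction I would reduce, using Lemma~\ref{lem_I_s_subst}, to the case where $x$ is chosen fresh for the \emph{entire} derivation (not merely for $\Gamma, \varphi, \psi$): once $\Gamma, x : \set{y : \Prop}{\varphi} \proves \psi$ is established for such an $x$, renaming $x$ back to the originally given variable via Lemma~\ref{lem_I_s_subst} leaves $\Gamma$, $\varphi$ and $\psi$ unchanged and turns $x : \set{y : \Prop}{\varphi}$ into the required hypothesis, since $FV(\set{y : \Prop}{\varphi}) = FV(\varphi)$. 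This reduction is what makes the eigenvariable and freshness side conditions below harmless, since a subderivation may well introduce variables not occurring free in $\Gamma, \varphi, \psi$.

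For the induction itself, the only case with real content is when $\psi$ is obtained by Axiom~1 selecting $\varphi$, i.e.\ $\psi \equiv \varphi$: then $\Gamma, x : \set{y : \Prop}{\varphi} \proves x : \set{y : \Prop}{\varphi}$ by Axiom~1, and one application of $s_e$ yields $\Gamma, x : \set{y : \Prop}{\varphi} \proves \varphi[y/x]$, which is $\varphi$ because $y \notin FV(\varphi)$. If $\psi$ is obtained by Axiom~1 from a formula of $\Gamma$, or by one of the context‑independent axioms ($2$--$7$, $\bot_t$, $c$, $\beta$), the same axiom applies verbatim in the enlarged context. For every inference rule I apply the induction hypothesis to each premise — reading any side hypotheses occurring in a premise (as in $\vee_e$, $\exists_e$, $c_3$, $c_4$, the inductive‑type rule $i_i^\iota$, etc.) as part of the base context, so that the occurrence of $\varphi$ being eliminated stays in the distinguished position — and then re‑apply the rule with $\Gamma$ replaced throughout by $\Gamma, x : \set{y : \Prop}{\varphi}$. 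Each eigenvariable condition ``$z \notin FV(\Gamma, \ldots)$'' in $\forall_i$, $\forall_t$, $\exists_e$, $\to_i$, $s_i$, $s_t$, $i_i^\iota$, eq‑$\lambda$‑$\xi$, etc., stays satisfied: the original condition already forces $z \notin FV(\varphi)$ (the context at that node contains $\varphi$, since contexts only grow upward), and $z \neq x$ by freshness of $x$.

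The single conceptual ingredient is thus the use of $s_e$ at the $\psi \equiv \varphi$ leaves; everything else is routine cut‑style bookkeeping. The point that needs the most care — and the reason for the preliminary renaming of $x$ — is the interaction with eigenvariables together with the subtler fact that subderivations can mention free variables absent from the end‑sequent, so that freshness of $x$ with respect to $\Gamma, \varphi, \psi$ alone would not suffice to push the induction through.
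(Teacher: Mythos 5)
Your proof is correct and follows essentially the same route as the paper's: induction on the derivation of $\Gamma,\varphi\proves\psi$, with the rule~$s_e$ applied at the Axiom~1 leaves that select $\varphi$ to recover $\varphi$ from the hypothesis $x:\set{y:\Prop}{\varphi}$ (using $y\notin FV(\varphi)$ so that $\varphi[y/x]\equiv\varphi$). The paper leaves the bookkeeping implicit; your preliminary renaming of $x$ to a globally fresh variable via Lemma~\ref{lem_I_s_subst} and the check of the eigenvariable conditions are exactly the details needed to make the ``straightforward induction'' go through.
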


\begin{proof}
  Straightforward induction on the length of derivation, using
  rule~$s_e$ to show that $\Gamma \proves x : \set{y :
    \Prop}{\varphi}$ implies $\Gamma \proves \varphi$, if $x \notin
  FV(\Gamma, \varphi)$ and $y \notin FV(\varphi)$.
\end{proof}

Now the rules for~$\supset$ are derived as follows.

\medskip

The rule
\[
\supset_i:\;
\inferrule{
  \Gamma \proves \varphi : \Prop \\ \Gamma, \varphi \proves \psi
}{
  \Gamma \proves \varphi \supset \psi
}
\]
follows by
\[
\inferrule*{
  \inferrule*{
  }{\Gamma \proves \Prop : \Type} \\
  \inferrule*{
    \Gamma \proves \varphi : \Prop \\
    y \notin FV(\Gamma, \varphi)
  }{
    \Gamma, y : \Prop \proves \varphi : \Prop
  }
}{
  \Gamma \proves \set{y : \Prop}{\varphi} : \Type
} (a)
\]
\[
\inferrule*{
 (a) \\
  \inferrule*{
    \Gamma, \varphi \proves \psi
  }{
    \Gamma, x : \set{y : \Prop}{\varphi} \proves \psi
  }
}{
  \Gamma \proves \forall x : \set{y : \Prop}{\varphi} \,.\, \psi
}
\]

The rule
\[
\supset_e:\;
\inferrule{
  \Gamma \proves \varphi \supset \psi \\ \Gamma \proves \varphi
}{
  \Gamma \proves \psi
}
\]
follows by
\[
\inferrule*{
  \inferrule*{
    \inferrule*{
      \inferrule*{
        \Gamma \proves \varphi
      }{
        \Gamma \proves \varphi : \Prop
      }
    }{
      \ldots
    }
  }{\Gamma \proves \set{y : \Prop}{\varphi} : \Type} \\
  \inferrule*{
    \ldots %\Gamma \proves \Prop : \Type \\ \Gamma, x : \Prop \proves x : \Prop
  }{
    \Gamma \proves \bot : \Prop
  } \\
  \Gamma \proves \varphi \\
  y \notin FV(\Gamma, \varphi)
}{
  \Gamma \proves \bot : \set{y : \Prop}{\varphi}
} (b)
\]
\[
\inferrule*{
  \Gamma \proves \forall x : \set{y : \Prop}{\varphi} \,.\, \psi \\ (b)
}{
  \Gamma \proves \psi[x / \bot] \;\; (\equiv \psi)
}
\]

Finally, the rule
\[
\supset_t:\;
\inferrule{
  \Gamma \proves \varphi : \Prop \\ \Gamma, \varphi \proves \psi : \Prop
}{
  \Gamma \proves (\varphi \supset \psi) : \Prop
}
\]
follows by
\[
\inferrule*{
  \inferrule*{
  }{\Gamma \proves \Prop : \Type} \\
  \inferrule*{
    \Gamma \proves \varphi : \Prop \\
    y \notin FV(\Gamma, \varphi)
  }{
    \Gamma, y : \Prop \proves \varphi : \Prop
  }
}{
  \Gamma \proves \set{y : \Prop}{\varphi} : \Type
} (c)
\]
\[
\inferrule*{
  (c) \\
  \inferrule*{
    \Gamma, \varphi \proves \psi : \Prop
  }{
    \Gamma, x : \set{y : \Prop}{\varphi} \proves \psi : \Prop
  }\;\raisebox{0.7em}{($\star$)}
}{
  \Gamma \proves (\forall x : \set{y : \Prop}{\varphi} \,.\, \psi) : \Prop
}
\]
where we use Lemma~\ref{lem_01} to perform the inference~($\star$).

\newpage
\renewcommand{\thesection}{Appendix~\Alph{section}}
\tocless\section{Proofs for Section~\ref{sec_induction}}\label{sec_proofs_induction}
\renewcommand{\thesection}{\Alph{section}}
\addcontentsline{toc}{section}{\thesection\hspace{0.5em} Proofs for
  Section~\ref{sec_induction}}

\begin{lemma}\label{lem_constr_type}
  If $c_i^\iota \in \Cc$ of arity~$n_i$ has signature $\langle
  \iota_1,\ldots,\iota_{n_i} \rangle$ then $\proves_{\I_s} c_i^\iota :
  \iota_1^* \to \ldots \to \iota_{n_i}^* \to \iota$.
\end{lemma}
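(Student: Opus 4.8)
Write $\tau \equiv \iota_1^* \to \cdots \to \iota_{n_i}^* \to \iota$ for the asserted type. The plan is to build a derivation bottom-up: type the fully applied constructor with the rule $i_t^{\iota,i}$, curry it by repeated use of $\to_i$, and then pass from the $\eta$-long form to the bare constant. To get started one needs $\tau$ — and each of its tails $\iota_k^* \to \cdots \to \iota_{n_i}^* \to \iota$ — to be a type. Every $\iota_j^*$ lies in $\Tc_I$: it is the signature component $\iota_j$ when that is not $\star$, and $\iota$ itself otherwise. Hence axiom~4 gives $\proves \iota_j^* : \Type$ and $\proves \iota : \Type$ in every context, and $n_i$ applications of $\to_t$ yield $\proves \tau : \Type$.

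Next, pick pairwise distinct fresh variables $x_1,\ldots,x_{n_i}$ and let $\Gamma$ be the context $x_1 : \iota_1^*, \ldots, x_{n_i} : \iota_{n_i}^*$. Axiom~1 gives $\Gamma \proves x_j : \iota_j^*$ for each $j$, so $i_t^{\iota,i}$ gives $\Gamma \proves c_i^\iota x_1 \cdots x_{n_i} : \iota$. Now apply $\to_i$ repeatedly, discharging $x_{n_i}, x_{n_i-1}, \ldots, x_1$ in this order. At the step removing $x_k$, the premises of $\to_i$ require $\iota_k^* : \Type$ (available from axiom~4) and $x_k \notin FV$ of the shrunken context, of $\iota_k^*$, and of the residual type $\iota_{k+1}^* \to \cdots \to \iota$; all of this is automatic since $\iota$ and all the $\iota_j^*$ are closed terms and the $x_j$ were taken fresh. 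The outcome is $\proves (\lambda x_1 \ldots x_{n_i} \,.\, c_i^\iota x_1 \cdots x_{n_i}) : \tau$.

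It remains to move this typing onto $c_i^\iota$ itself. Since $\lambda x_1 \ldots x_{n_i} \,.\, c_i^\iota x_1 \cdots x_{n_i}$ is the $\eta$-long form of $c_i^\iota$, one wants $\proves \Eq{(\lambda x_1 \ldots x_{n_i} \,.\, c_i^\iota x_1 \cdots x_{n_i})}{c_i^\iota}$; then two applications of \mbox{eq-cong-app} (pushing that equation through $\mathrm{Is}$ on the head and through $\tau$ on the argument, with reflexivity from axiom~2) followed by the rule~\mbox{eq} rewrite $\proves (\lambda x_1 \ldots x_{n_i} \,.\, c_i^\iota x_1 \cdots x_{n_i}) : \tau$ into $\proves c_i^\iota : \tau$. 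I expect this to be the delicate point: no rule of $\I_s$ produces a judgement $t : (\alpha\to\beta)$ except $\to_i$, which only applies when $t$ is literally a $\lambda$-abstraction, so the passage to $c_i^\iota$ genuinely has to go through the $\eta$-equation above — and the conversion principles of $\I_s$ as presented are generated by the $\beta$-axiom together with congruence and $\xi$, from which $\eta$ does not follow. Hence one must either invoke an $\eta$-law of the ambient system or read the lemma as a statement about the $\eta$-long form. In either case nothing is lost downstream, since in the later sections constructors occur only fully applied and $i_t^{\iota,i}$ (and plain $\beta$-reduction) already handle such occurrences directly.
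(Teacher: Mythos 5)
Your derivation is exactly the paper's: the printed proof of this lemma is the single sentence ``Follows directly from rules~$i_t^{\iota,i}$ and~$\to_i$, and from axiom~4'', which is precisely your first two paragraphs, with the passage from the $\eta$-long form $\lambda x_1\ldots x_{n_i}\,.\,c_i^\iota x_1\cdots x_{n_i}$ to the bare constant left entirely implicit. The reservation you raise at the end is warranted and is not a defect of your argument but of the paper's formulation: with $\to_i$ requiring its subject to be literally a $\lambda$-abstraction, and with $\mathrm{Eq}$ generated only by $\beta$, reflexivity, symmetry, transitivity, congruence, $\xi$ and the destructor/conditional axioms, nothing identifies $c_i^\iota$ with its $\eta$-expansion, so for $n_i \ge 1$ no rule as printed produces a judgement of the form $c_i^\iota : \alpha \to \beta$. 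The same silent $\eta$-identification recurs in the proof that $o_i^\iota : \iota \to \Prop$ and in the final step $\Gamma \proves f : \alpha_1 \to \ldots \to \alpha_n \to \beta$ of the typability theorem for measured recursive functions, so it is systematic. The evident repair is to state $\to_i$ with an arbitrary subject (from $\Gamma, x : \alpha \proves t\,x : \beta$ with $x \notin FV(\Gamma,t,\alpha,\beta)$ conclude $\Gamma \proves t : \alpha \to \beta$); this is exactly what the semantic condition~($\to_i$) in the definition of an $\I_s$-model already validates, since it quantifies over arbitrary elements $f \in A$ rather than over interpretations of abstractions, so soundness and the consistency proof are unaffected. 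In short: same route as the paper, carried out correctly as far as the rules permit, together with a legitimate observation that the lemma needs either that strengthened rule or an explicit $\eta$-law to go through literally.
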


\begin{proof}
  Follows directly from rules~$i_t^{\iota,i}$ and~$\to_i$, and from axiom~4
  ($\Gamma \proves \iota : \Type$ for $\iota \in \Tc_I$).
\end{proof}

\begin{lemma}\label{lem_destr_test_type}
  $\proves_{\I_s} o_i^\iota : \iota \to \Prop$ and $\proves_{\I_s}
  \forall x : \iota \,.\, o_i^\iota x \supset (d_{i,j}^\iota x :
  \iota_{i,j}^*)$
\end{lemma}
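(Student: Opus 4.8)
$\proves_{\I_s} o_i^\iota : \iota \to \Prop$ and $\proves_{\I_s} \forall x : \iota \,.\, o_i^\iota x \supset (d_{i,j}^\iota x : \iota_{i,j}^*)$.

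The plan is to prove both parts by appealing to the induction rule $i_i^\iota$, since each is a universally quantified statement over the inductive type $\iota$. For the first part, $o_i^\iota : \iota \to \Prop$, I would first unfold this via rule $\to_i$: it suffices to show $\Gamma, x : \iota \proves o_i^\iota x : \Prop$ for fresh $x$, which by the $\mathrm{Is}\,$-notation is $\Gamma, x : \iota \proves \forall x : \iota \,.\, (o_i^\iota x : \Prop)$ after re-quantifying. Better: directly target $\proves \forall x : \iota \,.\, (o_i^\iota x : \Prop)$ and apply $i_i^\iota$ with $t := \lambda x \,.\, (o_i^\iota x : \Prop)$, i.e. $t := \lambda x \,.\, \mathrm{Is}\,(o_i^\iota x)\,\Prop$. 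The induction rule then generates, for each constructor $c_k^\iota$ of arity $n_k$, a goal of the form $\Gamma, x_1 : \iota_{k,1}^*, \ldots, x_{n_k} : \iota_{k,n_k}^*, \ldots \proves o_i^\iota(c_k^\iota x_1 \ldots x_{n_k}) : \Prop$. If $k = i$, axiom~5 gives $\proves o_i^\iota(c_i^\iota x_1 \ldots x_{n_i})$, and then rule $p_i$ yields $o_i^\iota(c_i^\iota x_1 \ldots x_{n_i}) : \Prop$. If $k \ne i$, axiom~6 gives $\proves \neg(o_i^\iota(c_k^\iota x_1 \ldots x_{n_k}))$, and since $\neg\psi \equiv \psi \supset \bot$, rule $\supset_{t2}$ (from the premise $(\psi \supset \bot) : \Prop$, obtained via $p_i$ applied to the negation) yields $o_i^\iota(c_k^\iota x_1 \ldots x_{n_k}) : \Prop$. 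This handles every constructor case, so $i_i^\iota$ delivers $\forall x : \iota \,.\, (o_i^\iota x : \Prop)$, and $\to_i$ then gives $o_i^\iota : \iota \to \Prop$.

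For the second part, $\forall x : \iota \,.\, o_i^\iota x \supset (d_{i,j}^\iota x : \iota_{i,j}^*)$, I would again apply $i_i^\iota$, this time with $t := \lambda x \,.\, (o_i^\iota x \supset (d_{i,j}^\iota x : \iota_{i,j}^*))$. For each constructor $c_k^\iota$ we must derive $o_i^\iota(c_k^\iota \vec{x}) \supset (d_{i,j}^\iota(c_k^\iota \vec{x}) : \iota_{i,j}^*)$ in the appropriate context. To use $\supset_i$ I first need the premise that the antecedent $o_i^\iota(c_k^\iota \vec{x})$ is in $\Prop$, which is exactly what the first part of this lemma established (instantiated at the relevant term); then I assume $o_i^\iota(c_k^\iota \vec{x})$ and derive the consequent. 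When $k \ne i$, axiom~6 gives $\neg(o_i^\iota(c_k^\iota \vec{x}))$, which together with the assumption $o_i^\iota(c_k^\iota \vec{x})$ yields $\bot$ by $\supset_e$, and then $\bot_e$ proves the consequent. When $k = i$, axiom~7 gives $\Eq{(d_{i,j}^\iota(c_i^\iota x_1 \ldots x_{n_i}))}{x_j}$; since the context (from the induction rule) contains $x_j : \iota_{i,j}^*$, we have $x_j : \iota_{i,j}^*$, and then the equality rule \mbox{eq} (or \mbox{eq-sym} plus \mbox{eq}) transports this to $d_{i,j}^\iota(c_i^\iota x_1 \ldots x_{n_i}) : \iota_{i,j}^*$. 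This discharges all constructor cases, so $i_i^\iota$ concludes the goal.

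The main obstacle, such as it is, lies in the $k \ne i$ subcases: one must be careful to actually obtain $o_i^\iota(c_k^\iota \vec{x}) : \Prop$ before invoking $\supset_i$, which is a genuine use-dependency on the first part of the lemma rather than something automatic. A related subtlety is making sure the induction rule $i_i^\iota$ is applicable at all — it requires $x, x_1, \ldots, x_{n_k} \notin FV(\Gamma, t)$, which here holds trivially since $\Gamma = \emptyset$ and $t$ is a closed term built from $o_i^\iota$, $d_{i,j}^\iota$ and the type constants. Everything else is routine manipulation of the axioms for constructors, tests and destructors (axioms~5, 6, 7) combined with the basic logical rules $p_i$, $\supset_i$, $\supset_e$, $\bot_e$, $\supset_{t2}$ and the equality rules, so I would present the argument at the level of sketch detail used elsewhere in the appendix.
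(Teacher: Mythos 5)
Your proposal is correct and follows essentially the same route as the paper: both parts are proved by the induction rule $i_i^\iota$, with axioms~5--7 together with $p_i$, $\supset_{t2}$, $\supset_e$, $\bot_e$ and the equality rules handling the constructor cases, and with the first part supplying the $\Prop$-premise needed for $\supset_i$ in the second. The only cosmetic difference is that in the case $k=i$ the paper obtains $o_i^\iota(c_i^\iota\vec{x}):\Prop$ directly from axiom~5 via $p_i$ rather than by instantiating the first part, which is immaterial.
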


\begin{proof}
  We first show $\forall x : \iota \,.\, o_i^\iota x : \Prop$ by
  induction. Once we have this, to derive $\proves_{\I_s} o_i^\iota :
  \iota \to \Prop$ it suffices to apply~$\forall_e$
  and~$\to_i$. Because $\proves o_i^\iota (c_i^\iota x_1 \ldots
  x_{n_i})$ by axiom~5 and $\proves \neg (o_i^\iota (c_j^\iota x_1
  \ldots x_{n_j}))$ for $i \ne j$ by axiom~6, we have $\proves
  (o_i^\iota (c_i^\iota x_1 \ldots x_{n_i})) : \Prop$ using
  rule~$p_i$, and $\proves (o_i^\iota (c_j^\iota x_1 \ldots x_{n_j}))
  : \Prop$ for $i \ne j$, using rules~$p_i$ and~$\supset_{t2}$. It is
  then easy to see that the premises of the rule~$i_i^\iota$ are
  derivable for $t \equiv \lambda x \,.\, o_i^\iota x : \Prop$.

  We show the second claim also by induction. Let
  \(
  \Gamma_i \equiv x_1 : \iota_{i,1}^*, \ldots, x_{n_i} :
  \iota_{i,n_i}^*.
  \)
  Let $1 \le k \le m$, where~$m$ is the number of constructors
  of~$\iota$. If $i \ne k$ then $\Gamma_k \proves \neg (o_i^\iota
  (c_k^\iota x_1 \ldots x_{n_k}))$. Hence, $\Gamma_k, o_i^\iota
  (c_k^\iota x_1 \ldots x_{n_k}) \proves d_{i,j}^\iota (c_k^\iota x_1
  \ldots x_{n_k}) : \iota_{i,j}^*$ by applying rules~$\supset_e$
  and~$\bot_e$. Since $\proves o_i^\iota : \iota \to \Prop$ has
  already been proven in the previous paragraph, and $\Gamma_k \proves
  (c_k^\iota x_1 \ldots x_{n_k}) : \iota$ by
  Lemma~\ref{lem_constr_type}, we obtain $\Gamma_j \proves (o_i^\iota
  (c_j^\iota x_1 \ldots x_{n_j})) : \Prop$. Thus $\Gamma_k \proves
  o_i^\iota (c_k^\iota x_1 \ldots x_{n_k}) \supset (d_{i,j}^\iota
  (c_k^\iota x_1 \ldots x_{n_k}) : \iota_{i,j}^*)$ by
  rule~$\supset_i$. If $i = k$ then $d_{i,j} (c_k^\iota x_1 \ldots
  x_{n_k}) = x_j$ and thus $\Gamma_k, o_i^\iota (c_k^\iota x_1 \ldots
  x_{n_k}) \proves (d_{i,j} (c_k^\iota x_1 \ldots x_{n_k})) :
  \iota_{i,j}^*$ by axiom~1 and rule~eq. We then have $\proves
  o_i^\iota (c_k^\iota x_1 \ldots x_{n_k})$, hence $\proves (o_i^\iota
  (c_k^\iota x_1 \ldots x_{n_k})) : \Prop$ by~$p_i$, and thus we may
  use~$\supset_i$ to derive $\Gamma_k \proves o_i^\iota (c_k^\iota x_1
  \ldots x_{n_k}) \supset (d_{i,j} (c_k^\iota x_1 \ldots x_{n_k})) :
  \iota_{i,j}^*$. Therefore, by applying weakening we derive the
  premises of rule~$i_i^\iota$. Hence, we obtain our thesis
  by~$i_i^\iota$.
\end{proof}

\begin{lemma}\label{lem_ind_eq}
  If $\iota \in \Tc_I$ then $\proves_{\I_s} \forall x, y : \iota \,.\,
  x =_\iota y \supset x = y$.
\end{lemma}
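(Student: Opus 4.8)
The plan is to prove $\forall x, y : \iota \,.\, x =_\iota y \supset x = y$ by the induction principle $i_i^\iota$ for the inductive type~$\iota$, applied to the term $t \equiv \lambda x \,.\, \forall y : \iota \,.\, x =_\iota y \supset x = y$. First I would note that $i_i^\iota$ places no typing restrictions on~$t$, so it suffices to establish, for each constructor $c_i^\iota$ of~$\iota$ with signature $\langle \iota_{i,1},\ldots,\iota_{i,n_i}\rangle$, the premise
\[
\Gamma_i, t x_{j_{i,1}}, \ldots, t x_{j_{i,k_i}} \proves t (c_i^\iota x_1 \ldots x_{n_i}),
\]
where $\Gamma_i \equiv x_1 : \iota_{i,1}^*, \ldots, x_{n_i} : \iota_{i,n_i}^*$ and $j_{i,1},\ldots,j_{i,k_i}$ lists the recursive argument positions (those with $\iota_{i,j} = \star$). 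Unfolding~$t$, the goal is to derive $\forall y : \iota \,.\, (c_i^\iota x_1 \ldots x_{n_i}) =_\iota y \supset (c_i^\iota x_1 \ldots x_{n_i}) = y$ from the induction hypotheses $\forall y : \iota \,.\, x_{j} =_\iota y \supset x_{j} = y$ for the recursive positions.

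Next I would introduce $y : \iota$ and assume $(c_i^\iota x_1 \ldots x_{n_i}) =_\iota y$, aiming to show $(c_i^\iota x_1 \ldots x_{n_i}) = y$. The idea is to apply the Leibniz hypothesis with a cleverly chosen predicate $p : \iota \to \Prop$. By axioms~5 and~6, the test $o_i^\iota$ distinguishes the $i$-th constructor: $\proves o_i^\iota (c_i^\iota x_1 \ldots x_{n_i})$ and $\proves \neg (o_i^\iota (c_j^\iota \ldots))$ for $j \ne i$; and by axiom~7 the destructors recover the arguments, $d_{i,k}^\iota (c_i^\iota x_1 \ldots x_{n_i}) = x_k$. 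Using $d_{i,k}^\iota : \iota \to \iota_{i,k}^*$ on $\iota$ (Lemma~\ref{lem_destr_test_type}), I would instantiate the Leibniz assumption with the predicate that tests whether~$y$ is built by~$c_i^\iota$ from the same components; concretely, one pushes the equality through the destructors to obtain $x_k =_{\iota_{i,k}^*} d_{i,k}^\iota y$ for each~$k$. For the recursive positions, the induction hypothesis upgrades $x_{j} =_\iota d_{i,j}^\iota y$ to the $\mathrm{Eq}$-equality $x_j = d_{i,j}^\iota y$; for the non-recursive positions $\iota_{i,k} \ne \star$, I would inductively appeal to this very Lemma~\ref{lem_ind_eq} at the smaller inductive type $\iota_{i,k}$ (a structurally smaller member of~$\Tc_I$) to get $x_k = d_{i,k}^\iota y$. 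Finally, since $o_i^\iota y$ holds (again pushed through via Leibniz from $o_i^\iota (c_i^\iota x_1 \ldots x_{n_i})$), the characterisation of constructors gives $y = c_i^\iota (d_{i,1}^\iota y) \ldots (d_{i,n_i}^\iota y)$, and combining the componentwise $\mathrm{Eq}$-equalities through \mbox{eq-cong-app} yields $(c_i^\iota x_1 \ldots x_{n_i}) = y$.

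The main obstacle I expect is the bookkeeping around the non-recursive argument types: making precise the claim that every $\iota_{i,k} \in \Tc_I$ appearing as a component is a structurally smaller inductive type, so that the outer induction on the structure of $\iota \in \Tc_I$ (as defined inductively in Definition~\ref{def_I_s}) legitimately provides $\forall u, v : \iota_{i,k} \,.\, u =_{\iota_{i,k}} v \supset u = v$ as an already-available instance. One must also be careful that $\iota_{i,k}^*$ denotes~$\iota$ itself exactly when $\iota_{i,k} = \star$ (the recursive case, handled by the $i_i^\iota$ hypothesis), and a strictly smaller type otherwise; this dichotomy is what keeps the combined induction (outer on the structure of~$\iota$, inner on its constructors via $i_i^\iota$) well-founded. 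A secondary nuisance is formally extracting $y = c_i^\iota (d_{i,1}^\iota y) \ldots$ from $o_i^\iota y$ — this is not an axiom but must itself be proven by a short induction on~$\iota$ using axioms~5, 6, 7 — but it is routine and I would either state it as a preliminary sublemma or inline it. The rest is a mechanical chaining of \mbox{eq-trans}, \mbox{eq-cong-app}, $\forall_e$ (instantiating the Leibniz quantifier), and $\supset_i$, using Lemma~\ref{lem_destr_test_type} for the well-typedness side-conditions on the destructors.
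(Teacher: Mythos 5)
Your proposal is correct, and it shares the two essential ideas with the paper's proof: the outer meta-level induction on the structure of $\iota \in \Tc_I$ (so that non-recursive component types $\iota_{i,k} \ne \star$ come with an already-established instance of the lemma), and the use of guarded Leibniz predicates of the shape $\lambda z \,.\, \Cond{(o_i^\iota z)}{(\ldots)}{\bot}$ to extract componentwise equalities while keeping the predicate typable as $\iota \to \Prop$. Where you genuinely diverge is in the treatment of the second argument~$y$: the paper performs a \emph{second} object-level induction on~$y$ via $i_i^\iota$, reducing the goal to one statement per pair of constructors $(c_i^\iota, c_j^\iota)$, dispatching $i \ne j$ by the test $o_i^\iota$ and $i = j$ by applying the predicates $f_k = \lambda x \,.\, \Cond{(o_i^\iota x)}{(d_{i,k}^\iota x =_{\iota_{i,k}^*} x_k)}{\bot}$ to the two constructor terms, where the destructor applications reduce by axiom~7. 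You instead keep $y : \iota$ generic, apply the same kind of guarded predicate to obtain $x_k =_{\iota_{i,k}^*} d_{i,k}^\iota y$ directly, and close the argument with the reconstruction fact $o_i^\iota y \supset y = c_i^\iota (d_{i,1}^\iota y) \ldots (d_{i,n_i}^\iota y)$ --- which, as you note, hides the induction on~$y$ inside a separate sublemma. The two routes are of comparable length; yours factors the case analysis on the constructor of~$y$ into a reusable ``$\eta$-like'' lemma and avoids the quadratic enumeration of constructor pairs, at the cost of one extra auxiliary statement and of having to discharge the $\neg(o_i^\iota y)$ branch by contradiction rather than by axiom~6 directly on a constructor term. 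All the typing side-conditions you flag (that $d_{i,k}^\iota y : \iota_{i,k}^*$ is only available under $o_i^\iota y$ via Lemma~\ref{lem_destr_test_type}, hence the guard; that the Leibniz assumption must be shown to be a $\Prop$ via Lemma~\ref{lem_leibniz} before $\supset_i$) are exactly the ones the paper also has to handle, so I see no gap.
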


\begin{proof}
  We prove this lemma by induction in the meta-theory on the structure
  of~$\iota \in \Tc_I$.

  Let $\varphi(x) \equiv \forall y : \iota \,.\, x =_\iota y \supset x
  = y$ and
  \[
  \Gamma_i^x \equiv x_1 : \iota_{i,1}^*, \ldots, x_{n_i} :
  \iota_{i,n_i}^*, \varphi(x_{j_{i,1}}), \ldots, \varphi(x_{j_{i,k_i}})
  \]
  where $j_{i,1},\ldots,j_{i,k_i}$ are all indices~$j$ such that
  $\iota_{i,j} = \star$, as in the conditions for rule~$i_i^\iota$. It
  suffices to show $\Gamma_i^x \proves \varphi(c_i^\iota x_1 \ldots
  x_{n_i})$ for $i=1,\ldots,m$. Then $\proves \forall x, y : \iota
  \,.\, x =_\iota y \supset x = y$ is obtained by applying
  rule~$i_i^\iota$.

  For $i=1,\ldots,m$, we prove $\Gamma_i^x \proves \varphi(c_i^\iota
  x_1 \ldots x_{n_i})$, i.e.
  \[
  \Gamma_i^x \proves \forall y : \iota
  \,.\, (c_i^\iota x_1 \ldots x_{n_i}) =_\iota y \supset (c_i^\iota
  x_1 \ldots x_{n_i}) = y
  \]
  by induction on~$y$ (in the theory, i.e. applying
  rule~$i_i^\iota$). Let
  \[
  \Gamma_{i,j}^y \equiv y_1 : \iota_{i,1}^*, \ldots, y_{n_i} :
  \iota_{i,n_i}^*, \psi_i(y_{j_{i,1}}), \ldots, \psi_i(y_{j_{i,k_i}})
  \]
  where
  \[
  \psi_i(y) \equiv (c_i^\iota x_1 \ldots x_{n_i}) =_\iota y \supset
  (c_i^\iota x_1 \ldots x_{n_i}) = y.
  \]
  It therefore suffices to show
  \[
  (\star)\;\;\;\;\;\Gamma_i^x, \Gamma_{i,j}^y \proves (c_i^\iota x_1 \ldots x_{n_i})
  =_\iota (c_j^\iota y_1 \ldots y_{n_j}) \supset (c_i^\iota x_1 \ldots
  x_{n_i}) = (c_j^\iota y_1 \ldots y_{n_j})
  \]
  for all $i=1,\ldots,m$, $j=1,\ldots,m$.

  Thus assume $1 \le i \le m$ and $1 \le j \le m$. By Lemma~\ref{lem_constr_type} we have
  \(
  \Gamma_i^x \proves (c_i^\iota x_1 \ldots x_{n_i}) : \iota
  \)
  and
  \(
  \Gamma_i^x, \Gamma_{i,j}^y \proves (c_j^\iota y_1 \ldots y_{n_j}) : \iota.
  \)
  Thus by Lemma~\ref{lem_leibniz} we obtain
  \[
  \begin{array}{lr}
  (\star\star)\;\;\;\;\; & \Gamma_i^x, \Gamma_{i,j}^y \proves ((c_i^\iota x_1 \ldots
    x_{n_i}) =_\iota (c_j^\iota y_1 \ldots y_{n_j})) : \Prop.
  \end{array}
  \]
  For $k=1,\ldots,n_i$ take
  \[
  f_k = \lambda x \,.\, \Cond{(o_i^\iota x)}{(d_{i,k}^\iota x
    =_{\iota_{i,k}^*} x_k)}{\bot}.
  \]

  Assume $1 \le k \le n_i$. Since $o_i^\iota : \iota \to \Prop$ by
  Lemma~\ref{lem_destr_test_type}, we obtain $x : \iota \proves
  (o_i^\iota x) \lor \neg (o_i^\iota x)$ by~$\to_e$ and the law of
  excluded middle. It is easy to derive $x : \iota, \neg (o_i^\iota x)
  \proves f_k x : \Prop$ using the rules for~\mbox{Eq}
  and~\mbox{Cond}. By Lemma~\ref{lem_destr_test_type} we also have $x
  : \iota, o_i^\iota x \proves d_{i,k}^\iota x : \iota_{i,k}^*$. We
  have $\Gamma_i^x \proves x_k : \iota_{i,k}^*$ by definition
  of~$\Gamma_i^x$. Then $\Gamma_i^x, x : \iota, o_i^\iota x \proves
  (d_{i,k}^\iota x =_{\iota_{i,k}^*} x_k) : \Prop$ follows from
  Lemma~\ref{lem_leibniz}. Using rule~$\vee_e$ it is now easy to
  derive $\Gamma_i^x, x : \iota \proves f_k x : \Prop$. Since $\proves
  \iota : \Type$, we obtain $\Gamma_i^x \proves f_k : \iota \to \Prop$
  by rule~$\to_i$.

  Suppose $i = j$. Assume $1 \le k \le n_i$. Because $\Gamma_i^x
  \proves f_k : \iota \to \Prop$, by definition of~$=_\iota$ we have
  \[
  \Gamma_i^x,\Gamma_{i,i}^y,(c_i^\iota x_1 \ldots x_{n_i})
  =_\iota (c_i^\iota y_1 \ldots y_{n_i}) \proves (f_k (c_i^\iota x_1
  \ldots x_{n_i})) \supset (f_k (c_i^\iota y_1 \ldots y_{n_i}))
  \]
  We have
  \(
  \Gamma_i^x \proves (f_k (c_i^\iota x_1
  \ldots x_{n_i})) = (x_k =_\iota x_k)
  \).
  Hence \( \Gamma_i^x \proves f_k (c_i^\iota x_1 \ldots x_{n_i})
  \). Because
  \[
  \Gamma_i^x, \Gamma_{i,i}^y \proves (f_k (c_i^\iota y_1 \ldots
  y_{n_i})) = (y_k =_{\iota_{i,k}^*} x_k)
  \]
  we thus obtain
  \[
  \Gamma_i^x,\Gamma_{i,i}^y,(c_i^\iota x_1 \ldots x_{n_i}) =_\iota
  (c_i^\iota y_1 \ldots y_{n_i}) \proves y_k =_{\iota_{i,k}^*} x_k.
  \]
  If $\iota_{i,k}^* = \iota$ then $\Gamma_i^x \proves \forall y :
  \iota \,.\, x_k =_\iota y \supset x_k = y$ and $\Gamma_i^x,
  \Gamma_{i,i}^y \proves y_k : \iota$. Hence
  $\Gamma_i^x,\Gamma_{i,i}^y \proves x_k = y_k$. If $\iota_{i,k}^* =
  \iota_{i,k}$ then by the inductive hypothesis in the meta-theory we
  obtain $\proves \forall x, y : \iota_{i,k} \,.\, x =_{\iota_{i,k}} y
  \supset x = y$, which again implies $\Gamma_i^x,\Gamma_{i,i}^y
  \proves x_k = y_k$, because $\Gamma_i^x \proves x_k : \iota_{i,k}$
  and $\Gamma_{i,i}^y \proves y_k : \iota_{i,k}$. Since $1 \le k \le
  n_i$ was arbitrary, we obtain
  \[
  \Gamma_i^x,\Gamma_{i,i}^y,(c_i^\iota x_1 \ldots x_{n_i}) =_\iota
  (c_i^\iota y_1 \ldots y_{n_i}) \proves (c_i^\iota x_1 \ldots x_{n_i}) =
  (c_i^\iota y_1 \ldots y_{n_i}).
  \]
  By $(\star\star)$ and~$\supset_i$ we obtain $(\star)$ for $i=j$.

  Suppose $i \ne j$. Assume $1 \le k \le n_i$. Because $\Gamma_i^x
  \proves f_k : \iota \to \Prop$, by definition of~$=_\iota$ we have
  \[
  \Gamma_i^x,\Gamma_{i,j}^y,(c_i^\iota x_1 \ldots x_{n_i})
  =_\iota (c_j^\iota y_1 \ldots y_{n_i}) \proves (f_k (c_i^\iota x_1
  \ldots x_{n_i})) \supset (f_k (c_j^\iota y_1 \ldots y_{n_j}))
  \]
  As in the case $i = j$ we have $\Gamma_i^x \proves (f_k (c_i^\iota x_1
  \ldots x_{n_i}))$. Because $i \ne j$ we have
  \[
  \Gamma_i^x,\Gamma_{i,j}^\iota \proves (f_k (c_j^\iota y_1 \ldots
  y_{n_j})) = \bot.
  \]
  Therefore
  \[
  \Gamma_i^x,\Gamma_{i,j}^y,(c_i^\iota x_1 \ldots x_{n_i})
  =_\iota (c_j^\iota y_1 \ldots y_{n_i}) \proves \bot.
  \]
  By rule~$\bot_e$ we conclude
  \[
  \Gamma_i^x,\Gamma_{i,j}^y,(c_i^\iota x_1 \ldots x_{n_i})
  =_\iota (c_j^\iota y_1 \ldots y_{n_i}) \proves (c_i^\iota x_1 \ldots x_{n_i}) =
  (c_j^\iota y_1 \ldots y_{n_j})
  \]
  By $(\star\star)$ and~$\supset_i$ we obtain $(\star)$ for $i \ne
  j$. This finishes the proof.
\end{proof}

\begin{lemma}\label{lem_nat_peano_3}
  $\proves_{\I_s} \forall x : \Nat \,.\, \left((\iszero x) \vee \exists y
  : \Nat \,.\, x = (\s y)\right)$.
\end{lemma}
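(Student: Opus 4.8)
The plan is to prove the formula by the induction rule $n_i$ for natural numbers. I would take $t \equiv \lambda x \,.\, ((\iszero x) \vee \exists y : \Nat \,.\, x = (\s y))$, so that $\forall x : \Nat \,.\, t x$ coincides, modulo axiom~$\beta$ and the equality rules, with the statement of the lemma. It then suffices to discharge the two premises of $n_i$ in the empty context, namely $\proves t\, 0$ and $x : \Nat,\, t x \proves t (\s x)$; after that, $n_i$ gives $\proves \forall x : \Nat \,.\, t x$, which I would rewrite into the stated form using axiom~$\beta$, rule eq-cong-app and rule eq-$\lambda$-$\xi$ (to identify $\lambda x \,.\, t x$ with the abbreviation-expanded body of the lemma) and finally rule eq.

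For the base case, since $0 \equiv c_1^\Nat$ is a constructor of arity~$0$, axiom~5 gives directly $\proves \iszero 0$ (that is, $\proves o_1^\Nat c_1^\Nat$). Rule $\vee_{i1}$ then yields $\proves (\iszero 0) \vee \exists y : \Nat \,.\, 0 = (\s y)$, and as this term is $\beta$-convertible to $t\, 0$, rule eq (with the relevant instance of axiom~$\beta$ and eq-sym) gives $\proves t\, 0$. Here the laziness of the primitive~$\vee$ pays off: $\vee_{i1}$ carries no typing side-condition, so I never have to show anything about the unused right disjunct.

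For the inductive step I would work under the single assumption $x : \Nat$; the induction hypothesis $t x$ is not needed. I would derive $\exists y : \Nat \,.\, (\s x) = (\s y)$ by rule $\exists_i$ with witness~$x$, whose three premises are $\Nat : \Type$ (axiom~4), $x : \Nat$ (the assumption), and $\Eq{(\s x)}{(\s x)}$ (axiom~2). Then rule $\vee_{i2}$ gives $(\iszero (\s x)) \vee \exists y : \Nat \,.\, (\s x) = (\s y)$, which is $\beta$-convertible to $t (\s x)$, so rule eq gives $x : \Nat,\, t x \proves t (\s x)$. Invoking $n_i$ and performing the conversion described above then completes the argument.

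I do not expect any genuine obstacle here: the statement is a routine case analysis carried out by the induction principle $n_i$. The only bookkeeping that needs attention is the repeated passage between an application $t\, s$ and its $\beta$-unfolding, and the matching of $\forall x : \Nat \,.\, t x$ against the abbreviation-expanded statement of the lemma; both are handled by axiom~$\beta$ together with the equality rules eq, eq-sym, eq-cong-app and eq-$\lambda$-$\xi$.
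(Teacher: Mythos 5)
Your proof is correct and takes essentially the same route as the paper: induction via $n_i$, base case from axiom~5 and $\vee_{i1}$, inductive step from $\exists_i$ with witness~$x$ followed by $\vee_{i2}$ (the paper likewise never uses the induction hypothesis). Your extra care with the $\beta$-conversion bookkeeping between $t\,s$ and its unfolding is a detail the paper's derivation sketch silently elides.
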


\begin{proof}
  Recall that $x = \s y$ stands for $\Eq{x}{(\s y)}$. Let $\varphi(x)
  \equiv (\iszero x) \vee \exists y : \Nat \,.\, x = \s y$. We have the
  following derivation.
  \[
  \inferrule*{
    \inferrule*{
      \proves \iszero 0
    }{
      \proves \varphi(0)
    } \\
    \inferrule*{
      \inferrule*{
        x : \Nat, \varphi(x) \proves \s x = \s x
        \\ x : \Nat, \varphi(x) \proves x : \Nat
      }{
        x : \Nat, \varphi(x) \proves \exists y : \Nat
        \,.\, \s x = \s y
      }
    }{
      x : \Nat, \varphi(x) \proves \varphi(\s x)
    }
  }{
    \proves \forall x : \Nat \,.\, \left((\iszero x) \vee \exists y : \Nat
    \,.\, x = \s y\right)
  }\;\raisebox{0.7em}{by $n_i$}
  \]
\end{proof}

\begin{lemma}\label{lem_iszero_eq}
  $\proves_{\I_s} \forall x : \Nat \,.\, (\iszero x) \supset (x = 0)$.
\end{lemma}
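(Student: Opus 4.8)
The plan is to prove $\proves_{\I_s} \forall x : \Nat \,.\, (\iszero x) \supset (x = 0)$ by induction on $x$ using the rule~$n_i$, with $t \equiv \lambda x \,.\, (\iszero x) \supset (x = 0)$. To apply~$n_i$ we must discharge the base obligation $\proves t\,0$, i.e. $\proves (\iszero 0) \supset (0 = 0)$, and the step obligation $x : \Nat, t\,x \proves t\,(\s x)$, i.e. $x : \Nat, (\iszero x) \supset (x = 0) \proves (\iszero (\s x)) \supset (\s x = 0)$. Note that $n_i$ imposes no typing restriction on~$t$, so we never need to prove $t : \Nat \to \Prop$; however, to introduce the implications via the derived rule~$\supset_i$ we will need the premise $\varphi : \Prop$ in each case, where $\varphi$ is the respective hypothesis $(\iszero x)$ or $(\iszero (\s x))$.

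For the base step: by axiom~5 (with $\iota = \Nat$, $c_1^\Nat = 0$, arity~$0$) we have $\proves \iszero 0$, hence $\proves (\iszero 0) : \Prop$ by rule~$p_i$. Under the assumption $\iszero 0$ we need $0 = 0$, which is axiom~2. So $\supset_i$ yields $\proves (\iszero 0) \supset (0 = 0)$. For the inductive step, fix $x : \Nat$ and the hypothesis $h \equiv (\iszero x) \supset (x = 0)$. We need $(\iszero (\s x)) : \Prop$: by axiom~6 (with $i = 1$, $j = 2$) we have $\proves \neg(\iszero (\s x))$ — here $\iszero (\s x) \equiv o_1^\Nat(c_2^\Nat x)$ — hence $\proves (\iszero (\s x)) \supset \bot$, and then $\proves (\iszero (\s x)) : \Prop$ by rule~$\supset_{t2}$. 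Now under the additional assumption $\iszero (\s x)$ we derive $\bot$ by $\supset_e$ applied to $\neg(\iszero (\s x))$ and the assumption, and then $\s x = 0$ follows by rule~$\bot_e$. Discharging by~$\supset_i$ gives $x : \Nat, h \proves (\iszero (\s x)) \supset (\s x = 0)$, and weakening adds back any unused context. Applying rule~$n_i$ to the base and step obligations yields $\proves \forall x : \Nat \,.\, (\iszero x) \supset (x = 0)$.

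I expect no serious obstacle here; the only subtlety is being careful that the two $\Prop$-typing side conditions for~$\supset_i$ are available — for the base it comes from $p_i$ on axiom~5, and for the step it comes from $\supset_{t2}$ applied to $\neg(\iszero(\s x))$, which in turn rests on axiom~6. Everything else is a direct application of the inductive rule~$n_i$ together with axioms~2, 5, 6 and the derived implication rules from Appendix~\ref{app_logic}.
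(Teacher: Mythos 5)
Your proof is correct and follows essentially the same route as the paper's: induction via $n_i$ on $\varphi(x) \equiv (\iszero x) \supset (x=0)$, with the base case's $\Prop$-premise obtained from axiom~5 via $p_i$ and the step case's from axiom~6 via $p_i$ and $\supset_{t2}$, then $\bot_e$ to conclude $\s x = 0$. The only micro-omission is that $\supset_{t2}$ needs $(\neg(\iszero(\s x))) : \Prop$ as its premise (one extra application of $p_i$), which you use implicitly.
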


\begin{proof}
  Let $\varphi(x) \equiv (\iszero x) \supset (x = 0)$.
  \[
  \inferrule*{
    \inferrule*{
      \proves \iszero 0
    }{
      \proves (\iszero 0) : \Prop
    } \\
    \iszero 0 \proves 0 = 0
  }{
    \proves \varphi(0)
  }(a)
  \]
  \[
  \inferrule*{
    \inferrule*{
      \inferrule*{
        x : \Nat, \varphi(x) \proves \neg (\iszero (\s x))
      }{
        x : \Nat, \varphi(x) \proves \neg (\iszero (\s x)) : \Prop
      }
    }{
      x : \Nat, \varphi(x) \proves \iszero (\s x) : \Prop
    }
    \\
    \inferrule*{
      x : \Nat, \varphi(x), \iszero (\s x) \proves \bot
    }{
      x : \Nat, \varphi(x), \iszero (\s x) \proves \s x = 0
    }
  }{
    x : \Nat, \varphi(x) \proves \varphi(\s x)
  } (b)
  \]
  \[
  \inferrule*{
    (a)
    \\
    (b)
  }{
    \proves \forall x : \Nat \,.\, (\iszero x) \supset (x = 0)
  }
  \]
\end{proof}

\begin{lemma}
  The following terms are derivable in the system~$\I_s$:
  \begin{itemize}
  \item $\forall x, y : \Nat \,.\, (x + y) : \Nat$,
  \item $\forall x, y : \Nat \,.\, (x - y) : \Nat$,
  \item $\forall x, y : \Nat \,.\, (x \mult y) : \Nat$,
  \item $\forall x, y : \Nat \,.\, (x \le y) : \Prop$,
  \item $\forall x, y : \Nat \,.\, (x < y) : \Prop$.
  \end{itemize}
\end{lemma}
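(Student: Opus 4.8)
The plan is to reduce all five items to a single statement about the recursor~$R$, after first checking that the predecessor preserves~$\Nat$. First I would derive $\proves_{\I_s} \forall x : \Nat \,.\, (\p x) : \Nat$ by the induction rule~$n_i$ with induction predicate $\lambda x \,.\, ((\p x) : \Nat)$: for the base, unfolding $\p 0$ and using $\iszero 0$ (axiom~5) with rule~$c_1$ gives $\p 0 = 0$, and $0 : \Nat$ by~$n_t^1$, so rule~eq yields $\p 0 : \Nat$; for the step, $\neg(\iszero(\s x))$ holds by axiom~6, so rule~$c_2$ together with axiom~7 ($d_{2,1}^\Nat(\s x) = x$) gives $\p(\s x) = x$, and $x : \Nat$ is a hypothesis of the step, so rule~eq concludes. (Alternatively this follows from a case split using Lemma~\ref{lem_nat_peano_3}.)

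Next I would prove the following claim about the recursor: if $\proves_{\I_s} \gamma : \Type$, $\proves_{\I_s} g : \Nat \to \gamma$ and $\proves_{\I_s} h : \Nat \to \Nat \to \gamma \to \gamma$, then $\proves_{\I_s} \forall x : \Nat \,.\, \forall y : \Nat \,.\, (R\, g\, h\, x\, y) : \gamma$. The derivation first introduces the outer quantifier over~$x$ by~$\forall_i$ (using axiom~4 for $\Nat : \Type$), leaving $x : \Nat$ in the context, and then applies~$n_i$ on~$y$ with induction predicate $\lambda y \,.\, ((R\, g\, h\, x\, y) : \gamma)$. The point requiring care is exactly this choice: because the recursive call inside~$R$ has the form $R\, g\, h\, x\, (\p y)$ with the first argument~$x$ held fixed, the induction hypothesis applies without reinstantiation once the conditional has been unfolded. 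In the base case, $\beta$-reduction and rule~$c_1$ (using $\iszero 0$) give $R\, g\, h\, x\, 0 = g\, x$, and $g\, x : \gamma$ by~$\to_e$, so rule~eq closes it. In the step, axiom~6 gives $\neg(\iszero(\s y))$, hence rule~$c_2$ and $\p(\s y) = y$ (from the first step) yield $R\, g\, h\, x\, (\s y) = h\, x\, y\, (R\, g\, h\, x\, y)$; the induction hypothesis gives $R\, g\, h\, x\, y : \gamma$, three applications of~$\to_e$ to~$h$ give $h\, x\, y\, (R\, g\, h\, x\, y) : \gamma$, and rule~eq concludes.

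Finally, each $\lambda$-abstraction used as $g$ or $h$ in the definitions of $+$, $-$ and $\mult$ is given the required function type by rule~$\to_i$: $\lambda x.x$ and $\lambda x.0$ get type $\Nat \to \Nat$ (the latter via~$n_t^1$), $\lambda x y z.\s z$ gets $\Nat \to \Nat \to \Nat \to \Nat$ (via~$n_t^2$), $\lambda x y z.\p z$ gets the same type by the first step, and $\lambda x y z. x + z$ gets it once the $+$ item is available. Applying the recursor claim with $\gamma = \Nat$, in the order $+$, $-$, $\mult$ (so that the $+$ item is already known when treating $\mult$), and rewriting with the defining equations by rule~eq, gives $\forall x, y : \Nat \,.\, (x + y) : \Nat$, $\forall x, y : \Nat \,.\, (x - y) : \Nat$ and $\forall x, y : \Nat \,.\, (x \mult y) : \Nat$. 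For~$\le$, from the $-$ item and $o_1^\Nat : \Nat \to \Prop$ (Lemma~\ref{lem_destr_test_type}) we obtain $\iszero(x - y) : \Prop$ under $x : \Nat$, $y : \Nat$, hence $(x \le y) : \Prop$ by rule~eq and two applications of~$\forall_i$; for~$<$, from the $\le$ item and $\s x : \Nat$ (rule~$n_t^2$) we obtain $(\s x) \le y : \Prop$, hence $(x < y) : \Prop$. The only real obstacle is the induction-predicate choice in the recursor claim highlighted above; the rest --- the $\to_i$ typings, the eliminations~$\to_e$ and~$\forall_e$, and the ``computation'' steps driven by axioms~5--7 and rules~$c_1$, $c_2$, eq --- is entirely routine.
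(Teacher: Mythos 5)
Your proposal is correct and follows essentially the same route as the paper: induction (via rule~$n_i$) on the second argument with the first held fixed, unfolding the conditional in the recursor by axioms~5--7 and rules~$c_1$, $c_2$, and transporting the typing along the resulting equations with rule~eq, then handling $\le$ and $<$ from the $-$ item via the test and constructor typings. Factoring the three arithmetic cases through a single typing lemma for~$R$ (together with the explicit $\forall x : \Nat \,.\, (\p x) : \Nat$ step, which the paper leaves implicit in ``derived in a similar way'') is a harmless and slightly cleaner packaging of the same argument.
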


\begin{proof}
  We give a sample proof for the first term.
  \[
  \inferrule*{
    \inferrule*{
      x : \Nat \proves (x + 0) = ((\lambda x \,.\, x) x)
    }{
      x : \Nat \proves (x + 0) = x
    } \\ x : \Nat \proves x : \Nat
  }{
    x : \Nat \proves (x + 0) : \Nat
  } \; (a)
  \]
  \[
    \inferrule*{
      x : \Nat, y : \Nat, (x + y) : \Nat \proves (x + y) : \Nat
    }{
      x : \Nat, y : \Nat, (x + y) : \Nat \proves (\s (x + y)) : \Nat
    } \; (b)
  \]
  \[
  \inferrule*{
    x : \Nat, y : \Nat, (x + y) : \Nat \proves x + (\s y) = \s
      (x + y)
    \\
    (b)
  }{
    x : \Nat, y : \Nat, (x + y) : \Nat \proves (x + (\s y)) : \Nat
  } \; (c)
  \]
  \[
  \inferrule*{
    \inferrule*{
      (a)
      \\ (c)
    }{
      x : \Nat \proves \forall y : \Nat \,.\, (x + y) : \Nat
    } \\ \proves \Nat : \Type
  }{
    \proves \forall x, y : \Nat \,.\, (x + y) : \Nat
  }
  \]
  The second and third terms are derived in a similar way. The fourth
  and fifth terms are derived from the second using
  Lemma~\ref{lem_constr_type}.
\end{proof}

\begin{lemma}\label{lem_s_minus}
  $\proves_{\I_s} \forall x, y : \Nat \,.\, \left(\s x - \s y = x -
  y\right)$
\end{lemma}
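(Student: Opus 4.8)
The plan is to prove the statement by natural-number induction on~$y$ (the argument on which $-$ recurses), after first isolating a purely computational fact. First I would establish, for \emph{arbitrary} terms $s,t$, the two equalities $\proves_{\I_s} \Eq{(\p (\s t))}{t}$ and $\proves_{\I_s} \Eq{(s - \s t)}{(\p (s - t))}$, neither of which needs any typing hypothesis. For the first, unfold $\p (\s t) = \Cond{(\iszero (\s t))}{0}{(d_{2,1}^\Nat (\s t))}$ using~$\beta$; since $\s t \equiv c_2^\Nat t$ and $\iszero \equiv o_1^\Nat$, axiom~6 gives $\proves \neg(\iszero (\s t))$, so rule~$c_2$ rewrites the conditional to $d_{2,1}^\Nat (\s t)$, and axiom~7 rewrites that to~$t$. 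For the second, unfold $s - \s t$ via the defining equation of the recursor~$R$, obtaining $s - \s t = \Cond{(\iszero (\s t))}{\ldots}{(\p (R \ldots s\, (\p (\s t))))}$; again $\proves \neg(\iszero (\s t))$, so $c_2$ selects the else-branch, a few applications of~$\beta$ strip the $\lambda$-abstractions, and replacing $\p(\s t)$ by~$t$ under a congruence (rule~eq-cong-app, using the first fact) yields $\Eq{(s - \s t)}{(\p (s - t))}$. The key point is that axiom~6 discharges the test $\iszero(\s t)$ unconditionally, so both facts hold in the empty context.

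Next I would fix $x$ by rule~$\forall_i$ (assume $x : \Nat$; the side premise $\Nat : \Type$ is axiom~4) and prove $\forall y : \Nat \,.\, \Eq{(\s x - \s y)}{(x - y)}$ by the induction rule~$n_i$ applied to $t \equiv \lambda y \,.\, \Eq{(\s x - \s y)}{(x - y)}$; here we use crucially that $n_i$ imposes no conditions on~$t$, so no auxiliary ``$t\,y : \Prop$'' obligation is generated. In the base case, the computational fact gives $\Eq{(\s x - \s 0)}{(\p(\s x - 0))}$; axiom~5 gives $\proves \iszero 0$, so by rule~$c_1$ and~$\beta$ we get $\s x - 0 = \s x$, and then $\Eq{(\p(\s x))}{x}$ from the first fact, hence $\s x - \s 0 = x$; since also $x - 0 = x$, combining by eq-sym and eq-trans gives $\Eq{(\s x - \s 0)}{(x - 0)}$. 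In the inductive step, working under $x : \Nat$, $y : \Nat$ and the hypothesis $\Eq{(\s x - \s y)}{(x - y)}$, the computational fact gives $\Eq{(\s x - \s(\s y))}{(\p(\s x - \s y))}$; applying eq-cong-app with $\Eq{\p}{\p}$ (axiom~2) and the induction hypothesis rewrites this to $\Eq{(\s x - \s(\s y))}{(\p(x - y))}$; and the computational fact again gives $\Eq{(x - \s y)}{(\p(x - y))}$, so eq-trans and eq-sym yield $\Eq{(\s x - \s(\s y))}{(x - \s y)}$. By~$n_i$ we obtain $x : \Nat \proves \forall y : \Nat \,.\, \Eq{(\s x - \s y)}{(x - y)}$, and discharging $x$ by~$\forall_i$ gives exactly $\proves_{\I_s} \forall x, y : \Nat \,.\, \Eq{(\s x - \s y)}{(x - y)}$.

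I do not expect a genuine obstacle: the proof is routine computation plus one straightforward induction. The only care needed is bookkeeping of the conditional evaluations — making sure $\iszero(\s t)$ is discharged via axiom~6 rather than via any structural fact about~$t$, which is what keeps the computational fact (and hence the base case) valid in an empty context — and making sure the induction hypothesis is applied \emph{inside} the outermost $\p(\cdot)$ by a congruence step rather than at the top level.
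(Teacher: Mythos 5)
Your proof is correct and follows essentially the same route as the paper's: fix $x:\Nat$, induct on $y$ with rule~$n_i$, and in both cases run the computation $\s x - \s y = \p(\s x - y')$ driven by axiom~6 discharging $\iszero(\s t)$ and axiom~7 reducing the destructor. The only difference is that you isolate the two context-free computational equalities as explicit sublemmas, which the paper leaves implicit in its equational chain.
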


\begin{proof}
  We assume $x : \Nat$ and show $\forall y : \Nat \,.\, \left(\s x -
  \s y = x - y\right)$ by induction on~$y$. For $y = 0$ we have $\s x
  - \s 0 = \p (\s x - 0) = \p (\s x) = x = x - 0$, and under the
  assumptions $x :\Nat, y : \Nat, \left(\s x - \s y\right) = x - y$ we
  may derive $\s x - \s (\s y) = \p (\s x - \s y) = \p (x - y) = x -
  \s y$. By rule~$n_i$ we obtain the thesis.
\end{proof}

\begin{lemma}\label{lem_le_ge_then_eq}
  $\proves_{\I_s} \forall x, y : \Nat \,.\, (x \ge y) \wedge (x \le y)
  \supset (x =_\Nat y)$.
\end{lemma}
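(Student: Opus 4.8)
The plan is to first prove the ``meta-level'' version $\proves_{\I_s} \forall x, y : \Nat \,.\, (x \ge y) \wedge (x \le y) \supset x = y$, where $x = y$ abbreviates $\Eq{x}{y}$ and $x \ge y$ abbreviates $y \le x$, and then upgrade $\Eq{x}{y}$ to $x =_\Nat y$ at the very end. Recall $x \le y \equiv \iszero(x - y)$, so $x \ge y \equiv \iszero(y - x)$. Under the assumptions $x : \Nat$, $y : \Nat$ we have $(x \ge y) \wedge (x \le y) : \Prop$ by Lemma~\ref{lem_nat_op_well_defined} together with $\vee_t$ and the rules for $\wedge$, so the final implication introduction is justified; it then remains to apply $\forall_i$ twice. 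Let $\varphi(x) \equiv \forall y : \Nat \,.\, (x \ge y) \wedge (x \le y) \supset x = y$. I would prove $\forall x : \Nat \,.\, \varphi(x)$ by the induction rule $n_i$ (with $t \equiv \lambda x \,.\, \varphi(x)$, modulo the usual $\beta$-rewriting handled by rule eq).

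For the base case $\varphi(0)$: from $0 \ge y$, i.e. $\iszero(y - 0)$, and the computation $y - 0 = y$ (which follows from $\proves \iszero 0$, an instance of axiom~5, via rule $c_1$ and $\beta$, rewriting under $\iszero$ by eq-cong-app and eq), we obtain $\iszero y$, hence $y = 0$ by Lemma~\ref{lem_iszero_eq}, hence $0 = y$ by eq-sym; the hypothesis $0 \le y$ is not even needed. For the inductive step, assume $\varphi(x)$ with $x : \Nat$ and derive $\varphi(\s x)$: fix $y : \Nat$, assume $\s x \ge y$ and $\s x \le y$, and split on $y$ via Lemma~\ref{lem_nat_peano_3} using $\vee_e$. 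If $\iszero y$, then $y = 0$ by Lemma~\ref{lem_iszero_eq}, so after rewriting and the computation $\s x - 0 = \s x$ the hypothesis $\s x \le y$ becomes $\iszero(\s x)$; but $\proves \neg\iszero(\s x)$ by axiom~6, so $\supset_e$ yields $\bot$ and $\bot_e$ yields $\s x = y$. If instead $\exists y' : \Nat \,.\, y = \s y'$, introduce $y' : \Nat$ with $y = \s y'$ by $\exists_e$; then Lemma~\ref{lem_s_minus} gives $\s x - \s y' = x - y'$ and $\s y' - \s x = y' - x$, so (rewriting with eq) the hypotheses become $x \ge y'$ and $x \le y'$; the inductive hypothesis $\varphi(x)$ applied to $y'$ (by $\forall_e$ and $\supset_e$) yields $x = y'$, then eq-cong-app on $\s$ gives $\s x = \s y'$, and eq-trans with $y = \s y'$ gives $\s x = y$. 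In both branches we have $\s x = y$, so by $\supset_i$ (its $:\Prop$ premise supplied by Lemma~\ref{lem_nat_op_well_defined}) and $\forall_i$ we obtain $\varphi(\s x)$. Rule $n_i$ now gives $\forall x : \Nat \,.\, \varphi(x)$, hence $\forall x, y : \Nat \,.\, (x \ge y) \wedge (x \le y) \supset x = y$.

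Finally, to pass from $\Eq{x}{y}$ to $x =_\Nat y$: under $x : \Nat$ we have $x =_\Nat x$ by Lemma~\ref{lem_leibniz} (using $\proves \Nat : \Type$, axiom~4); from $\Eq{x}{y}$ and reflexivity of $\Eq$ on the common prefix $\mathrm{Eql}\,\Nat\,x$ (axiom~2), two uses of eq-cong-app give $\Eq{(x =_\Nat x)}{(x =_\Nat y)}$, and then rule eq turns $x =_\Nat x$ into $x =_\Nat y$. Assembling, $\supset_i$ and two applications of $\forall_i$ give the stated lemma. I expect no conceptual obstacle here; the only delicate point is the routine but error-prone bookkeeping of the several $\Eq$-computations ($y - 0 = y$, $\s x - 0 = \s x$, the recursor unfoldings, rewriting under $\iszero$), the nested $\vee_e$/$\exists_e$ case analysis in the inductive step, and the freshness side conditions of $n_i$ and $\forall_i$. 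One could instead carry $=_\Nat$ through the induction directly, but the successor-congruence step in the $y = \s y'$ case would then need the extra observation that $\lambda n \,.\, p(\s n) : \Nat \to \Prop$ whenever $p : \Nat \to \Prop$, which is why factoring through $\Eq$ is slightly cleaner.
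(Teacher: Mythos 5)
Your proof is correct and follows essentially the same route as the paper's: induction on $x$ with a case split on $y$ via Lemma~\ref{lem_nat_peano_3}, a contradiction from $\neg\,\iszero(\s x)$ in the zero case, and Lemma~\ref{lem_s_minus} plus the inductive hypothesis in the successor case. The only (cosmetic) difference is that you carry $\mathrm{Eq}$ through the induction and upgrade to $=_\Nat$ at the end, whereas the paper keeps $=_\Nat$ in the induction invariant and uses Lemma~\ref{lem_ind_eq} to drop to $\mathrm{Eq}$ for the successor-congruence step; both handle that step identically in substance.
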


\begin{proof}
  Let $\varphi(x) \equiv \forall y : \Nat \,.\, (x \ge y) \wedge (x
  \le y) \supset (x =_\Nat y)$. We proceed by induction on~$x$.

  In the base step we need to show~$\varphi(0)$. We assume $y : \Nat$
  and $(0 \ge y) \wedge (0 \le y)$. From $y \le 0$ we have $\iszero (y
  - 0)$, i.e. $\iszero y$, and thus $y = 0$ by
  Lemma~\ref{lem_iszero_eq}. By rule~$p_i$ we have $((0 \ge y) \wedge
  (0 \le y)) : \Prop$. Hence, we may use implication introduction and
  then universal quantifier introduction to obtain~$\varphi(0)$.

  In the inductive step we need to prove~$\varphi(\s x)$ under the
  assumptions $x : \Nat$ and $\varphi(x)$. We assume further $y :
  \Nat$ and $(\s x \ge y) \wedge (\s x \le y)$. By
  Lemma~\ref{lem_nat_peano_3} there are two possibilities: $y = 0$ or
  $\exists z : \Nat \,.\, y = \s z$. If $y = 0$ then we easily obtain
  $\s x = 0$, which leads to a contradiction, from which we may derive
  $x = y$. If $y = \s z$ then we have $\s x - \s z = 0$ and $\s z - \s
  x = 0$. By Lemma~\ref{lem_s_minus} we obtain $x - z = 0$ and $z - x
  = 0$. From the inductive hypothesis we have $x =_\Nat z$, and hence
  $x = z$ by Lemma~\ref{lem_ind_eq}. Thus $\s x = \s z = y$. Since $x
  : \Nat$ and $y : \Nat$, it is not difficult to show that $((\s x \ge
  y) \wedge (\s x \le y)) : \Prop$. We may therefore use implication
  introduction and then universal quantifier introduction to
  obtain~$\varphi(\s x)$.
\end{proof}

\begin{theorem}
  The following terms are derivable in the system~$\I_s$:
  \begin{itemize}
  \item $\forall x, y : \Nat \,.\, \left(\s x =_\Nat \s y\right)
    \supset \left(x =_\Nat y\right)$,
  \item $\forall x : \Nat \,.\, \neg (\s x =_\Nat 0)$,
  \item $\forall x : \Nat \,.\, (x + 0 =_\Nat x)$,
  \item $\forall x, y : \Nat \,.\, (x + \s y =_\Nat \s (x + y))$,
  \item $\forall x : \Nat \,.\, (x \mult 0 =_\Nat 0)$,
  \item $\forall x, y : \Nat \,.\, (x \mult (\s y) =_\Nat (x \mult y)
    + x)$.
  \end{itemize}
\end{theorem}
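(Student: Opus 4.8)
The plan is to reduce the six claims, wherever possible, to a ``pure computation'' (a derivable $\mathrm{Eq}$-equality proved from axiom~$\beta$, axioms~5--7 and the conditional rules) followed by a transfer to Leibniz equality. The transfer principle I would use is: if $\proves \Eq{t_1}{t_2}$ and $\Gamma \proves t_2 : \Nat$ then $\Gamma \proves t_1 =_\Nat t_2$. It holds because Lemma~\ref{lem_leibniz} (reflexivity) together with axiom~4 and rule~$\forall_e$ gives $\Gamma \proves \Eql{\Nat}{t_2}{t_2}$, and two uses of rule~eq-cong-app on axiom~2 and on $\proves \Eq{t_2}{t_1}$ (rule~eq-sym) let rule~eq rewrite this to $\Gamma \proves \Eql{\Nat}{t_1}{t_2}$. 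Two auxiliary facts I would record once: first, $\proves \Eq{(\p 0)}{0}$ (axiom~$\beta$, then axiom~5 and rule~$c_1$) and $\proves \Eq{(\p (\s x))}{x}$ (axiom~$\beta$, then axiom~6, rule~$c_2$ and axiom~7), from which rule~$n_i$ yields $\proves \forall x : \Nat \,.\, \p x : \Nat$; second, that $=_\Nat$ is a congruence for~$\s$ on naturals, proved by plugging $\lambda z \,.\, q(\s z)$ (typed via $n_t^2$) into the definition of~$=_\Nat$.

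For the three computational laws $x + 0 =_\Nat x$, $x + \s y =_\Nat \s(x + y)$ and $x \mult 0 =_\Nat 0$ I would first unfold $R$, $+$ and $\mult$ with axiom~$\beta$, using rule~$c_1$ at $\iszero 0$ (true by axiom~5), rule~$c_2$ at $\iszero(\s y)$ (false by axiom~6) and $\proves \Eq{(\p(\s y))}{y}$, to obtain $\proves \Eq{(x+0)}{x}$, $\proves \Eq{(x + \s y)}{(\s(x+y))}$ and $\proves \Eq{(x \mult 0)}{0}$. Under $x : \Nat$, $y : \Nat$, Lemma~\ref{lem_nat_op_well_defined} supplies the typings needed for the transfer principle, and two applications of rule~$\forall_i$ then close each of these cases.

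The two successor laws about $=_\Nat$ itself I would handle by instantiating the Leibniz hypothesis with a chosen predicate. For injectivity, assuming $x, y : \Nat$ and $\s x =_\Nat \s y$, I instantiate it (rule~$\forall_e$) at $\lambda z \,.\, q(\p z) : \Nat \to \Prop$ for an arbitrary $q : \Nat \to \Prop$ (typing via $\forall x : \Nat \,.\, \p x : \Nat$ and rules~$\to_i$, $\to_e$), obtaining $q(\p(\s x)) \supset q(\p(\s y))$, and then rewrite along $\proves \Eq{(\p(\s x))}{x}$ and $\proves \Eq{(\p(\s y))}{y}$ (rules~eq-cong-app, eq, $\supset_i$, $\supset_e$, with the $: \Prop$ side conditions from $\supset_t$) to get $q x \supset q y$; rule~$\forall_i$ over~$q$ gives $x =_\Nat y$, and rule~$\supset_i$ (side condition $(\s x =_\Nat \s y) : \Prop$ from Lemma~\ref{lem_leibniz}) and two $\forall_i$'s conclude. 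For $\neg(\s x =_\Nat 0)$, assuming $x : \Nat$ and a hypothetical $\s x =_\Nat 0$, I instantiate at $\lambda z \,.\, \neg(\iszero z)$ (typed via Lemma~\ref{lem_destr_test_type} and $\supset_t$) to get $\neg(\iszero(\s x)) \supset \neg(\iszero 0)$; axiom~6 gives the antecedent and axiom~5 gives $\iszero 0$, so rule~$\supset_e$ yields~$\bot$, and rule~$\supset_i$ (again using Lemma~\ref{lem_leibniz}) and rule~$\forall_i$ finish.

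The remaining law, $x \mult (\s y) =_\Nat (x \mult y) + x$, I expect to be the real obstacle: unfolding gives only $\proves \Eq{(x \mult (\s y))}{(x + (x \mult y))}$, with the arguments of~$+$ in the opposite order, so one genuinely needs commutativity of~$+$ on naturals. I would prove $\forall u, v : \Nat \,.\, u + v =_\Nat v + u$ by induction on~$v$, after first proving by separate inductions $\forall v : \Nat \,.\, 0 + v =_\Nat v$ and $\forall u, v : \Nat \,.\, \s u + v =_\Nat \s(u + v)$; each inductive step needs only the three computational laws above, $\s$-congruence of~$=_\Nat$, and symmetry and transitivity of~$=_\Nat$ from Lemma~\ref{lem_leibniz}. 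Applying commutativity at $u := x$ and $v := x \mult y$ (both of type~$\Nat$ by Lemma~\ref{lem_nat_op_well_defined}), composing with the computed equality via the transfer principle and transitivity, and two applications of rule~$\forall_i$, completes the proof. Conceptually nothing here is hard; the nested inductions and the many $\mathrm{Eq}$-rewrites are simply the most laborious part, which is why I expect them to be where the real work lies.
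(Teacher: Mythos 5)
Your proposal is correct, and for the one item the paper actually works out (injectivity of~$\s$) it takes a mildly different route: the paper first converts the Leibniz hypothesis $\s x =_\Nat \s y$ into the meta-equality $\s x = \s y$ via Lemma~\ref{lem_ind_eq}, computes $x = \p(\s x) = \p(\s y) = y$, and then transfers back to $=_\Nat$ exactly as in your reflexivity-plus-\mbox{eq-cong-app} scheme, whereas you instantiate the Leibniz quantifier directly at $\lambda z\,.\,q(\p z)$. Both work; the paper's detour through Lemma~\ref{lem_ind_eq} saves you the typing obligation $\forall x : \Nat\,.\,\p x : \Nat$ and the $\mathrm{Eq}$-rewriting under~$\supset$, at the cost of invoking a lemma whose own proof is a double induction. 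For the remaining five items the paper only says ``all proofs are easy,'' so your main added value is the observation about the last item: unfolding~$R$ really does yield $\Eq{(x \mult (\s y))}{(x + (x \mult y))}$ with the summands in the opposite order from the stated $(x \mult y) + x$, and since $\mathrm{Eq}$ is convertibility rather than provable equality on~$\Nat$, one genuinely needs $\forall u, v : \Nat\,.\,u+v =_\Nat v+u$, proved by the nested inductions you describe (rule~$n_i$, $\s$-congruence of~$=_\Nat$, and symmetry and transitivity from Lemma~\ref{lem_leibniz}). That is where the real work hides behind the paper's one-line remark, and your plan handles it correctly.
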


\begin{proof}
  All proofs are easy. As an example we give an indication of how the
  first term may be derived. We assume $x : \Nat$, $y : \Nat$ and $\s
  x =_\Nat \s y$. From $\s x =_\Nat \s y$ we obtain $\s x = \s y$ by
  Lemma~\ref{lem_ind_eq}. We thus obtain $x = \p (\s x) = \p (\s y) =
  y$. Since $x : \Nat$ and $y : \Nat$ we have $x =_\Nat y$ by
  Lemma~\ref{lem_leibniz}. We also have $(\s x =_\Nat \s y) : \Prop$
  by Lemma~\ref{lem_constr_type} and
  Lemma~\ref{lem_nat_op_well_defined}. Therefore, we may use
  implication introduction, and then universal quantifier introduction
  twice, to show $\forall x, y : \Nat \,.\, \left(\s x =_\Nat \s
  y\right) \supset \left(x =_\Nat y\right)$.
\end{proof}

\begin{lemma}\label{lem_nat_less}
  $\Gamma \proves_{\I_s} \forall x, y : \Nat \,.\, x < \s y \supset (x
  < y \lor x = y)$
\end{lemma}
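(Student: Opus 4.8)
The plan is to prove $\proves_{\I_s}\forall x,y:\Nat\,.\,x<\s y\supset(x<y\lor x=y)$ by induction on~$y$ using rule~$n_i$, and then weaken to an arbitrary~$\Gamma$ by the weakening lemma. Set $\varphi(y)\equiv\forall x:\Nat\,.\,x<\s y\supset(x<y\lor x=y)$. The one computational fact that makes everything go through is Lemma~\ref{lem_s_minus}: unfolding the abbreviations, $x<\s y\equiv\iszero(\s x-\s y)$, and by Lemma~\ref{lem_s_minus} this equals $\iszero(x-y)$; likewise $x<y\equiv\iszero(\s x-y)$, $\s z<\s y\equiv\iszero(\s z-y)$ (one application), $\s z<\s(\s y)\equiv\iszero(z-y)$ (two applications), and so on. I will also use, as a trivial auxiliary, $\proves_{\I_s}\forall y:\Nat\,.\,(0\le y)$, i.e.\ $\forall y:\Nat\,.\,\iszero(0-y)$: by induction on~$y$, $0-0=(\lambda x\,.\,x)\,0=0$ by computation (axiom~5, rule~$c_1$), and $0-\s y=\p(0-y)=\p 0=0$ using the inductive hypothesis with Lemma~\ref{lem_iszero_eq}, plus rule~$c_2$ and axiom~7 to compute $R$ and $\p$ on~$\s y$. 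All the side conditions of the form $(t:\Prop)$ needed for $\supset$-introduction come from Lemma~\ref{lem_nat_op_well_defined} together with $n_t^1$ and~$n_t^2$.

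For the base case I must derive $\varphi(0)$, i.e.\ $\forall x:\Nat\,.\,x<\s 0\supset(x<0\lor x=0)$. Assuming $x:\Nat$ and $x<\s 0$, the hypothesis is $\iszero(x-0)$; computing $x-0=(\lambda x\,.\,x)\,x=x$ (via axiom~5 and rule~$c_1$) gives $\iszero x$, hence $x=0$ by Lemma~\ref{lem_iszero_eq}, and $x<0\lor x=0$ follows by~$\vee_{i2}$; $\supset$- and $\forall$-introduction close the case. For the inductive step I assume $y:\Nat$ and $\varphi(y)$, fix $x:\Nat$, assume $x<\s(\s y)$, and must derive $x<\s y\lor x=\s y$. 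By Lemma~\ref{lem_nat_peano_3} I split on $\iszero x\lor\exists z:\Nat\,.\,x=\s z$ (using $\vee_e$, and $\exists_e$ in the second disjunct). If $\iszero x$, then $x=0$ by Lemma~\ref{lem_iszero_eq}, and it suffices to derive $0<\s y$, i.e.\ $\iszero(0-y)$, which is exactly the auxiliary fact above instantiated at~$y$; hence $\vee_{i1}$ applies.

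In the remaining case we have $z:\Nat$ with $x=\s z$. From $x<\s(\s y)$, two applications of Lemma~\ref{lem_s_minus} yield $\iszero(z-y)$, which (by one application of Lemma~\ref{lem_s_minus}) is precisely $z<\s y$; applying the inductive hypothesis $\varphi(y)$ to~$z$ gives $z<y\lor z=y$. If $z<y$, then $\iszero(\s z-y)$ holds, and one more use of Lemma~\ref{lem_s_minus} rewrites it to $\iszero(\s(\s z)-\s y)$, i.e.\ $\s z<\s y$, i.e.\ $x<\s y$; conclude by~$\vee_{i1}$. If $z=y$, then $\s z=\s y$ by eq-cong-app (with axiom~2) and hence $x=\s y$ by eq-trans; conclude by~$\vee_{i2}$. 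Finally $\supset$-introduction (using $(x<\s(\s y)):\Prop$) and $\forall$-introduction give $\varphi(\s y)$, and rule~$n_i$ delivers the theorem. I expect no genuine difficulty here: the only things to watch are the systematic rewriting of the $<$/$\le$ expressions through Lemma~\ref{lem_s_minus} and the routine discharge of the $(\cdot:\Prop)$ premises, and the one mildly new ingredient is the auxiliary lemma $\forall y:\Nat\,.\,0\le y$, which is itself a two-line induction.
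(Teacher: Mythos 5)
Your proof is correct and matches the paper's (one-line) argument in all essentials: a single induction in which Lemma~\ref{lem_s_minus} does the arithmetic rewriting of the $<$/$\le$ expressions and Lemma~\ref{lem_nat_peano_3} supplies the constructor case split --- you merely induct on~$y$ and case-split on~$x$ where the paper inducts on~$x$ and case-splits on~$y$. The explicit auxiliary $\forall y:\Nat\,.\,0\le y$ you add is a detail the paper's sketch leaves implicit but which is indeed needed in either variant.
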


\begin{proof}
  Easy induction on~$x$, recalling that $x < y \equiv \iszero (x -
  y)$, using Lemma~\ref{lem_nat_peano_3} for the basis, and
  Lemma~\ref{lem_s_minus} in the inductive step.
\end{proof}

\begin{lemma}\label{lem_nat_ind_2}
  The following rule is admissible in~$\I_s$.
  \[
  {n_i':}\;\;\;\inferrule{
    \Gamma \proves t 0 \\ \Gamma, y : \Nat, \forall x : \Nat \,.\, x <
    y \supset t x \proves t y \\ x,y \notin FV(\Gamma, t)
  }{
    \Gamma \proves \forall x : \Nat \,.\, t x
  }
  \]
\end{lemma}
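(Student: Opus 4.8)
The plan is to derive $n_i'$ from ordinary induction $n_i$ by the standard device of threading the course-of-values hypothesis through the induction as an explicit universally quantified formula. Pick $x,y$ fresh for $\Gamma,t$, set $t' \equiv \lambda y \,.\, \forall x : \Nat \,.\, x < y \supset t x$, and write $\psi(s)$ for $\forall x : \Nat \,.\, x < s \supset t x$, so that $t'\,s$ equals $\psi(s)$ by $\beta$ (all such $\beta$/equality conversions below are handled silently via the axiom~$\beta$, \mbox{eq-cong-app}, \mbox{eq-$\lambda$-$\xi$} and the rule~eq). I will prove $\Gamma \proves \forall y : \Nat \,.\, \psi(y)$ by applying $n_i$ to $t'$ and then extract $\Gamma \proves \forall x : \Nat \,.\, t x$.

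For the base premise $\Gamma \proves \psi(0)$: here $x < 0$ is $\iszero((\s x) - 0)$, and since $(\s x) - 0$ computes to $\s x$ by the defining equation of the recursor (axiom~5 for $\iszero 0$, then rule~$c_1$), $x < 0$ equals $\iszero(\s x) \equiv o_1^\Nat(c_2^\Nat x)$, which is refuted by axiom~6; hence $\Gamma \proves \neg(x < 0)$, and therefore $(x<0) : \Prop$ by $p_i$ and $\supset_{t2}$. Under $x : \Nat$ and the assumption $x < 0$ one then obtains $\bot$ by $\supset_e$ and hence $t x$ by $\bot_e$, so $\supset_i$ followed by $\forall_i$ gives $\psi(0)$. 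For the step premise $\Gamma, y : \Nat, \psi(y) \proves \psi(\s y)$: writing $\Delta = \Gamma, y : \Nat, \psi(y)$, observe that $\psi(y)$ is \emph{exactly} the additional hypothesis of the second premise of $n_i'$, so that premise gives $\Delta \proves t y$ directly (no cut or substitution required). Now take a fresh variable $x$, assume $x : \Nat$ and $x < \s y$; by Lemma~\ref{lem_nat_less}, $x < y \lor x = y$, and I close both cases by $\vee_e$. If $x < y$, instantiating $\psi(y)$ at $x$ and using $\supset_e$ gives $t x$. If $x = y$ (that is, $\Eq{x}{y}$), then \mbox{eq-cong-app} with axiom~2 yields $\Eq{(t x)}{(t y)}$, whence $t x$ from $\Delta \proves t y$ by \mbox{eq-sym} and the rule~eq. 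Since $\s y : \Nat$ by $n_t^2$, Lemma~\ref{lem_nat_op_well_defined} gives $(x < \s y) : \Prop$, so $\supset_i$ yields $\Delta, x : \Nat \proves x < \s y \supset t x$ and $\forall_i$ yields $\Delta \proves \psi(\s y)$. Applying $n_i$ gives $\Gamma \proves \forall y : \Nat \,.\, \psi(y)$.

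To conclude, I would first prove the auxiliary fact $\proves_{\I_s} \forall x : \Nat \,.\, x < \s x$ by an easy induction on $x$: the base $0 < \s 0$ is $\iszero((\s 0) - (\s 0))$, which by Lemma~\ref{lem_s_minus} equals $\iszero(0-0) = \iszero 0$, provable by axiom~5; in the step, $\s x < \s(\s x)$ is $\iszero((\s(\s x)) - (\s(\s x)))$, which by Lemma~\ref{lem_s_minus} equals $\iszero((\s x) - (\s x))$, i.e. the hypothesis $x < \s x$. Given this, fix $x : \Nat$; instantiating $\forall y : \Nat \,.\, \psi(y)$ at $\s x$ (using $n_t^2$ for $\s x : \Nat$) gives $\forall z : \Nat \,.\, z < \s x \supset t z$, instantiating that at $x$ gives $x < \s x \supset t x$, and discharging the premise with the auxiliary fact gives $t x$; then $\forall_i$ gives $\Gamma \proves \forall x : \Nat \,.\, t x$.

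Most of the work is bookkeeping — the $\beta$/equality conversions between $t'\,s$ and the unfolded $\psi(s)$, and the $\Prop$-ness side conditions of $\supset_i$ and $\forall_i$. The step I expect to be the real content, and where I would be most careful, is the inductive step: one must notice that the very formula $\psi(y)$ carried through the ordinary induction is precisely the course-of-values hypothesis furnished by the premise of $n_i'$ — so the premise applies verbatim — and that the diagonal case $x = y$ of Lemma~\ref{lem_nat_less} is discharged by plain equality reasoning (the rule~eq) rather than by an appeal to any induction hypothesis.
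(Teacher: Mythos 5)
Your proof is correct and follows essentially the same route as the paper's: apply $n_i$ to the course-of-values predicate $\psi(y) \equiv \forall x : \Nat \,.\, x < y \supset t x$, handle the base case via $\neg(x<0)$, use Lemma~\ref{lem_nat_less} with $\vee_e$ in the step (feeding $\psi(y)$ verbatim to the second premise to get $t\,y$), and conclude by instantiating at $\s x$ using $x < \s x$. The only difference is that you spell out the $\beta$-computations and the auxiliary induction for $\forall x : \Nat \,.\, x < \s x$, which the paper leaves implicit.
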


\begin{proof}
  We show that if the premises are derivable, then so is $\Gamma
  \proves \forall y, x : \Nat \,.\, x < y \supset t x$. We achieve
  this by deriving the premises of the rule~$n_i$ for $\psi(y) \equiv
  \forall x : \Nat \,.\, x < y \supset t x$. Since $x : \Nat, x < 0
  \proves \bot$ and $x : \Nat \proves (x < 0) : \Prop$, we have
  $\Gamma \proves \psi(0)$. By our assumption that the second premise
  of~$n_i'$ is derivable, we have $\Gamma, y : \Nat, \psi(y) \proves t
  y$. Let $\Gamma_1 \equiv \Gamma, y : \Nat, \psi(y), x : \Nat, x < \s
  y$. We show $\Gamma_1 \proves t x$. By Lemma~\ref{lem_nat_less} we
  obtain $\Gamma_1 \proves x < y \lor x = y$. Since $\Gamma_1 \proves
  \forall x : \Nat \,.\, x < y \supset t x$ we have $\Gamma_1, x < y
  \proves t x$. Since $\Gamma_1 \proves t y$ we have $\Gamma_1, x = y
  \proves t x$. Using rule~$\lor_e$ we obtain $\Gamma_1 \proves t
  x$. Since $\Gamma, y : \Nat, \psi(y), x : \Nat \proves (x < \s y) :
  \Prop$ we have $\Gamma, y : \Nat, \psi(y) \proves \forall x : \Nat
  \,.\, x < \s y \supset t x$, i.e. $\Gamma, y : \Nat, \psi(y) \proves
  \psi(\s y)$. By applying rule~$n_i$ we thus obtain $\Gamma \proves
  \forall y,x : \Nat \,.\, x < y \supset t x$. Now $\Gamma, x : \Nat
  \proves \s x : \Nat$ and $\Gamma, x : \Nat \proves x < \s x$, hence
  $\Gamma, x : \Nat \proves t x$. Therefore $\Gamma \proves \forall x
  : \Nat \,.\, t x$.
\end{proof}

\begin{theorem}
  Suppose $\Gamma \proves \forall x_1 : \alpha_1 \ldots \forall x_n :
  \alpha_n \,.\, \varphi_1 \lor \ldots \lor \varphi_m$, $\Gamma
  \proves \alpha_j : \Type$ for $j = 1,\ldots,n$, and for
  $i=1,\ldots,m$: $\Gamma \proves \forall x_1 : \alpha_1 \ldots
  \forall x_n : \alpha_n \,.\, t_i : \beta \to \ldots \to \beta$
  where~$\beta$ occurs~$k_i + 1$ times, $\Gamma \proves \forall x_1 :
  \alpha_1 \ldots \forall x_n : \alpha_n \,.\, t_{i,j,k} : \alpha_k$
  for $j=1,\ldots,k_i$, $k=1,\ldots,n$, $x_1,\ldots,x_n \notin
  FV(f,\alpha_1,\ldots,\alpha_n,\beta)$ and
  \begin{eqnarray*}
  \Gamma \proves \forall x_1 : \alpha_1 \ldots \forall x_n :
  \alpha_n \,.\, \varphi_i &\supset& (f x_1 \ldots x_n = \\ && \;t_i (f t_{i,1,1}
  \ldots t_{i,1,n}) \ldots (f t_{i,k_i,1} \ldots t_{i,k_i,n})).
  \end{eqnarray*}
  If there is a term~$g$ such that $\Gamma \proves g : \alpha_1 \to
  \ldots \to \alpha_n \to \Nat$ and for $i=1,\ldots,m$
  \begin{eqnarray*}
  \Gamma \proves \forall x_1 : \alpha_1 \ldots \forall x_n : \alpha_n
  \,.\, \varphi_i &\supset& (\left((f x_1 \ldots x_n) :
  \beta\right) \lor \\ && \;((g t_{i,1,1} \ldots t_{i,1,n}) < (g x_1 \ldots
  x_n) \land \ldots \land \\ && \;\;(g t_{i,k_i,1} \ldots t_{i,k_i,n}) < (g x_1
  \ldots x_n)))
  \end{eqnarray*}
  where $x_1,\ldots,x_n \notin FV(g)$, then
  \[
  \Gamma \proves f : \alpha_1 \to \ldots \to \alpha_n \to \beta.
  \]
\end{theorem}

\begin{proof}
  We derive the premises of the rule~$n_i'$ for
  \[
  \psi(x) \equiv \forall x_1 : \alpha_1 \ldots \forall x_n : \alpha_n
  \,.\, x =_\Nat (g x_1 \ldots x_n) \supset f x_1 \ldots x_n : \beta.
  \]
  We have $\Gamma, x_1 : \alpha_1,\ldots,x_n : \alpha_n \proves
  \varphi_1 \lor \ldots \lor \varphi_m$. Let $\Gamma_i \equiv \Gamma,
  x_1 : \alpha_1,\ldots,x_n : \alpha_n,\varphi_i$ for
  $i=1,\ldots,m$. We first prove $\Gamma_i, 0 =_\Nat (g x_1 \ldots
  x_n) \proves f x_1 \ldots x_n : \beta$. It suffices to show
  \begin{eqnarray*}
  \Gamma_i, 0 =_\Nat (g x_1 \ldots x_n), && ((g t_{i,1,1} \ldots
  t_{i,1,n}) < (g x_1 \ldots x_n) \land \ldots \land \\ && \;(g
  t_{i,k_i,1} \ldots t_{i,k_i,n}) < (g x_1 \ldots x_n)) \proves f x_1
  \ldots x_n : \beta.
  \end{eqnarray*}
  Since $\proves g : \alpha_1 \to \ldots \to \alpha_n \to \Nat$, by
  Lemma~\ref{lem_ind_eq} and $\proves \neg (\iszero (\s (g
  t_{i,1,1} \ldots t_{i,1,n})))$ we have
  \begin{eqnarray*}
  \Gamma_i, 0 =_\Nat (g x_1 \ldots x_n), && ((g t_{i,1,1} \ldots
  t_{i,1,n}) < (g x_1 \ldots x_n) \land \ldots \land \\ && \;(g t_{i,k_i,1}
  \ldots t_{i,k_i,n}) < (g x_1 \ldots x_n)) \proves \bot.
  \end{eqnarray*}
  so our claim follows by~$\bot_e$. Therefore $\Gamma_i \proves 0
  =_\Nat (g x_1 \ldots x_n) \supset (f x_1 \ldots x_n : \beta)$ by
  $\proves g : \alpha_1 \to \ldots \to \alpha_n \to \Nat$,
  Lemma~\ref{lem_nat_op_well_defined} and~$\supset_i$. Since
  \[
  \Gamma, x_1 : \alpha_1, \ldots, x_n : \alpha_n \proves \varphi_1
  \lor \ldots \lor \varphi_m
  \]
  by $m-1$ applications of~$\lor_e$ we obtain
  \[
  \Gamma, x_1 : \alpha_1, \ldots, x_n : \alpha_n \proves 0 =_\Nat (g
  x_1 \ldots x_n) \supset (f x_1 \ldots x_n : \beta).
  \]
  Because $\Gamma \proves \alpha_i : \Type$ for $i=1,\ldots,n$, we may
  apply $\forall$-introduction~$n$ times to obtain $\Gamma \proves
  \psi(0)$.

  Let
  \[
  \Gamma' \equiv \Gamma, y : \Nat, \forall x : \Nat \,.\, x < y
  \supset \psi(x)
  \]
  and
  \[
  \Gamma_i' \equiv \Gamma', x_1 : \alpha_1, \ldots, x_n : \alpha_n,
  \varphi_i.
  \]
  Because $\Gamma', x_1 : \alpha, \ldots, x_n : \alpha_n \proves
  \varphi_1 \lor \ldots \lor \varphi_m$ and $\Gamma' \proves \alpha_i
  : \Type$ for $i=1,\ldots,n$, to derive the second premise of~$n_i'$
  for~$\psi$, it suffices to show
  \[
  \Gamma_i' \proves y =_\Nat (g x_1 \ldots x_n) \supset f x_1 \ldots
  x_n : \beta
  \]
  for $i=1,\ldots,m$, apply $\lor$-elimination~$m-1$ times, and then
  $\forall$-introduction~$n$ times. Let $\Gamma_i'' \equiv \Gamma_i',
  y =_\Nat (g x_1 \ldots x_n)$. Since
  \[
  \Gamma_i' \proves (y =_\Nat (g x_1 \ldots x_n)) : \Prop
  \]
  it actually suffices to prove \( \Gamma_i'' \proves f x_1 \ldots x_n
  : \beta \) and apply~$\supset_i$. By the assumption on~$g$ in the
  lemma, we have
  \begin{eqnarray*}
  \Gamma_i'' \proves \left((f x_1 \ldots x_n) : \beta\right) &\lor& ((g
  t_{i,1,1} \ldots t_{i,1,n}) < (g x_1 \ldots x_n) \land \ldots \land
  \\ && \;(g t_{i,k_i,1} \ldots t_{i,k_i,n}) < (g x_1 \ldots x_n))
  \end{eqnarray*}
  Let
  \[
  \Gamma_i^+ \equiv \Gamma_i'', (g t_{i,1,1} \ldots t_{i,1,n}) < (g
  x_1 \ldots x_n) \land \ldots \land (g t_{i,k_i,1} \ldots
  t_{i,k_i,n}) < (g x_1 \ldots x_n)
  \]
  It thus suffices to show $\Gamma_i^+ \proves f x_1 \ldots x_n :
  \beta$ and apply rule~$\lor_e$. Let $1 \le j \le k_i$. Since
  $\Gamma_i^+ \proves t_{i,j,k} : \alpha_k$ for $k=1,\ldots,n$ and
  $\Gamma_i^+ \proves g : \alpha_1 \to \ldots \to \alpha_n \to \Nat$,
  we have $\Gamma_i^+ \proves (g t_{i,j,1} \ldots t_{i,j,n}) :
  \Nat$. Thus
  \[
  \Gamma_i^+ \proves (g t_{i,j,1} \ldots t_{i,j,n}) < y \supset \psi(g
  t_{i,j,1} \ldots t_{i,j,n}).
  \]
  But $\Gamma_i^+ \proves y =_\Nat (g x_1 \ldots x_n)$, $\Gamma_i^+
  \proves y : \Nat$ and $\Gamma_i^+ \proves (g x_1 \ldots x_n) :
  \Nat$, so by Lemma~\ref{lem_ind_eq} we obtain $\Gamma_i^+ \proves y
  = (g x_1 \ldots x_n)$ and thus
  \[
  \Gamma_i^+ \proves (g t_{i,j,1} \ldots t_{i,j,n}) < (g x_1 \ldots
  x_n) \supset \psi(g t_{i,j,1} \ldots t_{i,j,n}).
  \]
  By the rule~$\supset_e$ we have $\Gamma_i^+ \proves \psi(g t_{i,j,1}
  \ldots t_{i,j,n})$, i.e.
  \[
  \Gamma_i^+ \proves \forall x_1 : \alpha_1 \ldots \forall x_n : \alpha_n \,.\, (g t_{i,j,1}
  \ldots t_{i,j,n}) =_\Nat (g x_1 \ldots x_n) \supset f x_1 \ldots x_n
  : \beta.
  \]
  Thus
  \(
  \Gamma_i^+ \proves (f t_{i,j,1} \ldots t_{i,j,n}) : \beta
  \)
  for $j=1,\ldots,k_i$. Because $\Gamma_i^+ \proves t_i : \beta \to
  \ldots \to \beta$ where~$\beta$ occurs~$k_i+1$ times, we
  have
  \[
  \Gamma_i^+ \proves (t_i (f t_{i,1,1} \ldots
  t_{i,1,n}) \ldots (f t_{i,k_i,1} \ldots t_{i,k_i,n})) : \beta.
  \]
  But
  \[
  \Gamma_i^+ \proves f x_1 \ldots x_n = t_i (f t_{i,1,1} \ldots
  t_{i,1,n}) \ldots (f t_{i,k_i,1} \ldots t_{i,k_i,n})
  \]
  so $\Gamma_i^+ \proves (f x_1 \ldots x_n) : \beta$.

  Therefore, we have derived the premises of~$n_i'$ for~$\psi$, hence
  by Lemma~\ref{lem_nat_ind_2} we obtain $\Gamma \proves \forall x :
  \Nat \,.\, \psi(x)$, i.e.
  \[
  \Gamma \proves \forall x : \Nat \forall x_1 : \alpha_1 \ldots
  \forall x_n : \alpha_n \,.\, x =_\Nat (g x_1 \ldots x_n) \supset f
  x_1 \ldots x_n : \beta.
  \]
  Hence
  \[
  \Gamma, x_1 : \alpha_1, \ldots, x_n : \alpha_n, x : \Nat \proves x
  =_\Nat (g x_1 \ldots x_n) \supset f x_1 \ldots x_n : \beta
  \]
  so
  \[
  \Gamma, x_1 : \alpha_1, \ldots, x_n : \alpha_n \proves \forall x :
  \Nat \,.\, x =_\Nat (g x_1 \ldots x_n) \supset f x_1 \ldots x_n :
  \beta.
  \]
  Because $\Gamma, x_1 : \alpha_1, \ldots, x_n : \alpha_n \proves (g
  x_1 \ldots x_n) : \Nat$, we easily obtain
  \[
  \Gamma, x_1 : \alpha_1, \ldots, x_n : \alpha_n \proves f x_1 \ldots
  x_n : \beta
  \]
  and thus $\Gamma \proves f : \alpha_1 \to \ldots \to \alpha_n \to
  \beta$.
\end{proof}

\newpage
\renewcommand{\thesection}{Appendix~\Alph{section}}
\tocless\section{Soundness of the Translation of~CPRED$\omega$}\label{sec_cpred_soundness}
\renewcommand{\thesection}{\Alph{section}}
\addcontentsline{toc}{section}{\thesection\hspace{0.5em} Soundness of
  the Translation of~CPRED$\omega$}

In this section by~$\tau$, $\tau_1$, $\tau_2$, $\sigma$, $\sigma_1$,
etc. we denote types of~\mbox{CPRED$\omega$}, by $t$, $t_1$, $t_2$,
etc. terms of~\mbox{CPRED$\omega$}, by $\varphi$, $\psi$,
etc. formulas of~\mbox{CPRED$\omega$}, and by $s$, $s_1$, $s_2$,
etc. terms of~$\I_s$. By~$\Tc$ we denote the set of types
of~\mbox{CPRED$\omega$}. Below by~$\Delta$ we denote an arbitrary set
of formulas of~\mbox{CPRED$\omega$}.

If $t$ is a term of~\mbox{CPRED$\omega$} then we use the
notation~$\Gamma(t)$ for~$\Gamma(\{t\})$. By~$\Gamma(\Delta,\varphi)$
we denote~$\Gamma(\Delta \cup \{\varphi\})$. Below~$\Delta$ denotes an
arbitrary set of formulas of~\mbox{CPRED$\omega$}.

Note that if $\Delta \subseteq \Delta'$ then $\Gamma(\Delta) \subseteq
\Gamma(\Delta')$, after possibly renaming some variables $y \in
FV(\Gamma(\Delta))$ such that $y \notin FV(\Delta)$. Because of
Lemma~\ref{lem_I_s_subst}, we may assume without loss of generality
that this implication always holds verbatim, and also that $\Gamma(t_1
t_2) = \Gamma(t_1,t_2)$, $\Gamma(\varphi \supset \psi) =
\Gamma(\varphi,\psi)$, etc.

\begin{lemma}\label{lem_transl_is_type}
  $\Gamma(\Delta) \proves_{\I_s} \transl{\tau} : \Type$
\end{lemma}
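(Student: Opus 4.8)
The plan is to proceed by a straightforward induction on the structure of the type $\tau \in \Tc$ of CPRED$\omega$, following the three cases in the grammar $\Tc ::= o \mid \B \mid \Tc \to \Tc$ and the matching clauses in the definition of $\transl{-}$.

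For the base case $\tau \in \B$ we have $\transl{\tau} = \tau$, and by the definition of $\Gamma(\Delta)$ the assertion $\tau : \Type$ is one of its members (note $\B$ is finite, so this introduces no difficulty with the finiteness requirement implicit in $\proves_{\I_s}$). Hence $\Gamma(\Delta) \proves_{\I_s} \tau : \Type$ is an instance of axiom~1. For $\tau = o$ we have $\transl{o} = \Prop$, and $\Gamma(\Delta) \proves_{\I_s} \Prop : \Type$ is exactly axiom~3.

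For the inductive step $\tau = \tau_1 \to \tau_2$ we have $\transl{\tau} = \transl{\tau_1} \to \transl{\tau_2}$. Applying the induction hypothesis to $\tau_1$ and to $\tau_2$ gives $\Gamma(\Delta) \proves_{\I_s} \transl{\tau_1} : \Type$ and $\Gamma(\Delta) \proves_{\I_s} \transl{\tau_2} : \Type$. Applying rule~$\to_t$ then yields $\Gamma(\Delta) \proves_{\I_s} (\transl{\tau_1} \to \transl{\tau_2}) : \Type$, which is the desired conclusion.

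There is no real obstacle here; the only points to verify are that the required atomic assertions genuinely appear in $\Gamma(\Delta)$ in the base-type case, and that the finiteness condition hidden in $\proves_{\I_s}$ is met — both immediate, since $\B$ is finite and each of the three derivations uses only finitely many context members (in fact at most one).
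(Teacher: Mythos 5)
Your proof is correct and is exactly the argument the paper intends: the paper's own proof is just the one-line remark ``straightforward induction on the structure of~$\tau$, using rule~$\to_t$,'' and your write-up fills in the same base cases (axiom~1 via the clause $\tau:\Type$ in $\Gamma(\Delta)$ for $\tau\in\B$, and axiom~3 for $o$) and the same inductive step. Nothing further is needed.
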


\begin{proof}
  Straightforward induction on the structure of~$\tau$, using
  rule~$\to_t$.
\end{proof}

\begin{lemma}\label{lem_transl_typed}
  If $t \in T_\tau$ then $\Gamma(t) \proves \transl{t} :
  \transl{\tau}$.
\end{lemma}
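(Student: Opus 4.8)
The plan is to prove Lemma~\ref{lem_transl_typed} by structural induction on the term~$t \in T_\tau$, following exactly the inductive clauses in the definition of the translation $\transl{-}$. The key point is that each syntactic construct of CPRED$\omega$ has a translation that mirrors a corresponding term-formation rule of~$\I_s$, and we have already established in Lemma~\ref{lem_transl_is_type} that $\Gamma(\Delta) \proves_{\I_s} \transl{\sigma} : \Type$ for every CPRED$\omega$-type~$\sigma$, which is precisely the side condition needed to fire the typing rules of~$\I_s$.

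First I would handle the base cases. If $t = c \in \Sigma_\tau$, then by construction $\transl{c} = c$ and the declaration $c : \transl{\tau}$ belongs to $\Gamma(\{c\}) = \Gamma(t)$, so we conclude by axiom~1. If $t = x \in V_\tau$, then $x \in FV(t)$, so $x : \transl{\tau}$ is in $\Gamma(t)$, and again axiom~1 applies. For the inductive cases: if $t = t_1 t_2$ with $t_1 \in T_{\sigma \to \tau}$ and $t_2 \in T_\sigma$, the induction hypothesis gives $\Gamma(t_1) \proves \transl{t_1} : \transl{\sigma} \to \transl{\tau}$ and $\Gamma(t_2) \proves \transl{t_2} : \transl{\sigma}$ (using $\transl{\sigma\to\tau} = \transl{\sigma} \to \transl{\tau}$); after weakening both judgements to the common context $\Gamma(t) = \Gamma(t_1,t_2)$ (legitimate by the first lemma of Section~\ref{sec_I_s} and the conventions on $\Gamma(-)$ recorded just before Lemma~\ref{lem_transl_is_type}), rule~$\to_e$ yields $\Gamma(t) \proves \transl{t_1}\transl{t_2} : \transl{\tau}$, i.e. $\transl{t_1 t_2} : \transl{\tau}$. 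If $t = \lambda x : \tau_1 \,.\, t_2$ with $t_2 \in T_{\tau_2}$, then $\transl{t} = \lambda x \,.\, \transl{t_2}$; the induction hypothesis gives $\Gamma(t_2) \proves \transl{t_2} : \transl{\tau_2}$, and since $x \in V_{\tau_1}$ the declaration $x : \transl{\tau_1}$ is the only assumption of $\Gamma(t_2)$ mentioning~$x$ (all other variables in $FV(t_2)\setminus\{x\} = FV(t)$ are recorded in $\Gamma(t)$), so we may write $\Gamma(t_2) = \Gamma(t), x : \transl{\tau_1}$ up to the standing renaming conventions; combined with $\Gamma(t) \proves \transl{\tau_1} : \Type$ from Lemma~\ref{lem_transl_is_type} and the side condition $x \notin FV(\Gamma(t), \transl{\tau_1}, \transl{\tau_2})$, rule~$\to_i$ gives $\Gamma(t) \proves (\lambda x \,.\, \transl{t_2}) : \transl{\tau_1} \to \transl{\tau_2} = \transl{\tau_1 \to \tau_2}$.

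It remains to treat the two clauses producing formulas, where $\tau = o$ and $\transl{o} = \Prop$. If $t = \varphi \supset \psi$ with $\varphi, \psi \in T_o$, the induction hypothesis gives $\Gamma(\varphi) \proves \transl{\varphi} : \Prop$ and $\Gamma(\psi) \proves \transl{\psi} : \Prop$; weakening both to $\Gamma(t) = \Gamma(\varphi,\psi)$ and then further weakening the second to $\Gamma(t), \transl{\varphi}$, the derived rule~$\supset_t$ (established in Section~\ref{sec_I_s}) yields $\Gamma(t) \proves (\transl{\varphi} \supset \transl{\psi}) : \Prop$, i.e. $\transl{\varphi\supset\psi} : \Prop$. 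If $t = \forall x : \sigma \,.\, \varphi$ with $x \in V_\sigma$ and $\varphi \in T_o$, the induction hypothesis gives $\Gamma(\varphi) \proves \transl{\varphi} : \Prop$; as in the $\lambda$-case we regard $\Gamma(\varphi)$ as $\Gamma(t), x : \transl{\sigma}$, use $\Gamma(t) \proves \transl{\sigma} : \Type$ from Lemma~\ref{lem_transl_is_type} and the side condition $x \notin FV(\Gamma(t), \transl{\sigma})$, and apply rule~$\forall_t$ to obtain $\Gamma(t) \proves (\forall x : \transl{\sigma} \,.\, \transl{\varphi}) : \Prop$, which is $\transl{\forall x : \sigma \,.\, \varphi} : \Prop$.

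The only real care needed — and the step I expect to be the main nuisance rather than a genuine obstacle — is the bookkeeping about contexts: matching $\Gamma(t)$ against the $\Gamma$ appearing in the induction hypotheses for subterms, justifying the implicit weakenings, and checking the eigenvariable side conditions $x \notin FV(\Gamma, \ldots)$ in the $\lambda$- and $\forall$-cases. These are handled uniformly by the remarks immediately preceding Lemma~\ref{lem_transl_is_type} (monotonicity of $\Gamma(-)$ under inclusion of formula sets, the assumption that this monotonicity and the equalities $\Gamma(t_1 t_2) = \Gamma(t_1,t_2)$, $\Gamma(\varphi \supset \psi) = \Gamma(\varphi,\psi)$, etc. hold verbatim after renaming, and Lemma~\ref{lem_I_s_subst} licensing such renamings), together with the weakening lemma; since bound variables are chosen distinct from free ones, the eigenvariable conditions hold automatically. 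Everything else is a direct rule application, so the proof is a routine induction.
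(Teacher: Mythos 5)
Your proof is correct and follows exactly the route the paper takes: the paper's own proof is the one-line remark ``Induction on the structure of~$t$, using rules~$\to_i$, $\to_e$, $\supset_t$ and~$\forall_t$, and Lemma~\ref{lem_transl_is_type} and weakening,'' of which your argument is simply the fully written-out version. The context bookkeeping you flag is handled the same way, via the conventions stated before Lemma~\ref{lem_transl_is_type}.
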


\begin{proof}
  Induction on the structure of~$t$, using rules~$\to_i$, $\to_e$,
  $\supset_t$ and~$\forall_t$, and Lemma~\ref{lem_transl_is_type} and
  weakening.
\end{proof}

\begin{lemma}\label{lem_transl_subst}
  $\transl{t_1[x/t_2]} = \transl{t_1}[x/\transl{t_2}]$
\end{lemma}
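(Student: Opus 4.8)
The plan is to prove this by structural induction on the term $t_1$ of CPRED$\omega$, following the clauses in the definition of $\transl{-}$. A point used silently in every case is that the translation leaves variables untouched, so $FV(\transl{t}) = FV(t)$ for every term $t$; in particular $x \notin FV(\transl{t_2})$ whenever $x \notin FV(t_2)$, which is what makes the binder cases go through.

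First I would dispatch the base cases. If $t_1 = x$ then $t_1[x/t_2] = t_2$ and $\transl{t_1}[x/\transl{t_2}] = x[x/\transl{t_2}] = \transl{t_2}$, so both sides coincide; if $t_1 = y$ with $y \ne x$, or $t_1 = c$ for a constant $c$, then substitution acts as the identity on $t_1$ and likewise on $\transl{t_1}$, so both sides equal $\transl{t_1}$. For the application case $t_1 = s_1 s_2$ and the implication case $t_1 = \varphi \supset \psi$, object-level substitution distributes over the constructor, $\transl{-}$ commutes with the constructor by definition, and the result is immediate from the induction hypotheses applied to the two immediate subterms.

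The only cases needing a little care are the two binders, $t_1 = \lambda y : \tau \,.\, s$ and $t_1 = \forall y : \tau \,.\, \varphi$. Here I would invoke the standing convention that bound variables are chosen distinct from the free variables of the terms involved (and that we identify $\alpha$-equivalent terms), so we may assume $y \ne x$ and $y \notin FV(t_2)$, hence $y \notin FV(\transl{t_2})$. Then $t_1[x/t_2] = \lambda y : \tau \,.\, s[x/t_2]$ (respectively $\forall y : \tau \,.\, \varphi[x/t_2]$), so $\transl{t_1[x/t_2]} = \lambda y \,.\, \transl{s[x/t_2]}$ (respectively $\forall y : \transl{\tau} \,.\, \transl{\varphi[x/t_2]}$); applying the induction hypothesis to the body and then pulling the substitution back out of the binder, which is legitimate precisely because $y$ is fresh for $\transl{t_2}$, yields $\transl{t_1}[x/\transl{t_2}]$. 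Note that the type annotation $\tau$ is untouched by the object-level substitution and $\transl{\tau}$ contains no free variables, so it plays no role.

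The argument is entirely routine; the only place where anything substantive happens is the variable-capture bookkeeping in the binder cases, and even there it is fully absorbed by the paper's convention on bound variables, so I would not expect any real obstacle. The statement is needed as an auxiliary fact for establishing soundness of the translation (e.g. to handle the rules $\forall_e^P$ and $\mathrm{conv}^P$), which is why it is isolated here.
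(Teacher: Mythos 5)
Your proof is correct and takes exactly the approach the paper uses: the paper's proof is simply ``straightforward induction on the structure of $t_1$'', and your case analysis (with the variable-capture bookkeeping in the binder cases handled by the convention on bound variables) is the expected elaboration of that one-line argument.
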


\begin{proof}
  Straightforward induction on the structure of~$t_1$.
\end{proof}

\begin{lemma}\label{lem_transl_eq}
  If $t_1 \to_\beta t_2$ then $\Gamma(t_1,t_2) \proves_{\I_s}
  \transl{t_1} = \transl{t_2}$.
\end{lemma}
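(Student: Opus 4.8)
The plan is to induct on the derivation of the one-step reduction $t_1 \to_\beta t_2$, i.e.\ on the position of the contracted redex inside $t_1$. Since $\transl{-}$ is a straightforward structural recursion that commutes with every term former of CPRED$\omega$, the only genuinely new content sits in the base case, while every inductive case amounts to transporting an equality through a surrounding term by means of the congruence rules for $\mathrm{Eq}$ (rules eq-cong-app and eq-$\lambda$-$\xi$), together with the reflexivity axiom $\Gamma \proves \Eq{s}{s}$ and weakening. I use freely the conventions fixed in this section: that $\Gamma(-)$ is monotone and commutes with the term constructors, so that for instance $\Gamma(u v, u' v) = \Gamma(u, u', v) \supseteq \Gamma(u, u')$.

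In the base case the redex is at the root: $t_1 \equiv (\lambda x : \tau\,.\,t)\,s$ and $t_2 \equiv t[x/s]$, so $\transl{t_1} \equiv (\lambda x\,.\,\transl{t})\,\transl{s}$. The axiom $\beta$ gives $\Gamma(t_1,t_2) \proves \Eq{((\lambda x\,.\,\transl{t})\,\transl{s})}{(\transl{t}[x/\transl{s}])}$, and by Lemma~\ref{lem_transl_subst} the right-hand side is exactly $\transl{t[x/s]} \equiv \transl{t_2}$, which closes this case. For a congruence case over application, say $t_1 \equiv u v$, $t_2 \equiv u' v$ with $u \to_\beta u'$, the induction hypothesis gives $\Gamma(u,u') \proves \transl{u} = \transl{u'}$; weakening to $\Gamma(t_1,t_2)$, combining with $\Gamma(t_1,t_2) \proves \transl{v} = \transl{v}$ and applying eq-cong-app yields $\Gamma(t_1,t_2) \proves \transl{u}\,\transl{v} = \transl{u'}\,\transl{v}$, as required; the case with the redex in $v$ is symmetric. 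The cases where the redex lies inside an implication are handled in the same style after unfolding the $\I_s$ abbreviation $\transl{\varphi \supset \psi} \equiv \forall\,(\mathrm{Subtype}\,\Prop\,(\lambda y\,.\,\transl{\varphi}))\,(\lambda x\,.\,\transl{\psi})$: here the binders $x, y$ are fresh, so a couple of applications of eq-cong-app and eq-$\lambda$-$\xi$ (the latter with a trivially satisfied side condition) suffice.

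The delicate point, and what I expect to be the main obstacle, is the congruence step under the binder $\lambda x$, and for the same reason under $\forall x : \tau$, whose translation $\transl{\forall x : \tau\,.\,\varphi} \equiv \forall\,\transl{\tau}\,(\lambda x\,.\,\transl{\varphi})$ again involves $\lambda x$. Take $t_1 \equiv \lambda x : \tau\,.\,u$, $t_2 \equiv \lambda x : \tau\,.\,u'$ with $u \to_\beta u'$. The induction hypothesis gives $\Gamma(u,u') \proves \transl{u} = \transl{u'}$, but when $x \in FV(u)$ the context $\Gamma(u,u')$ contains the declaration $x : \transl{\tau}$, so eq-$\lambda$-$\xi$, which requires $x \notin FV(\Gamma)$, does not apply directly. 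I would repair this by renaming: choose $x'$ fresh for $\Gamma(t_1,t_2)$ and for $u, u'$, apply Lemma~\ref{lem_I_s_subst} to the induction hypothesis to obtain $\Gamma(t_1,t_2),\, x' : \transl{\tau} \proves \transl{u}[x/x'] = \transl{u'}[x/x']$ (using that $\Gamma(u,u')[x/x']$ equals $\Gamma(t_1,t_2),\, x' : \transl{\tau}$ up to the naming conventions, and that $x' \notin FV(\Gamma(t_1,t_2))$), and then apply eq-$\lambda$-$\xi$ to get $\Gamma(t_1,t_2) \proves (\lambda x'\,.\,\transl{u}[x/x']) = (\lambda x'\,.\,\transl{u'}[x/x'])$. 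By Lemma~\ref{lem_transl_subst} and $\alpha$-conversion this last judgement is precisely $\Gamma(t_1,t_2) \proves \transl{t_1} = \transl{t_2}$. Finally, should $\to_\beta$ in the statement denote the full many-step reduction rather than a single contraction, one iterates the argument and chains the outcomes with eq-trans.
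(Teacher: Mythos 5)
Your overall strategy is exactly the paper's: the paper's proof is a two-line sketch citing induction on the structure of $t_1$, the axiom~$\beta$, the rules for $\mathrm{Eq}$, and Lemma~\ref{lem_transl_subst}, and your base case and application/implication congruence cases fill that in correctly. You have also put your finger on the one genuinely delicate point, namely the $\xi$-congruence under a binder versus the fact that $\Gamma(u,u')$ declares $x : \transl{\tau}$.

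However, the repair you propose for that case does not work as written. After renaming, you have the judgement $\Gamma(t_1,t_2),\, x' : \transl{\tau} \proves \transl{u}[x/x'] = \transl{u'}[x/x']$, and you then "apply eq-$\lambda$-$\xi$ to get $\Gamma(t_1,t_2) \proves (\lambda x'.\transl{u}[x/x']) = (\lambda x'.\transl{u'}[x/x'])$". The rule eq-$\lambda$-$\xi$ keeps the \emph{same} context in premise and conclusion — it does not discharge the declaration $x' : \transl{\tau}$ — and its side condition $x' \notin FV(\Gamma)$ is violated precisely because that declaration is in the context. So the renaming only relocates the problem. The correct observation is that the typing declarations in $\Gamma(u,u')$ are never used: the only rules appearing in the derivation are axiom~2, axiom~$\beta$, eq-sym, eq-trans, eq-cong-app and eq-$\lambda$-$\xi$, none of which has a typing premise. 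So you should strengthen the induction hypothesis to: for every finite context $\Gamma$ whose free variables are disjoint from the bound variables of $t_1$ (which holds for $\Gamma(t_1,t_2)$ by the standing convention separating bound from free variables, after $\alpha$-renaming), $\Gamma \proves_{\I_s} \transl{t_1} = \transl{t_2}$ — equivalently, derive the equation in the empty context and weaken at the end. With that formulation the binder case is immediate, since the ambient $\Gamma$ does not mention $x$ and eq-$\lambda$-$\xi$ applies directly; the same remark covers the $\forall x : \tau$ case. Everything else in your argument is fine.
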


\begin{proof}
  Induction on the structure of~$t_1$, using the axiom~$\beta$, the
  rules for~Eq and Lemma~\ref{lem_transl_subst}.
\end{proof}

\begin{lemma}\label{lem_inhabited}
  For every $\tau \in \Tc$ there is an $\I_s$-term~$s$ such that
  $\Gamma(\Delta) \proves_{\I_s} s : \transl{\tau}$.
\end{lemma}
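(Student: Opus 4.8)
The plan is to prove the statement by induction on the structure of $\tau \in \Tc$, exploiting the fact that the context $\Gamma(\Delta)$ was deliberately set up to contain a witnessed inhabitant of every base type.

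For the base cases: if $\tau = o$ then $\transl{\tau} = \Prop$, and $\bot$ works, since axiom~$\bot_t$ gives $\Gamma(\Delta) \proves_{\I_s} \bot : \Prop$. If $\tau \in \B$ then $\transl{\tau} = \tau$, and by construction $\Gamma(\Delta)$ contains a declaration $y : \tau$ for some variable $y$ (this is the fourth clause in the definition of $\Gamma(\Delta)$, included precisely to make the domains non-empty); hence axiom~1 gives $\Gamma(\Delta) \proves_{\I_s} y : \tau$.

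For the inductive step, suppose $\tau = \tau_1 \to \tau_2$. By the induction hypothesis there is an $\I_s$-term~$s_2$ with $\Gamma(\Delta) \proves_{\I_s} s_2 : \transl{\tau_2}$. Pick a variable~$x$ not occurring in $\Gamma(\Delta)$, $s_2$, $\transl{\tau_1}$ or $\transl{\tau_2}$, and set $s \equiv \lambda x \,.\, s_2$. By Lemma~\ref{lem_transl_is_type} we have $\Gamma(\Delta) \proves_{\I_s} \transl{\tau_1} : \Type$, and by weakening $\Gamma(\Delta), x : \transl{\tau_1} \proves_{\I_s} s_2 : \transl{\tau_2}$. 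Since $x \notin FV(\Gamma(\Delta), \transl{\tau_1}, \transl{\tau_2})$, the rule~$\to_i$ yields $\Gamma(\Delta) \proves_{\I_s} (\lambda x \,.\, s_2) : \transl{\tau_1} \to \transl{\tau_2}$, and $\transl{\tau_1} \to \transl{\tau_2} \equiv \transl{\tau_1 \to \tau_2} = \transl{\tau}$, as required.

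There is no real obstacle here; the one point worth stressing is that the argument genuinely relies on the fourth clause of the definition of $\Gamma(\Delta)$ — without an explicitly declared inhabitant of each base type there would be no way to bottom out the induction, since $\I_s$ provides no other mechanism forcing base types to be non-empty. The inductive step is routine once Lemma~\ref{lem_transl_is_type} is available to discharge the $\transl{\tau_1} : \Type$ premise of~$\to_i$.
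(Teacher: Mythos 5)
Your proof is correct and matches the paper's own argument essentially step for step: the base cases are discharged via axiom~$\bot_t$ for $o$ and the fourth clause of $\Gamma(\Delta)$ for $\tau \in \B$, and the arrow case adds a vacuous abstraction over a fresh variable, using Lemma~\ref{lem_transl_is_type} to supply the $\transl{\tau_1} : \Type$ premise of~$\to_i$. Nothing to add.
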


\begin{proof}
  Induction on the structure of~$\tau$. If $\tau \in \Bc$ then
  $\transl{\tau} = \tau$ and there exists~$y \in V_\tau$ such that $y
  \notin FV(\Delta)$ and $(y : \tau) \in \Gamma(\Delta)$. Thus
  $\Gamma(\Delta) \proves y : \transl{\tau}$. If $\tau = o$ then
  $\transl{\tau} = \Prop$ and e.g. $\Gamma(\Delta) \proves \bot :
  \transl{\tau}$. If $\tau \notin \Bc \cup \{o\}$ then $\tau =
  \tau_1\to\tau_2$ for some $\tau_1,\tau_2\in \Tc$. By the inductive
  hypothesis there exists an $\I_s$-term~$s_2$ such that
  $\Gamma(\Delta) \proves s_2 : \transl{\tau_2}$. Suppose $x \notin
  FV(s_2, \Gamma(\Delta))$. Then $\Gamma(\Delta), x : \transl{\tau_1}
  \proves s_2 : \transl{\tau_2}$. Since $\transl{\tau} =
  \transl{\tau_1} \to \transl{\tau_2}$ and $\Gamma(\Delta) \proves
  \transl{\tau_1} : \Type$ we obtain $\Gamma(\Delta) \proves (\lambda
  x \,.\, s_2) : \transl{\tau}$ by Lemma~\ref{lem_transl_is_type}.
\end{proof}

\begin{theorem}
  If $\Delta \proves_{\mathrm{CPRED}\omega} \varphi$ then
  $\Gamma(\Delta, \varphi), \transl{\Delta} \proves_{\I_s}
  \transl{\varphi}$. The same holds if we change~CPRED$\omega$ to
  \mbox{E-CPRED$\omega$} and~$\I_s$ to~$e\I_s$.
\end{theorem}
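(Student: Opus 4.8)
The plan is to induct on the derivation of $\Delta \proves_{\mathrm{CPRED}\omega} \varphi$, with one case per axiom and rule of CPRED$\omega$; the extensional statement then follows from the same induction plus two extra axiom cases, since every $\I_s$-derivation is an $e\I_s$-derivation. The leaf $\Delta, \varphi \proves \varphi$ is axiom~1 of~$\I_s$. For $\supset_i^P$ I would use the derived rule~$\supset_i$: its premise $\transl\varphi : \Prop$ comes from Lemma~\ref{lem_transl_typed} (as $\varphi \in T_o$ and $\transl o = \Prop$) and weakening, while its other premise is the inductive hypothesis (after weakening by~$\transl\varphi$). For $\supset_e^P$ I would apply the derived rule~$\supset_e$ to the two inductive hypotheses. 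For $\forall_i^P$ I would apply rule~$\forall_i$, whose type premise $\transl\tau : \Type$ is Lemma~\ref{lem_transl_is_type}, noting that $\Gamma(\Delta,\forall x:\tau.\varphi)\cup\{x:\transl\tau\}$ contains $\Gamma(\Delta,\varphi)$ so the hypothesis supplies the body, and that the eigenvariable condition holds since $x\notin FV(\Delta)$. For $\forall_e^P$ I would apply rule~$\forall_e$, whose typing premise $\transl t : \transl\tau$ is Lemma~\ref{lem_transl_typed} and where Lemma~\ref{lem_transl_subst} identifies $\transl{\varphi[x/t]}$ with $\transl\varphi[x/\transl t]$. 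For $\mathrm{conv}^P$ I would chain Lemma~\ref{lem_transl_eq} along the conversion $\varphi =_\beta \psi$, gluing the steps with eq-sym and eq-trans, and finish with the rule~eq.

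The classical axiom needs a small separate computation, because CPRED$\omega$'s $\bot \equiv \forall p : o . p$ is not sent to the primitive~$\bot$ of~$\I_s$ by~$\transl{-}$. Writing $\Omega \equiv \forall q : \Prop . q$ for the translation of that formula, I would first note $\Omega : \Prop$ (rule~$\forall_t$ from axioms~1 and~3), hence that $(p \supset \Omega)$ and $((p \supset \Omega) \supset \Omega)$ are propositions under $p : \Prop$ (rule~$\supset_t$), and that $\bot$ and~$\Omega$ are interderivable (from~$\bot$ by~$\bot_e$; from~$\Omega$ by instantiating the quantifier at~$\bot$, which is legal by axiom~$\bot_t$). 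Then, under $p : \Prop$ and $(p \supset \Omega) \supset \Omega$, I would split on $p \vee \neg p$ via the law of excluded middle of~$\I_s$ instantiated at~$p$: the left case is immediate; in the right case $\neg p$ yields $p \supset \Omega$ (from~$p$ derive~$\bot$ by~$\supset_e$, then~$\Omega$ by~$\bot_e$, then discharge by~$\supset_i$), feeding this to the hypothesis gives~$\Omega$, hence~$\bot$, hence~$p$ by~$\bot_e$. Two applications of~$\supset_i$ and one of~$\forall_i$ finish this case.

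For the extensional variant, the axioms $e_f^P$ and $e_b^P$ additionally require pushing the abbreviations $=_\tau$ — and, in~$e_b^P$, the impredicative encoding of conjunction — through~$\transl{-}$. After this, $\transl{e_f^P}$ is $\beta$-equal to the conclusion of rule~$e_f$ of~$e\I_s$ with $\alpha\equiv\transl\tau$, $\beta\equiv\transl\sigma$ (both typed by Lemma~\ref{lem_transl_is_type}), and $\transl{e_b^P}$ follows by a routine extraction of the two implications from the translated conjunction, an application of rule~$e_b$ to get $\Eq{\transl{\varphi_1}}{\transl{\varphi_2}}$, and an upgrade of this to $\transl{\varphi_1} =_\Prop \transl{\varphi_2}$ (under $\transl{\varphi_1} : \Prop$ every~$p$ of type $\Prop\to\Prop$ gives $p\transl{\varphi_1}:\Prop$, so one argues by~$\supset_i$, the congruence rules for equality, and~$\forall_i$). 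The $\beta$-conversions involved are absorbed by axiom~$\beta$, the congruence rules for equality, and the rule~eq, exactly as in the $\mathrm{conv}^P$ case.

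The step I expect to be the main obstacle is not any single inference but the context bookkeeping. In $\supset_e^P$, $\forall_e^P$ and $\mathrm{conv}^P$ a premise may mention free variables absent from the conclusion, so the corresponding typing hypotheses $x : \transl\tau$ occur in $\Gamma(\cdot)$ of the premise but not of the conclusion, and weakening points the wrong way. Here I would invoke Lemma~\ref{lem_inhabited}: for each stray variable~$x$ of type~$\tau$ choose an $\I_s$-term~$s$ with $\Gamma(\Delta) \proves s : \transl\tau$, substitute it for~$x$ throughout the premise derivation by Lemma~\ref{lem_I_s_subst}, and then discharge the now provable hypothesis $s : \transl\tau$. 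This is precisely why Lemma~\ref{lem_inhabited} is proved just beforehand, and why the non-emptiness witnesses were built into $\Gamma(\Delta)$. Everywhere else the renaming convention recorded at the start of this section (monotonicity of~$\Gamma$ and $\Gamma(\varphi \supset \psi) = \Gamma(\varphi, \psi)$, etc.) makes the premise and conclusion contexts line up without further work.
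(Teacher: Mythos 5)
Your proposal is correct and follows essentially the same route as the paper's proof: the same induction on derivations, the same supporting lemmas (Lemmas~\ref{lem_transl_is_type}--\ref{lem_inhabited}), and the same use of Lemma~\ref{lem_inhabited} together with Lemma~\ref{lem_I_s_subst} to repair the context mismatch in the $\supset_e^P$, $\forall_e^P$ and $\mathrm{conv}^P$ cases. The only difference is that you spell out the classical axiom and the extensional axioms $e_f^P$, $e_b^P$ in detail where the paper merely asserts they are easy, and your derivations there are sound.
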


\newcommand{\CP}{{\ensuremath{{\mathrm{CPRED}\omega}}}}

\begin{proof}
  Induction on the length of the derivation of $\Delta \proves_{\CP}
  \varphi$.

  If $\Delta \proves_{\CP} \varphi$ is an instance of the axiom
  $\Delta', \varphi \proves_{\CP} \varphi$, then obviously
  $\Gamma(\Delta, \varphi), \transl{\Delta} \proves_{\I_s}
  \transl{\varphi}$ by axiom~1 of~$\I_s$. It is also easy to see that
  \[
  \Gamma(\Delta,\forall p : \Prop \,.\, ((p \supset \bot) \supset
  \bot) \supset p), \transl{\Delta} \proves_{\I_s} \transl{\forall p :
    \Prop \,.\, ((p \supset \bot) \supset \bot) \supset p}.
  \]

  Suppose $\Delta \proves_{\CP} \varphi_1 \supset \varphi_2$ is
  obtained from $\Delta, \varphi_1 \proves_{\CP} \varphi_2$ by the
  rule~$\supset_i^P$. Then by the inductive hypothesis $\Gamma(\Delta,
  \varphi_1, \varphi_2), \transl{\Delta}, \transl{\varphi_1} \proves_{\I_s}
  \transl{\varphi_2}$. By Lemma~\ref{lem_transl_typed} we have
  $\Gamma(\varphi_1) \proves_{\I_s} \transl{\varphi_1} : \Prop$. Hence by
  weakening and~$\supset_i$ we obtain $\Gamma(\Delta, \varphi_1,
  \varphi_2), \transl{\Delta} \proves_{\I_s} \transl{\varphi_1} \supset
  \transl{\varphi_2}$, i.e. $\Gamma(\Delta, \varphi_1 \supset
  \varphi_2), \transl{\Delta} \proves_{\I_s} \transl{\varphi_1 \supset
    \varphi_2}$.

  Suppose $\Delta \proves_{\CP} \varphi$ is a direct consequence of
  $\Delta \proves_{\CP} \psi \supset \varphi$ and $\Delta
  \proves_{\CP} \psi$ by the rule~$\supset_e^P$. By the IH
  and~$\supset_e$ we have $\Gamma(\Delta,\psi,\varphi),
  \transl{\Delta} \proves_{\I_s} \transl{\varphi}$. Note that
  $\Gamma(\Delta,\psi,\varphi) = \Gamma(\Delta,\varphi), x_1 :
  \transl{\tau_1}, \ldots, x_n : \transl{\tau_n}$ where
  $\{x_1,\ldots,x_n\} = FV(\psi) \setminus FV(\Delta, \varphi)$, $x_i
  \in V_{\tau_i}$, and without loss of generality $x_i \notin
  \Gamma(\Delta,\varphi)$. Since $x_i \notin FV(\Delta, \varphi)$ we
  have $x_i \notin FV(\transl{\Delta}, \transl{\varphi})$. By
  Lemma~\ref{lem_inhabited} there exist $\I_s$-terms $s_1, \ldots,
  s_n$ such that $\Gamma(\Delta, \varphi) \proves_{\I_s} s_i :
  \transl{\tau_i}$. By Lemma~\ref{lem_I_s_subst} we have
  \[
  \Gamma(\Delta,\varphi), \transl{\Delta}, s_1 : \transl{\tau_1},
  \ldots, s_n : \transl{\tau_n} \proves_{\I_s} \transl{\varphi}.
  \]
  By applying~$p_i$, $\supset_i$ and~$\supset_e$ successively we
  obtain $\Gamma(\Delta, \varphi), \transl{\Delta} \proves_{\I_s}
  \transl{\varphi}$.

  Suppose $\Delta \proves_{\CP} \forall x : \tau \,.\, \varphi$ is
  obtained from $\Delta \proves_{\CP} \varphi$ by
  rule~$\forall_i^P$. By the IH we have $\Gamma(\Delta,\varphi),
  \transl{\Delta} \proves_{\I_s} \transl{\varphi}$. Since $x \notin
  FV(\Delta)$, we have $\Gamma(\Delta, \varphi) = \Gamma(\Delta,
  \forall x : \tau \,.\, \varphi), x : \transl{\tau}$. By
  Lemma~\ref{lem_transl_is_type} and weakening we have
  \[
  \Gamma(\Delta, \forall x : \tau \,.\, \varphi), \transl{\Delta}
  \proves_{\I_s} \transl{\tau} : \Type.
  \]
  Hence $\Gamma(\Delta, \forall x : \tau \,.\, \varphi),
  \transl{\Delta} \proves_{\I_s} \transl{\forall x : \tau \,.\,
    \varphi}$ by~$\forall_i$ and $\transl{\forall x : \tau \,.\,
    \varphi} = \forall x : \transl{\tau} \,.\, \transl{\varphi}$.

  Suppose $\Delta \proves_{\CP} \varphi[x/t]$ is obtained from $\Delta
  \proves_{\CP} \forall x : \tau \,.\, \varphi$ by
  rule~$\forall_i^P$. We then have $x \notin FV(\Delta)$ and $t \in
  T_\tau$. By the IH we have
  \[
  \Gamma(\Delta, \forall x : \tau \,.\, \varphi), \transl{\Delta}
  \proves_{\I_s} \forall x : \transl{\tau} \,.\, \transl{\varphi}.
  \]
  If $x \notin FV(\varphi)$ then $\varphi \equiv \varphi[x/t]$ and
  $\Gamma(\Delta, \forall x : \tau \,.\, \varphi) = \Gamma(\Delta,
  \varphi[x/t])$. By Lemma~\ref{lem_inhabited} there exists an
  $\I_s$-term~$s$ such that $\Gamma(\Delta, \varphi[x/t])
  \proves_{\I_s} s : \transl{\tau}$. By~$\forall_e$ we obtain
  $\Gamma(\Delta, \varphi[x/t]), \transl{\Delta} \proves_{\I_s}
  \transl{\varphi[x/t]}$. If $x \in FV(\varphi)$ then
  $\Gamma(\Delta,\varphi[x/t]) = \Gamma(\Delta, \forall x : \tau \,.\,
  \varphi, t)$. By Lemma~\ref{lem_transl_typed} we have
  $\Gamma(\Delta,\varphi[x/t]) \proves \transl{t} : \transl{\tau}$, so
  by~$\forall_e$ and Lemma~\ref{lem_transl_subst} we obtain
  $\Gamma(\Delta,\varphi[x/t]), \transl{\Delta} \proves
  \transl{\varphi[x/t]}$.

  Suppose $\Delta \proves_{\CP} \psi$ is obtained from $\Delta
  \proves_{\CP} \varphi$ by rule~$\mathrm{conv}^P$. Then $\varphi
  =_\beta \psi$. By the IH we have $\Gamma(\Delta, \varphi),
  \transl{\Delta} \proves_{\I_s} \transl{\varphi}$. It suffices to
  show that if $\varphi \to_\beta \psi$ or $\psi \to_\beta \varphi$
  then $\Gamma(\Delta, \psi), \transl{\Delta} \proves_{\I_s}
  \transl{\psi}$. If $\varphi \to_\beta \psi$ then $\Gamma(\psi)
  \subset \Gamma(\varphi)$ and by Lemma~\ref{lem_transl_eq}, rule~eq
  and weakening we obtain $\Gamma(\Delta, \varphi), \transl{\Delta}
  \proves \transl{\psi}$. Note that $\Gamma(\Delta,\varphi) =
  \Gamma(\Delta,\psi), x_1 : \transl{\tau_1}, \ldots, x_n :
  \transl{\tau_n}$ where $\{x_1,\ldots,x_n\} = FV(\varphi) \setminus
  FV(\Delta, \psi)$, $x_i \in V_{\tau_i}$. Hence, we may use the same
  argument as in the proof for the rule~$\supset_e^P$ to obtain
  $\Gamma(\Delta,\psi), \transl{\Delta} \proves_{\I_s}
  \transl{\psi}$. If $\psi \to_\beta \varphi$ then $\Gamma(\varphi)
  \subseteq \Gamma(\psi)$, and by Lemma~\ref{lem_transl_eq}, rule~eq
  and weakening we obtain $\Gamma(\Delta,\psi), \transl{\Delta}
  \proves \transl{\psi}$.

  To show that if $\Delta \proves_{\mathrm{E{-}CPRED}\omega} \varphi$
  then $\Gamma(\Delta, \varphi), \transl{\Delta} \proves_{e\I_s}
  \transl{\varphi}$, it now suffices to prove that the translations of
  the axioms~$e_f^P$ and~$e_b^P$ are derivable in~$e\I_s$. This is
  straightforward, using the rules~$e_f$, $e_b$, eq,
  Lemma~\ref{lem_transl_is_type} and Lemma~\ref{lem_leibniz}.
\end{proof}

\newpage
\renewcommand{\thesection}{Appendix~\Alph{section}}
\tocless\section{Proofs for Section~\ref{sec_partiality}}\label{sec_proofs_examples}
\renewcommand{\thesection}{\Alph{section}}
\addcontentsline{toc}{section}{\thesection\hspace{0.5em} Proofs for Section~\ref{sec_partiality}}

In this section we give sketches of formal proofs for the examples in
Sect.~\ref{sec_partiality}. The formal derivations are slightly
simplified, by omitting certain steps and rule assumptions,
simplifying inferences, and generally omitting some parts which may be
easily reconstructed by the reader. We also assume certain basic
properties of operations on natural number and don't derive
them. These properties may be derived from Theorem~\ref{thm_peano},
Lemma~\ref{lem_nat_op_well_defined} and the definitions of~$\le$ and~$-$.

\begin{lemma}
  $\proves_{\I_s} \forall i,j : \Nat \,.\, (i \ge j) \supset
  (\subp\,i\,j = i - j)$
\end{lemma}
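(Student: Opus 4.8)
The plan is to carry out the informal argument sketched for this function in Example~1 of Section~\ref{sec_partiality}. Put
\[\varphi(y) \equiv \forall i : \Nat \,.\, \forall j : \Nat \,.\, (i \ge j \supset y =_\Nat i - j \supset \subp\,i\,j = i - j),\]
and prove $\proves_{\I_s} \forall y : \Nat \,.\, \varphi(y)$ by induction on~$y$, i.e. by applying the rule~$n_i$ with its two premises instantiated to $\proves \varphi(0)$ and $x : \Nat, \varphi(x) \proves \varphi(\s x)$ (the required $\beta$-conversions being handled silently by axiom~$\beta$ and rule~eq, as elsewhere in the paper). First I would record the routine propositional-typing facts: in the context $y,i,j : \Nat$, Lemma~\ref{lem_nat_op_well_defined} gives $i - j : \Nat$ and $(i \ge j) : \Prop$, and Lemma~\ref{lem_leibniz} then gives $(y =_\Nat i - j) : \Prop$; hence, building $\varphi(y)$ from inside out with two uses of~$\supset_i$ and two of~$\forall_i$ (the latter needing $\Nat : \Type$, axiom~4), it suffices in each step to derive $\subp\,i\,j = i - j$ from the two extra hypotheses $i \ge j$ and $y =_\Nat i - j$, and by Lemma~\ref{lem_ind_leibniz_implies_eq} the latter may be weakened to $y = i - j$.

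For the base case I work in the context $i : \Nat, j : \Nat, i \ge j, i - j = 0$. From $i - j = 0$ and $\iszero 0$ we get $\iszero(i - j)$, i.e. $i \le j$, so Lemma~\ref{lem_le_eq} applied to $i \ge j$ and $i \le j$ gives $i =_\Nat j$; unfolding the recursive equation for~$\subp$ and applying the conditional rule~$c_1$ yields $\subp\,i\,j = \Cond{(i =_\Nat j)}{0}{\cdots} = 0$, and chaining this with the hypothesis $i - j = 0$ read backwards (rules eq-sym, eq-trans) gives $\subp\,i\,j = i - j$. For the inductive step I assume $\varphi(y)$ with $y : \Nat$ and work in the context $i : \Nat, j : \Nat, \s y = i - j$. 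Since $\neg(\s y =_\Nat 0)$ by Theorem~\ref{thm_peano} and $x - x =_\Nat 0$ is elementary, the assumption $i =_\Nat j$ would force $i - j =_\Nat 0$, contradicting $\s y = i - j$; hence $\neg(i =_\Nat j)$ (using $(i =_\Nat j) : \Prop$ and~$\supset_i$), and then by rule~$c_2$ together with the identity $x + 1 = \s x$ we obtain $\subp\,i\,j = \s(\subp\,i\,(\s j))$. Elementary arithmetic also yields $i - \s j = y$ and $i \ge \s j$ from $\s y = i - j$, so, rewriting $y = i - \s j$ to $y =_\Nat i - \s j$ via Lemma~\ref{lem_leibniz} and rule~eq, the inductive hypothesis instantiated at $i$ and $\s j$ gives $\subp\,i\,(\s j) = i - \s j$; applying eq-cong-app to prepend~$\s$ and chaining $\subp\,i\,j = \s(\subp\,i\,(\s j)) = \s(i - \s j) = \s y = i - j$ completes the step, with Lemma~\ref{lem_ind_leibniz_implies_eq} used wherever $=$ must be turned back into~$=_\Nat$.

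Having $\proves_{\I_s} \forall y : \Nat \,.\, \varphi(y)$, it remains to instantiate it: in the context $i : \Nat, j : \Nat, i \ge j$ we have $i - j : \Nat$ by Lemma~\ref{lem_nat_op_well_defined}, so three applications of~$\forall_e$ give $i \ge j \supset (i - j) =_\Nat (i - j) \supset \subp\,i\,j = i - j$, and two applications of~$\supset_e$, using $i \ge j$ and the reflexivity instance $(i - j) =_\Nat (i - j)$ from Lemma~\ref{lem_leibniz}, give $\subp\,i\,j = i - j$; then $\supset_i$, discharging $i \ge j$ with the premise $(i \ge j) : \Prop$, followed by $\forall_i$ twice over~$\Nat$, yields the statement. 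I expect the only genuinely delicate parts to be the bookkeeping of the auxiliary $:\Prop$ judgements needed before each use of~$\supset_i$, and the clean isolation of the few elementary arithmetic facts, namely $x + 1 = \s x$, $x - x =_\Nat 0$, $i - \s j = \p(i - j)$, and $i - j \ne_\Nat 0 \supset i \ge \s j$, which we take as already derivable from Theorem~\ref{thm_peano} and Lemma~\ref{lem_nat_op_well_defined}; the recursion and the induction itself pose no difficulty given the rule~$n_i$.
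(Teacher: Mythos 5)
Your proposal is correct and follows essentially the same route as the paper's own derivation in Appendix~D: the same induction hypothesis $\varphi(y)$, the same use of rule~$n_i$, the same base-case argument via $\iszero(i-j)$ and Lemma~\ref{lem_le_eq}, the same derivation of $\neg(i =_\Nat j)$ by contradiction with $\neg(\iszero(\s y))$ in the inductive step, and the same final instantiation of $y$ at $i - j$ using reflexivity of~$=_\Nat$. The elementary arithmetic facts you isolate are exactly those the paper also assumes without full derivation.
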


\begin{proof}
  Let $\varphi(x) \equiv \forall i,j : \Nat \,.\, i \ge j \supset x
  =_\Nat i - j \supset \subp\,i\,j = i - j$.
  \[
  (a0.0)
  \inferrule*{
    i : \Nat, j : \Nat \proves i : \Nat \\
    \inferrule*{
    }{
      \proves \forall x, y : \Nat \,.\, (x - y) : \Nat
    }\;\raisebox{0.7em}{by~\ref{lem_nat_op_well_defined}}
  }{
    i : \Nat, j : \Nat \proves \forall y : \Nat \,.\, (i - y) : \Nat
  }
  \]
  \[
  (a0)
  \inferrule*{
    \inferrule*{
      (a0.0)
      \\
      i : \Nat, j : \Nat \proves j : \Nat
    }{
      i : \Nat, j : \Nat \proves (i - j) : \Nat
    }
    \\
    \proves 0 : \Nat
  }{
    i : \Nat, j : \Nat \proves (0 =_\Nat (i - j)) : \Nat
  }\;\raisebox{0.7em}{by~\ref{lem_leibniz}}
  \]
  Completely analogously, we obtain
  \[
  (a1)\;\; i : \Nat, j : \Nat \proves (i \ge j) : \Nat
  \]
  Now let $\Gamma_0 \equiv i : \Nat, j : \Nat, i \ge j, 0 =_\Nat i -
  j$.
  \[
  (a2)
  \inferrule*{
    \inferrule*{
      \inferrule*{
        \inferrule*{
          \inferrule*{
            \inferrule*{
            }{
              \Gamma_0 \proves 0 = i - j
            }\;\raisebox{0.7em}{by~\ref{lem_ind_eq}}
          }{
            \Gamma_0 \proves \iszero (i - j)
          }
          \\
          \Gamma_0 \proves i \ge j
        }{
          \Gamma_0 \proves i =_\Nat j
        }\;\raisebox{0.7em}{by~\ref{lem_le_ge_then_eq}}
      }{
        \Gamma_0 \proves \subp\,i\,j = 0
      }
      \\
      \Gamma_0 \proves 0 = i - j
    }{
      \Gamma_0 \proves \subp\,i\,j = i - j
    }
    \\
    (a0)
  }{
    i : \Nat, j : \Nat, i \ge j \proves 0 =_\Nat i - j \supset \subp\,i\,j = i - j
  }
  \]
  \[
  (a)
  \inferrule*{
    \inferrule*{
      (a1)
      \\
      (a2)
    }{
      i : \Nat, j : \Nat \proves i \ge j \supset 0 =_\Nat i - j \supset \subp\,i\,j = i - j
    }
    \\
    \proves \Nat : \Type
  }{
    \proves \varphi(0)
  }
  \]
  Let $\Gamma_1 \equiv y : \Nat, \varphi(y), i : \Nat, j : \Nat, \s y
  =_\Nat i - j$.
  \[
  (b0.0)
  \inferrule*{
    \Gamma_1 \proves i : \Nat \\ \Gamma_1 \proves j : \Nat
  }{
    \Gamma_1 \proves (i =_\Nat j) : \Prop
  }\;\raisebox{0.7em}{by~\ref{lem_leibniz}}
  \]
  \[
  (b0.1)
  \inferrule*{
    \inferrule*{
      \inferrule*{
        \inferrule*{
        }{
          \Gamma_1, i =_\Nat j \proves i = j
        }\;\raisebox{0.7em}{by~\ref{lem_ind_eq}}
        \\
        \Gamma_1 \proves \s y =_\Nat i - j
      }{
        \Gamma_1, i =_\Nat j \proves \s y =_\Nat i - i
      }
    }{
      \Gamma_1, i =_\Nat j \proves \s y =_\Nat 0
    }
    \\
    \inferrule*{
      \Gamma_1 \proves y : \Nat
    }{
      \Gamma_1 \proves (\s y) : \Nat
    }
  }{
    \Gamma_1, i =_\Nat j \proves \s y = 0
  }
  \]
  \[
  (b0)
  \inferrule*{
    \inferrule*{
      \inferrule*{
        (b0.1)
        \\
        \Gamma_1, i =_\Nat j \proves \iszero 0
      }{
        \Gamma_1, i =_\Nat j \proves \iszero (\s y)
      }
      \\
      \Gamma_1, i =_\Nat j \proves \neg (\iszero (\s y))
    }{
      \Gamma_1, i =_\Nat j \proves \bot
    }
    \\
    (b0.0)
  }{
    \Gamma_1 \proves \neg (i =_\Nat j)
  }
  \]
  \[
  (c0)
  \inferrule*{
    \Gamma_1 \proves \varphi(y)
    \\
    \Gamma_1 \proves i : \Nat
    \\
    \inferrule*{
      \Gamma_1 \proves j : \Nat
    }{
      \Gamma_1 \proves (\s j) : \Nat
    }
  }{
    \Gamma_1 \proves i \ge (\s j) \supset y =_\Nat i - (\s j) \supset
    \subp\,i\,(\s j) = i - (\s j)
  }
  \]
  \[
  (c1)
  \inferrule*{
    \inferrule*{
      \Gamma_1 \proves i \ge j \\ (b0)
    }{
      \Gamma_1 \proves i \ge (\s j)
    }
    \\
    (c0)
  }{
    \Gamma_1 \proves y =_\Nat i - (\s j) \supset \subp\,i\,(\s j) = i - (\s j)
  }
  \]
  \[
  (c2)
  \inferrule*{
    \Gamma_1 \proves \s y =_\Nat i - j
    \\
    \inferrule*{
        \Gamma_1 \proves i : \Nat \\ \Gamma_1 \proves j : \Nat
    }{
      \Gamma_1 \proves (i - j) : \Nat
    }
    \\
    \inferrule*{
      \Gamma_1 \proves y : \Nat
      }{
      \Gamma_1 \proves (\s y) : \Nat
    }
  }{
    \Gamma_1 \proves \s y = i - j
  }
  \]
  \[
  (c3)
  \inferrule*{
    \inferrule*{
      \inferrule*{
        \Gamma_1 \proves i \ge j \\ (b0)
      }{
        \Gamma_1 \proves i > j
      }
    }{
      \ldots
    }
  }{
    \Gamma_1 \proves \s (i - (\s j)) = i - j
  }
  \]
  \[
  (c4)
  \inferrule*{
    \inferrule*{
      \inferrule*{
        \inferrule*{
          (c3)
          \\
          (c2)
        }{
          \Gamma_1 \proves \s (i - (\s j)) = \s y
        }
      }{
        \Gamma_1 \proves \p (\s (i - (\s j))) = \p (\s y)
      }
    }{
      \Gamma_1 \proves i - (\s j) = y
    }
  }{
    \Gamma_1 \proves y =_\Nat i - (\s j)
  }
  \]
  \[
  (c)
  \inferrule*{
    \inferrule*{
      \inferrule*{
        \inferrule*{
          (b0)
        }{
          \Gamma_1 \proves \subp\,i\,j = \s (\subp\,i\,(\s j))
        }
        \\
        \inferrule*{
          \inferrule*{
            (c1) \\ (c4)
          }{
            \Gamma_1 \proves \subp\,i\,(\s j) = i - (\s j)
          }
        }{
          \Gamma_1 \proves \s (\subp\,i\,(\s j)) = \s (i - (\s j))
        }
      }{
        \Gamma_1 \proves \subp\,i\,j = \s (i - (\s j)) \\ (c3)
      }
    }{
      \Gamma_1 \proves \subp\,i\,j = i - j
    }
  }{
    y : \Nat, \varphi(y) \proves \varphi(\s y)
  }
  \]
  \[
  (d)
  \inferrule*{
    (a) \\ (c)
  }{
    \proves \forall y,i,j : \Nat \,.\, i \ge j \supset y =_\Nat i - j \supset \subp\,i\,j = i - j
  }\;\raisebox{0.7em}{by~$n_i$}
  \]
  \[
  (e)
  \inferrule*{
    \inferrule*{
      \inferrule*{
        \inferrule*{
          \ldots
        }{
          i : \Nat, j : \Nat \proves (i - j) : \Nat
        }
        \\
        (d)
      }{
        i : \Nat, j : \Nat \proves \forall k,l : \Nat \,.\, k \ge l
        \supset i - j =_\Nat k - l \supset \subp\,k\,l = k - l
      }
    }{
      i : \Nat, j : \Nat \proves i \ge j \supset i - j =_\Nat i - j \supset \subp\,i\,j = i - j
    }
  }{
    i : \Nat, j : \Nat, i \ge j \proves i - j =_\Nat i - j \supset \subp\,i\,j = i - j
  }
  \]
  \[
  (f)
  \inferrule*{
    (e)
    \\
    \inferrule*{
      \inferrule*{
        \ldots
      }{
        i : \Nat, j : \Nat \proves (i - j) : \Nat
      }
    }{
      i : \Nat, j : \Nat \proves i - j =_\Nat i - j
    }
  }{
    i : \Nat, j : \Nat, i \ge j \proves \subp\,i\,j = i - j
  }
  \]
  \[
  \inferrule*{
    \inferrule*{
      (f)
      \\
      \inferrule*{
        \ldots
      }{
        i : \Nat, j : \Nat \proves (i \ge j) : \Prop
      }
    }{
      i : \Nat, j : \Nat \proves i \ge j \supset \subp\,i\,j = i - j
    }
    \\
    \proves \Nat : \Type
  }{
    \proves \forall i,j : \Nat \,.\, i \ge j \supset \subp\,i\,j = i - j
  }
  \]
\end{proof}

\begin{lemma}
  If~$f$ is a term such that
  \[
  \proves_{\I_s} f(n) = \Cond{(n > 100)}{(n - 10)}{(f(f(n + 11)))}
  \]
  then
  \[
  \proves_{\I_s} \forall n : \Nat \,.\, n \le 101 \supset f(n) = 91.
  \]
\end{lemma}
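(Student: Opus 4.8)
The plan is to follow the informal argument of Example~2 and prove a strengthened statement with an explicit measure. Set $\varphi(y) \equiv \forall n : \Nat \,.\, n \le 101 \supset 101 - n \le y \supset f(n) = 91$, and prove $\forall y : \Nat \,.\, \varphi(y)$ by induction on~$y$, i.e. by applying rule~$n_i$ with $t \equiv \lambda y \,.\, \varphi(y)$ (so that $n_i$ imposes no constraint on~$\varphi$ and no a priori typing of it is needed). Recall that the defining equation, being an identity $\Eq{(f n)}{\Cond{(n > 100)}{(n-10)}{(f(f(n+11)))}}$, may by Lemma~\ref{lem_I_s_subst} be instantiated at any numeral, so that function values at concrete arguments are computable by repeated use of the equation together with rules~$c_1$, $c_2$ and the rules for~$\mathrm{Eq}$. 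Throughout, the side conditions of the form $\cdot : \Prop$ needed to apply $\supset_i$ — such as $(n \le 101) : \Prop$ and $(101 - n \le y) : \Prop$ — are supplied by Lemma~\ref{lem_nat_op_well_defined}, so that in each case it suffices to derive $f(n) = 91$ under the relevant hypotheses and then close off with $\supset_i$ and $\forall_i$. The elementary arithmetic facts about $+$, $-$, $\le$, $<$, $>$ on numerals that I use (monotonicity of subtraction, $x - x = 0$, $\neg(x > y) \supset x \le y$, the concrete evaluations such as $102 > 100$ and $102 - 10 = 92$) are all derivable from Theorem~\ref{thm_peano}, Lemma~\ref{lem_s_minus}, Lemma~\ref{lem_iszero_eq}, Lemma~\ref{lem_nat_op_well_defined} and the definitions of $\le$ and~$-$; I will use them without comment.

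For the base step I must derive $f(n) = 91$ from $n : \Nat$, $n \le 101$ and $101 - n \le 0$. The last hypothesis is $\iszero(101 - n)$, so together with $n \le 101$ Lemma~\ref{lem_le_ge_then_eq} gives $n =_\Nat 101$, hence $n = 101$ by Lemma~\ref{lem_ind_eq}; then $f(n) = f(101) = 91$ by eq-cong-app, rule~$c_1$ (since $101 > 100$) and the equality rules. For the inductive step I assume $\varphi(y)$ with $y : \Nat$, together with $n : \Nat$, $n \le 101$ and $101 - n \le \s y$, and split into the three cases of the example by two applications of the law of excluded middle (axiom~$c$, applied to $(n + 11 > 101) : \Prop$ and $(n < 101) : \Prop$, which hold by Lemma~\ref{lem_nat_op_well_defined}) combined with rule~$\vee_e$. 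The case $n + 11 > 101$ and $n \ge 101$ is immediate: with $n \le 101$ we get $n =_\Nat 101$ by Lemma~\ref{lem_le_ge_then_eq}, hence $n = 101$, hence $f(n) = 91$ by computation as in the base step.

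In the case $n + 11 > 101$ and $n < 101$, I first compute $f(n + 11) = (n + 11) - 10 = n + 1$ by rule~$c_1$, using that $n + 11 > 101$ entails $n + 11 > 100$. From $101 - n \le \s y$ I obtain $100 - n \le y$, i.e. $101 - (n + 1) \le y$, and from $n < 101$ I obtain $n + 1 \le 101$; instantiating the induction hypothesis $\varphi(y)$ at $n + 1$ and applying $\supset_e$ twice gives $f(n + 1) = 91$. Since $n < 101$ gives $n \le 100$, i.e. $\neg(n > 100)$, rule~$c_2$ and eq-trans yield $f(n) = f(f(n + 11)) = f(n + 1) = 91$. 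In the case $n + 11 \le 101$, from $101 - n \le \s y$ I get $101 - (n + 11) \le y$, so the induction hypothesis at $n + 11$ gives $f(n + 11) = 91$; since $n + 11 \le 101$ forces $n \le 100$, rule~$c_2$ gives $f(n) = f(f(n + 11)) = f(91)$, and it remains to evaluate $f(91) = f(f(102)) = f(92) = f(f(103)) = \ldots = f(100) = f(f(111)) = f(101) = 91$ by a fixed finite sequence of applications of the defining equation together with rules~$c_1$, $c_2$ and the rules for~$\mathrm{Eq}$. This completes the induction and gives $\proves_{\I_s} \forall y : \Nat \,.\, \varphi(y)$.

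Finally, to obtain $\forall n : \Nat \,.\, n \le 101 \supset f(n) = 91$, assume $n : \Nat$ and $n \le 101$. By Lemma~\ref{lem_nat_op_well_defined} we have $101 - n : \Nat$, and $(101 - n) - (101 - n) = 0$, so $101 - n \le 101 - n$; instantiating $\forall y : \Nat \,.\, \varphi(y)$ at $101 - n$ via $\forall_e$, then at $n$, and applying $\supset_e$ twice, yields $f(n) = 91$. Discharging $n \le 101$ with $\supset_i$ (using $(n \le 101) : \Prop$) and applying $\forall_i$ gives the claim. I expect the main obstacle to be bookkeeping rather than anything conceptual: one must keep the stock of elementary $\Nat$-arithmetic lemmas at hand — in particular the monotonicity facts for $-$ used to pass from $101 - n \le \s y$ to $100 - n \le y$ and to $101 - (n+11) \le y$ — and must thread the $\cdot : \Prop$ side conditions through every use of $\supset_i$; the only mildly delicate point is checking that in each of the three cases the recursive calls in the unfolding of $f(n)$ are either evaluated outright or are covered by the measure $101 - \cdot$, which is precisely what the case split is designed to guarantee.
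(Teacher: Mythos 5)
Your proof is correct and follows essentially the same route as the paper's: the same strengthened statement $\varphi(y)$ with the measure $101 - n$, induction via rule~$n_i$, the same three-way case split in the inductive step, the same explicit evaluation chain $f(91) = f(f(102)) = \ldots = 91$, and the same final instantiation at $y = 101 - n$. The only cosmetic difference is that you obtain the case split from axiom~$c$ plus $\neg(x > y) \supset x \le y$ where the paper states the disjunctions $n+11 > 101 \lor n+11 \le 101$ and $n < 101 \lor \neg(n<101)$ directly, which is the same thing.
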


\begin{proof}
  Let $\varphi(y) \equiv \forall n : \Nat \,.\, n \le 101 \supset 101
  - n \le y \supset f(n) = 91$.
  \[
  (a0)
  \inferrule*{
    \inferrule*{
      n : \Nat, n \le 101, 101 - n \le 0 \proves 101 \le n
    }{
      n : \Nat, n \le 101, 101 - n \le 0 \proves n =_\Nat 101
    }\;\raisebox{0.7em}{by~\ref{lem_le_ge_then_eq}}
  }{
    n : \Nat, n \le 101, 101 - n \le 0 \proves f(n) = f(101) = 101 -
    10 = 91
  }
  \]
  \[
  (a)
  \inferrule*{
    \inferrule*{
      \inferrule*{
        \inferrule*{
          n : \Nat \proves (101 - n) : \Nat
        }{
          n : \Nat \proves (101 - n \le 0) : \Prop
        }
        \\
        (a0)
      }{
        n : \Nat, n \le 101 \proves 101 - n \le 0 \supset f(n) = 91
      }
      \\
      n : \Nat \proves (n \le 101) : \Prop
    }{
      n : \Nat \proves n \le 101 \supset 101 - n \le 0 \supset f(n) = 91
    }
    \\
    \proves \Nat : \Type
  }{
    \proves \varphi(0)
  }
  \]
  Let $\Gamma_0 \equiv y : \Nat, \varphi(y), n : \Nat, n \le 101, 101
  - n \le \s y$. Let $\Gamma_1 \equiv \Gamma_0, n + 11 > 101, n <
  101$.
  \[
  (b0.0)
  \inferrule*{
    \inferrule*{
      \Gamma_1 \proves n : \Nat
    }{
      \Gamma_1 \proves (n + 1) : \Nat
    }
    \\
    \Gamma_1 \proves \varphi(y)
  }{
    \Gamma_1 \proves n + 1 \le 101 \supset 101 - (n + 1) \le y
    \supset f(n + 1) = 91
  }
  \]
  \[
  (b0)
  \inferrule*{
    (b0.0)
    \\
    \Gamma_1 \proves n + 1 \le 101
  }{
    \Gamma_1 \proves 101 - (n + 1) \le y \supset f(n + 1) = 91
  }
  \]
  \[
  (b1)
  \inferrule*{
    \inferrule*{
      \inferrule*{
        \Gamma_1 \proves 101 - n \le \s y
        \\
        \inferrule*{
          \Gamma_1 \proves 101 - (\s n) = \p (101 - n)
        }{
          \Gamma_1 \proves \s (101 - (\s n)) = 101 - n
        }
      }{
        \Gamma_1 \proves \s (101 - (n + 1)) \le \s y
      }
    }{
      \Gamma_1 \proves 101 - (n + 1) \le y
    }
    \\
    (b0)
  }{
    \Gamma_1 \proves f(n + 1) = 91
  }
  \]
  \[
  (b)
  \inferrule*{
    \inferrule*{
      \inferrule*{
        \Gamma_1 \proves n + 11 > 100
      }{
        \Gamma_1 \proves f(n + 11) = n + 11 - 10 = n + 1
      }
      \\
      \inferrule*{
        \Gamma_1 \proves n < 101
      }{
        \Gamma_1 \proves \neg (n > 100)
      }
    }{
      \Gamma_1 \proves f(n) = f(f(n + 11)) = f(n + 1)
    }
    \\
    (b1)
  }{
    \Gamma_1 \proves f(n) = 91
  }
  \]
  Let $\Gamma_2 \equiv \Gamma_0, n + 11 > 101, \neg (n < 101)$.
  \[
  (c)
  \inferrule*{
    \inferrule*{
      \Gamma_2 \proves n \le 101 \\ \Gamma_2 \proves \neg (n < 101)
    }{
      \Gamma_2 \proves n =_\Nat 101
    }
  }{
    \Gamma_2 \proves f(n) = n - 10 = 101 - 10 = 91
  }
  \]
  \[
  (d)
  \inferrule*{
    \inferrule*{
      \inferrule*{
        \Gamma_0, n + 11 > 101 \proves n : \Nat
      }{
        \Gamma_0, n + 11 > 101 \proves (n < 101) : \Prop
      }
    }{
      \Gamma_0, n + 11 > 101 \proves n < 101 \lor \neg (n < 101)
    }
    \\
    (b)
    \\
    (c)
  }{
    \Gamma_0, n + 11 > 101 \proves f(n) = 91
  }\;\raisebox{0.7em}{by~$\lor_e$}
  \]
  Let $\Gamma_3 \equiv \Gamma_0, n + 11 \le 101$.
  \[
  (e0.1)
  \inferrule*{
    \Gamma_3 \proves \varphi(y)
    \\
    \inferrule*{
      \Gamma_3 \proves n : \Nat \\ \Gamma_3 \proves 11 : \Nat
    }{
      \Gamma_3 \proves (n + 11) : \Nat
    }
  }{
    \Gamma_3 \proves n + 11 \le 101 \supset 101 - (n + 11) \le y
    \supset f(n + 11) = 91
  }
  \]
  \[
  (e0.2)
  \inferrule*{
    (e0.1)
    \\
    \Gamma_3 \proves n + 11 \le 101
  }{
    \Gamma_3 \proves 101 - (n + 11) \le y \supset f(n + 11) = 91
  }
  \]
  \[
  (e0)
  \inferrule*{
    (e0.2)
    \\
    \inferrule*{
      \Gamma_3 \proves 101 - n \le \s y
    }{
      \Gamma_3 \proves 101 - (n + 11) \le y.
    }
  }{
    \Gamma_3 \proves f(n + 11) = 91
  }
  \]
  \[
  (e1)
  \inferrule*{
    \inferrule*{
      \inferrule*{
        \Gamma_3 \proves n + 11 \le 101 \\ \Gamma_3 \proves n : \Nat \\ \ldots
      }{
        \Gamma_3 \proves n \le 100
      }
    }{
      \Gamma_3 \proves f(n) = f(f(n+11))
    }
    \\
    (e0)
  }{
    \Gamma_3 \proves f(n) = f(91)
  }
  \]
  \[
  (e2.1)
  \inferrule*{
    \Gamma_3 \proves f(91) = f(f(102)) \\ \Gamma_3 \proves f(102) = 92
  }{
    \Gamma_3 \proves f(91) = f(92)
  }
  \]
  \[
  (e2.2)
  \inferrule*{
    \Gamma_3 \proves f(92) = f(f(103)) \\ \Gamma_3 \proves f(103) = 93
  }{
    \Gamma_3 \proves f(92) = f(93)
  }
  \]
  \[
  \vdots
  \]
  \[
  (e2.10)
  \inferrule*{
    \inferrule*{
      \Gamma_3 \proves f(100) = f(f(111)) \\ \Gamma_3 \proves f(111) = 101
    }{
      \Gamma_3 \proves f(100) = f(101)
    }
    \\
    \Gamma_3 \proves f(101) = 91
  }{
    \Gamma_3 \proves f(100) = 91
  }
  \]
  \[
  (e2)
  \inferrule*{
    \inferrule*{
      \inferrule*{
        \inferrule*{
          \inferrule*{
            (e2.1)
            \\
            (e2.2)
          }{
            \Gamma_3 \proves f(91) = f(93)
          }
          \\
          (e2.3)
        }{
          \Gamma_3 \proves f(91) = f(94)
        }
        \\
        (e2.4)
      }{
        \vdots
      }
      \\
      \vdots
    }{
      \Gamma_3 \proves f(91) = f(100)
    }
    \\
    (e2.10)
  }{
    \Gamma_3 \proves f(91) = 91
  }
  \]
  \[
  (e)
  \inferrule*{
    (e1) \\ (e2)
  }{
    \Gamma_3 \proves f(n) = 91
  }
  \]
  \[
  (f)
  \inferrule*{
    \inferrule*{
      \inferrule*{
        \Gamma_0 \proves n : \Nat
      }{
        \Gamma_0 \proves (n + 11 > 101) : \Prop
      }
    }{
      \Gamma_0 \proves n + 11 > 101 \lor n + 11 \le 101
    }
    \\
    (d)
    \\
    (e)
  }{
    \Gamma_0 \proves f(n) = 91
  }\;\raisebox{0.7em}{by~$\lor_e$}
  \]
  \[
  (g0)
  \inferrule*{
    y : \Nat, \varphi(y), n : \Nat \proves (101 - n \le \s y) : \Prop
    \\
    (f)
  }{
    y : \Nat, \varphi(y), n : \Nat, n \le 101 \proves 101 - n \le \s y \supset
    f(n) = 91
  }
  \]
  \[
  (g1)
  \inferrule*{
    y : \Nat, \varphi(y), n : \Nat \proves (n \le 101) : \Prop
    \\
    (g0)
  }{
    y : \Nat, \varphi(y), n : \Nat \proves n \le 101 \supset 101 - n
    \le \s y \supset f(n) = 91
  }
  \]
  \[
  (g)
  \inferrule*{
    \proves \Nat : \Type
    \\
    (g1)
  }{
    y : \Nat, \varphi(y) \proves \varphi(\s y)
  }
  \]
  \[
  (h)
  \inferrule*{
    (a) \\ (g)
  }{
    \proves \forall y, n : \Nat \,.\, n \le 101 \supset 101 - n \le y \supset
    f(n) = 91
  }\;\raisebox{0.7em}{by~$n_i$}
  \]
  \[
  (i)
  \inferrule*{
    n : \Nat \proves (101 - n) : \Nat
    \\
    (h)
  }{
    n : \Nat \proves \forall m : \Nat \,.\, m \le 101 \supset 101 - m
    \le 101 - n \supset f(m) = 91
  }
  \]
  \[
  (j)
  \inferrule*{
    \inferrule*{
      \inferrule*{
        n : \Nat \proves n : \Nat
        \\
        (i)
      }{
        n : \Nat \proves n \le 101 \supset 101 - n \le 101 - n \supset
        f(n) = 91
      }
    }{
      n : \Nat, n \le 101 \proves 101 - n \le 101 - n \supset f(n) = 91
    }
  }{
    n : \Nat, n \le 101 \proves f(n) = 91
  }
  \]
  \[
  \inferrule*{
    \inferrule*{
      n : \Nat \proves (n \le 101) : \Prop
      \\
      (j)
    }{
      n : \Nat \proves n \le 101 \supset f(n) = 91
    }
    \\
    \proves \Nat : \Type
  }{
    \proves \forall n : \Nat \,.\, n \le 101 \supset f(n) = 91
  }
  \]
\end{proof}

\newpage
\renewcommand{\thesection}{Appendix~\Alph{section}}
\tocless\section{Semantics}\label{sec_semantics}
\renewcommand{\thesection}{\Alph{section}}
\addcontentsline{toc}{section}{\thesection\hspace{0.5em} Semantics}

In this appendix we define a semantics for~$\I_s$ and
for~$e\I_s$. This semantics will be used in Appendix~F to show
consistency of~$\I_s$ and~$e\I_s$.

\newcommand{\setval}[3]{\ensuremath{\{\{#1 \;|\; #2\}\}_{#3}}}

\begin{definition}\label{def_I_s_structure} \rm
  An \emph{$\I_s$-structure} is a triple $\Ac = \langle A, \cdot,
  \valuation{}{}{} \rangle$ where~$A$ is the \emph{domain} of~$\Ac$,
  $\cdot$ is a binary operation on~$A$, and the \emph{interpretation}
  $\valuation{}{}{} : \T \times A^V \to A$ is a function from
  $\I_s$-terms and valuations to~$A$. We sometimes write $\cdot^\Ac$
  and $\valuation{}{}{\Ac}$ to indicate that these are components
  of~$\Ac$.

  A \emph{valuation}~$v$ is a function $v : V \to A$. We usually write
  $\valuation{t}{v}{}$ instead of $\valuation{}{}{}(t, v)$, and we
  drop the subscript when obvious or irrelevant. To stress that a
  valuation is associated with an $\I_s$-structure~$\Ac$, we sometimes
  call it an $\Ac$-valuation. By~$v[x/a]$ for~$a \in A$ we denote a
  valuation~$u$ such that $u(y) = v(y)$ for $y \ne x$ and $u(x) =
  a$. We use the abbreviations $\Tc^\Ac = \{ a \in A \;|\;
  \valuation{x : \Type}{\{x \mapsto a\}}{\Ac} =
  \valuation{\top}{}{\Ac}\}$ and $\iota^\Ac = \{ a \in A \;|\;
  \valuation{x : \iota}{\{x \mapsto a\}}{\Ac} =
  \valuation{\top}{}{\Ac}\}$ for $\iota \in \Tc_I$. The symbols $a$,
  $a'$, $b$, $b'$, etc. denote elements of~$A$, unless otherwise
  stated. We often confuse $\top$, $\bot$, $\mathrm{Is}$, etc. with
  $\valuation{\top}{}{}$, $\valuation{\bot}{}{}$,
  $\valuation{\mathrm{Is}}{}{}$, etc., to avoid onerous notation. It
  is always clear from the context which interpretation is meant.
\end{definition}

\begin{definition}\label{def_I_s_model} \rm
  An $\I_s$-model is an $\I_s$-structure~$\Ac$ satisfying the
  following requirements:
  \begin{itemize}
  \item[(var)] $\valuation{x}{v}{} = v(x)$ for $x \in V$,
  \item[(app)] $\valuation{t_1 t_2}{v}{} = \valuation{t_1}{v}{} \cdot
    \valuation{t_2}{v}{}$,
  \item[($\beta$)] $\valuation{\lambda x \,.\, t}{v}{} \cdot a =
    \valuation{t}{v[x/a]}{}$ for every $a \in A$,
  \item[(fv)] if $\cut{v}{FV(t)} = \cut{w}{FV(t)}$ then
    $\valuation{t}{v}{} = \valuation{t}{w}{}$,
  \item[($\xi$)] if for all $a \in A$ we have $\valuation{\lambda x
    \,.\, t_1}{v}{} \cdot a = \valuation{\lambda x \,.\, t_2}{v}{} \cdot a$
    then $\valuation{\lambda x \,.\, t_1}{v}{} = \valuation{\lambda x
      \,.\, t_2}{v}{}$,
  \item[(pr)] $\mathrm{Is} \cdot a \cdot \Prop = \top$ iff $a \in
    \{\top, \bot\}$,
  \item[(pt)] $\valuation{\Prop : \Type}{}{} = \top$,
  \item[(it)] $\valuation{\iota : \Type}{}{} = \top$ for $\iota \in
    \Tc_I$,
  \item[($\forall_\top$)] if $a \in \Tc^\Ac$ and for all $c \in A$ such
    that $\mathrm{Is} \cdot c \cdot a = \top$ we have $b \cdot c = \top$
    then $\forall \cdot a \cdot b = \top$,
  \item[($\forall_\bot$)] if $a \in \Tc^\Ac$ and there exists $c \in A$
    such that $\mathrm{Is} \cdot c \cdot a = \top$ and $b \cdot c =
    \bot$ then $\forall \cdot a \cdot b = \bot$,
  \item[($\forall_e$)] if $\forall \cdot a \cdot b = \top$ then for
    all $c \in A$ such that $\mathrm{Is} \cdot c \cdot a = \top$ we
    have $b \cdot c = \top$,
  \item[($\forall_e'$)] if $\forall \cdot a \cdot b = \bot$ then there
    exists $c \in A$ such that $\mathrm{Is} \cdot c \cdot a = \top$
    and $b \cdot c = \bot$,
  \item[($\vee_1$)] $\vee \cdot a \cdot b = \top$ iff $a = \top$ or $b
    = \bot$,
  \item[($\vee_2$)] $\vee \cdot a \cdot b = \bot$ iff $a = \bot$ and
    $b = \bot$,
  \item[($\supset_{t2}$)] if $\supset \cdot a \cdot b = \top$ then $a
    \in \{\top,\bot\}$, where $\supset = \valuation{\lambda x y \,.\,
      x \supset y}{}{}$,
  \item[($\bot$)] if $\bot = \top$ then for all $a \in A$ we have $a =
    \top$,
  \item[($\to_i$)] if $f \in A$, $a \in \Tc^\Ac$ and for all $c \in A$
    such that $\mathrm{Is} \cdot c \cdot a = \top$ we have
    $\mathrm{Is} \cdot (f \cdot c) \cdot b = \top$, then $\mathrm{Is}
    \cdot f \cdot (\mathrm{Fun} \cdot a \cdot b) = \top$,
  \item[($\to_e$)] if $\mathrm{Is} \cdot f \cdot (\mathrm{Fun} \cdot a
    \cdot b) = \top$ and $\mathrm{Is} \cdot c \cdot a = \top$ then
    $\mathrm{Is} \cdot (f \cdot c) \cdot b = \top$,
  \item[($\to_t$)] if $a, b \in \Tc^\Ac$ then $\mathrm{Fun} \cdot a \cdot
    b \in \Tc^\Ac$,
  \item[(s1)] if $\mathrm{Subtype} \cdot a \cdot b \in \Tc^\Ac$,
    $\mathrm{Is} \cdot c \cdot a = \top$ and $b \cdot c = \top$, then
    $\mathrm{Is} \cdot c \cdot (\mathrm{Subtype} \cdot a \cdot b) =
    \top$,
  \item[(s2)] if $\mathrm{Is} \cdot c \cdot (\mathrm{Subtype} \cdot a
    \cdot b) = \top$ then $b \cdot c = \top$,
  \item[(s3)] if $\mathrm{Is} \cdot c \cdot (\mathrm{Subtype} \cdot a
    \cdot b) = \top$ then $\mathrm{Is} \cdot c \cdot a = \top$,
  \item[(s4)] if $a \in \Tc^\Ac$ and for all $c \in A$ such that
    $\mathrm{Is} \cdot c \cdot a = \top$ we have $b \cdot c \in
    \{\top, \bot\}$, then $\mathrm{Subtype} \cdot a \cdot b \in \Tc^\Ac$,
  \item[(o1)] $o_i^\iota \cdot (c_i^\iota \cdot a_1 \cdot \ldots \cdot
    a_{n_i}) = \top$ where $o_i^\iota \in \Oc$ and $c_i^\iota \in \Cc$
    has arity~$n_i$,
  \item[(o2)] $o_i^\iota \cdot (c_j^\iota \cdot a_1 \cdot \ldots \cdot
    a_{n_j}) = \bot$ where $i \ne j$, $o_i^\iota \in \Oc$, $c_j^\iota
    \in \Cc$ has arity~$n_j$,
  \item[(d1)] $d_{i,j}^\iota \cdot (c_i^\iota \cdot a_1 \cdot \ldots
    \cdot a_{n_i}) = a_j$ where $d_{i,j}^\iota \in \Dc$ and $c_i^\iota
    \in \Cc$ has arity~$n_i$,
  \item[(i1)] if for all $i=1,\ldots,m$ we have:
    \begin{itemize}
    \item for all $b_{1},\ldots,b_{n_i}$ such that
      \begin{itemize}
      \item $b_{j} \in {\iota_{i,j}^*}^\Ac$ for $j=1,\ldots,n_i$, and
      \item $a \cdot b_{k} = \top$ for $1 \le k \le n_i$ such that
        $\iota_{i,k} = \star$
      \end{itemize}
      we have $a \cdot (c_i^\iota \cdot b_{1} \cdot \ldots \cdot
      b_{n_i}) = \top$
    \end{itemize}
    then $\forall\cdot \iota^\Ac \cdot a = \top$, where $c_i^\iota \in
    \Cc$ has arity~$n_i$, $\iota \in \Tc_I$, $\iota_{i,j}^* =
    \iota_{i,j}$ if $\iota_{i,j} \in \Tc_I$, $\iota_{i,j}^* = \iota$
    if $\iota_{i,j} = \star$, and $\iota = \mu(\langle \iota_{1,1},
    \ldots, \iota_{1,n_1} \rangle, \ldots, \langle \iota_{m,1},
    \ldots, \iota_{m,m_i} \rangle)$,
  \item[(i2)] if $a_j \in {\iota_{i,j}^*}^\Ac$ for $i=1,\ldots,n_i$
    then $c_i^\iota \cdot a_1 \cdot \ldots \cdot a_{n_i} \in
    \iota^\Ac$, where $c_i^\iota \in \Cc$ has arity~$n_i$, $\iota \in
    \Tc_I$, $\iota_{i,j}^* = \iota_{i,j}$ if $\iota_{i,j} \in \Tc_I$,
    $\iota_{i,j}^* = \iota$ if $\iota_{i,j} = \star$, $\iota =
    \mu(\langle \iota_{1,1}, \ldots, \iota_{1,n_1} \rangle, \ldots,
    \langle \iota_{m,1}, \ldots, \iota_{m,m_i} \rangle)$,
  \item[($\epsilon$)] if for $a \in \Tc^\Ac$ there exists $b \in A$ such
    that $\mathrm{Is} \cdot b \cdot a = \top$ then $\mathrm{Is} \cdot
    (\epsilon \cdot a) \cdot a = \top$,
  \item[(c1)] $\mathrm{Cond} \cdot \top \cdot a \cdot b = a$,
  \item[(c2)] $\mathrm{Cond} \cdot \bot \cdot a \cdot b = b$,
  \item[(eq)] $\mathrm{Eq} \cdot a \cdot b = \top$ iff $a = b$,
  \end{itemize}

  An \emph{$e\I_s$-model} is an \emph{$\I_s$-model} which additionally
  satisfies:
  \begin{itemize}
  \item[(ef)] if $a, b \in \Tc^\Ac$, $\mathrm{Is} \cdot f \cdot
    (\mathrm{Fun} \cdot a \cdot b) = \top$, $\mathrm{Is} \cdot g \cdot
    (\mathrm{Fun} \cdot a \cdot b) = \top$, and for all $c \in A$ such
    that $\mathrm{Is} \cdot c \cdot a = \top$, and all $p \in A$ such
    that $\mathrm{Is} \cdot p \cdot (\mathrm{Fun} \cdot b \cdot \Prop)
    = \top$ we have $p \cdot (f \cdot c) = p \cdot (g \cdot c)$, then
    for all $p \in A$ such that $\mathrm{Is} \cdot p \cdot
    (\mathrm{Fun} \cdot (\mathrm{Fun} \cdot a \cdot b) \cdot \Prop) =
    \top$ we have $p \cdot f = p \cdot g$,
  \item[(eb)] if ${\supset} \cdot a \cdot b = {\supset} \cdot b \cdot
    a = \top$ then $a = b$.
  \end{itemize}
  Here ${\supset} = \valuation{\lambda x y \,.\, x \supset y}{}{}$.

  For a term $\varphi$ and a valuation $u$ we write $\Ac,u \models
  \varphi$ if $\valuation{\varphi}{u}{\Ac} = \top$. Given a set of
  terms~$\Gamma$, we use the notation $\Ac, u \models \Gamma$ if $\Ac,
  u \models \varphi$ for all $\varphi \in \Gamma$. We drop the
  subscript~$\Ac$ when obvious or irrelevant. We write $\Gamma
  \models_{\I_s} \varphi$ if for every $\I_s$-model~$\Ac$ and every
  valuation~$u$, the condition $\Ac, u \models \Gamma$ implies $\Ac, u
  \models \varphi$. The notation $\Gamma \models_{e\I_s} \varphi$ is
  defined analogously. The subscript is dropped when obvious from the
  context or irrelevant.
\end{definition}

Note that every (nontrivial) $\I_s$-model is a $\lambda$-model. See
e.g. \cite[Chapter 5]{Barendregt1984} for a definition of a
$\lambda$-model.

\begin{lemma}\label{lem_I_s_model_conditions}
  In every $\I_s$-model~$\Ac$ the following conditions hold.
  \begin{itemize}
  \item[$(\wedge_1)$] ${\wedge} \cdot a \cdot b = \top$ iff $a = \top$
    and $b = \top$,
  \item[$(\wedge_2)$] ${\wedge} \cdot a \cdot b = \bot$ iff $a = \bot$
    or $b = \bot$,
  \item[$(\exists_\top)$] if $a \in \Tc^\Ac$ and there exists $c \in
    A$ such that $\mathrm{Is} \cdot c \cdot a = \top$ and $b \cdot c =
    \top$ then $\exists \cdot a \cdot b = \top$,
  \item[$(\exists_\bot)$] if $a \in \Tc^\Ac$ and for all $c \in A$
    such that $\mathrm{Is} \cdot c \cdot a = \top$ we have $b \cdot c
    = \bot$ then $\exists \cdot a \cdot b = \bot$,
  \item[$(\exists_e)$] if $\exists \cdot a \cdot b = \top$ then there
    exists $c \in A$ such that $\mathrm{Is} \cdot c \cdot a = \top$
    and $b \cdot c = \top$,
  \item[($\supset_1$)] $\supset \cdot a \cdot b = \top$ iff $a =
    \bot$, or $a = b =\top$,
  \item[($\supset_2$)] $\supset \cdot a \cdot b = \bot$ iff $a = \top$
    and $b = \bot$,
  \item[($\neg_\top$)] $\neg \cdot a = \bot$ iff $a = \top$,
  \item[($\neg_\bot$)] $\neg \cdot a = \top$ iff $a = \bot$.
  \end{itemize}
  Here $\supset = \valuation{\lambda x y \,.\, x \supset y}{}{}$,
  $\neg = \valuation{\lambda x \,.\, \neg x}{}{}$, $\wedge =
  \valuation{\lambda x y \,.\, \neg ((\neg x) \vee (\neg y))}{}{}$ and
  $\exists = \valuation{\lambda x y \,.\, \neg (\forall\, x\, \lambda
    z \,.\, \neg(y z))}{}{}$.
\end{lemma}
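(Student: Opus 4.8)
Fix an $\I_s$-model~$\Ac$. The plan is to derive every listed condition by unfolding the corresponding abbreviation, evaluating the resulting interpretation with the help of (var), (app), ($\beta$) and (fv), and then invoking the model conditions of Definition~\ref{def_I_s_model}. Since $\neg$, $\wedge$ and $\exists$ all abbreviate terms built from $\supset$, $\vee$ and $\forall$, the order of work is forced: first ($\supset_1$) and ($\supset_2$); from these ($\neg_\top$) and ($\neg_\bot$); from the latter together with ($\vee_1$), ($\vee_2$) the conditions ($\wedge_1$), ($\wedge_2$); and finally ($\exists_\top$), ($\exists_\bot$), ($\exists_e$) from ($\neg_\top$), ($\neg_\bot$) and the $\forall$-conditions. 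Throughout we may assume $\top \ne \bot$, since if $\top = \bot$ then ($\bot$) forces the domain to be a singleton and all the equivalences become trivial.

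The substantive step is ($\supset_1$), ($\supset_2$). Unfolding $\varphi \supset \psi \equiv \forall u : \{w : \Prop \mid \varphi\}\,.\,\psi$ and computing with (app), ($\beta$), (fv) I would rewrite $\supset \cdot a \cdot b$ as $\forall \cdot S_a \cdot \hat b$, where $\hat b$ is the constant function with value $b$ and $S_a = \mathrm{Subtype} \cdot \Prop \cdot \hat a$ with $\hat a$ the constant function with value $a$. Three observations do the work. By (pt) we have $\Prop \in \Tc^\Ac$, and by (pr), $\mathrm{Is} \cdot c \cdot \Prop = \top$ iff $c \in \{\top,\bot\}$. If $a \in \{\top,\bot\}$ then $\hat a \cdot c \in \{\top,\bot\}$ for all such $c$, so (s4) gives $S_a \in \Tc^\Ac$. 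Finally, (s2) shows that any $c$ with $\mathrm{Is} \cdot c \cdot S_a = \top$ satisfies $\hat a \cdot c = a = \top$ (and $c \in \{\top,\bot\}$ by (s3)), while conversely, when $a = \top$, (s1) together with (pt), (pr) and $S_\top \in \Tc^\Ac$ shows that each $c \in \{\top,\bot\}$ has $\mathrm{Is} \cdot c \cdot S_\top = \top$; so for $a = \top$ the ``elements'' of $S_\top$ are precisely $\top$ and $\bot$. Now case on $a$. If $a = \top$: for $b = \top$, ($\forall_\top$) gives $\supset \cdot \top \cdot \top = \top$; for $b = \bot$, ($\forall_\bot$) with witness $c = \top$ gives $\supset \cdot \top \cdot \bot = \bot$; and for $b \notin \{\top,\bot\}$ the value is neither $\top$ (else ($\forall_e$) with $c = \top$ would force $b = \top$) nor $\bot$ (else ($\forall_e'$) would force $b = \bot$). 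If $a = \bot$: $S_\bot \in \Tc^\Ac$ has no elements by (s2), so ($\forall_\top$) applies vacuously and $\supset \cdot \bot \cdot b = \top$. If $a \notin \{\top,\bot\}$: ($\supset_{t2}$) rules out $\supset \cdot a \cdot b = \top$, and $\supset \cdot a \cdot b = \bot$ is impossible since ($\forall_e'$) would then produce $c$ with $\mathrm{Is} \cdot c \cdot S_a = \top$, whence (s2) gives $a = \top$. Reading off these cases yields exactly ($\supset_1$) and ($\supset_2$).

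The rest is mechanical. From $\neg \varphi \equiv \varphi \supset \bot$ we get $\neg \cdot a = \supset \cdot a \cdot \bot$, so ($\supset_1$) (using $\top \ne \bot$) gives ($\neg_\bot$) and ($\supset_2$) gives ($\neg_\top$). From $\varphi \wedge \psi \equiv \neg(\neg\varphi \vee \neg\psi)$ we get $\wedge \cdot a \cdot b = \neg \cdot (\vee \cdot (\neg \cdot a) \cdot (\neg \cdot b))$, and combining ($\neg_\top$), ($\neg_\bot$) with ($\vee_1$), ($\vee_2$) yields ($\wedge_1$) and ($\wedge_2$). From $\exists x : \alpha \,.\, \varphi \equiv \neg \forall x : \alpha \,.\, \neg\varphi$ we get $\exists \cdot a \cdot b = \neg \cdot (\forall \cdot a \cdot Q)$ with $Q$ the function sending $c$ to $\neg \cdot (b \cdot c)$; using ($\neg_\top$), ($\neg_\bot$) to move between $\top$ and $\bot$, condition ($\forall_\bot$) gives ($\exists_\top$), ($\forall_\top$) gives ($\exists_\bot$), and ($\forall_e'$) gives ($\exists_e$). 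I expect the only real obstacle to be the subtype bookkeeping in ($\supset_1$), ($\supset_2$) above; once that is in place the remaining conditions follow by direct computation.
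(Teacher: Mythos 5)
Your proof is correct, and since the paper dismisses this lemma with the single word ``Easy'', your write-up is exactly the intended argument filled in: unfold the abbreviations via (app), ($\beta$), (fv), reduce everything to ($\supset_1$)/($\supset_2$), and establish those by analysing $\forall \cdot (\mathrm{Subtype}\cdot\Prop\cdot\hat a)\cdot\hat b$ with (pt), (pr), (s1)--(s4), ($\forall_\top$), ($\forall_\bot$), ($\forall_e$), ($\forall_e'$) and ($\supset_{t2}$). The only incidental point worth noting is that you silently read ($\vee_1$) as ``$a=\top$ or $b=\top$'', which is surely what the paper's ``$b=\bot$'' is a typo for, and your handling of the degenerate case $\top=\bot$ via condition ($\bot$) is the right way to make all the ``iff''s literal.
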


\begin{proof}
  Easy.
\end{proof}

\begin{lemma}\label{lem_I_s_ef_condition}
  In every $\I_s$-model~$\Ac$, for any $a \in \Tc^\Ac$ and $b, c \in
  \Ac$ such that $\mathrm{Is} \cdot b \cdot a = \top$ and $\mathrm{Is}
  \cdot c \cdot a = \top$, the condition
  \begin{itemize}
  \item[(1)] for all $p \in \Ac$ such that $\mathrm{Is} \cdot p \cdot
    (\mathrm{Fun} \cdot a \cdot \Prop) = \top$ we have ${\supset}
    \cdot (p \cdot b) \cdot (p \cdot c) = \top$
  \end{itemize}
  is equivalent to
  \begin{itemize}
  \item[(2)] for all $p \in \Ac$ such that $\mathrm{Is} \cdot p \cdot
    (\mathrm{Fun} \cdot a \cdot \Prop) = \top$ we have $p \cdot b = p
    \cdot c$
  \end{itemize}
  where ${\supset} = \valuation{\lambda x y \,.\, x \supset y}{}{}$.
\end{lemma}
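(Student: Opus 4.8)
The plan is to exploit that negation preserves propositions: if $p$ is a predicate of type $a\to\Prop$ in the model, i.e.\ $\mathrm{Is}\cdot p\cdot(\mathrm{Fun}\cdot a\cdot\Prop)=\top$ (call such a $p$ \emph{admissible}), then its pointwise negation $p'$, satisfying $p'\cdot a'=\neg\cdot(p\cdot a')$ for all $a'$, is again admissible; applying hypothesis~(1) to both $p$ and $p'$ will then force $p\cdot b=p\cdot c$. Throughout, $a'$ ranges over elements with $\mathrm{Is}\cdot a'\cdot a=\top$, and I will use freely that for admissible $p$ and such $a'$ we have $p\cdot a'\in\{\top,\bot\}$: this is immediate from condition~($\to_e$) (which gives $\mathrm{Is}\cdot(p\cdot a')\cdot\Prop=\top$) and condition~(pr). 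In particular $p\cdot b,p\cdot c\in\{\top,\bot\}$ by the hypotheses $\mathrm{Is}\cdot b\cdot a=\top$ and $\mathrm{Is}\cdot c\cdot a=\top$.

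I would first dispose of the direction $(2)\Rightarrow(1)$, which is immediate: for admissible $p$ with $p\cdot b=p\cdot c$, condition~($\supset_1$) gives ${\supset}\cdot x\cdot x=\top$ for $x\in\{\top,\bot\}$, so ${\supset}\cdot(p\cdot b)\cdot(p\cdot c)={\supset}\cdot(p\cdot b)\cdot(p\cdot b)=\top$.

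For $(1)\Rightarrow(2)$, fix an admissible $p$. Since $\Ac$ is a $\lambda$-model I would pick $p'\in A$ with $p'\cdot a'=\neg\cdot(p\cdot a')$ for every $a'\in A$ (concretely $p'=\valuation{\lambda z\,.\,\neg(y z)}{v[y/p]}{\Ac}$ for any valuation~$v$, using the ($\beta$) clause). The key preliminary step is to verify that $p'$ is admissible: for any $a'$ with $\mathrm{Is}\cdot a'\cdot a=\top$ we have $p\cdot a'\in\{\top,\bot\}$, hence $p'\cdot a'=\neg\cdot(p\cdot a')\in\{\top,\bot\}$ by~($\neg_\top$) and~($\neg_\bot$) of Lemma~\ref{lem_I_s_model_conditions}, hence $\mathrm{Is}\cdot(p'\cdot a')\cdot\Prop=\top$ by~(pr); since $a\in\Tc^\Ac$, condition~($\to_i$) then yields $\mathrm{Is}\cdot p'\cdot(\mathrm{Fun}\cdot a\cdot\Prop)=\top$. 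Now I would apply~(1) to $p$ and to $p'$, obtaining
\[
{\supset}\cdot(p\cdot b)\cdot(p\cdot c)=\top,\qquad {\supset}\cdot(\neg\cdot(p\cdot b))\cdot(\neg\cdot(p\cdot c))=\top,
\]
and finish by a two-case analysis via~($\supset_1$): the first equation forces $p\cdot b=\bot$ or $p\cdot b=p\cdot c=\top$; in the latter case we are done, and in the former case $\neg\cdot(p\cdot b)=\top\ne\bot$ by~($\neg_\bot$), so~($\supset_1$) applied to the second equation forces $\neg\cdot(p\cdot b)=\neg\cdot(p\cdot c)=\top$, whence $p\cdot c=\bot=p\cdot b$ by~($\neg_\top$). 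Either way $p\cdot b=p\cdot c$, which is~(2).

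The only mildly delicate point — and the one I would expect to be the crux — is the admissibility of the negated predicate $p'$, i.e.\ that $\neg$ maps $\{\top,\bot\}$ into itself so that condition~($\to_i$) applies; this is routine given~($\neg_\top$)/($\neg_\bot$) of Lemma~\ref{lem_I_s_model_conditions}. Everything else is a direct unwinding of the model conditions, principally~($\supset_1$), together with the defining property of~$p'$.
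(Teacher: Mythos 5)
Your proof is correct and follows essentially the same route as the paper's: the forward direction via $({\supset_1})$, and the converse by applying the hypothesis both to $p$ and to its pointwise negation $p'=\valuation{\lambda x\,.\,\neg(yx)}{\{y\mapsto p\}}{}$, with the same case split on $p\cdot b\in\{\top,\bot\}$. You merely spell out the admissibility of $p'$ (via ($\to_i$), (pr), ($\neg_\top$)/($\neg_\bot$)) where the paper calls it ``easy to see''.
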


\begin{proof}
  It is obvious from the definition of~${\supset}$ and the
  conditions~($\to_e$) and~(pr) that~(2) implies~(1). We show that~(1)
  implies~(2). Let $p \in \Ac$ be such that $\mathrm{Is} \cdot p \cdot
  (\mathrm{Fun} \cdot a \cdot \Prop) = \top$ and let $b, c \in \Ac$ be
  such that $\mathrm{Is} \cdot b \cdot a = \top$ and $\mathrm{Is}
  \cdot c \cdot a = \top$. By~($\to_e$) and~(pr) we have $p \cdot b
  \in \{\top,\bot\}$ and $p \cdot c \in \{\top,\bot\}$. If $p \cdot b
  = \top$ then we must have $p \cdot c = \top$ as well, by definition
  of~${\supset}$. If $p \cdot b = \bot$ then it is easy to see that
  $p' = \valuation{\lambda x . \neg(y x)}{\{y \mapsto p\}}{}$ also
  satisfies $\mathrm{Is} \cdot p' \cdot (\mathrm{Fun} \cdot a \cdot
  \Prop) = \top$. We have $p' \cdot b = \top$, which implies $p' \cdot
  c = \top$, which means that $p \cdot c = \bot$ by~($\neg_\bot$).
\end{proof}

\begin{theorem}\label{thm_sound}
  If $\Gamma \proves_{\I} \varphi$ then $\Gamma \models_\I \varphi$,
  where $\I = \I_s$ or $\I = e\I_s$.
\end{theorem}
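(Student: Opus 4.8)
The plan is to prove by induction on derivations the slightly sharper statement: if $\Gamma' \proves \varphi$ is derivable with $\Gamma'$ \emph{finite}, then $\Gamma' \models_\I \varphi$. Since $\Gamma \proves_{\I} \varphi$ means by definition that some finite $\Gamma'' \subseteq \Gamma$ has $\Gamma'' \proves \varphi$ derivable, and since $\Ac, u \models \Gamma$ implies $\Ac, u \models \Gamma''$, the theorem follows at once. The induction then has one case for each axiom and each inference rule of~$\I_s$, plus the two extra rules $e_f$, $e_b$ when $\I = e\I_s$.

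First I would record two facts holding in every $\I_s$-model --- both immediate from the observation (made after Definition~\ref{def_I_s_model}) that an $\I_s$-model is a $\lambda$-model, hence from conditions (var), (app), ($\beta$), (fv), ($\xi$): the \emph{coincidence lemma}, $\valuation{t}{v}{} = \valuation{t}{w}{}$ whenever $v$ and $w$ agree on $FV(t)$ (this is (fv) verbatim), and the \emph{substitution lemma}, $\valuation{t[x/s]}{v}{} = \valuation{t}{v[x/\valuation{s}{v}{}]}{}$. The substitution lemma drives the cases of axiom~$\beta$ and the rules $\forall_e$, $\exists_i$, $\exists_e$; the coincidence lemma is used whenever a side condition such as $x \notin FV(\Gamma,\alpha)$ lets us pass from a valuation $u$ with $\Ac, u \models \Gamma'$ to an extended valuation $u[x/c]$ that still satisfies $\Gamma'$ (and, where needed, the hypothesis $x : \alpha$, since $x \notin FV(\alpha)$).

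The bulk of the argument is then a long but essentially mechanical case analysis in which each axiom is discharged by the like-named condition of Definition~\ref{def_I_s_model} and each rule by that condition applied to the interpretations of its premises furnished by the induction hypothesis, with Lemma~\ref{lem_I_s_model_conditions} supplying the derived conditions for $\vee$, $\wedge$, $\exists$, $\supset$, $\neg$. For instance, axiom~$c$ follows from (pr), ($\forall_\top$), ($\vee_1$) and $(\neg_\bot)$; for $\forall_i$ one applies ($\forall_\top$): for each $c$ with $\mathrm{Is} \cdot c \cdot \valuation{\alpha}{u}{} = \top$ the valuation $u[x/c]$ still models $\Gamma'$ and also $x : \alpha$ (by the coincidence lemma and $x \notin FV(\Gamma',\alpha)$), so the induction hypothesis gives $\valuation{\lambda x.\varphi}{u}{} \cdot c = \valuation{\varphi}{u[x/c]}{} = \top$, while $\valuation{\alpha}{u}{} \in \Tc^\Ac$ comes from the first premise; $\forall_t$ additionally uses ($\forall_\bot$) and (pr) to see the interpretation lies in $\{\top,\bot\}$; the conditional rules $c_1$--$c_5$ use (c1), (c2), (eq) after using (pr) to split on whether $\valuation{\varphi}{u}{}$ is $\top$ or $\bot$; the equality rules use (eq), (app) and --- for eq-$\lambda$-$\xi$ --- ($\xi$) with the coincidence lemma; and $p_i$ is immediate from (pr). All remaining rule cases ($\forall_e$, $\exists_i$, $\exists_e$, $\wedge_{e1}$, $\wedge_{e2}$, $\vee_{i1}$, $\vee_{i2}$, $\vee_e$, $\vee_t$, $\supset_{t2}$, $\bot_e$, $\to_i$, $\to_e$, $\to_t$, $s_i$, $s_e$, $s_{et}$, $s_t$, $\epsilon_i$) are of the same shape: read off the matching semantic clause and feed in the induction hypothesis.

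The steps I expect to require genuine care are: (a) the induction rule $i_i^\iota$ (and its companion $i_t^{\iota,k}$), which must be matched against conditions (i1) and (i2) --- for $i_i^\iota$ one unwinds the interpretation of $\forall x : \iota \,.\, t x$ and, for each constructor $c_i^\iota$, instantiates (i1) by choosing suitable elements $b_1,\ldots,b_{n_i}$ in the relevant component sets and using the coincidence lemma to verify that the correspondingly extended valuation satisfies $\Gamma'$ together with the recursion hypotheses $t x_{j_{i,\ell}}$; and (b), for $\I = e\I_s$, the rule $e_f$, where the Leibniz-equality abbreviations must be expanded and condition (ef) invoked, with Lemma~\ref{lem_I_s_ef_condition} converting the ``equal under every predicate'' form of pointwise equality into the shape appearing in (ef), the rule $e_b$ following directly from (eb), the definition of ${\supset}$, and (eq). Beyond these, the only real obstacle is bookkeeping: keeping track of which domain element plays the role of each type, and exploiting the freshness side conditions correctly through the coincidence lemma.
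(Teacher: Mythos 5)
Your proposal is correct and follows essentially the same route as the paper's own proof: induction on the derivation, with each axiom and rule discharged by the corresponding condition of Definition~\ref{def_I_s_model} (or Lemma~\ref{lem_I_s_model_conditions}), and Lemma~\ref{lem_I_s_ef_condition} handling $e_f$. The paper merely states this more tersely (working out only the $c_3$ case as an example), whereas you additionally make explicit the reduction to finite contexts and the coincidence/substitution lemmas that underlie the quantifier and substitution cases.
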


\begin{proof}
  Induction on the length of derivation. All cases follow easily from
  appropriate conditions in the definition of an $\I_s$-model or from
  Lemma~\ref{lem_I_s_model_conditions}. With the rule~$e_f$ we also
  need to use Lemma~\ref{lem_I_s_ef_condition}.

  For instance, suppose $\Gamma \proves
  \Eq{(\Cond{\varphi}{t_1}{t_2})}{(\Cond{\varphi}{t_1'}{t_2})}$ was
  obtained by rule~$c_3$. By the inductive hypothesis we have $\Gamma,
  \varphi \models \Eq{t_1}{t_1'}$ and $\Gamma \models \varphi :
  \Prop$. Assume $\Mc, u \models \Gamma$. We have $\Mc, u \models
  \varphi : \Prop$, so $\valuation{\varphi : \Prop}{u}{\Mc} =
  \top$. By condition~(pr) we obtain $\valuation{\varphi}{u}{\Mc} \in
  \{\top, \bot\}$. Suppose $\valuation{\varphi}{u}{\Mc} = \top$. Then
  $\Mc, u \models \Gamma, \varphi$, so
  $\valuation{\Eq{t_1}{t_1'}}{u}{\Mc} = \top$. Hence by
  conditions~(app), (eq) and~(c1), we have
  $\valuation{\Cond{\varphi}{t_1}{t_2}}{u}{\Mc} =
  \valuation{\Cond{\varphi}{t_1'}{t_2}}{u}{\Mc}$. By conditions~(app)
  and~(eq) we obtain $\Mc, u \models
  \Eq{(\Cond{\varphi}{t_1}{t_2})}{(\Cond{\varphi}{t_1'}{t_2})}$. So
  suppose $\valuation{\varphi}{u}{\Mc} = \bot$. Then we obtain the
  thesis by applying conditions~(app), (c2) and~(eq).

  Other cases are established in a similar way.
\end{proof}

\begin{conjecture}
  If $\Gamma \models_{e\I_s} \varphi$ then $\Gamma \proves_{e\I_s}
  \varphi$.
\end{conjecture}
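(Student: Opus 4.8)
\medskip
\noindent\textit{A proof strategy (and where it is likely to fail).} The natural approach is a Henkin-style term-model completeness argument, as in the completeness proofs for the first-order illative systems of~\cite{illat01,illat02}, lifted to the higher-order, untyped setting of~$e\I_s$. Arguing contrapositively, fix $\Gamma,\varphi$ with $\Gamma\notproves_{e\I_s}\varphi$; the goal is an $e\I_s$-model~$\M$ and valuation~$u$ with $\M,u\models\Gamma$ and $\M,u\not\models\varphi$. After replacing free variables by fresh constants (so that, by Lemma~\ref{lem_I_s_subst}, we may assume $\Gamma$ and $\varphi$ closed), I would enlarge $\Sigma_s$ by countably many further fresh constants and build, by a chain construction, a set $\Gamma^*\supseteq\Gamma$ that is: (i) deductively closed; (ii) \emph{propositionally complete}, i.e.\ for every $\psi$ with $\Gamma^*\proves\psi:\Prop$, either $\psi\in\Gamma^*$ or $\neg\psi\in\Gamma^*$ (legitimate since, once $\psi:\Prop$ is available, rules $\supset_i,\supset_t$ and the excluded-middle axiom make the usual classical moves valid, so $\Gamma^*\notproves\psi$ gives $\Gamma^*,\neg\psi$ consistent); (iii) \emph{existentially saturated}, i.e.\ if $\exists x:\alpha.\psi\in\Gamma^*$ then some fresh $c$ has $(c:\alpha)\in\Gamma^*$ and $\psi[x/c]\in\Gamma^*$; and (iv) equipped, for each $\alpha$ with $(\alpha:\Type)\in\Gamma^*$, with a fresh \emph{generic} constant $d_\alpha$ such that $(d_\alpha:\alpha)\in\Gamma^*$ and, for every $\psi$, $\psi[x/d_\alpha]\in\Gamma^*$ implies $(\forall x:\alpha.\psi)\in\Gamma^*$ --- all the while keeping $\varphi\notin\Gamma^*$.

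I would then take $\M$ to be the quotient of the closed terms of the extended language by the congruence $t\approx s$ defined by: $C[t]\in\Gamma^*\iff C[s]\in\Gamma^*$ for every closed context $C$; application and $\valuation{-}{-}{}$ are induced by application and substitution, and $\valuation{\top}{}{\M}$ is the class~$[\top]$. Since $\mathrm{Eq}$-provably equal terms are $\approx$-equivalent (by the congruence rules for $\mathrm{Eq}$ and rule~$\mathrm{eq}$), conditions (var),(app),($\beta$),(fv) are automatic, and ($\xi$) reduces to~(iv). Crucially, rule~$p_i$ gives $\Gamma^*\proves\gamma:\Prop$ for every $\gamma\in\Gamma^*$, so by~$e_b$ (with the trivial derivations of $\gamma\supset\top$ and $\top\supset\gamma$) we get $\Gamma^*\proves\Eq{\gamma}{\top}$, hence $[\gamma]=[\top]$; thus $\M,u_0\models\Gamma$ for the valuation $u_0$ sending each variable to its own class, while $\varphi\notin\Gamma^*$ gives $[\varphi]\ne[\top]$, i.e.\ $\M,u_0\not\models\varphi$. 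It then remains to prove the \emph{truth lemma} --- $\valuation{\theta}{}{\M}=[\top]$ iff $\theta\in\Gamma^*$ --- which with this $\M$ amounts to verifying every condition of Definition~\ref{def_I_s_model} together with (ef) and (eb): the biconditional conditions ((pr),($\vee_1$),(eq),\ldots) use (i)--(ii), $(\forall_\top)$ uses the generic constants~(iv), $(\forall_e')$ and $(\exists_e)$ use~(iii), the typing conditions use $\to_i,\to_e,s_i,s_e,i_i^\iota,i_t^{\iota,k}$, and~(ef) uses Lemma~\ref{lem_I_s_ef_condition}.

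The decisive obstacle is step~(iv): securing a generic constant $d_\alpha$ for every type while simultaneously keeping (i)--(iii) and $\varphi\notin\Gamma^*$, and then validating $(\forall_\top)$. The precise difficulty is exactly the feature that separates $\I_s$ from ordinary logic --- a term $\psi$ need not be typeable --- so the standard maneuver ``$\Gamma^*,d_\alpha:\alpha,\neg\psi[x/d_\alpha]$ inconsistent $\Rightarrow$ $\Gamma^*\proves\forall x:\alpha.\psi$'' fails, since it requires $\psi[x/d_\alpha]:\Prop$, which may be underivable; moreover the Church--Rosser and commutation machinery developed for the canonical model in Appendix~F does not obviously relativize to an arbitrary~$\Gamma$. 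It is not clear that a $\Gamma^*$ with all of (i)--(iv) always exists, which is why the statement is left as a conjecture. A reasonable intermediate target is completeness for the image of the translation $\transl{-}$ of CPRED$\omega$, where typeability is automatic; this would be a semantic analogue of Conjecture~\ref{conj_conservative_hol}, and for the first-order fragment it is already essentially Theorem~\ref{thm_embedding_01}.
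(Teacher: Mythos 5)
The paper offers no proof of this statement: it is stated explicitly as a conjecture, and immediately after it the author writes that no attempt is made to prove it because it is not needed for the consistency result. Your submission is therefore not in conflict with the paper --- you do not claim a proof either, but give a Henkin-style strategy and an honest account of where it breaks down. That account is sound as far as it goes: the usual maximal-consistent-set construction does need propositional completeness and generic witnesses, and in an illative system the step from ``$\Gamma^*,\neg\psi$ is inconsistent'' to ``$\Gamma^*\proves\psi$'' genuinely requires $\Gamma^*\proves\psi:\Prop$, which is not available for arbitrary terms; this is the same untypeability phenomenon that keeps Conjecture~\ref{conj_conservative_hol} open. Two small remarks: your use of rule~$e_b$ to collapse every $\gamma\in\Gamma^*$ to $[\top]$ is correct in $e\I_s$ (via $p_i$, $\supset_i$ and weakening), and your closing observation is accurate --- the paper's only completeness-flavoured result, Theorem~\ref{thm_embedding_01}, is obtained for the first-order fragment by instantiating the canonical model of Appendix~F with the universe of a countermodel, not by a Henkin construction, so even there the route is different from the one you sketch. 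Since neither you nor the paper proves the conjecture, there is nothing further to adjudicate.
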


We do not attempt to prove the above conjecture in this paper, as it
is not necessary for establishing consistency of~$e\I_s$, which is our
main concern here. We treat the semantics given merely as a technical
device in the consistency proof.

\newpage
\renewcommand{\thesection}{Appendix~\Alph{section}}
\tocless\section{Model Construction}\label{sec_construction}
\renewcommand{\thesection}{\Alph{section}}
\addcontentsline{toc}{section}{\thesection\hspace{0.5em} Model Construction}

In this appendix we construct a nontrivial $e\I_s$-model, thus
establishing consistency of the systems~$\I_s$ and~$e\I_s$. The
construction is parametrized by a set of constants~$\delta$. We will
use this construction in the next appendix to show a complete
translation of classical first-order logic into~$\I_s$ and~$e\I_s$.

The construction is an adaptation and extension of the one
from~\cite{Czajka2013Accepted} for the traditional illative
system~$\I_\omega$. The proof is perhaps a bit easier to understand,
because we do not have to deal with certain oddities of traditional
illative systems from~\cite{Czajka2013Accepted}, like e.g. the fact
that $t : \Prop$ is defined as equivalent to $(\lambda x \,.\, t) :
\Type$ where $x \notin FV(t)$, using the notation of the present
paper.

For two sets~$\tau_1$ and~$\tau_2$, we denote by $\tau_2^{\tau_1}$ the
set of all (set-theoretical) functions from~$\tau_1$
to~$\tau_2$. Formally, $f \in \tau_2^{\tau_1}$ is a subset of $\tau_1
\times \tau_2$ such that for every $f_1 \in \tau_1$ there is exactly
one~$f_2 \in \tau_2$ such that $\pair{f_1}{f_2} \in f$.

\begin{definition}\label{def_canonical} \rm
  The subset $\T_I$ of $\I_s$-terms is defined inductively by the
  following rule:
  \begin{itemize}
  \item for every $c_i^\iota \in \Cc$ of arity~$n_i$, if
    $t_1,\ldots,t_{n_i} \in \T_I$ then $c_i^\iota t_1 \ldots t_{n_i}
    \in \T_I$.
  \end{itemize}
  The set~$\Kc$ consists of unique fresh constants, one for each
  element of~$\T_I$. By
  $\bar{c}_i^\iota(\bar{t}_1,\ldots,\bar{t}_{n_i}) \in \Kc$ we denote
  the constant corresponding to $c_i^\iota t_1 \ldots t_{n_i} \in
  \Kc$. Each basic inductive type $\iota \in \Tc_I$ determines in an
  obvious way a subset of~$\T_I$ (the set of all terms generated by
  the constructors of~$\iota$, respecting argument types), and thus it
  determines a subset $\bar{\iota} \subseteq \Kc$. Let $\bar{\Tc}_I =
  \{\bar{\iota} \;|\; \iota \in \Tc_I\}$. Note that $\bigcup
  \bar{\Tc}_I = \Kc$ and $\bar{\iota_1} \cap \bar{\iota_2} =
  \emptyset$ if $\iota_1 \ne \iota_2$.

  Let $\Bool = \{\true, \false\}$ and $\Tc_0 = \{\delta, \Bool,
  \emptyset\} \cup \bar{\Tc}_I$, where~$\true$ and~$\false$ are fresh
  constants. We define~$\Tc_{n+1}$ as follows.
  \begin{itemize}
  \item If $\tau \in \Tc_n$ and $\tau' \subseteq \tau$ then $\tau' \in
    \Tc_{n+1}$.
  \item If $\tau_1, \tau_2 \in \Tc_n$ then $\tau_2^{\tau_1} \in
    \Tc_{n+1}$.
  \end{itemize}
  The set of \emph{types} is now defined by $\Tc =
  \bigcup_{n\in\Nbb}\Tc_n$.

  We define a notation $\tau^{(\tau_1,\ldots,\tau_n)}$ inductively as
  follows:
  \begin{itemize}
  \item $\tau^{(\tau_1)} = \tau^{\tau_1}$,
  \item $\tau^{(\tau_1,\tau_2,\ldots,\tau_n)} =
    (\tau^{(\tau_2,\ldots,\tau_n)})^{\tau_1}$.
  \end{itemize}

  We define the set of \emph{canonical constants} as $\Sigma = \delta
  \cup \Bool \cup \Kc \cup \Sigma_f$, where~$\Sigma_f$ contains a
  unique fresh constant for each function in $\bigcup \Tc$. We denote
  the function corresponding to a constant $c \in \Sigma_f$ by
  $\Fc(c)$. To save on notation we often confuse constants $c \in
  \Sigma_f$ with their corresponding functions, and write e.g. $c \in
  \tau$ instead of $\Fc(c) \in \tau$, for $\tau \in \Tc$. It is always
  clear from the context what we mean. Note that if $\Fc(c) \in
  \tau_2^{\tau_1}$ then~$\tau_1$ is uniquely determined. This is
  because~$\Fc(c)$ is a set-theoretical function, i.e. a set of pairs,
  so its domain is uniquely determined. Note also that if $\Fc(c) \in
  \tau_2^{\tau_1}$ and $\Fc(c) \in {\tau_2'}^{\tau_1}$, then $\Fc(c)
  \in {(\tau_2 \cap \tau_2')}^{\tau_1}$.
\end{definition}

Let $v, w \in \Sigma^*$. We write $v \sqsubseteq w$ if $w = vu$ for
some $u \in \Sigma^*$, i.e. when~$v$ is a prefix of~$w$. We use the
notation $v \sqsubset w$ when $v \sqsubseteq w$ and $v \ne w$.

\begin{definition} \rm
  A \emph{$\Sigma$-tree} $T$ is a set of strings over the
  alphabet~$\Sigma$, i.e. a subset of~$\Sigma^*$, satisfying the
  following conditions:
  \begin{itemize}
  \item if $w \in T$ and $v \sqsubseteq w$ then $v \in T$
    (prefix-closedness),
  \item $\sqsubseteq^{-1}$ is well-founded on~$T$ (no infinite
    branches).
  \end{itemize}
  A \emph{node} of a $\Sigma$-tree~$T$ is any $w \in T$. We say that a
  node $w \in T$ is a \emph{leaf} if there is no $w' \in T$ such that
  $w \sqsubset w'$. If $w \in T$ is not a leaf, then it is an
  \emph{internal node}. The \emph{root} of a $\Sigma$-tree is the
  empty string~$\epsilon$.

  We say that~$T_1$ is a \emph{subtree} of~$T_2$ if there exists $w
  \in T_2$ such that $T_1 = \{v \in \Sigma^* \;|\; wv \in T_2\}$.

  Note that a relation~$\le$, defined by $T_1 \le T_2$ iff~$T_1$ is a
  subtree of~$T_2$, is a well-founded partial order, because
  $\Sigma$-trees have no infinite branches. This allows us to perform
  \emph{induction on the structure of a $\Sigma$-tree}. We write $T_1
  < T_2$ if $T_1 \le T_2$ and $T_1 \ne T_2$.

  The \emph{height}~$h(T)$ of a $\Sigma$-tree~$T$ is an ordinal
  defined by induction on the structure of~$T$.
  \begin{itemize}
  \item If $T = \emptyset$ then $h(T) = 0$.
  \item If $T \ne \emptyset$ then $h(T) = \sup_{T' < T} (h(T') + 1)$.
  \end{itemize}
\end{definition}

\begin{definition} \rm
  The set of constants~$\Sigma^+$ is defined as
  \[
  \Sigma^+ = \Sigma \cup \Tc \cup \{\forall, \vee, \mathrm{Is},
  \mathrm{Subtype}, \mathrm{Fun}, \mathrm{Eq}, \mathrm{Cond},
  \mathrm{Choice}, \Type\} \cup \Cc \cup \Dc \cup \Oc
  \]
  where $\Cc$, $\Dc$ and $\Oc$ are the sets of constructors,
  destructors and tests from the definition of~$\I_s$. Note that for
  each $\tau \in \Tc$ we have $\tau \in \Sigma^+$ as a constant.

  We use~$V^+$ to denote a set of variables of cardinality at least
  the cardinality of~$\Sigma^*$.

  The set of \emph{operation symbols} $\Op$ is defined to
  contain the following:
  \begin{itemize}
  \item $\cdot \in \Op$,
  \item if $x \in V^+$ then $\lambda x \in \Op$,
  \item if $\tau \in \Tc$ and $\tau \ne \emptyset$ then $\All \tau \in
    \Op$ and $\Set \tau \in \Op$.
  \end{itemize}
\end{definition}

Intuitively, $\All \tau$ means ``for all elements of~$\tau$ satisfying
\ldots'', and $\Set \tau$ means ``a subtype of~$\tau$ consisting of
elements satisfying \ldots''. These will appear as node labels in a
$\Sigma$-tree representing a semantic term. A node labelled with
e.g. $\Set \tau$ will have a child corresponding to each element
of~$\tau$. The subtype represented by this node will consist of those
elements for which the corresponding child reduces to~$\true$.

\begin{definition} \rm
  A \emph{semantic term} is a pair $\pair{\Pos}{\kappa}$, where~$\Pos$ is a
  $\Sigma$-tree and $\kappa : \Pos \to \Op \cup \Sigma^+ \cup
  V^+$ is a function such that:
  \begin{itemize}
  \item $w \in \Pos$ is a leaf iff $\kappa(w) \in \Sigma^+ \cup V^+$,
  \item if $\kappa(w) = \lambda x$ then $w0 \in \Pos$ and $wc \notin \Pos$
    for $c \ne 0$,
  \item if $\kappa(w) = \cdot$ then $w0, w1 \in \Pos$ and $wc \notin
    \Pos$ for $c \notin \{0,1\}$,
  \item if $\kappa(w) \in \{ \All \tau, \Set \tau\}$ then $wc \in
    \Pos$ iff $c \in \tau$.
  \end{itemize}

  In other words, semantic terms are possibly infinitely branching
  well-founded trees, whose internal nodes are labelled with operation
  symbols~$\Op$, and leaves are labelled with constants
  from~$\Sigma^+$ or variables from~$V^+$.

  We usually denote semantic terms by $t$, $t_1$, $t_2$, etc. We
  write~$\Pos(t)$ for the underlying tree of~$t$, and~$\upos{t}{p}$
  instead of~$\kappa(p)$.

  The \emph{height} of a semantic term~$t$, denoted~$h(t)$, is the
  height of its associated $\Sigma$-tree. When we say that we perform
  \emph{induction on the structure of a semantic term} we mean
  induction on~$h(t)$. By induction on an ordinal~$\alpha$ and the
  structure of a semantic term we mean induction on
  pairs~$\pair{\alpha}{h(t)}$ ordered lexicographically.

  A \emph{position} in a semantic term~$t$ is a string $w \in
  \Pos(t)$. The \emph{subterm} of~$t$ at position~$p \in \Pos(t)$,
  denoted $\pos{t}{p}$, is a semantic term $\pair{\Pos'}{\kappa'}$
  where:
  \begin{itemize}
  \item $\Pos' = \{ w \in \Sigma^* \;|\; pw \in \Pos(t) \}$,
  \item $\kappa'(w) = \kappa(pw)$.
  \end{itemize}

  A variable $x \in V^+$ is \emph{free} in a semantic term~$t$ if
  there exists $p \in \Pos(t)$ such that $\upos{t}{p} = x$ and for no
  $p' \sqsubseteq p$ we have $\upos{t}{p} = \lambda x$. A variable is
  \emph{bound} if it is not free.

  We identify $\alpha$-equivalent semantic terms, i.e. ones differing
  only in the names of bound variables. We assume that no semantic
  term contains some variable both free and bound. We use the
  symbol~$\equiv$ for identity of semantic terms up to
  $\alpha$-equivalence.

  Substitution $t[x/t']$ for semantic terms is defined in an obvious
  way, avoiding variable capture. In other words, we adopt the
  convention that whenever we write a term of the form~$t[x/t']$ we
  assume that no free variables of~$t'$ become bound in~$t[x/t']$.
\end{definition}

In this section, when we speak of terms we mean semantic terms, unless
otherwise stated. We often use abbreviations for semantic terms of the
form $\lambda x \,.\, t$, $t_1 t_2$, $\lambda x
. (\mathrm{Fun}\,{x}\,{(t_1 t_2)})$, etc. The meaning of these
abbreviations is obvious.

\begin{definition} \rm
  A \emph{rewriting system}~$R$ is a set of pairs of semantic
  terms. We usually write $\trsrule{t_1}{t_2} \in R$ instead of
  $\pair{t_1}{t_2} \in R$. A term~$t$ is said to \emph{$R$-contract}
  to a term~$t'$ at position~$p$, denoted $t \contr_R^p t'$, if $p \in
  \Pos(t) \cap \Pos(t')$, the terms~$t$ and~$t'$ differ only in
  subterms at position~$p$, and there exists $\trsrule{t_1}{t_2} \in
  R$ such that $\pos{t}{p} \equiv t_1$ and $\pos{t'}{p} \equiv
  t_2$. We write $t \contr_R t'$ if $t \contr_R^p t'$ for some $p \in
  \Pos(t)$.

  For each ordinal~$\alpha$, we define two relations
  $\ipcontr_R^\alpha$ and $\succ_R^\alpha$ by induction on the
  ordinal~$\alpha$ and the structure of~$t$.
  \begin{itemize}
  \item[(a)] If $c \in \Sigma$ then $c \succ_R^\alpha c$.
  \item[(b)] If $c \in \tau_2^{\tau_1}$ and for all $c_1 \in \tau_1$
    there exists~$t'$ such that $t c_1 \udipreduces{<\alpha}{R} t'
    \succ_R^{<\alpha} \Fc(c)(c_1)$, then $t \succ_R^\alpha c$.
  \item[(1)] If $t \equiv t'$ or $t \contr_R^\epsilon t'$ then $t
    \ipcontr_R^\alpha t'$.
  \item[(2)] If $t_1 \ipcontr_R^\alpha t_1'$ and $t_2
    \ipcontr_R^\alpha t_2'$ then $t_1 t_2 \ipcontr_R^\alpha t_1'
    t_2'$.
  \item[(3)] If $t \ipcontr_R^\alpha t'$ then $\lambda x \,.\, t
    \ipcontr_R^\alpha \lambda x \,.\, t'$.
  \item[(4)] If $t_1 \ipcontr_R^{\alpha} t_1'$ and $t_2
    \ipcontr_R^{\alpha} t_2'$ then $(\lambda x \,.\, t_1) t_2
    \ipcontr_R^\alpha t_1'[x/t_2']$.
  \item[(5)] If $c \in \Sigma^+$, $n \in \Nbb$, $c t_1' \ldots t_n'
    \contr_R^\epsilon t$ and $t_i \ipcontr_R^\alpha t_i'$ for
    $i=1,\ldots,n$, then $c t_1 \ldots t_n \ipcontr_R^\alpha t$.
  \item[(6)] If $t \succ_R^{<\alpha} c_1 \in \tau_1$ and $c \in
    \tau_2^{\tau_1}$, then $c t \ipcontr_R^\alpha \Fc(c)(c_1)$.
  \item[(7)] If $t \succ_R^{<\alpha} c$ for some $c \in \tau \in \Tc$
    then $\Is{t}{\tau} \ipcontr_R^\alpha \true$.
  \item[(8)] If $\upos{t}{\epsilon} \equiv \upos{t'}{\epsilon} \equiv
    \All \tau$ or $\upos{t}{\epsilon} \equiv \upos{t'}{\epsilon}
    \equiv \Set \tau$, and for all $c \in \tau$ there exists~$t''$
    such that $\pos{t}{c} \ipcontr_R^\alpha t'' \udipreduces{<\alpha}{R}
    \pos{t'}{c}$, then $t \ipcontr_R^\alpha t'$.
  \end{itemize}
  The notation~$\ipcontr_R^{<\alpha}$ is an abbreviation
  for~$\bigcup_{\gamma<\alpha}\ipcontr_R^{\gamma}$, and
  $\udipreduces{<\alpha}{R}$ denotes the tran\-si\-tive-re\-flex\-ive
  closure of~$\ipcontr_R^{<\alpha}$. The notation~$\succ_R^{<\alpha}$
  is an abbreviation for~$\bigcup_{\gamma<\alpha}\succ_R^\gamma$.

  We define the relations~$\ipcontr_R$ and~$\succ_R$ as the smallest
  fixpoint of the above construction, i.e. by monotonicity of the
  definition there exists the least ordinal~$\zeta$ such
  that~$\ipcontr_R^\zeta \,=\, \ipcontr_R^{<\zeta}$ and $\succ_R^\zeta
  \,=\, \succ_R^{<\zeta}$, and we take~$\ipcontr_R \,=\,
  \ipcontr_R^\zeta$, $\succ_R \,=\, \succ_R^\zeta$. Note that~$\omega$
  steps do not suffice to reach the fixpoint. In fact, the
  ordinal~$\zeta$ will be quite large.

  We denote by~$\ipreduces_R$ the transitive-reflexive closure
  of~$\ipcontr_R$, and by~$\ipeqvred_R$ the
  transitive-reflexive-symmetric closure of~$\ipcontr_R$. The
  subscript is often dropped when obvious from the context.
\end{definition}

Notice that the relation~$\ipcontr_R$ encompasses an analogon of
$\beta$-reduction, regardless of what the rules of~$R$
are. Intuitively, the relation~$\ipcontr_R$ is a kind of parallel
reduction on semantic terms, parametrized by the rules of~$R$.

\begin{lemma}
  If $R \subseteq R'$ then $\ipcontr_R \,\subseteq\,\ipcontr_{R'}$ and
  $\succ_R \,\subseteq\,\succ_{R'}$.
\end{lemma}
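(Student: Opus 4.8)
The plan is to establish, by transfinite induction on the ordinal~$\alpha$ with a subsidiary induction on the height~$h(t)$ of the term~$t$ -- i.e.\ by induction on the lexicographic pair~$\langle\alpha,h(t)\rangle$, exactly the scheme used to define the relations -- that
\[
\ipcontr_R^\alpha \;\subseteq\; \ipcontr_{R'}^\alpha
\qquad\text{and}\qquad
\succ_R^\alpha \;\subseteq\; \succ_{R'}^\alpha
\]
for every ordinal~$\alpha$. From this the lemma follows at once: if~$\zeta$ is the fixpoint ordinal for~$R$, then $\ipcontr_R = \ipcontr_R^{\zeta} \subseteq \ipcontr_{R'}^{\zeta} \subseteq \ipcontr_{R'}$, where the last inclusion holds because the chain $(\ipcontr_{R'}^{\gamma})_\gamma$ is non-decreasing (each clause of the construction only adds pairs and refers to smaller ordinals positively) and eventually equals~$\ipcontr_{R'}$, so $\ipcontr_{R'}^{\gamma}\subseteq\ipcontr_{R'}$ for every~$\gamma$, in particular for~$\gamma=\zeta$. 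The argument for $\succ_R\subseteq\succ_{R'}$ is identical. Note that no assumption is made that the fixpoint ordinals for~$R$ and~$R'$ agree.

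The observation driving every case of the induction is that~$R$ enters the definitions of $\ipcontr_R^\alpha$ and $\succ_R^\alpha$ in exactly one way other than recursively: through the root contraction $\contr_R^\epsilon$ in clauses~(1) and~(5), and $t\contr_R^\epsilon t'$ holds iff $\langle t,t'\rangle\in R$ (up to $\alpha$-equivalence), so $R\subseteq R'$ gives $\contr_R^\epsilon\subseteq\contr_{R'}^\epsilon$ immediately. Now suppose $t\ipcontr_R^\alpha t'$ and inspect the clause justifying it. In clause~(1) the case $t\equiv t'$ is trivial and the case $t\contr_R^\epsilon t'$ is covered by the previous sentence. Clauses~(2), (3), (4) and~(5) refer to $\ipcontr_R^\alpha$ only on proper subterms (hence at strictly smaller~$h$, same~$\alpha$), so the subsidiary induction applies and the corresponding clause can be re-assembled for~$R'$; in~(5) one also uses $\contr_R^\epsilon\subseteq\contr_{R'}^\epsilon$. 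Clauses~(6) and~(7) refer to $\succ_R^{<\alpha}$, and clause~(8) to $\ipcontr_R^\alpha$ on the children $\pos{t}{c}$ (smaller~$h$) and to $\udipreduces{<\alpha}{R}$; for the $<\alpha$ parts the main induction hypothesis gives $\succ_R^{<\alpha}\subseteq\succ_{R'}^{<\alpha}$ and $\ipcontr_R^{<\alpha}\subseteq\ipcontr_{R'}^{<\alpha}$, hence $\udipreduces{<\alpha}{R}\subseteq\udipreduces{<\alpha}{R'}$ (the root labels $\All\tau$, $\Set\tau$ do not depend on~$R$). So in each case the premises of the same clause hold for~$R'$ and $t\ipcontr_{R'}^\alpha t'$. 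The two clauses for~$\succ$ are handled the same way: clause~(a) is immediate, and clause~(b) refers only to $\udipreduces{<\alpha}{R}$ and $\succ_R^{<\alpha}$, both covered by the main induction hypothesis.

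I do not expect a real obstacle: the whole point of the definition is that the negative references to~$\ipcontr_R$ were removed, so the construction is manifestly monotone in~$R$ and the proof is pure bookkeeping. The only two places needing a little care are (i) keeping the two layers of induction straight, so that appeals to ``$\ipcontr_R$ at smaller~$h$, same~$\alpha$'' and ``$\ipcontr_R$ or $\succ_R$ at smaller~$\alpha$'' are both legitimate uses of the induction hypothesis, and (ii) the final passage from $\ipcontr_R^\zeta$ to $\ipcontr_{R'}$, which must rest on $\ipcontr_{R'}^{\gamma}\subseteq\ipcontr_{R'}$ for all~$\gamma$ (monotonicity of the chain) rather than on any coincidence of fixpoint ordinals.
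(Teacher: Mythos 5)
Your proof is correct: the relations are manifestly monotone in the parameter $R$, since $R$ enters only positively through $\contr_R^\epsilon$ in clauses~(1) and~(5), and your two-layer induction on $\langle\alpha,h(t)\rangle$ mirrors exactly the induction by which the relations are defined. The paper states this lemma without proof, treating it as immediate, and your argument is precisely the routine verification it has in mind, including the correct care at the end about not assuming the fixpoint ordinals for $R$ and $R'$ coincide.
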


Intuitively, and very informally, $t \succ_R c$ is intended to hold
if~$c \in \tau \in \Tc$ is a ``canonical'' object which
``simulates''~$t$ in type~$\tau$. By~$c$ ``simulating''~$t$ in
type~$\tau$ we mean that~$c$ behaves in essentially the same way
as~$t$, modulo~$\ipcontr_R$, whenever a term of type~$\tau$ is
``expected''. Let us give some examples to elucidate what we mean by
this. For instance, let $c_1 \in \Nat^\Nat = \tau_1$, $c_2 \in
\delta^\delta = \tau_2$ be two constants such that $\Fc(c_1)(c) \equiv
c$ for all $c \in \Nat$ and $\Fc(c_2)(c) \equiv c$ for all $c \in
\delta$. Note that by condition~(6) in the definition of~$\ipcontr_R$
and the fact that $c_1 \succ_R c_1$ and $c_2 \succ_R c_2$ we have $c_1
c \ipcontr_R c$ for all $c \in \Nat$ and $c_2 c \ipcontr_R c$ for all
$c \in \delta$. Now we have both $\lambda x \,.\, x \succ_R c_1$ and
$\lambda x \,.\, x \succ_R c_2$, because $\lambda x \,.\, x$ behaves
exactly like~$c_1$ when given arguments of type~$\Nat$, and exactly
like~$c_2$ when given arguments of type~$\delta$. The condition~(6)
ensures that $\lambda x . x$ and~$c_1$ will be indistinguishable
wherever a term of type~$\tau_1$ is ``expected''. For instance, if $d
\in \delta^{\tau_1}$ then we have $d (\lambda x \,.\, x) \ipcontr_R
\Fc(d)(c_1)$. In fact, we will later prove that, for an appropriate
rewriting system~$R$, the conditions $t c \ipreduces_R t' \succ_R c'$
and $r \succ_R c$ imply the existence of~$t''$ such that $t r
\ipreduces_R t'' \succ_R c'$, where~$t$ is an arbitrary term.

Note that we may have $c_1 \succ_R c_2$ with $c_1, c_2 \in \Sigma$,
$c_1 \not\equiv c_2$ (i.e.~$c_1$ and~$c_2$ being two distinct
canonical constants), if there does not exist a single $\tau \in \Tc$
such that $c_1,c_2 \in \tau$. We will later show that this is not
possible, for a rewriting system~$R$ to be defined, if such a~$\tau
\in \Tc$ does exist.

We will build our model from equivalence classes of~$\ipeqvred_R$ on
semantic terms, for a certain rewriting system~$R$ to be defined
below. One of the main problems in the model construction is to ensure
that the condition~$(\forall_e)$ of Definition~\ref{def_I_s_model}
holds. The problem is that condition~$(\to_i)$ needs to be satisfied
as well, which means that we cannot know a priori which terms~$t$
should satisfy~$\Is{t}{\tau}$ for a function type~$\tau \in \Tc$,
because this must depend on the definition of~$\ipcontr_R$
for~$(\to_i)$ to hold. And we cannot use a conditional rule of the
form
\begin{itemize}
\item[] if for all~$t$ such that $\Is{t}{\tau} \ipreduces_R \true$ we
  have $t_1 t \ipreduces_R \true$ then $\forall\,\tau\,t_1 \ipcontr_R
  \true$
\end{itemize}
because the definition would not be monotone. Our solution is to
restrict quantification to canonical constants only, and to define the
relation~$\ipcontr_R$ in such a way as to ensure that for each
term~$t$ with $\Is{t}{\tau} \ipreduces_R \true$ there exist a canonical
constant $c \in \tau$ and a term~$t'$ such that $t \ipreduces_R t'
\succ c$.

\begin{definition} \rm
  Let $\chi$ be a choice function for the family of sets~$\Tc
  \setminus \{\emptyset\}$. We define a rewriting system~$R$ by the
  rules presented in Fig.~\ref{fig_1}.
  \begin{figure}[p]
  \begin{eqnarray*}
    \Eq{t_1}{t_2} &\to& \true \;\;\; \mathrm{if\ } t_1 \ipeqvred_R t_2 \\
    \forall \tau t &\to& t' \;\;\; \mathrm{where\ } \tau \in \Tc,\,
    \tau \ne \emptyset,\, \upos{t'}{\epsilon} \equiv \All \tau \mathrm{\ and\ }
    \pos{t'}{c} \equiv t c \mathrm{\ for\ } c \in \tau \\
    \mathrm{Subtype}\, \tau\, t &\to& t' \;\;\; \mathrm{where\ }
    \tau \in \Tc,\, \tau \ne \emptyset,\, \upos{t'}{\epsilon} \equiv \Set \tau
    \mathrm{\ and\ } \pos{t'}{c} \equiv t c \mathrm{\ for\ } c \in
    \tau \\
    \forall \emptyset t &\to& \true \\
    \mathrm{Subtype}\, \emptyset\, t &\to& \emptyset \\
    t &\to& \true \;\;\; \mathrm{if\ } \upos{t}{\epsilon} \equiv \All
    \tau \mathrm{\ and\ for\ all\ } c \in \tau \mathrm{\ we\ have\ } \pos{t}{c} \equiv \true \\
    t &\to& \false \;\;\; \mathrm{if\ } \upos{t}{\epsilon} \equiv \All \tau \mathrm{\ and\ there\ exists\ } c \in
    \tau \mathrm{\ such\ that\ } \pos{t}{c} \equiv \false \\
    t &\to& \tau' \;\;\; \mathrm{if\ } \upos{t}{\epsilon} \equiv \Set
    \tau \mathrm{\ and\ for\ all\ } c \in \tau  \mathrm{\ we\ have\ }
    \pos{t}{c} \in \{\true,\false\}, \\ &&
    \;\;\;\;\;\;\;\;\mathrm{\ and\ } \tau' = \{ c \in \tau \;|\; \pos{t}{c} \equiv \true \} \\
    \mathrm{Fun}\,\tau_1\, \tau_2 &\to& \tau_2^{\tau_1} \;\;\; \mathrm{for\ } \tau_1, \tau_2 \in \Tc \\
    \Cond{\true}{t_1}{t_2} &\to& t_1 \\
    \Cond{\false}{t_1}{t_2} &\to& t_2 \\
    \vee \true t &\to& \true \\
    \vee t \true &\to& \true \\
    \vee \false \false &\to& \false \\
    c_i^\iota \bar{t}_1 \ldots \bar{t}_{n_i} &\to& \bar{c}_i^\iota(\bar{t}_1, \ldots, \bar{t}_{n_i})
    \;\;\; \mathrm{if\ } \bar{t}_1, \ldots, \bar{t}_{n_i} \in \bar{\iota} \in \bar{\Tc}_I \\
    d_{i,j}^\iota (\bar{c}_i^\iota(\bar{t}_1,\ldots,\bar{t}_{n_i})) &\to& \bar{t}_j \\
    d_{i,j}^\iota (c_i^\iota t_1 \ldots t_{n_i}) &\to& t_j \\
    o_i^\iota (\bar{c}_i^\iota(\bar{t}_1, \ldots, \bar{t}_{n_i})) &\to& \true \\
    o_i^\iota (\bar{c}_k^\iota(\bar{t}_1, \ldots, \bar{t}_{n_k})) &\to& \false \;\;\; \mathrm{if\ } i \ne k \\
    o_i^\iota (c_i^\iota t_1 \ldots t_{n_i}) &\to& \true \\
    o_i^\iota (c_k^\iota t_1 \ldots t_{n_k}) &\to& \false \;\;\; \mathrm{if\ } i \ne k \\
    \mathrm{Choice}\, \tau &\to& \chi(\tau) \;\;\; \mathrm{if\ } \tau \in \Tc,\, \tau \ne \emptyset \\
    \Is{\tau}{\Type} &\to& \true \;\;\; \mathrm{for\ } \tau \in \Tc
  \end{eqnarray*}
  \caption{The rules of the rewriting system~$R$.}
  \label{fig_1}
  \end{figure}
\end{definition}

The above is a circular definition because the condition in the rule
for~$\mathrm{Eq}$ refers to the system~$R$. Note, however, that this
reference is positive. Formally, we may therefore define a progression
of rewrite systems~$R_\alpha$ consisting of the above rules, but each
using as~$R$ the system~$R_{<\alpha} =
\bigcup_{\gamma<\alpha}R_\gamma$. We note that $R_\alpha \subseteq
R_\beta$ for $\alpha \le \beta$ and take~$R$ to be the least fixpoint.

From now on~$R$ refers to the rewriting system defined above, and the
relations~$\contr$, $\ipcontr$, $\ipreduces$, etc. refer
to~$\contr_R$, $\ipcontr_R$, $\ipreduces_R$, etc. We write $t_1
\contr_\equiv^p t_2$ if $t_1 \contr^p t_2$ or $t_1 \equiv t_2$.

The three simple lemmas below follow by an easy inspection of the
definition of~$R$ and of~$\ipcontr$.

\begin{lemma}
  If $c \ipcontr t$ where $c \in \Sigma^+$, then $t \equiv c$ or $c
  \in \Cc$ has arity~$0$.
\end{lemma}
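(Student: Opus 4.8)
The plan is to read the answer straight off the inductive definition of $\ipcontr_R$ together with the explicit list of rules of $R$ in Fig.~\ref{fig_1}; essentially no induction is needed. A single constant $c \in \Sigma^+$ is a semantic term whose underlying tree is the one-node tree, with $\upos{c}{\epsilon} \equiv c$, so the only thing to check is which clauses in the definition of $\ipcontr_R^\alpha$ can have a conclusion of the form $c \ipcontr_R^\alpha t$ for such a $c$. Clauses (a) and (b) define $\succ_R$, not $\ipcontr_R$, and so are irrelevant. In clauses (2), (3), (4), (6), (7) and (8) the left-hand term of the conclusion is, respectively, an application $t_1 t_2$, an abstraction $\lambda x\,.\,t$, an application $(\lambda x\,.\,t_1)t_2$, an application $c t$, the application $\mathrm{Is}\,t\,\tau$, or a tree rooted at $\All\tau$ or $\Set\tau$; in each case the root is an operation symbol rather than a bare constant, so none of these clauses can apply. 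Hence a derivation of $c \ipcontr_R^\alpha t$ must use clause (1) or clause (5).

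Clause (1) gives either $t \equiv c$ --- the first desired alternative --- or $c \contr_R^\epsilon t$; clause (5) with $n=0$ likewise gives $c \contr_R^\epsilon t$, while for $n \ge 1$ its source term $c t_1 \ldots t_n$ is an application, hence not a bare constant. So it remains to identify all root contractions of a bare constant, i.e.\ all rules $\trsrule{l}{r} \in R$ with $l \equiv c$ a single constant. For this I would simply go through Fig.~\ref{fig_1}: every left-hand side except that of the constructor rule $c_i^\iota \bar{t}_1 \ldots \bar{t}_{n_i} \to \bar{c}_i^\iota(\bar{t}_1,\ldots,\bar{t}_{n_i})$ is a compound term (an application, or a tree rooted at $\All\tau$ or $\Set\tau$) and thus not a single constant; and the constructor rule has a single-constant left-hand side exactly when $n_i = 0$, in which case $l \equiv c_i^\iota$ is a constructor of arity $0$ and $r \equiv \bar{c}_i^\iota() \in \Kc$. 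Since $R$ is the least fixpoint of the progression $R_\alpha$, each stage of which consists of precisely the rule shapes in Fig.~\ref{fig_1} (along the progression only the side condition of the $\mathrm{Eq}$-rule changes, and its left-hand side stays the application $\Eq{t_1}{t_2}$), the same inspection is valid for $R$.

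Putting the two points together: if $c \ipcontr t$ with $c \in \Sigma^+$, then $c \ipcontr_R^\alpha t$ for some $\alpha$, the last step of the derivation uses clause (1) or (5), and therefore either $t \equiv c$, or $c$ root-contracts by a rule of $R$ whose left-hand side is $c$ --- which forces $c \in \Cc$ to be a constructor of arity $0$ (and in fact $t \equiv \bar{c}_i^\iota()$). I do not expect any real obstacle; the only point requiring a moment's care is the circularity in the definition of $R$ coming from the $\mathrm{Eq}$-rule, but since that circularity is positive and does not change the \emph{shape} of any left-hand side, it does not affect the argument.
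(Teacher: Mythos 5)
Your proof is correct and is exactly the argument the paper intends: the paper gives no explicit proof, stating only that the lemma "follows by an easy inspection of the definition of $R$ and of $\ipcontr$," and your case analysis of the clauses of $\ipcontr_R$ and of the left-hand sides in Fig.~\ref{fig_1} (isolating the nullary-constructor rule as the only single-constant redex, and noting that the positive circularity in the $\mathrm{Eq}$-rule does not change any left-hand-side shape) is precisely that inspection carried out in full.
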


\begin{lemma}\label{lem_contr_epsilon}
  If $c \in \Sigma^+$, there exist $s_1,\ldots,s_n$ such that $c s_1
  \ldots s_n$ is a left side of a rule in~$R$, and $c t_1 \ldots t_n
  \ipcontr^\alpha t$, then there exist $t_1',\ldots,t_n'$ such that
  $t_i \ipcontr^\alpha t_i'$ for $i=1,\ldots,n$ and $c t_1' \ldots
  t_n' \contr_\equiv^\epsilon t$.
\end{lemma}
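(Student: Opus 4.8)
The plan is to read the conclusion off the inductive definition of~$\ipcontr^\alpha$: a derivation of $c t_1 \ldots t_n \ipcontr^\alpha t$ ends with one of the clauses~(1)--(8), and I would proceed by case analysis on which. Several clauses are excluded by the shape of $c t_1 \ldots t_n$ alone (taking $n \geq 1$; the case $n = 0$ is immediate from the preceding lemma): it is an application, not a $\lambda$-abstraction (ruling out clause~(3)) and not an $\All\tau$- or $\Set\tau$-rooted term (ruling out clause~(8)); its function part $c t_1 \ldots t_{n-1}$ is likewise not a $\lambda$-abstraction (ruling out clause~(4)) and not a canonical function constant, since such constants own no rule in~$R$ whereas~$c$ does (ruling out clause~(6)). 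Clause~(1) gives $c t_1 \ldots t_n \equiv t$ or $c t_1 \ldots t_n \contr^\epsilon t$, so the statement holds with $t_i' \equiv t_i$; and clause~(5) is literally the conclusion, producing $t_i'$ with $t_i \ipcontr^\alpha t_i'$ and $c t_1' \ldots t_n' \contr^\epsilon t$.

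The one clause that does real work is~(2). If $c t_1 \ldots t_n \ipcontr^\alpha t$ is obtained from~(2), then $t \equiv u\, t_n'$ with $c t_1 \ldots t_{n-1} \ipcontr^\alpha u$ and $t_n \ipcontr^\alpha t_n'$, and I would first establish an auxiliary claim by induction on~$k \leq n-1$: whenever $c t_1 \ldots t_k \ipcontr^\alpha u$, then $u \equiv c t_1' \ldots t_k'$ for some $t_i'$ with $t_i \ipcontr^\alpha t_i'$. The key observation is that for $k < n$ the term $c t_1 \ldots t_k$ is \emph{not} a root redex -- every left side of~$R$ with head~$c$ has exactly~$n$ arguments, and clauses~(6) and~(7) require syntactic shapes already excluded -- so the only clauses applicable to $c t_1 \ldots t_k$ are clause~(1) in its reflexive form (giving $u \equiv c t_1 \ldots t_k$) and clause~(2), which peels off one more argument and lets the induction close (base case $k=0$: since here $c$ owns a rule of positive arity, $c \ipcontr^\alpha u$ forces $u \equiv c$ by the preceding lemma). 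Instantiating this with $k = n-1$ gives $u \equiv c t_1' \ldots t_{n-1}'$, hence $t \equiv c t_1' \ldots t_{n-1}' t_n' \equiv c t_1' \ldots t_n'$, i.e.\ $c t_1' \ldots t_n' \contr_\equiv^\epsilon t$ via the identity alternative. This closes every case except clause~(7).

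The step I expect to be the real obstacle is clause~(7): it produces $\Is{t_1}{\tau} \ipcontr^\alpha \true$ for any $\tau \in \Tc$ such that $t_1 \succ^{<\alpha} d \in \tau$, and this is the unique situation where the head ($\mathrm{Is}$) owns a rule yet the reduction to~$\true$ is not itself a root rewrite step -- the only $\mathrm{Is}$-rule is $\Is{\tau'}{\Type}\to\true$ with $\Type \notin \Tc$, and since $\tau$ is a constant the preceding lemma forces any $t_2'$ with $\tau \ipcontr^\alpha t_2'$ to be~$\tau$ itself, so $\Is{t_1'}{t_2'}$ can never be made a root redex. Consequently the statement is correct in the form in which it is used -- for the heads $c_i^\iota$, $d_{i,j}^\iota$, $o_i^\iota$, $\mathrm{Cond}$, $\vee$, $\mathrm{Fun}$, $\mathrm{Eq}$, $\forall$, $\mathrm{Subtype}$, $\mathrm{Choice}$, whose rules genuinely exhaust the matching $\ipcontr^\alpha$-steps through clauses~(1),~(2),~(5) -- and I would either read its hypothesis as tacitly excluding $c = \mathrm{Is}$ or prove that weaker form. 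Apart from pinning down this one point, the whole argument is exactly the routine inspection of Figure~\ref{fig_1} and of the clauses defining~$\ipcontr$ that the paper promises.
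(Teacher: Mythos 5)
Your case analysis is precisely the ``easy inspection'' the paper appeals to --- it states this lemma without a written proof --- and the steps check out: clauses~(3), (4), (8) are excluded by the shape of $c\,t_1\ldots t_n$, clause~(6) because no canonical function constant heads a rule of~$R$, clauses~(1) and~(5) give the conclusion outright, and your auxiliary induction for clause~(2) is sound because every head constant occurs in the left sides of Fig.~\ref{fig_1} with one fixed arity, so a proper prefix $c\,t_1\ldots t_k$ ($k<n$) can only reduce via the reflexive part of clause~(1) or via clause~(2), and the base case $c\ipcontr^\alpha u$ forces $u\equiv c$ since $c$ cannot then be a nullary constructor.

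Your point about clause~(7) is also correct, and it is a genuine (though harmless) imprecision in the statement rather than a defect of your argument. With $c\equiv\mathrm{Is}$ and $n=2$ the hypotheses are met, since $\Is{\tau}{\Type}$ is a left side; yet for $\alpha\ge 1$ one has $\Is{\true}{\Bool}\ipcontr^\alpha\true$ by clause~(7), while $\true$ and $\Bool$ reduce only to themselves and $\Is{\true}{\Bool}\not\contr^\epsilon\true$ because $\Bool\not\equiv\Type$. So the lemma must be read with $c\not\equiv\mathrm{Is}$, i.e.\ in the weaker form you prove. This costs nothing: the only place the paper invokes the lemma is in the proof of Lemma~\ref{lem_contr_stable} with a constructor head $c_i^\iota$, for which your analysis is complete.
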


\begin{lemma}\label{lem_orthogonal}
  If $t \contr^\epsilon t_1$ and $t \contr^\epsilon t_2$ then $t_1
  \equiv t_2$.
\end{lemma}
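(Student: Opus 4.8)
The plan is to prove the lemma by a direct case analysis on the rules of the rewriting system~$R$ in Fig.~\ref{fig_1}. Unfolding the definitions, $t \contr^\epsilon t'$ holds exactly when there is a rule $\trsrule{s}{s'} \in R$ with $t \equiv s$ and $t' \equiv s'$, so the statement says that $R$ is non-overlapping at the root: no semantic term is the left-hand side of two rules having non-$\equiv$ right-hand sides, and whenever a rule has a left-hand side of a given shape its right-hand side is pinned down up to~$\equiv$.

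First I would classify the rules by the label at the root of their left-hand sides. A semantic term has a single root label, so: if the root of $t$ is a leaf or is labelled $\lambda x$, then $t$ is not the left-hand side of any rule and the claim is vacuous; if it is labelled $\All \tau$, then only the two rules ``$t \to \true$'' and ``$t \to \false$'' from Fig.~\ref{fig_1} can apply, and these are mutually exclusive, since one requires every child of~$t$ to be $\equiv\true$ and the other some child $\equiv\false$, while $\true\not\equiv\false$; if it is labelled $\Set \tau$, only the rule ``$t \to \tau'$'' applies, and its right-hand side $\tau' = \{ c \in \tau \mid \pos{t}{c} \equiv \true\}$ is determined by~$t$. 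Otherwise the root of $t$ is $\cdot$; then $t$ matches a rule only if the leftmost leaf along its spine of $\cdot$-nodes is a constant $h \in \{\mathrm{Eq},\forall,\mathrm{Subtype},\mathrm{Fun},\mathrm{Cond},\vee,\mathrm{Choice},\mathrm{Is}\} \cup \Cc \cup \Dc \cup \Oc$, and then only rules with head $h$ are candidates. It thus suffices to inspect, group by group, the rules sharing a head.

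For each group I would check that the side conditions make the rules mutually exclusive, or that two applicable rules agree on the right-hand side. The $\forall$- and $\mathrm{Subtype}$-rules split according to $\tau\ne\emptyset$ versus $\tau=\emptyset$, and for $\trsrule{\forall\tau t}{t'}$ with $\tau\ne\emptyset$ the term $t'$ with $\upos{t'}{\epsilon}\equiv\All\tau$ and $\pos{t'}{c}\equiv tc$ is the unique such semantic term; the $\mathrm{Cond}$-rules are separated by whether the guard is $\equiv\true$ or $\equiv\false$; among the $\vee$-rules, $\vee\,\true\,\true$ matches the first two but both yield $\true$, whereas the $\false$-producing rule needs both arguments $\equiv\false$ and overlaps neither; the rules for $\mathrm{Eq}$, $\mathrm{Fun}$, $\mathrm{Choice}$ and $\Is{\tau}{\Type}$ are each the sole rule with their head and have right-hand sides ($\true$, $\tau_2^{\tau_1}$, $\chi(\tau)$, $\true$) fixed by the left-hand side. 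For the inductive-type rules the essential observation is that the canonical constants $\bar c_i^\iota(\ldots)\in\Kc$ are \emph{fresh}, hence distinct from the constructors of~$\Cc$ and from any compound term with root~$\cdot$: this keeps the two $d_{i,j}^\iota$-rules disjoint, separates the $o_i^\iota$-rules acting on a constant of~$\Kc$ from those acting on a term $c_k^\iota t_1\ldots t_{n_k}$, and the condition $i\ne k$ separates the $\true$- from the $\false$-producing instances within each pair.

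None of this is a real obstacle; the only points needing care are the rules whose right-hand side is described by a property — the $\forall$-, $\mathrm{Subtype}$- and $\Set\tau$-rules and ``$t\to\true$''/``$t\to\false$'' — where one must observe that the described semantic term is unique up to $\alpha$-equivalence, and the bookkeeping distinguishing $\Cc$ from $\Kc$ in the inductive-type rules. Everything else is immediate from the shapes of the left-hand sides; Lemma~\ref{lem_contr_epsilon} is not needed for this argument.
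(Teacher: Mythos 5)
Your proof is correct and is essentially the argument the paper intends: the paper dismisses this lemma as following "by an easy inspection of the definition of~$R$", and your case analysis — grouping rules by the root label and head constant of their left-hand sides, checking mutual exclusivity of side conditions, and noting that the described right-hand sides (including the tree terms for the $\forall$/$\mathrm{Subtype}$ rules and the fresh constants in~$\Kc$) are uniquely determined — is exactly that inspection carried out in full.
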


\begin{lemma}\label{lem_subst_1}
  The following conditions hold.
  \begin{itemize}
  \item If $t_1 \contr_{R_\beta}^{\epsilon} t_1'$ then $t_1[x/t_2]
    \contr_{R_\beta}^{\epsilon} t_1'[x/t_2]$.
  \item If $t_1 \ipcontr_{R_\beta}^{\alpha} t_1'$ then $t_1[x/t_2]
    \ipcontr_{R_\beta}^{\alpha} t_1'[x/t_2]$.
  \item If $t_1 \succ_{R_\beta}^\alpha c$ then $t_1[x/t_2]
    \succ_{R_\beta}^\alpha c$.
  \end{itemize}
\end{lemma}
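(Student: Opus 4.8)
The plan is to establish all three statements together by a nested transfinite induction that follows the layered definition of the relations involved: the outer layer is the stage~$\beta$ of the fixpoint construction of~$R$, and for a fixed~$\beta$ the assertions about $\ipcontr_{R_\beta}^\alpha$ and $\succ_{R_\beta}^\alpha$ are proved by a simultaneous induction on the pair $\pair{\alpha}{h(t_1)}$ ordered lexicographically, exactly as in their definition. The driving observation is that $[x/t_2]$ commutes with every construction used to build these relations, because no rewrite rule in Figure~\ref{fig_1} mentions a variable in a position that substitution could disturb: the two sides of each rule are built from constants, from the operation symbols $\All\tau$ and $\Set\tau$ (whose parameter $\tau \in \Tc$ is a constant), and from otherwise arbitrary subterms, so the set of rule instances is itself closed under substitution into both sides. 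At the start of the outer step for~$\beta$, the induction hypothesis gives the three properties for every $R_\gamma$ with $\gamma < \beta$; since stability under $[x/t_2]$ is inherited by reflexive, transitive and symmetric closures and by unions over $\gamma < \beta$, it also holds for $\ipcontr_{R_{<\beta}}$, $\ipreduces_{R_{<\beta}}$ and $\ipeqvred_{R_{<\beta}}$, which is precisely what is needed to handle the side condition of the $\mathrm{Eq}$-rule at stage~$\beta$.

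For the first bullet I would argue by inspection of the rules of~$R_\beta$: if $t_1 \equiv l$ and $t_1' \equiv r$ for some $\trsrule{l}{r} \in R_\beta$, I check in each case that $\trsrule{l[x/t_2]}{r[x/t_2]}$ is again an instance of the same rule schema. When both sides consist only of constants (the rules for constructors, destructors and tests on fully canonical arguments, the $\mathrm{Fun}$-rule, the $\mathrm{Choice}$-rule, and $\Is{\tau}{\Type}$) this is immediate, since $l[x/t_2] \equiv l$ and $r[x/t_2] \equiv r$. When the rule contains otherwise arbitrary subterms (the two $\mathrm{Cond}$-rules, the three $\vee$-rules, $d_{i,j}^\iota(c_i^\iota t_1 \ldots t_{n_i}) \to t_j$ and the corresponding $o_i^\iota$-rules) substitution distributes through the pattern. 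For the $\forall\tau t$- and $\mathrm{Subtype}\,\tau\,t$-rules and the three $\All\tau$/$\Set\tau$-collapse rules I use that substitution leaves the root operation symbol unchanged and, under the standing no-capture convention, commutes with passing to the child at position $c \in \tau$ (again a constant), i.e. $\pos{(l[x/t_2])}{c} \equiv (\pos{l}{c})[x/t_2]$; and for the $\mathrm{Eq}$-rule I invoke the stability of $\ipeqvred_{R_{<\beta}}$ noted above.

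For the second and third bullets I proceed by the simultaneous induction on $\pair{\alpha}{h(t_1)}$, splitting on the clause of the definition that was applied. Clauses~(a), (1)--(3), (5) and~(7), and the $\succ$-clause for canonical constants, are routine: they use the result on $\contr^\epsilon$ just proved (in~(1) and~(5)), the structural or ordinal induction hypothesis, and $c[x/t_2] \equiv c$ for constants. Clause~(4) requires the one nontrivial calculation: writing $t_1 \equiv (\lambda y.s_1)s_2$ and $t_1' \equiv s_1'[y/s_2']$, I may assume by $\alpha$-conversion that $y \ne x$ and $y \notin FV(t_2)$; the induction hypothesis gives $s_i[x/t_2] \ipcontr_{R_\beta}^\alpha s_i'[x/t_2]$, clause~(4) yields $(\lambda y.s_1[x/t_2])(s_2[x/t_2]) \ipcontr_{R_\beta}^\alpha (s_1'[x/t_2])[y/s_2'[x/t_2]]$, and the substitution-composition identity $(s_1'[x/t_2])[y/s_2'[x/t_2]] \equiv (s_1'[y/s_2'])[x/t_2]$ closes the case. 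Clauses~(6) and~(8) and the $\succ$-clause for function constants are dispatched by applying the ordinal/structural hypothesis to the relevant children (respectively to the test terms $t_1 c_1$), using once more that the positions $c \in \tau$ are constants, and using the ordinal induction hypothesis to push $[x/t_2]$ through the trailing $\udipreduces{<\alpha}{R}$-sequences.

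The main obstacle is essentially bookkeeping rather than a conceptual difficulty: the definitions of $\ipcontr$, $\succ$ and~$R$ refer to one another (the $\mathrm{Eq}$-rule of~$R$ mentions $\ipeqvred_R$, clauses of $\ipcontr$ mention $\succ$ at lower ordinals and vice versa), so one has to lay out the nested induction --- on~$\beta$, then on~$\alpha$, then on term height --- so that every appeal to an induction hypothesis is genuinely at a strictly smaller value of the appropriate parameter, and to be disciplined about $\alpha$-renaming and avoidance of variable capture in the $\lambda$- and $\All\tau/\Set\tau$-clauses. Once the induction is set up correctly, each individual case is a short verification along the lines sketched above.
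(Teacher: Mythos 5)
Your proposal is correct and follows essentially the same route as the paper: a lexicographic induction on $\langle \beta, \alpha, h(t_1)\rangle$, with the $\mathrm{Eq}$-rule handled via the induction hypothesis at smaller~$\beta$ and all remaining rule schemata and clauses closed under substitution by direct inspection. The only difference is one of detail --- you spell out clause~(4) and the substitution-composition identity, which the paper subsumes under ``all other cases follow easily''.
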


\begin{proof}
  Induction on triples~$\langle \beta, \alpha, h(t_1) \rangle$ ordered
  lexicographically.

  We first show that $t_1 \contr_{R_\beta}^{\epsilon} t_1'$ implies
  $t_1[x/t_2] \contr_{R_\beta}^{\epsilon} t_1'[x/t_2]$. The only
  non-obvious case is when $t_1 \equiv \Eq{r_1}{r_2}
  \contr_{R_\beta}^\epsilon \true \equiv t_1'$ by virtue of~$r_1
  \ipeqvred_{R_{<\beta}} r_2$. But then by the inductive hypothesis
  (for smaller~$\beta$) we obtain $r_1[x/t_2] \ipeqvred_{R_{<\beta}}
  r_2[x/t_2]$. This implies that $t_1[x/t_2] \equiv
  \Eq{r_1[x/t_2]}{r_2[x/t_2]} \contr_{R_\beta}^\epsilon \true \equiv
  t_1'[x/t_2]$.

  We show that $t_1 \ipcontr_{R_\beta}^{\alpha} t_1'$ implies
  $t_1[x/t_2] \ipcontr_{R_\beta}^{\alpha} t_1'[x/t_2]$. If $t_1 \equiv
  t_1'$ then this is obvious. If $t_1 \contr_{R_\beta}^\epsilon t_1'$
  then this follows from the previous condition. If $t_1 \equiv c r_1
  \ldots r_n \ipcontr_{R_\beta}^\alpha t_1'$, $c \in \Sigma^+$, $c
  r_1' \ldots r_n' \contr_{R_\beta}^\epsilon t_1'$, $r_i
  \ipcontr_{R_\beta}^{\alpha} r_i'$, then $r_i[x/t_2]
  \ipcontr_{R_\beta}^\alpha r_i'[x/t_2]$ by the inductive hypothesis
  (because $h(r_i) < h(t_1)$). As in the previous paragraph, it also
  follows from the IH that $c r_1'[x/t_2] \ldots r_n'[x/t_2]
  \contr_{R_\beta}^\epsilon t_1'[x/t_2]$. Therefore $t_1[x/t_2] \equiv
  c r_1[x/t_2] \ldots r_n[x/t_2] \ipcontr_{R_\beta}^\alpha
  t_1'[x/t_2]$. All other cases follow easily from the inductive
  hypothesis, and we omit them.

  Now we prove that $t_1 \succ_{R_\beta}^\alpha c$ implies $t_1[x/t_2]
  \succ_{R_\beta}^\alpha c$. If $t_1 \equiv c$ then this is
  obvious. Otherwise $c \in \tau_2^{\tau_1} \in \Tc$ and for all $c_1
  \in \tau_1$ there exists~$t_1'$ such that $t_1 c_1
  \udipreduces{<\alpha}{R_\beta} t_1' \succ_{R_\beta}^{<\alpha}
  \Fc(c)(c_1)$. But then by the IH we have $t_1[x/t_2] c_1
  \udipreduces{<\alpha}{R_\beta} t_1'[x/t_2] \succ_{R_\beta}^{<\alpha}
  \Fc(c)(c_1)$. Therefore $t_1[x/t_2] \succ_{R_\beta}^{\alpha} c$.
\end{proof}

\begin{lemma}\label{lem_subst_2}
  If $t_2 \ipcontr^{\alpha} t_2'$ then $t_1[x/t_2] \ipcontr^{\alpha}
  t_1[x/t_2']$.
\end{lemma}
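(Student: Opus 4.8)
The plan is to argue by induction on the structure of the semantic term~$t_1$ (that is, on $h(t_1)$), holding the ordinal~$\alpha$ and the terms~$t_2,t_2'$ fixed; here $\ipcontr^\alpha$ means $\ipcontr_R^\alpha$ for the fixed rewriting system~$R$. The observation that makes this easy — and that distinguishes it from Lemma~\ref{lem_subst_1} — is that $t_1$ itself is never $R$-contracted in this statement: we only have to propagate a reduction of~$t_2$ through the fixed context~$t_1$, so there is no need for the lexicographic product of ordinals used in Lemma~\ref{lem_subst_1}. Before starting the induction I would record, using the capture-avoidance convention for substitution on semantic terms (so that every bound variable of~$t_1$ is distinct from $FV(t_2)\cup FV(t_2')$), that substitution commutes with the term constructors: $(\lambda y\,.\,s)[x/t_2]\equiv\lambda y\,.\,(s[x/t_2])$, $(s_1 s_2)[x/t_2]\equiv(s_1[x/t_2])(s_2[x/t_2])$, and $\pos{t_1[x/t_2]}{c}\equiv(\pos{t_1}{c})[x/t_2]$ whenever $\upos{t_1}{\epsilon}\in\{\All\tau,\Set\tau\}$. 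In each case the relevant immediate subterm of~$t_1$ has height strictly below $h(t_1)$, so the induction hypothesis applies to it.

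The induction step is then a case analysis on $\upos{t_1}{\epsilon}$, the label at the root of~$t_1$. When $t_1\in\Sigma^+$ or $t_1$ is a variable other than~$x$, both $t_1[x/t_2]$ and $t_1[x/t_2']$ equal~$t_1$, so the claim is immediate from clause~(1) of the definition of $\ipcontr^\alpha$ (the case $t \equiv t'$); when $t_1\equiv x$ it is just the hypothesis. When the root is a $\lambda$ I would apply the induction hypothesis to the body and then clause~(3); when the root is~$\cdot$ I would apply the induction hypothesis to the two arguments and then clause~(2), and note that this case already subsumes any redex~$t_1$ (for $\beta$ or for a rule of~$R$), since clause~(2) applies to all applications, so the ``computational'' clauses~(4)--(7) are never invoked in this direction. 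When the root is $\All\tau$ or $\Set\tau$, the two terms share this root label, and for every $c\in\tau$ the induction hypothesis gives $(\pos{t_1}{c})[x/t_2]\ipcontr^\alpha(\pos{t_1}{c})[x/t_2']$; taking $t''\equiv(\pos{t_1}{c})[x/t_2']$ and using reflexivity of $\udipreduces{<\alpha}{R}$, clause~(8) delivers $t_1[x/t_2]\ipcontr^\alpha t_1[x/t_2']$.

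I do not expect a genuine obstacle here; the only points requiring a modicum of care are the bookkeeping for the capture-avoidance convention (so that substitution can legitimately be pushed under the binders and through the $\All\tau/\Set\tau$ nodes, keeping the induction on strictly smaller subterms) and the observation that the congruence clauses~(1)--(3) and~(8) exhaust all possible shapes of~$t_1$. If convenient, one could instead fold this statement into the simultaneous induction of Lemma~\ref{lem_subst_1}, but since $t_1$ is not reduced, the ordinal component of that induction is not needed.
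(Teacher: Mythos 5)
Your proof is correct and follows essentially the same route as the paper's: a structural induction on $t_1$ using only the congruence clauses (1), (2), (3), (8) of the definition of $\ipcontr^\alpha$, with the root cases you enumerate matching the paper's (the paper merely compresses the variable/constant and application/abstraction cases into ``obvious'' and ``similar fashion''). Your explicit remarks that clauses (4)--(7) are never needed and that clause (8) is applied with the reflexive step of $\udipreduces{<\alpha}{R}$ are accurate refinements of the same argument.
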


\begin{proof}
  Induction on the structure of~$t_1$. If $t_1 \equiv x$ or $x \notin
  FV(t_1)$ then this is obvious. If $\upos{t_1}{\epsilon} \in \{\All
  \tau, \Set \tau\}$ then by the inductive hypothesis
  $\pos{t_1}{c}[x/t_2] \ipcontr^{\alpha} \pos{t_1}{c}[x/t_2']$ for $c
  \in \tau$. Therefore $t_1[x/t_2] \ipcontr^\alpha t_1[x/t_2']$. Other
  cases follow directly from the inductive hypothesis in a similar
  fashion.
\end{proof}

\begin{lemma}\label{lem_subst}
  If $t_1 \ipcontr^\alpha t_1'$ and $t_2 \ipcontr^\alpha t_2'$ then
  $t_1[x/t_2] \ipcontr^{\alpha} t_1'[x/t_2']$.
\end{lemma}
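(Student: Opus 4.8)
The plan is to prove the lemma by induction on the structure of~$t_1$ (equivalently, on~$h(t_1)$), with~$\alpha$ held fixed, performing a case analysis on which clause~(1)--(8) of the definition of~$\ipcontr^\alpha$ was used to infer $t_1 \ipcontr^\alpha t_1'$. Throughout I would invoke Lemma~\ref{lem_subst_1} and Lemma~\ref{lem_subst_2} freely, together with the standard substitution-composition identity $(s[x/r])[y/u] \equiv (s[y/u])[x/r]$ valid when $y \notin FV(r)$, which applies here because bound variables are assumed fresh.

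The delicate point — and the one that shapes the whole argument — is that $\ipcontr^\alpha$ is a \emph{one-step} parallel relation, not a transitive one. One is tempted to argue $t_1[x/t_2] \ipcontr^\alpha t_1'[x/t_2] \ipcontr^\alpha t_1'[x/t_2']$, using Lemma~\ref{lem_subst_1} for the first step and Lemma~\ref{lem_subst_2} for the second, but that produces a two-step reduction rather than a single $\ipcontr^\alpha$. The remedy is to carry out any root contraction together with the reductions of the arguments/children simultaneously, which is exactly what clause~(5) (for terms headed by a constant in~$\Sigma^+$) and clause~(8) (for terms headed by~$\All\tau$ or~$\Set\tau$) are designed for.

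Concretely: in clause~(1) with $t_1 \equiv t_1'$ the conclusion is immediate from Lemma~\ref{lem_subst_2}; with $t_1 \contr^\epsilon t_1'$ I would split on the rule of~$R$ applied. For a rule whose left-hand side has the shape $c\,s_1\cdots s_n$ with $c \in \Sigma^+$ (every rule except the ones triggered by the root label being $\All\tau$ or $\Set\tau$), write $t_1 \equiv c\,u_1\cdots u_n$ and apply clause~(5) to $c\,(u_1[x/t_2])\cdots(u_n[x/t_2])$: the required root contraction $c\,(u_1[x/t_2'])\cdots(u_n[x/t_2']) \contr^\epsilon t_1'[x/t_2']$ is supplied by the first bullet of Lemma~\ref{lem_subst_1} (this is precisely where one checks, rule by rule, that each redex pattern survives substitution into its arguments — including the side condition $r_1 \ipeqvred r_2$ of the $\mathrm{Eq}$-rule), and the side reductions $u_i[x/t_2] \ipcontr^\alpha u_i[x/t_2']$ come from Lemma~\ref{lem_subst_2}. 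For the rules keyed on the root being $\All\tau$ or $\Set\tau$, the contractum $t_1'$ is a constant ($\true$, $\false$, or some $\tau' \in \Tc$), hence $t_1'[x/t_2'] \equiv t_1'$, and a single root contraction $t_1[x/t_2] \contr^\epsilon t_1'$ (again via Lemma~\ref{lem_subst_1}) finishes the case. Clauses~(2) and~(3) are routine: apply the induction hypothesis to the immediate subterms and reapply the same clause; clause~(4) is the same, plus the substitution-composition identity to rewrite $(r_1'[x/t_2'])[y/r_2'[x/t_2']]$ as $(r_1'[y/r_2'])[x/t_2']$. For clause~(5) itself, apply the induction hypothesis to each argument $u_i \ipcontr^\alpha u_i'$ to get $u_i[x/t_2] \ipcontr^\alpha u_i'[x/t_2']$, then reapply clause~(5), using the first bullet of Lemma~\ref{lem_subst_1} to turn $c\,u_1'\cdots u_n' \contr^\epsilon t_1'$ into $c\,(u_1'[x/t_2'])\cdots(u_n'[x/t_2']) \contr^\epsilon t_1'[x/t_2']$. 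For clauses~(6) and~(7) the contractum is a canonical constant (so the hypothesis on~$t_2$ is not even used): one only carries the witness $r \succ^{<\alpha} c_1$ over to $r[x/t_2] \succ^{<\alpha} c_1$ by the third bullet of Lemma~\ref{lem_subst_1}, and reinvokes clause~(6) resp.~(7). Finally, for clause~(8), for each $c \in \tau$ there is~$t''$ with $\pos{t_1}{c} \ipcontr^\alpha t'' \udipreduces{<\alpha}{} \pos{t_1'}{c}$; the induction hypothesis on $\pos{t_1}{c} \ipcontr^\alpha t''$ gives $\pos{t_1}{c}[x/t_2] \ipcontr^\alpha t''[x/t_2']$, and the second bullet of Lemma~\ref{lem_subst_1} applied stepwise along $t'' \udipreduces{<\alpha}{} \pos{t_1'}{c}$ gives $t''[x/t_2'] \udipreduces{<\alpha}{} \pos{t_1'}{c}[x/t_2']$; since substitution preserves the root labels $\All\tau$, $\Set\tau$ and since $\pos{t_1[x/t_2]}{c} \equiv \pos{t_1}{c}[x/t_2]$, clause~(8) yields the claim.

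I expect the only genuine obstacles to be bookkeeping ones: verifying rule by rule that every left-hand-side pattern of~$R$ remains a redex after substituting~$t_2'$ for~$x$ in its arguments (but this is the content of the first bullet of Lemma~\ref{lem_subst_1}, which may simply be cited), and keeping the single-step-versus-reflexive-transitive bookkeeping straight in clause~(8). Note that no decrease of~$\alpha$ is required: every reference to $\succ^{<\alpha}$ or to the closure $\udipreduces{<\alpha}{}$ is discharged directly by Lemma~\ref{lem_subst_1}, so the induction is purely on $h(t_1)$.
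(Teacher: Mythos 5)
Your proof is correct and follows essentially the same route as the paper's: the same case analysis on the clause of the definition of $\ipcontr^\alpha$ used to derive $t_1 \ipcontr^\alpha t_1'$, the same device of folding the root contraction and the argument reductions into a single parallel step via clauses~(5) and~(8) (rather than composing two steps), and the same discharge of the $\succ^{<\alpha}$ and $\uipreduces{<\alpha}$ side conditions through Lemma~\ref{lem_subst_1}. Your only deviation is the (correct) observation that the induction can be taken on $h(t_1)$ alone rather than on the pair $\langle \alpha, h(t_1) \rangle$ as the paper states, since the inductive hypothesis is never actually invoked at a smaller ordinal.
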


\begin{proof}
  Induction on~$\alpha$ and the structure of~$t_1$.

  If $t_1 \equiv t_1'$ then the claim follows from
  Lemma~\ref{lem_subst_2}. If $t_1 \contr^\epsilon t_1'$ then we
  consider possible forms of~$t_1$. Suppose $t_1 \equiv \Eq{r_1}{r_2}
  \contr^\epsilon \true \equiv t_1'$ by virtue of $r_1 \ipeqvred
  r_2$. By Lemma~\ref{lem_subst_1} we have $r_1[x/t_2] \ipeqvred
  r_2[x/t_2]$. Hence $t_1[x/t_2] \equiv \Eq{r_1[x/t_2]}{r_2[x/t_2]}
  \contr^\epsilon \true \equiv t_1'[x/t_2']$. Suppose $t_1 \equiv
  \forall \tau t \contr^\epsilon t_1'$ where $\tau \in \Tc$,
  $\upos{t_1'}{\epsilon} \equiv \All\tau$, and for all $c \in \tau$ we
  have $\pos{t_1'}{c} \equiv t c$. Then $t[x/t_2] \ipcontr^\alpha
  t[x/t_2']$ by the inductive hypothesis, and $\forall \tau
  (t[x/t_2']) \contr^\epsilon t_1'[x/t_2']$ by
  Lemma~\ref{lem_subst_1}. Hence by condition~(5) in the definition
  of~$\ipcontr^\alpha$ we obtain $t_1[x/t_2] \equiv \forall \tau
  (t[x/t_2]) \ipcontr^\alpha t_1'[x/t_2']$. Other cases are
  established in a similar manner.

  If $t_1 \equiv (\lambda y \,.\, r_1) r_2 \ipcontr^\alpha r_1'[y/r_2']
  \equiv t_1'$ where $x \not\equiv y$, $r_1 \ipcontr^\alpha r_1'$ and
  $r_2 \ipcontr^\alpha r_2'$, then by the inductive hypothesis
  $r_1[x/t_2] \ipcontr^\alpha r_1'[x/t_2']$ and $r_2[x/t_2]
  \ipcontr^\alpha r_2'[x/t_2']$. Recall that by our implicit
  assumption that in~$t_1[x/t_2]$ no free variables of~$t_2$ become
  bound, we have $y \notin FV(t_2)$, and hence $y \notin
  FV(t_2')$. Thus $t_1[x/t_2] \equiv (\lambda y \,.\, r_1[x/t_2])
  r_2[x/t_2] \ipcontr^\alpha r_1'[x/t_2'][y/(r_2'[x/t_2'])] \equiv
  (r_1'[y/r_2'])[x/t_2'] \equiv t_1'[x/t_2']$.

  If $t_1 \equiv c r_1 \ldots r_n \ipcontr^\alpha t_1'$ where $c \in
  \Sigma^+$, $r_i \ipcontr^\alpha r_i'$ for $i=1,\ldots,n$, and $c
  r_1' \ldots r_n' \contr^\epsilon t_1'$, then $c r_1'[x/t_2'] \ldots
  r_n'[x/t_2'] \contr^\epsilon t_1'[x/t_2']$ by
  Lemma~\ref{lem_subst_1} and $r_i[x/t_2] \ipcontr^\alpha
  r_i'[x/t_2']$ by the inductive hypothesis. We thus conclude
  $t_1[x/t_2] \equiv c r_1[x/t_2] \ldots r_n[x/t_2] \ipcontr^\alpha
  t_1'[x/t_2']$.

  If $t_1 \equiv \Is{t}{\tau} \ipcontr^\alpha \true \equiv t_1'$ where
  $t \succ^{<\alpha} c$ for some $c \in \tau$, then $t[x/t_2]
  \succ^{<\alpha} c$ by Lemma~\ref{lem_subst_1}. Therefore $t_1[x/t_2]
  \equiv \Is{t[x/t_2]}{\tau} \ipcontr^\alpha \true \equiv
  t_1'[x/t_2']$.

  If $\upos{t_1}{\epsilon} \equiv \upos{t_1'}{\epsilon} \equiv
  \All\tau$ and for all $c \in \tau$ there exists~$t_c$ such that
  $\pos{t_1}{c} \ipcontr^\alpha t_c \uipreduces{<\alpha}
  \pos{t_1'}{c}$, then $\pos{t_1}{c}[x/t_2] \ipcontr^\alpha
  t_c[x/t_2']$ by the inductive hypothesis. By Lemma~\ref{lem_subst_1}
  we obtain $t_c[x/t_2'] \uipreduces{<\alpha}
  \pos{t_1'}{c}[x/t_2']$. This implies that $t_1[x/t_2]
  \ipcontr^\alpha t_1'[x/t_2']$.

  Other cases follow by analogous proofs.
\end{proof}

\begin{lemma}\label{lem_contr_stable}
  If $c \in \Sigma^+$, $n \in \Nbb$, $c t_1 \ldots t_n \contr^\epsilon
  t$ and $t_i \ipcontr^\alpha t_i'$ for $i=1,\ldots,n$, then there
  exists~$t'$ such that $c t_1' \ldots t_n' \contr^\epsilon t'$ and $t
  \ipcontr^\alpha t'$.
\end{lemma}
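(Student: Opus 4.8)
The plan is to proceed by a case analysis on the rewrite rule of~$R$ whose left side matches $c\,t_1\,\ldots\,t_n$ (such a rule exists because $c\,t_1\,\ldots\,t_n \contr^\epsilon t$; and when $n=0$ the statement holds trivially with $t' \equiv t$). The organizing observation is that, in almost every rule of Figure~\ref{fig_1}, each argument slot of the left side is occupied either by an explicit constant of~$\Sigma^+$ or by an unconstrained pattern variable. If a slot holds a constant $d \in \Sigma^+$ that is not a nullary constructor --- e.g.\ $\true$, $\false$, $\Type$, $\emptyset$, a type $\tau \in \Tc$, or a canonical constant $\bar{c}(\ldots) \in \Kc$ --- then the preceding lemma (if $d \ipcontr s$ then $s \equiv d$ or $d \in \Cc$ has arity~$0$) forces the corresponding reduct to equal~$d$, so the same rule still matches $c\,t_1'\,\ldots\,t_n'$; a pattern variable imposes no constraint at all. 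Hence for the rules $\forall\emptyset\,s \to \true$, $\mathrm{Subtype}\,\emptyset\,s \to \emptyset$, $\mathrm{Fun}\,\tau_1\,\tau_2 \to \tau_2^{\tau_1}$, the two $\mathrm{Cond}$ rules, the three $\vee$-rules, $\mathrm{Choice}\,\tau \to \chi(\tau)$, $\Is{\tau}{\Type} \to \true$, the constructor-introduction rule, and the $\bar{c}$-variants $d_{i,j}^\iota(\bar{c}_i^\iota(\ldots)) \to \bar{t}_j$, $o_i^\iota(\bar{c}_i^\iota(\ldots)) \to \true$, $o_i^\iota(\bar{c}_k^\iota(\ldots)) \to \false$ of the destructor and test rules, one takes $t'$ to be the matching right side (with an argument replaced by its reduct in the $\mathrm{Cond}$ rules) and checks $t \ipcontr^\alpha t'$ immediately: either $t \equiv t'$, so clause~(1) of the definition of~$\ipcontr^\alpha$ applies, or $t$ is literally one of $t_1,\ldots,t_n$ and the needed reduction is a hypothesis.

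Four families of rules require slightly more. (i) For $\Eq{t_1}{t_2} \to \true$: since this rule lies in~$R$ we have $t_1 \ipeqvred t_2$, and from $t_i \ipcontr^\alpha t_i'$ we get $t_i \ipreduces t_i'$, so $t_1' \ipeqvred t_1 \ipeqvred t_2 \ipeqvred t_2'$; thus $\Eq{t_1'}{t_2'} \to \true$ is again a rule of~$R$, and we put $t' \equiv \true$. (ii)--(iii) For the destructor rule $d_{i,j}^\iota(c_i^\iota\,s_1 \ldots s_{n_i}) \to s_j$, and likewise for the test rules $o_i^\iota(c_i^\iota\,s_1 \ldots s_{n_i}) \to \true$ and $o_i^\iota(c_k^\iota\,s_1 \ldots s_{n_k}) \to \false$ ($i \ne k$): here $n = 1$ and the single argument is a constructor application, say $c_i^\iota\,s_1 \ldots s_{n_i}$, which $\ipcontr^\alpha$ to some~$u$. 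By Lemma~\ref{lem_contr_epsilon} (or, when $\bar{\iota} = \emptyset$ so that no rule has left side of the form $c_i^\iota\,\vec{r}$, by direct inspection of clauses~(1)--(8), noting that the spine $c_i^\iota$ has positive arity and so is never a root-redex), the term~$u$ is either $c_i^\iota\,s_1' \ldots s_{n_i}'$ with $s_l \ipcontr^\alpha s_l'$ for all~$l$, or a canonical constant $\bar{c}_i^\iota(s_1',\ldots,s_{n_i}')$ with $s_l \ipcontr^\alpha s_l'$ for all~$l$. In either subcase the relevant rule of~$R$ (its $\bar{c}$-variant in the second) fires on $d_{i,j}^\iota\,u$ (resp.\ on $o_i^\iota\,u$), yielding $d_{i,j}^\iota\,u \contr^\epsilon s_j'$ (resp.\ $\contr^\epsilon \true$ or $\contr^\epsilon \false$), and since $s_j \ipcontr^\alpha s_j'$ we take $t' \equiv s_j'$ (resp.\ $t' \equiv \true$ or $t' \equiv \false$). (iv) For $\forall\tau\,s \to t$ with $\tau \in \Tc$, $\tau \ne \emptyset$, and symmetrically for the $\mathrm{Subtype}\,\tau\,s$ rule: the first argument reduces only to~$\tau$, while $s \ipcontr^\alpha s'$ for some~$s'$, so $\forall\tau\,s' \contr^\epsilon t'$ where $t'$ is the tree rooted at~$\All\tau$ with $\pos{t'}{d} \equiv s'\,d$ for $d \in \tau$; then $t \ipcontr^\alpha t'$ follows from clause~(8), using for each $d \in \tau$ the witness $\pos{t'}{d}$ together with $\pos{t}{d} \equiv s\,d \ipcontr^\alpha s'\,d \equiv \pos{t'}{d}$, supplied by clause~(2).

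The step I expect to be the main obstacle is the destructor/test case~(ii)--(iii): one must be certain that reduction by~$\ipcontr^\alpha$ cannot turn the constructor-headed argument $c_i^\iota\,s_1 \ldots s_{n_i}$ into a term on which neither the $c_i^\iota$-form nor the $\bar{c}$-form of the rule fires. This is precisely the nontrivial half of Lemma~\ref{lem_contr_epsilon}, whose hypothesis --- that some rule of~$R$ has left side of the form $c_i^\iota\,\vec{r}$ --- holds whenever $\bar{\iota} \ne \emptyset$; the residual case $\bar{\iota} = \emptyset$ is handled by the direct clause-by-clause argument indicated above. Everything else reduces to bookkeeping over the finitely many rule schemes of Figure~\ref{fig_1}, which I would present as a terse enumeration of the cases sketched here.
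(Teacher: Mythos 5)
Your proof is correct and follows the same route as the paper's: a case analysis over the rules of Figure~\ref{fig_1}, with the $\mathrm{Eq}$-rule handled via closure of~$\ipeqvred$ under~$\ipcontr^\alpha$, the $\forall$/$\mathrm{Subtype}$ rules via clause~(8) of the definition of~$\ipcontr^\alpha$, the destructor/test rules via Lemma~\ref{lem_contr_epsilon}, and the remaining rules by observing that constants of~$\Sigma^+$ (other than nullary constructors) reduce only to themselves. You are in fact slightly more careful than the paper, which invokes Lemma~\ref{lem_contr_epsilon} without checking its hypothesis that some rule of~$R$ has left side of the form $c_i^\iota\,\vec{r}$ --- a hypothesis that can fail when the relevant inductive type is uninhabited --- whereas you patch that residual case by direct inspection of the clauses.
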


\begin{proof}
  If $c t_1 t_2 \equiv \Eq{t_1}{t_2} \contr^\epsilon \true$ then $t_1
  \ipeqvred t_2$. Since $\ipcontr^\alpha \,\subseteq\, \ipcontr$, we
  have $t_1' \ipeqvred t_2'$. Thus $\Eq{t_1'}{t_2'} \contr^\epsilon
  \true$.

  Suppose $c t_1 t_2 \equiv \forall \tau t_2 \contr^\epsilon t$
  where $\tau \in \Tc$, $\upos{t}{\epsilon} \equiv \All\tau$,
  $\pos{t}{c} \equiv t_2 c$ for $c \in \tau$. We have $\forall \tau
  t_2' \contr^\epsilon t'$ where $\upos{t'}{\epsilon} \equiv \All\tau$
  and $\pos{t'}{c} \equiv t_2' c$ for $c \in \tau$. Since $t_2 c
  \ipcontr^\alpha t_2' c$, by condition~(8) in the definition
  of~$\ipcontr^\alpha$ we conclude that $t \ipcontr^\alpha t'$.

  If $c t_1 \equiv d_{i,j}^\iota (c_i^\iota r_1 \ldots r_{n_i})
  \contr^\epsilon r_j \equiv t$ then $c_i^\iota r_1 \ldots r_{n_i}
  \ipcontr^\alpha t_1'$. By Lemma~\ref{lem_contr_epsilon} there
  exist~$r_1',\ldots,r_{n_i}'$ such that $r_k \ipcontr^\alpha r_k'$
  for $k=1,\ldots,n_i$ and $c_i^\iota r_1' \ldots r_{n_i}'
  \contr_\equiv^\epsilon t_1'$. If $c_i^\iota r_1' \ldots r_{n_i}'
  \equiv t_1'$ then $c t_1' \equiv d_{i,j}^\iota (c_i^\iota r_1'
  \ldots r_{n_i}') \contr^\epsilon r_j'$ and we are done. Otherwise
  $r_1',\ldots,r_{n_i}'\in\bar{\iota}$ and $t_1' \equiv
  \bar{c}_i^\iota(r_1',\ldots,r_{n_i}')$. Thus $c t_1' \equiv d_{i,j}
  (\bar{c}_i^\iota(r_1',\ldots,r_{n_i}')) \contr^\epsilon r_j'$.

  Other cases are trivial or follow by a similar proof.
\end{proof}

\begin{definition} \rm
  We say that two binary relations on terms~$\to_1$ and~$\to_2$
  \emph{commute} if $t_1 \to_1 t_1'$ and $t_2 \to_2 t_2'$ imply $t_1'
  \to_2^\equiv t_3$ and $t_2' \to_1^\equiv t_3$ for some term $t_3$,
  where $\to_i^\equiv$ is the reflexive closure of~$\to_i$.
\end{definition}

\begin{lemma}\label{lem_hindley_rosen}
  If $\ipcontr^{\alpha'}$ and $\ipcontr^{\beta'}$ commute for all
  $\alpha' \circ_1 \alpha$ and $\beta' \circ_2 \beta$, then
  $\uipreduces{\circ_1\alpha}$ and $\uipreduces{\circ_2\beta}$,
  $\ipcontr^{\circ_1\alpha}$ and $\uipreduces{\circ_2\beta}$, as well
  as $\uipreduces{\circ_1\alpha}$ and $\ipcontr^{\circ_2\beta}$,
  commute. Here $\circ_1,\circ_2 \in \{<, \le\}$ and
  $\ipcontr^{\le\gamma}\,=\,\ipcontr^\gamma$ for $\gamma \in
  \{\alpha,\beta\}$.
\end{lemma}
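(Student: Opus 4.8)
This is the Hindley--Rosen commutation lemma, transported to the ordinal-indexed parallel reductions~$\ipcontr^\gamma$; the argument is the usual diagram chase, and the only real care is keeping the ordinal indices in range. Write $\to_1$ for $\ipcontr^{\circ_1\alpha}$ and $\to_2$ for $\ipcontr^{\circ_2\beta}$, so that $\to_1\,=\,\bigcup_{\alpha'\circ_1\alpha}\ipcontr^{\alpha'}$, $\to_2\,=\,\bigcup_{\beta'\circ_2\beta}\ipcontr^{\beta'}$, and $\uipreduces{\circ_1\alpha}$, $\uipreduces{\circ_2\beta}$ are their reflexive--transitive closures. Unwinding the definition of commutation, the hypothesis is precisely: whenever $a\ipcontr^{\alpha'}b$ and $a\ipcontr^{\beta'}c$ with $\alpha'\circ_1\alpha$ and $\beta'\circ_2\beta$, there is a term~$d$ such that ($b\equiv d$ or $b\ipcontr^{\beta'}d$) and ($c\equiv d$ or $c\ipcontr^{\alpha'}d$). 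The key point is that the two ordinals produced are the \emph{same} $\alpha'$, $\beta'$ that were consumed, so each output leg is again a single-or-no $\to_2$-step, resp.\ $\to_1$-step; hence every tile assembled below stays inside $\to_1$ and $\to_2$ and no ordinal outside the admissible range is ever invoked. I also use that commutation is a symmetric relation on pairs of relations (transpose the square; the two reflexive closures play symmetric roles).

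The plan has two inductive steps. First I would prove a \emph{strip lemma}: a single $\to_1$-step commutes with $\uipreduces{\circ_2\beta}$, the output being a $\uipreduces{\circ_2\beta}$-reduction on one leg and a single-or-no $\to_1$-step on the other. This goes by induction on the length~$n$ of a reduction $t\to_2^n t''$; the case $n=0$ is immediate. For $n\ge1$ split $t\to_2 t'\to_2^{n-1}t''$ and apply the hypothesis to the peak at~$t$, obtaining~$t_3$ with $t_1\to_2^{\equiv}t_3$ and $t'\to_1^{\equiv}t_3$. If $t'\equiv t_3$, prepend $t_1\to_2^{\equiv}t'$ to the given reduction $t'\to_2^{n-1}t''$, the other leg being trivial; if $t'\to_1 t_3$, apply the induction hypothesis to the pair $t_3$, $t'\to_2^{n-1}t''$ to get~$t_4$ with $t_3\to_2^* t_4$ and $t''\to_1^{\equiv}t_4$, and prepend $t_1\to_2^{\equiv}t_3$. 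By the symmetry noted above, the same argument (roles exchanged) gives that a single $\to_2$-step commutes with $\uipreduces{\circ_1\alpha}$. These two facts are exactly the assertions that $\ipcontr^{\circ_1\alpha}$ and $\uipreduces{\circ_2\beta}$ commute, and that $\uipreduces{\circ_1\alpha}$ and $\ipcontr^{\circ_2\beta}$ commute.

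Second, I would commute the two closures by induction on the length~$m$ of a reduction $t\to_1^m t_1$, given $t\to_2^* t''$. For $m=0$ it is trivial; for $m\ge1$ split $t\to_1 t_0\to_1^{m-1}t_1$, apply the strip lemma to the step $t\to_1 t_0$ and the reduction $t\to_2^* t''$ to obtain~$t_3$ with $t_0\to_2^* t_3$ and $t''\to_1^{\equiv}t_3$, then apply the induction hypothesis to the pair $t_0\to_1^{m-1}t_1$, $t_0\to_2^* t_3$ to obtain~$t_4$ with $t_1\to_2^* t_4$ and $t_3\to_1^* t_4$; composing gives $t_1\to_2^* t_4$ and $t''\to_1^{\equiv}t_3\to_1^* t_4$, i.e.\ $t''\to_1^* t_4$. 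This is the commutation of $\uipreduces{\circ_1\alpha}$ with $\uipreduces{\circ_2\beta}$. I expect no genuine obstacle here; the only mildly delicate spots are the case distinction in the strip lemma (forced by the reflexive closures on the legs of the hypothesis) and keeping track of which reduction length the induction runs on at each stage.
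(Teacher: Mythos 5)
Your proof is correct and is exactly the "simple tiling argument similar to the proof of the Hindley--Rosen lemma" that the paper invokes without spelling out: a strip lemma by induction on the length of one reduction, then a second induction on the length of the other, with the case split forced by the reflexive closures on the output legs handled properly. The observation that each tile returns steps indexed by the \emph{same} admissible ordinals $\alpha'$, $\beta'$ it consumed is the right point to make explicit, since it is what keeps the whole diagram inside $\uipreduces{\circ_1\alpha}$ and $\uipreduces{\circ_2\beta}$.
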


\begin{proof}
  The proof is a simple tiling argument similar to the proof of the
  Hindley-Rosen lemma, see e.g. \cite[Chapter 3]{Barendregt1984}.
\end{proof}

\begin{lemma}\label{lem_commute}
  For all ordinals $\alpha$, $\beta$ the following
  conditions hold:
  \begin{itemize}
  \item[($\rmnum{1}$)] $\ipcontr^\alpha$ and $\ipcontr^\beta$ commute,
  \item[($\rmnum{2}$)] if $t \succ^\alpha c$ and $t \ipcontr^\beta t'$
    then $t' \succ^\alpha c$,
  \item[($\rmnum{3}$)] if $t \succ^\alpha c_1$, $t \succ^\beta c_2$
    and $c_1, c_2 \in \tau \in \Tc$ then $c_1 \equiv c_2$.
  \end{itemize}
\end{lemma}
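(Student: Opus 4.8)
The plan is to prove the three statements (i), (ii), (iii) \emph{simultaneously} by transfinite induction, the outer induction being on the pair of ordinals $\langle\alpha,\beta\rangle$ (treated as unordered, so that appeals that drop $\alpha$ or $\beta$ to a strictly smaller ordinal are legitimate) and, inside the argument for (i), a secondary induction on the structure of the term being reduced. I would use freely that all three statements are essentially symmetric in $\alpha$ and $\beta$. Two facts are invoked constantly: canonical constants (the elements of $\Sigma$, hence all members of $\bigcup\Tc$) are normal with respect to $\ipcontr$ --- by the first of the three simple lemmas above, since $\Sigma$ is disjoint from $\Cc$; and $\uipreduces{<\gamma}$ is confluent for every $\gamma$, which follows from commutation of $\ipcontr^{\gamma'}$ with $\ipcontr^{\gamma''}$ for $\gamma',\gamma''<\gamma$ --- supplied by the induction hypothesis for (i) --- via Lemma~\ref{lem_hindley_rosen}.

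For (i) I would fix $s$ with $s\ipcontr^\alpha u$ and $s\ipcontr^\beta v$ and case on which of the clauses (1)--(8) produced each contraction. When both contractions use the structural clauses (2), (3), (4), (5), (8), the standard parallel-reduction diamond closes the case: recurse into proper subterms, using Lemma~\ref{lem_subst} for the $\beta$-redex/substitution interaction, Lemma~\ref{lem_contr_epsilon} and Lemma~\ref{lem_contr_stable} when a root rule of $R$ fires beneath a head constant on one side, and Lemma~\ref{lem_orthogonal} when both are root contractions (covering the $\mathrm{Eq}$-, $\forall$-, $\mathrm{Subtype}$-, $\mathrm{Fun}$-, $\mathrm{Cond}$-, $\vee$-, and the $\All\tau$/$\Set\tau$-consuming rules). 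The genuinely new cases are those in which clause (6) or (7) fires: if $s\equiv c\,r\ipcontr^\alpha\Fc(c)(c_1)$ because $r\succ^{\gamma}c_1\in\tau_1$ with $\gamma<\alpha$ and $c\in\tau_2^{\tau_1}$, and the other side reduces $r$ to $r'$, I would transport $r\succ^{\gamma}c_1$ along $r\ipcontr^\beta r'$ by (ii) at $\langle\gamma,\beta\rangle$ and close on the common reduct $\Fc(c)(c_1)$ via one further application of clause (6); if the other side also fires (6), I would identify the two canonical constants reached by (iii) at $\langle\gamma,\gamma'\rangle$. The case $\Is{r}{\tau}\ipcontr^\alpha\true$ via clause (7) is handled the same way, transporting $r\succ^{\gamma}c$ by (ii) and landing the other reduct on $\true$ again by clause (7). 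One checks that in every sub-case the common reduct is $\true$, $\false$, some $\Fc(c)(c_1)$, a $\tau'\in\Tc$, or a structurally smaller parallel reduct, so no appeal to (i) at a non-smaller instance is incurred.

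For (ii), given $t\succ^\alpha c$ and $t\ipcontr^\beta t'$, I would case on the derivation of $t\succ^\alpha c$. Clause (a) is immediate: $t\equiv c\in\Sigma$ is normal, so $t'\equiv c$. For clause (b), with $c\in\tau_2^{\tau_1}$, I would take an arbitrary $c_1\in\tau_1$, note $t\,c_1\ipcontr^\beta t'\,c_1$, and apply confluence of $\uipreduces{<\alpha}$ against $\ipcontr^\beta$ (Hindley--Rosen together with the induction hypothesis for (i) at ordinals below $\langle\alpha,\beta\rangle$) to the witness $t\,c_1\uipreduces{<\alpha}t''\succ^{<\alpha}\Fc(c)(c_1)$; an application of (ii) at a $\succ$-ordinal strictly below $\alpha$ transports the $\succ$-statement to the resulting common reduct, yielding the witness required for $t'\succ^\alpha c$. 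For (iii), given $t\succ^\alpha c_1$, $t\succ^\beta c_2$ with $c_1,c_2\in\tau\in\Tc$, the subcase where both derivations are clause (a) is trivial. Otherwise one of them, say the second, is clause (b), so $\Fc(c_2)$ is a set-theoretic function whose domain $\sigma_1$ is uniquely determined; by the inductive build-up of $\Tc$ the type $\tau$ must be an iterated subset of a function space with domain $\sigma_1$, whence $\Fc(c_1)$ too has domain $\sigma_1$ and, for each $d\in\sigma_1$, the values $\Fc(c_1)(d)$ and $\Fc(c_2)(d)$ lie in one common type $\kappa_2\in\Tc$. Fixing such a $d$, I would extract from the two $\succ$-derivations reductions of $t\,d$ to terms $\succ$-ing $\Fc(c_1)(d)$ and $\Fc(c_2)(d)$ respectively (one of these may be the trivial clause-(6) step $c_1\,d\ipcontr\Fc(c_1)(d)$ when the first derivation is by (a)), take a common reduct by confluence, transport both $\succ$-statements to it by (ii), and invoke (iii) at the strictly smaller instance $\langle\gamma,\gamma'\rangle$ with $\gamma'<\beta$ to conclude $\Fc(c_1)(d)\equiv\Fc(c_2)(d)$; extensionality of set-theoretic functions then gives $c_1\equiv c_2$.

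The main obstacle is not any single case but verifying that the entire simultaneous induction is well-founded: every appeal to (i) must be to a proper subterm at the same ordinals or, when routed through Lemma~\ref{lem_hindley_rosen}, to strictly smaller ordinals; every appeal to (ii) and (iii) must drop a $\succ$-ordinal below $\alpha$ or $\beta$; and the circular reference of the $\mathrm{Eq}$-rule to $R$ must be seen to be harmless --- it is used only monotonically, reducts of $\ipeqvred$-related terms staying $\ipeqvred$-related, so an $\mathrm{Eq}$-redex commutes trivially with any reduction of its arguments. Making the case analysis for (i) genuinely exhaustive --- correctly enumerating which root rules of $R$ can apply beneath each head symbol, and handling the infinitely-branching $\All\tau$ and $\Set\tau$ nodes in tandem with the three rules that consume them --- is the laborious part, but once the inductive hypotheses are correctly lined up each individual case is routine.
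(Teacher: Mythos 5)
Your proposal is correct and follows essentially the same route as the paper's proof: a simultaneous transfinite induction over the ordinals together with the height of the term, a case analysis on the clauses (1)--(8) producing the two contractions, transport of $\succ$ along reductions via (ii) at smaller ordinals in the clause-(6)/(7) cases, uniqueness of simulating constants via (iii), and confluence below the current ordinals obtained from the inductive hypothesis through Lemma~\ref{lem_hindley_rosen}. The only cosmetic difference is your well-founded order (unordered pair of ordinals plus term height, versus the paper's lexicographic triple $\langle\alpha,\beta,h(t)\rangle$ with an explicit symmetry argument for the dual cases); both orderings support exactly the appeals you describe.
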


\begin{proof}
  The proof is by induction on triples $\langle \alpha, \beta, h(t)
  \rangle$ ordered lexicographically, where in condition~($\rmnum{1}$)
  the term~$t$ is such that $t \ipcontr^\alpha t_1$ and $t
  \ipcontr^\beta t_2$ for some~$t_1, t_2$. Together with
  condition~($\rmnum{2}$) we also prove its dual, i.e. the condition
  with~$\alpha$ and~$\beta$ exchanged. We only give a proof for the
  original condition, but it is easy to see that the dual condition
  follows by exactly the same proof but with~$\alpha$ and~$\beta$
  exchanged.

  We first show condition~($\rmnum{1}$). Assume $t \ipcontr^\alpha
  t_1$ and $t \ipcontr^\beta t_2$. We need to show that there
  exists~$t'$ such that $t_1 \ipcontr^\beta t'$ and $t_2
  \ipcontr^\alpha t'$. It is clear that it suffices to consider only
  the situations when $t \ipcontr^\alpha t_1$ follows by
  condition~($m$) and $t \ipcontr^\beta t_2$ follows by
  condition~($n$) in the definition of~$\ipcontr$ for $m \le n$,
  provided that we never use the inductive hypothesis with~$\beta$
  increased, which is easily verified to be the case. Indeed, then we
  may use exactly the same proofs, but with~$\alpha$ and~$\beta$
  exchanged, to handle the cases when $m > n$.

  If $t \equiv t_1$ or $t \equiv t_2$ then the claim is
  obvious. Suppose $t \ipcontr^\alpha t_1$ follows by condition~(1) in
  the definition of~$\ipcontr^\alpha$. Then $t \contr^\epsilon
  t_1$. By Lemma~\ref{lem_orthogonal} it is impossible that $t
  \ipcontr^\beta t_2$ follows by condition~(1) in the definition
  of~$\ipcontr^\beta$, unless $t_1 \equiv t_2$. Suppose that $t
  \ipcontr^\beta t_2$ follows by condition~(5). Then $t \equiv c r_1
  \ldots r_n \contr^\epsilon t_1$, $r_i \ipcontr^\beta r_i'$ and $c
  r_1' \ldots r_n' \contr^\epsilon t_2$. By
  Lemma~\ref{lem_contr_stable} there exists~$t'$ such that $t_1
  \ipcontr^\beta t'$ and $c r_1' \ldots r_n' \contr^\epsilon t'$. But
  by Lemma~\ref{lem_orthogonal} we have $t' \equiv t_2$. The only
  remaining possibility, when $t \ipcontr^\alpha t_1$ follows by
  condtion~(1), is that $t \ipcontr^\beta t_2$ follows by
  condition~(2). But then the claim follows from
  Lemma~\ref{lem_contr_stable}.

  Suppose $t \equiv r_1 r_2 \ipcontr^\alpha r_1' r_2' \equiv t_1$
  where $r_1 \ipcontr^\alpha r_1'$ and $r_2 \ipcontr^\alpha r_2'$. If
  $t \ipcontr^\beta t_2$ follows by condition~(2) then $t_2 \equiv
  r_1'' r_2''$ where $r_1 \ipcontr^\beta r_1''$ and $r_2
  \ipcontr^\beta r_2''$. By the inductive hypothesis (note that
  $h(r_1), h(r_2) < h(t)$) there exist $q_1$, $q_2$ such that $r_1'
  \ipcontr^\beta q_1$, $r_1'' \ipcontr^\alpha q_1$, $r_2'
  \ipcontr^\beta q_2$ and $r_2'' \ipcontr^\alpha q_2$. Thus $t_1
  \equiv r_1' r_2' \ipcontr^\beta q_1 q_2$ and $t_2 \equiv r_1'' r_2''
  \ipcontr^\alpha q_1 q_2$.

  It is not possible that $t \ipcontr^\beta t_2$ follows by
  condition~(3). If it follows by condition~(4) then $r_1 \equiv
  \lambda x \,.\, s_1$, $r_1' \equiv \lambda x \,.\, s_1'$, $s_1
  \ipcontr^\alpha s_1'$, and $t_2 \equiv s_1''[x/r_2'']$ where $s_1
  \ipcontr^\beta s_1''$, $r_2 \ipcontr^\beta r_2''$. By the inductive
  hypothesis there exist~$q_1$ and~$q_2$ such that $s_1'
  \ipcontr^\beta q_1$, $s_1'' \ipcontr^\alpha q_1$, $r_2'
  \ipcontr^\beta q_2$ and $r_2'' \ipcontr^\alpha q_2$. By
  condition~(4) in the definition of~$\ipcontr^\beta$ we have $t_1
  \equiv (\lambda x \,.\, s_1') r_2' \ipcontr^\beta q_1[x/q_2]$. By
  Lemma~\ref{lem_subst} we obtain $t_2 \equiv s_1''[x/r_2'']
  \ipcontr^\alpha q_1[x/q_2]$.

  If $t \ipcontr^\beta t_2$ follows by condition~(5) then $r_1 \equiv
  c s_1 \ldots s_n$, $s_i \ipcontr^\beta s_i''$, $r_2 \ipcontr^\beta
  r_2''$ and $c s_1'' \ldots s_n'' r_2'' \contr^\epsilon t_2$. By
  inspecting the definition of~$R$ we see that in this case $c q_1
  \ldots q_m \contr^\epsilon r_1'$ is not possible for any
  $q_1,\ldots,q_m$ and any $m \le n$. By inspecting the definition
  of~$\ipcontr^\alpha$ we thus see that $r_1 \ipcontr^\alpha r_1'$ is
  only possible when $r_1' \equiv c s_1' \ldots s_n'$ and $s_i
  \ipcontr^\alpha s_i'$. By the inductive hypothesis there exist
  $q_1,\ldots,q_{n+1}$ such that $s_i'' \ipcontr^\alpha q_i$, $s_i'
  \ipcontr^\beta q_i$ for $i=1,\ldots,n$, and $r_2'' \ipcontr^\alpha
  q_{n+1}$, $r_2' \ipcontr^\beta q_{n+1}$. Therefore $c s_1' \ldots
  s_n' r_2' \ipcontr^\beta c q_1 \ldots q_{n+1}$ and $c s_1'' \ldots
  s_n'' r_2'' \ipcontr^\alpha c q_1 \ldots q_{n+1}$. By
  Lemma~\ref{lem_contr_stable} there exists~$t'$ such that $t_2
  \ipcontr^\alpha t'$ and $c q_1 \ldots q_{n+1} \contr^\epsilon
  t'$. Hence by condition~(5) also $t_1 \equiv c s_1' \ldots s_n' r_2'
  \ipcontr^\beta t'$. See Fig.~\ref{fig_02}.

  \begin{figure}[ht]
    \centerline{
      \xymatrix{
        c s_1 \ldots s_n r_2 \ar@{=>}[r]^>>{\alpha}
        \ar@{=>}[d]^>>{\beta} & c s_1' \ldots
        s_n' r_2' \ar@{==>}[d]^>>{\beta} \\
        c s_1'' \ldots s_n'' r_2'' \ar@{==>}[r]^>>{\alpha}
        \ar@{->}[d]^>>{\epsilon} & c q_1 \ldots q_n q_{n+1} \ar@{-->}[d]^>>{\epsilon} \\
        t_2 \ar@{==>}[r]^>>{\alpha} & t'
      }
    }
    \caption{}
    \label{fig_02}
  \end{figure}

  If $t \ipcontr^\beta t_2$ follows by condition~(6) then $r_1 \equiv
  r_1' \equiv c \in \tau_2^{\tau_1}$, $r_2 \succ^{<\beta} c_1 \in
  \tau_1$ and $t_2 \equiv \Fc(c)(c_1)$. By part~($\rmnum{2}$) of the
  inductive hypothesis we conclude that $r_2' \succ^{<\beta}
  c_1$. Therefore $t_1 \equiv c r_2' \ipcontr^\beta \Fc(c)(c_1) \equiv
  t_2$.

  If $t \ipcontr^\beta t_2$ follows by condition~(7) then $r_1 \equiv
  \mathrm{Is}\, s$, $r_2 \equiv \tau$, $r_1' \equiv \mathrm{Is}\, s'$,
  $s \ipcontr^\alpha s'$, $t_2 \equiv \true$ and $s \succ^{<\beta} c$
  for some $c \in \tau \in \Tc$. By part~($\rmnum{2}$) of the
  inductive hypothesis we have $s' \succ^{<\beta} c$. Therefore $t_1
  \equiv \Is{s'}{r_2} \ipcontr^\beta \true \equiv t_2$. It is easy to
  see that it is impossible that $t \ipcontr^\beta t_2$ follows by
  condition~(8).

  Now suppose that $t \ipcontr^\alpha t_1$ follows by
  condition~(3). Then $t \equiv \lambda x \,.\, r$ and $t_1 \equiv
  \lambda x \,.\, r_1$ where $r \ipcontr^\alpha r_1$. It is easy too
  see that the only possibility is that $t \ipcontr^\beta t_2$ follows
  by condition~(3) as well. Then $t_2 \equiv \lambda x \,.\, r_2$
  where $r \ipcontr^\beta r_2$. By the inductive hypothesis there
  exists~$q$ such that $r_1 \ipcontr^\beta q$ and $r_2 \ipcontr^\alpha
  q$. Therefore $t_1 \equiv \lambda x \,.\, r_1 \ipcontr^\beta \lambda
  x \,.\, q$ and $t_2 \equiv \lambda x \,.\, r_2 \ipcontr^\alpha
  \lambda x \,.\, q$.

  Suppose that $t \ipcontr^\alpha t_1$ follows by condition~(4). Then
  $t \equiv (\lambda x \,.\, r_1) r_2$ and $t_1 \equiv r_1'[x/r_2']$ where
  $r_1 \ipcontr^\alpha r_1'$ and $r_2 \ipcontr^\alpha r_2'$. It is
  easy to see that the only possibility is that $t \ipcontr^\beta t_2$
  follows by condition~(4) as well. Then $t_2 \equiv r_1''[x/r_2'']$
  where $r_1 \ipcontr^\beta r_1''$ and $r_2 \ipcontr^\beta r_2''$. By
  the inductive hypothesis there exist~$q_1$ and~$q_2$ such that $r_1'
  \ipcontr^\beta q_1$, $r_1'' \ipcontr^\alpha q_1$, $r_2'
  \ipcontr^\beta q_2$ and $r_2'' \ipcontr^\alpha q_2$. Therefore by
  Lemma~\ref{lem_subst} we obtain $t_1 \equiv r_1'[x/r_2']
  \ipcontr^\beta q_1[x/q_2]$ and $t_2 \equiv r_1''[x/r_2'']
  \ipcontr^\alpha q_1[x/q_2]$.

  Suppose that $t \ipcontr^\alpha t_1$ follows by condition~(5). Then
  $t \equiv c r_1 \ldots r_n$, $r_i \ipcontr^\alpha r_i'$ and $c r_1'
  \ldots r_n' \contr^\epsilon t_1$. If $t \ipcontr^\beta t_2$ also
  follows by condition~(5), then there exist $r_1'',\ldots,r_n''$ such
  that $r_i \ipcontr^\beta r_i''$ and $c r_1'' \ldots r_n''
  \contr^\epsilon t_2$. By the inductive hypothesis there exist
  $q_1,\ldots,q_n$ such that $r_i' \ipcontr^\beta q_i$ and $r_i''
  \ipcontr^\alpha q_i$. Therefore $c r_1' \ldots r_n' \ipcontr^\beta c
  q_1 \ldots q_n$ and $c r_1'' \ldots r_n'' \ipcontr^\alpha c q_1
  \ldots q_n$. By Lemma~\ref{lem_contr_stable} there exist~$t_1'$
  and~$t_2'$ such that $t_1 \ipcontr^\beta t_1'$, $t_2 \ipcontr^\alpha
  t_2'$, $c q_1 \ldots q_n \contr^\epsilon t_1'$ and $c q_1 \ldots q_n
  \contr^\epsilon t_2'$. But by Lemma~\ref{lem_orthogonal} we have
  $t_1' \equiv t_2'$. See Fig.~\ref{fig_03}.

  \begin{figure}[ht]
    \centerline{
      \xymatrix{
        c r_1 \ldots r_n \ar@{=>}[r]^>>{\alpha} \ar@{=>}[d]^>>{\beta}
        &
        c r_1' \ldots r_n' \ar@{==>}[d]^>>{\beta} \ar@{->}[r]^>>{\epsilon}
        &
        t_1 \ar@{==>}[d]^>>{\beta}
        \\
        c r_1'' \ldots r_n'' \ar@{==>}[r]^>>{\alpha}
        \ar@{->}[d]^>>{\epsilon}
        &
        c q_1 \ldots q_n \ar@{-->}[d]^>>{\epsilon} \ar@{-->}[r]^>>{\epsilon}
        &
        t_1'
        \\
        t_2 \ar@{==>}[r]^>>{\alpha}
        &
        t_2' \ar@3{-}[ru]
        &
      }
    }
    \caption{}
    \label{fig_03}
  \end{figure}

  It is easy to verify that it is not possible that $t \ipcontr^\beta
  t_2$ follows by condition~(6). If $t \ipcontr^\beta t_2$ follows by
  condition~(7), then we must have $c \equiv \mathrm{Is}$, $t_1 \equiv
  t_2 \equiv \true$. It is not possible that $t \ipcontr^\beta t_2$
  follows by condition~(8).

  Suppose $t \ipcontr^\alpha t_1$ follows by condition~(6). Then $t
  \equiv c r \ipcontr^\alpha \Fc(c)(c_1) \equiv t_2$ where $c \in
  \tau_2^{\tau_1}$ and $r \succ^{<\alpha} c_1 \in \tau_1$. It is
  easily verified that the only possibility is when $t \ipcontr^\beta
  t_2$ follows by condition~(6) as well. Then $t_2 \equiv
  \Fc(c)(c_1')$ and $r \succ^{<\beta} c_1'$ for some $c_1' \in
  \tau_1$. By part~($\rmnum{3}$) of the inductive hypothesis we obtain
  $c_1' \equiv c_1$. Hence $t_1 \equiv t_2$.

  Suppose $t \ipcontr^\alpha t_1$ follows by condition~(7). Then $t
  \equiv \Is{r}{\tau}$, $t_1 \equiv \true$, and $t \ipcontr^\beta t_2$
  may only follow by condition~(7). But then we have $t_2 \equiv \true
  \equiv t_1$.

  Finally, suppose $t \ipcontr^\alpha t_1$ follows by condition~(8)
  and so does $t \ipcontr^\beta t_2$. Then e.g. $\upos{t}{\epsilon}
  \equiv \upos{t_1}{\epsilon} \equiv \upos{t_2}{\epsilon} \equiv
  \All\tau$, and for all $c \in \tau$ there exist~$t_c$ and~$t_c'$
  such that $\pos{t}{c} \ipcontr^\alpha t_c \uipreduces{<\alpha}
  \pos{t_1}{c}$ and $\pos{t}{c} \ipcontr^\beta t_c'
  \uipreduces{<\beta} \pos{t_2}{c}$. By the inductive hypothesis there
  exists~$r$ such that $t_c \ipcontr^\beta r$ and $t_c'
  \ipcontr^\alpha r$. By the inductive hypothesis and
  Lemma~\ref{lem_hindley_rosen} there exist~$q_1$ and~$q_2$ such that
  $\pos{t_1}{c} \ipcontr^\beta q_1$, $r \uipreduces{<\alpha} q_1$,
  $\pos{t_2}{c} \ipcontr^\alpha q_2$ and $r \uipreduces{<\beta}
  q_2$. Again, by the inductive hypothesis and
  Lemma~\ref{lem_hindley_rosen} there exists~$q_c$ such that $q_1
  \uipreduces{<\beta} q_c$ and $q_2 \uipreduces{<\alpha} q_c$. Hence
  for all $c \in \tau$ there exists~$q_1$ such that $\pos{t_1}{c}
  \ipcontr^\beta q_1 \uipreduces{<\beta} q_c$, and for all $c \in
  \tau$ there exists~$q_2$ such that $\pos{t_2}{c} \ipcontr^\alpha q_2
  \uipreduces{<\alpha} q_c$. Let~$q$ be such that $\upos{q}{\epsilon}
  \equiv \All\tau$ and $\pos{q}{c} \equiv q_c$ for $c \in \tau$. By
  the above considerations we have $t_1 \ipcontr^\beta q$ and $t_2
  \ipcontr^\alpha q$. See Fig.~\ref{fig_04}.

  \begin{figure}[ht]
    \centerline{
      \xymatrix{
        \pos{t}{c} \ar@{=>}[r]^>>{\alpha} \ar@{=>}[d]^>>{\beta}
        &
        t_c \ar@{==>}[d]^>>{\beta} \ar@{=>}[r]^{*}^>>{<\alpha}
        &
        \pos{t_1}{c} \ar@{==>}[d]^>>{\beta}
        \\
        t_c' \ar@{==>}[r]^>>{\alpha}
        \ar@{=>}[d]^{*}^>>{<\beta}
        &
        r \ar@{==>}[d]^{*}^>>{<\beta} \ar@{==>}[r]^{*}^>>{<\alpha}
        &
        q_1 \ar@{==>}[d]^{*}^>>{<\beta}
        \\
        \pos{t_2}{c} \ar@{==>}[r]^>>{\alpha}
        &
        q_2 \ar@{==>}[r]^{*}^>>{<\alpha}
        &
        q_c
      }
    }
    \caption{}
    \label{fig_04}
  \end{figure}

  Now we show condition~($\rmnum{2}$). Thus suppose $t \succ^\alpha c$
  and $t \ipcontr^\beta t'$. If $t \equiv c$ then $t' \equiv c$ and
  thus $t' \succ^\alpha c$. Otherwise $c \in \tau_2^{\tau_1}$ and for
  all $c_1 \in \tau_1$ there exists~$r$ such that $t c_1
  \uipreduces{<\alpha} r \succ^{<\alpha} \Fc(c)(c_1)$. Then $t c_1
  \ipcontr^\beta t' c_1$ and we conclude by part~($\rmnum{1}$) of the
  inductive hypothesis and Lemma~\ref{lem_hindley_rosen} that there
  exists~$r'$ such that $r \ipcontr^\beta r'$ and $t' c_1
  \uipreduces{<\alpha} r'$. By part~($\rmnum{2}$) of the inductive
  hypothesis we obtain $r' \succ^{<\alpha} \Fc(c)(c_1)$. Thus for
  every $c_1 \in \tau_1$ there exists~$r'$ such that $t' c_1
  \uipreduces{<\alpha} r' \succ^{<\alpha} \Fc(c)(c_1)$. Hence $t'
  \succ^\alpha \Fc(c)(c_1)$. See Fig.~\ref{fig_05}.

  \begin{figure}[ht]
    \centerline{
      \xymatrix{
        t c_1 \ar@{=>}[d]^>>{\beta} \ar@{=>}[r]^{*}^>>{<\alpha}
        &
        r \ar@{==>}[d]^>>{\beta} \ar@0{}[r]|(0.4)*+{\succ^{<\alpha}}
        &
        \Fc(c)(c_1)
        \\
        t' c_1 \ar@{==>}[r]^{*}^>>{<\alpha}
        &
        r \ar@0{}[r]|(0.4)*+{\succ^{<\alpha}}
        &
        \Fc(c)(c_1)
      }
    }
    \caption{}
    \label{fig_05}
  \end{figure}

  It remains to show condition~($\rmnum{3}$). Thus suppose $t
  \succ^\alpha c_1$ and $t \succ^\beta c_2$ for $c_1,c_2 \in \tau \in
  \Tc$. If $\tau \subseteq \Bool$, $\tau \subseteq \delta$ or $\tau
  \subseteq \bar{\iota} \in \bar{\Tc}_I$ then $t \equiv c_1 \equiv
  c_2$ because in this case $t \succ^\alpha c_1$ and $t \succ^\beta
  c_2$ may only be obtained by condition~(a) in the definition
  of~$\succ$. Otherwise $\tau \subseteq \tau_2^{\tau_1}$ and e.g. $t
  \succ^\alpha c_1$ is obtained by condition~(b), hence $\alpha > 0$.

  If $c_1 \not\equiv c_2$ then there exists~$c \in \tau_1$ such that
  $\Fc(c_1)(c) \not\equiv \Fc(c_2)(c)$. There exists~$t_1$ such that
  $t c \uipreduces{<\alpha} t_1 \succ^{<\alpha} \Fc(c_1)(c)$. If $t
  \equiv c_2$ then by inspecting the definitions we see that this is
  only possible when $c_2 c \ipcontr^{\gamma} \Fc(c_2)(c')
  \succ^{<\alpha} \Fc(c_1)(c)$ where~$c \succ^{<\gamma} c' \in \tau_1$
  and $\gamma < \alpha$. By condition~(a) we have $c \succ^\beta
  c$. Since $c, c' \in \tau_1$ and $\gamma < \alpha$ we conlcude by
  part~($\rmnum{3}$) of the inductive hypothesis that $c \equiv
  c'$. Thus $\Fc(c_2)(c) \succ^{<\alpha} \Fc(c_1)(c)$. Obviously
  $\Fc(c_2)(c) \succ^\beta \Fc(c_2)(c)$ and $\Fc(c_2)(c), \Fc(c_1)(c)
  \in \tau_2$, so again by part~($\rmnum{3}$) of the inductive
  hypothesis we obtain~$\Fc(c_1)(c) \equiv
  \Fc(c_2)(c)$. Contradiction.

  Thus assume that $t \succ^\beta c_2$ also follows by condition~(b)
  in the definition of~$\succ$. Then there exists~$t_2$ such that $t c
  \uipreduces{<\beta} t_2 \succ^{<\beta} \Fc(c_2)(c)$. By
  part~($\rmnum{1}$) of the inductive hypothesis and
  Lemma~\ref{lem_hindley_rosen} there exists~$r$ such that $t_2
  \uipreduces{<\alpha} r$ and $t_1 \uipreduces{<\beta} r$. By
  part~($\rmnum{2}$) of the inductive hypothesis we have $r
  \succ^{<\alpha} \Fc(c_1)(c)$ and $r \succ^{<\beta} \Fc(c_2)(c)$. By
  part~($\rmnum{3}$) of the inductive hypothesis we obtain
  $\Fc(c_1)(c) \equiv \Fc(c_2)(c)$. Contradiction. See
  Fig.~\ref{fig_06}.

  \newcommand{\rotsucc}[1]{
    \begin{rotate}{90}
      \(\succ^{\begin{rotate}{-90}\(\mathsmaller{#1}\)\end{rotate}}\)
    \end{rotate}
  }

  \begin{figure}[ht]
    \centerline{
      \xymatrix{
        t c \ar@{=>}[d]^>>{\beta} \ar@{=>}[r]^{*}^>>{<\alpha}
        &
        t_1 \ar@{==>}[d]^>>{\beta} \ar@0{}[r]|(0.4)*+{\succ^{<\alpha}}
        &
        \Fc(c_1)(c)
        \\
        t_2 \ar@0{}[d]|(0.35)*+{\rotsucc{<\beta}} \ar@{==>}[r]^{*}^>>{<\alpha}
        &
        r \ar@0{}[d]|(0.35)*+{\rotsucc{<\beta}} \ar@0{}[r]|(0.4)*+{\succ^{<\alpha}}
        &
        \Fc(c_1)(c)
        \\
        \Fc(c_2)(c)
        &
        \Fc(c_2)(c) \ar@3{-}[ru]
        &
      }
    }
    \caption{}
    \label{fig_06}
  \end{figure}
\end{proof}

\begin{corollary}\label{cor_cr}
  The relation~$\ipcontr$ has the Church-Rosser property.
\end{corollary}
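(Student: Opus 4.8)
The plan is to read off the Church--Rosser property of~$\ipcontr$ directly from Lemma~\ref{lem_commute}(\rmnum{1}), which already does all the real work: it states that $\ipcontr^\alpha$ and $\ipcontr^\beta$ commute (in the sense of the preceding definition, i.e.\ a single step from each side can be joined by at most a single step from each side) for \emph{all} ordinals $\alpha,\beta$. Recall also that by construction $\ipcontr \,=\, \ipcontr^\zeta \,=\, \ipcontr^{<\zeta}$ for the fixpoint ordinal~$\zeta$, and that $\ipcontr^\gamma \,\subseteq\, \ipcontr$ for every~$\gamma$; hence every $\ipcontr$-step is a $\ipcontr^\gamma$-step for some ordinal~$\gamma$.

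First I would show that the reflexive closure of~$\ipcontr$ has the genuine diamond property. Suppose $t$ reduces in zero or one $\ipcontr$-step to each of $t_1$ and~$t_2$. If one of $t_1,t_2$ coincides with~$t$, the other one serves as the common reduct. Otherwise $t \ipcontr^\alpha t_1$ and $t \ipcontr^\beta t_2$ for suitable ordinals~$\alpha,\beta$, and Lemma~\ref{lem_commute}(\rmnum{1}) supplies a term~$t_3$ to which $t_1$ reaches by at most one $\ipcontr^\beta$-step and $t_2$ by at most one $\ipcontr^\alpha$-step; since these relations are contained in~$\ipcontr$, we get $t_1$ and~$t_2$ both reducing to~$t_3$ in at most one $\ipcontr$-step. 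This establishes the diamond property for the reflexive closure of~$\ipcontr$.

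Finally, a relation whose reflexive closure has the genuine diamond property has a confluent transitive--reflexive closure; this is the standard tiling/strip argument (as in \cite[Chapter~3]{Barendregt1984}): decompose $t \ipreduces t_1$ and $t \ipreduces t_2$ into finite chains of (reflexive) $\ipcontr$-steps and fill in the resulting finite grid of diamonds. Since the transitive--reflexive closure of~$\ipcontr$ together with identity is exactly~$\ipreduces$, it follows that $\ipreduces$ is confluent, which is precisely the Church--Rosser property of~$\ipcontr$. I do not anticipate any genuine obstacle: the entire difficulty — the simultaneous transfinite induction on $\langle\alpha,\beta,h(t)\rangle$ establishing commutation of~$\ipcontr^\alpha,\ipcontr^\beta$, stability of~$\succ^\alpha$ under~$\ipcontr^\beta$, and uniqueness of canonical simulants — has already been discharged in Lemma~\ref{lem_commute}, and the only care needed in the corollary is the routine bookkeeping of ordinal indices and of the reflexive closures occurring in the definition of ``commute''.
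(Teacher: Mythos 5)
Your proposal is correct and follows essentially the route the paper intends: the corollary is read off from Lemma~\ref{lem_commute}($\rmnum{1}$) (taking $\alpha=\beta=\zeta$, the fixpoint ordinal, so that $\ipcontr\,=\,\ipcontr^\zeta$) together with the tiling argument, which the paper has already packaged as Lemma~\ref{lem_hindley_rosen}. The only cosmetic difference is that you re-derive the strip/tiling step by hand rather than citing that lemma; note also that $\ipcontr$ is already reflexive by clause~(1) of its definition, so passing to the reflexive closure is vacuous.
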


\begin{definition} \rm
  The \emph{rank} of a type $\tau \in \Tc$, denoted $\rank(\tau)$, is
  the smallest $n \in \Nbb$ such that $\tau \in \Tc_n$. The
  \emph{canonical type} of a canonical constant~$c \in \Sigma$,
  denoted $\tau(c)$, is defined as follows.
  \begin{itemize}
  \item If $c \in \delta$ then $\tau(c) = \delta$.
  \item If $c \in \Bool$ then $\tau(c) = \Bool$.
  \item If $c \in \bar{\iota} \in \bar{\Tc}_I$ then $\tau(c) = \iota$.
  \item Otherwise let $\tau_2^{\tau_1} \in \Tc$ be such that $c \in
    \tau_2^{\tau_1}$ and $\rank(\tau_2) \le \rank(\tau_2')$ for every
    $\tau_2' \in \Tc$ such that $c \in {\tau_2'}^{\tau_1}$. Then
    $\tau(c) = \tau_2^{\tau_1}$. Note that there may be more than
    one~$\tau_2$ satisfying the above condition. In this case we
    arbitrarily choose one of them, and it does not matter which.
  \end{itemize}
  The rank of a canonical constant~$c$, denoted $\rank(c)$, is the
  rank of its canonical type.
\end{definition}

\begin{lemma}\label{lem_smaller_rank}
  The following conditions hold.
  \begin{itemize}
  \item For all $\tau_1, \tau_2 \in \Tc$ we have $\rank(\tau_1),
    \rank(\tau_2) < \rank(\tau_2^{\tau_1})$.
  \item If $c \in \tau$ then $\rank(c) \le \rank(\tau)$.
  \end{itemize}
\end{lemma}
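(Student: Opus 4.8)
The plan is to read off both statements from the way types enter the cumulative hierarchy $\Tc_0 \subseteq \Tc_1 \subseteq \cdots$, using one structural observation: a \emph{nonempty} function space $\tau_2^{\tau_1}$ determines the pair $(\tau_1,\tau_2)$ already as a set. Indeed, every element of $\tau_2^{\tau_1}$ is a function with domain $\tau_1$, so $\tau_1$ is uniquely determined; and if $\tau_1 \neq \emptyset$ then each constant function with value $b \in \tau_2$ lies in $\tau_2^{\tau_1}$, so the union of the ranges of the members of $\tau_2^{\tau_1}$ is exactly $\tau_2$. The other basic fact I will use is that no member of a $\Tc_0$-type is a set-theoretic function: the elements of $\delta$, $\Bool$ and each $\bar\iota$ are fresh constants, and $\emptyset$ has no elements.

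First I would establish, by induction on $k$, two auxiliary claims about a type $\sigma \in \Tc_k$. Claim (C): if $\sigma$ contains a function $f$ with domain $\sigma_1$, then $\rank(\sigma_1) < k$. Claim (D): if $F = \{\, f \in \sigma : f$ is a function with domain $\sigma_1 \,\}$ is nonempty, then there is $\sigma_2 \in \Tc$ with $\rank(\sigma_2) < k$ and $\bigcup_{f \in F} \mathrm{range}(f) \subseteq \sigma_2$. The case $k = 0$ is vacuous by the remark on $\Tc_0$. For $k+1$: if $\sigma \subseteq \sigma'$ with $\sigma' \in \Tc_k$, apply the induction hypothesis to $\sigma'$ (note $F_\sigma \subseteq F_{\sigma'}$); if $\sigma = \tau_2'^{\tau_1'}$ with $\tau_1',\tau_2' \in \Tc_k$, then any member of $F$ has domain $\tau_1'$, so $\sigma_1 = \tau_1'$ has rank $\le k$, $F = \sigma$, and every member of $F$ has range inside $\tau_2'$, so $\sigma_2 = \tau_2'$ works.

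Granting (C) and (D), the first bullet follows for nonempty $\tau_1, \tau_2$. Put $n = \rank(\tau_2^{\tau_1})$; since $\tau_2^{\tau_1}$ is nonempty and its members are functions, $n \ge 1$, and $\tau_2^{\tau_1} \in \Tc_n \setminus \Tc_{n-1}$ must arise either as a function space $\tau_2''^{\tau_1''}$ with $\tau_1'',\tau_2'' \in \Tc_{n-1}$ — and then the recovery observation gives $\tau_1 = \tau_1''$ and $\tau_2 = \tau_2''$, so $\rank(\tau_i) \le n-1 < n$ — or as a subset of some $\sigma \in \Tc_{n-1}$ — and then (C) applied to $\sigma$ gives $\rank(\tau_1) < n$, while (D) applied to $\sigma$ yields $\sigma_2 \in \Tc$ with $\rank(\sigma_2) < n-1$ and $\tau_2 = \bigcup_{f \in \tau_2^{\tau_1}}\mathrm{range}(f) \subseteq \sigma_2$, whence $\rank(\tau_2) \le \rank(\sigma_2) + 1 < n$. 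For the second bullet, if $c$ is not a function then $\rank(c) = 0 \le \rank(\tau)$; if $\Fc(c)$ is a function with domain $\tau_1$ and $c \in \tau$, then $\tau$ contains a function, so $\rank(\tau) \ge 1$; (C) gives $\rank(\tau_1) < \rank(\tau)$ and (D) gives $\sigma_2 \in \Tc$ with $\rank(\sigma_2) < \rank(\tau)$ and $\mathrm{range}(\Fc(c)) \subseteq \sigma_2$, so $c \in \sigma_2^{\tau_1}$; by the minimality clause defining the canonical type $\tau(c) = \tau_2^{\tau_1}$ we get $\rank(\tau_2) \le \rank(\sigma_2) < \rank(\tau)$, hence $\tau_1,\tau_2 \in \Tc_{\rank(\tau)-1}$ and so $\rank(c) = \rank(\tau_2^{\tau_1}) \le \rank(\tau)$.

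The hard part will be Claim (D): to push range information through the ``subset'' closure rule one needs the strengthening to a \emph{single} covering type $\sigma_2$ rather than one per element of $F$, and getting the induction to go through relies on the fact that in the function-space case all members of $\sigma$ share one domain. A secondary nuisance is the bookkeeping around empty component types, where a function space no longer determines its representation; the first bullet is to be read with $\tau_1,\tau_2$ nonempty, and the remaining degenerate cases are treated by hand.
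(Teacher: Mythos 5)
Your argument is correct, and it supplies precisely the content that the paper's own proof leaves implicit. The paper disposes of the first bullet with the bare assertion that $\tau_2^{\tau_1} \in \Tc_{n+1}$ implies $\tau_1, \tau_2 \in \Tc_n$; the only nontrivial case of that assertion is when the function space enters $\Tc_{n+1}$ via the \emph{subset} rule, and handling it is exactly what your claims (C) and (D) accomplish — including the strengthening in (D) to a single covering type $\sigma_2$, which is what lets the induction pass through nested subset steps. Likewise, for the second bullet the paper asserts without argument that $c \in \tau$ yields $\tau \subseteq \tau_3^{\tau_1}$ with $\rank(\tau) \ge \rank(\tau_3^{\tau_1})$ and $\rank(\tau_2) \le \rank(\tau_3)$; this is again (C) plus (D) plus the minimality clause in the definition of the canonical type, exactly as you argue. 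You are also right that the degenerate cases are a defect of the statement rather than of your proof: $\tau_2^{\emptyset} = \{\emptyset\}$ has rank $1$ whatever $\rank(\tau_2)$ is, and $\emptyset^{\tau_1} = \emptyset$ has rank $0$, so the first bullet is simply false there and cannot be ``treated by hand'' as your last sentence promises — the honest fix is to restrict it to nonempty $\tau_1, \tau_2$, which covers every actual use (e.g.\ in Lemma~\ref{lem_gg} the types in question always contain the constants $c_1$ and $\Fc(c)(c_1)$, so they are nonempty).
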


\begin{proof}
  For the first condition, note that $\rank(\tau_2^{\tau_1}) > 0$ and
  if $\tau_2^{\tau_1} \in \Tc_{n+1}$ then $\tau_1,\tau_2 \in \Tc_n$.

  If $\tau(c) \in \{\delta, \Bool\} \cup \bar{\Tc}_I$ then the second
  condition is obvious. Otherwise $\tau(c) = \tau_2^{\tau_1}$, and if
  $c \in \tau$ then $\tau \subseteq \tau_3^{\tau_1}$ with $\rank(\tau)
  \ge \rank(\tau_3^{\tau_1})$ and $\rank(\tau_2) \le
  \rank(\tau_3)$. Suppose $\rank(\tau_1) = n_1$, $\rank(\tau_2) =
  n_2$, $\rank(\tau_3) = n_3$. Then $n_2 \le n_3$, $\rank(c) =
  \max(n_1, n_2) + 1$ and $\rank(\tau) \ge \max(n_1, n_3) + 1$. Thus
  $\rank(c) \le \rank(\tau)$.
\end{proof}

\begin{definition} \rm
  We write $t \succsim c$ if $c \in \tau_2^{\tau_1}$ and for every
  $c_1 \in \tau_1$ there exists~$t_{c_1}$ such that for all~$t_1$ with
  $t_1 \succ c_1$ we have $t t_1 \ipreduces t_{c_1} \succ
  \Fc(c)(c_1)$. If for some $c_1 \in \tau_1$ there is more than one
  term~$t_{c_1}$ satisfying the above condition, then we fix one
  arbitrarily, but globally, i.e. given~$t$ and~$c$ such that $t
  \succsim c$ we assume that~$t_{c_1}$ is uniquely determined for
  each~$c_1 \in \tau_1$, and it depends only on~$t$, $c$
  and~$c_1$. Note that if $t \succsim c$ then $t \succ c$.

  Let $t \succ c$. The \emph{mutual rank} of~$t$ and~$c$, denoted
  $\rank(t,c)$, is defined by induction on~$\rank(c)$. If $t \equiv c$
  then $\rank(t,c) = 0$. If $t \not\succsim c$ then $\rank(t,c) =
  \rank(c)$. If $t \succsim c$ but $t \not\equiv c$ then $c \in
  \tau_2^{\tau_1}$ and $\rank(t,c)$ is defined by
  \[
  \rank(t,c) = \sup_{c_1 \in \tau_1} \rank(t_{c_1},\Fc(c)(c_1))
  \]
  where~$t_{c_1}$ is the term required by the definition
  of~$\succsim$, such that for all terms~$t_1$ with $t_1 \succ c_1$ we
  have $t t_1 \ipreduces t_{c_1} \succ \Fc(c)(c_1)$. Note that
  $\rank(t,c) \le \rank(c)$, and if $t \succsim c$ then $\rank(t,c) <
  \rank(c)$.

  Two positions $p_1,p_2 \in \Sigma^*$ are \emph{parallel} if neither
  $p_1 \sqsubseteq p_2$ nor $p_2 \sqsubseteq p_1$. We write $t_1 \gg^n
  t_2$ if there exists a set $P \subseteq \Pos(t_1) \cap \Pos(t_2)$ of
  pairwise parallel positions such that for $p \in P$ we have
  $\pos{t_1}{p} \succ \pos{t_2}{p}$ and $\pos{t_1}{p} \not\equiv
  \pos{t_2}{p}$, no free variables of~$\pos{t_1}{p}$ become bound
  in~$t_1$, for every $p \in \Pos(t_1) \setminus P$ we have $p \in
  \Pos(t_2)$ and $\upos{t_1}{p} \equiv \upos{t_2}{p}$, and
  $\rank_P(t_1,t_2) \le n$, where $\rank_P(t_1,t_2) = \sup_{p\in P}
  \rank(\pos{t_1}{p}, \pos{t_2}{p})$. We write $t_1 \gg^{<n} t_2$ if
  the same conditions hold except that $\rank_P(t_1,t_2) < n$. We
  write $t_1 \gg t_2$ if $t_1 \gg^n t_2$ for some $n \in \Nbb$.
\end{definition}

\begin{lemma}\label{lem_gg_subst}
  If $t_1 \gg^n t_1'$, $t_2 \gg^n t_2'$ and $x \notin
  FV(\pos{t_1}{p})$ for all $p \in P_1$, where~$P_1$ is the set of
  positions required by the definition of $t_1 \gg^n t_1'$, then
  $t_1[x/t_2] \gg^n t_1'[x/t_2']$.
\end{lemma}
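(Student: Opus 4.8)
The plan is to exhibit, explicitly, a set of positions witnessing $t_1[x/t_2] \gg^n t_1'[x/t_2']$, assembled from the witnessing set $P_1$ of $t_1 \gg^n t_1'$ and the witnessing set $P_2$ of $t_2 \gg^n t_2'$, and then to verify the clauses in the definition of $\gg^n$ one by one, each time appealing to $t_1 \gg^n t_1'$ or to $t_2 \gg^n t_2'$. First I would dispose of the degenerate case $x \notin FV(t_1)$: since the labels of $t_1$ and $t_1'$ agree off $P_1$, since the hypothesis gives $x \notin FV(\pos{t_1}{p})$ for $p \in P_1$, and since each $\pos{t_1'}{p}$ with $p \in P_1$ is a canonical constant and hence variable-free, we also get $x \notin FV(t_1')$; both substitutions are then vacuous and the claim is exactly $t_1 \gg^n t_1'$. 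So assume $x$ occurs free in $t_1$, and let $Q \subseteq \Pos(t_1)$ be the set of these (free) occurrences. Every $q \in Q$ is a leaf of $t_1$, so the elements of $Q$ are pairwise parallel; a variable never stands in the relation $\succ$ to a canonical constant (the subterm at $q$ is just the one-node term $x$, which simulates nothing), so $Q \cap P_1 = \emptyset$; and no $p \in P_1$ can be a proper prefix of a $q \in Q$, for then $x$ would be a free variable of $\pos{t_1}{p}$, against the hypothesis. Hence $P_1 \cup Q$ is a set of pairwise parallel positions, and since no $P_1$-position is a prefix of a $Q$-position the labels along every $q \in Q$ and all its ancestors coincide in $t_1$ and $t_1'$; consequently $Q$ is also precisely the set of free occurrences of $x$ in $t_1'$.

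I would then take as witness the set $P = P_1 \cup \{\, qp \mid q \in Q,\ p \in P_2 \,\}$ (juxtaposition denoting concatenation of position strings). Pairwise parallelism of $P$ follows from the observations above together with pairwise parallelism of $P_1$, $Q$, $P_2$ and the fact that $qp$ and $q'p'$ with $q \neq q'$ lie below the parallel leaves $q$, $q'$. For the discrepancy clauses: at $p \in P_1$, since $x \notin FV(\pos{t_1}{p})$ the substitution leaves $\pos{t_1}{p}$ untouched and, $\pos{t_1'}{p}$ being variable-free, leaves it untouched too, so $\pos{t_1[x/t_2]}{p} \equiv \pos{t_1}{p} \succ \pos{t_1'}{p} \equiv \pos{t_1'[x/t_2']}{p}$ (and still $\not\equiv$), with the very same mutual rank $\le n$, and no free variable of this subterm becomes bound because no substitution occurs above $p$ (the positions $Q$ are parallel to $p$); at a position $qp$ with $q \in Q$, $p \in P_2$, the occurrence of $x$ at $q$ is replaced by $t_2$ in $t_1[x/t_2]$ and by $t_2'$ in $t_1'[x/t_2']$, so $\pos{t_1[x/t_2]}{qp} \equiv \pos{t_2}{p} \succ \pos{t_2'}{p} \equiv \pos{t_1'[x/t_2']}{qp}$ (and $\not\equiv$), again with mutual rank $\le n$; the no-capture condition here holds because $FV(\pos{t_2}{p}) \subseteq FV(t_2)$ (that is the no-capture clause of $t_2 \gg^n t_2'$) and free variables of $t_2$ are not captured in $t_1[x/t_2]$ by the substitution convention. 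Taking suprema, $\rank_P(t_1[x/t_2], t_1'[x/t_2']) \le n$.

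It remains to check agreement off $P$. For a position $r$ of $t_1[x/t_2]$ that is neither in $P$ nor properly below a position of $P$, I would distinguish whether $r$ sits in the ``$t_1$-part'', that is, is a position of $t_1$ not in $Q$ (and hence not below any $q \in Q$, as the $q$'s are leaves of $t_1$), or inside an inserted copy of $t_2$, that is, is of the form $qs$ with $q \in Q$ and $s$ a position of $t_2$ that is neither in $P_2$ nor below $P_2$ (the case $s = \epsilon$ occurring only when the root is not in $P_2$). In the first case the label at $r$ is inherited, unchanged by the substitution, from $t_1$, and matched with $t_1'$ by $t_1 \gg^n t_1'$; in the second case it is inherited from $t_2$ and matched with $t_2'$ by $t_2 \gg^n t_2'$; and in both cases $r$ is seen to be a position of $t_1'[x/t_2']$ as well. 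This establishes $t_1[x/t_2] \gg^n t_1'[x/t_2']$ with witness $P$. (Alternatively the whole argument can be carried out by induction on the structure of $t_1$, but the explicit construction of $P$ seems cleaner.)

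The main obstacle I anticipate is purely the positional bookkeeping: keeping track of which positions of $t_1[x/t_2]$ descend from $t_1$ and which from the substituted copies of $t_2$, and --- in each of the relevant groups --- verifying the two ``no free variable becomes bound'' side conditions across the substitution, i.e.\ checking that the capture-avoidance convention for substitution of semantic terms interacts correctly with the capture-avoidance clause built into the definition of $\gg^n$. None of this is conceptually deep, but it is exactly the sort of place where an off-by-one error in the position calculus is easy to overlook.
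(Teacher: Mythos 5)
Your proposal is correct and follows exactly the paper's approach: the paper's entire proof consists of exhibiting the same witness set $P = P_1 \cup \{\, p_1p_2 \mid \pos{t_1}{p_1} \equiv x,\ p_2 \in P_2 \,\}$ and leaving the verification implicit. Your additional bookkeeping (parallelism, the fact that the $\pos{t_1'}{p}$ are canonical constants and hence variable-free, and the no-capture checks) is sound and simply fills in what the paper omits.
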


\begin{proof}
  Let $P_2$ be the set of positions required by the definition of $t_2
  \gg^n t_2'$. Take
  \[
  P = P_1 \cup \{ p \in \Pos(t_1[x/t_2]) \;|\; p = p_1p_2,\, \pos{t_1}{p_1} \equiv x,\, p_2 \in P_2 \}
  \]
  as the set of positions required by the definition of $t_1[x/t_2]
  \gg^n t_1'[x/t_2']$.
\end{proof}

\begin{lemma}\label{lem_gg}
  The following conditions hold.
  \begin{itemize}
  \item[($\rmnum{1}$)] If $t_1 \gg^n t_2 \ipcontr_{R_\beta}^\alpha
    t_2'$ then $t_1 \ipreduces t_1' \gg^n t_2'$.
  \item[($\rmnum{2}$)] If $t_1 \gg^n t_1'$ and $t_2
    \ipcontr_{R_\beta}^\alpha t_1'$ then there exists~$t_2'$ such that
    $t_2' \ipreduces t_1$ and $t_2' \gg^n t_2$.
  \item[($\rmnum{3}$)] If $t_1 \gg^n t_2 \succ_{R_\beta}^\alpha c$
    then $t_1 \succ c$.
  \end{itemize}
\end{lemma}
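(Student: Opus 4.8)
The plan is to prove (\rmnum{1}), (\rmnum{2}) and (\rmnum{3}) simultaneously by transfinite induction, the natural well-founded measure being the lexicographic tuple $\langle n,\beta,\alpha,h(t_1)+h(t_2)\rangle$ (the rank bound $n$ is the genuinely new component relative to Lemma~\ref{lem_commute} and Lemma~\ref{lem_subst_1}). Before the case analysis I would record three easy facts: $\gg^n$ is reflexive (take the empty set of distinguished positions); a canonical constant $c\in\Sigma$ can only $\ipcontr$-reduce to itself (by the inspection lemmas, since constructors are not canonical constants, so the distinguished subterms of $t_2$ never reduce); and $\gg^n$ is stable under appending a canonical-constant argument and under forming $\All\tau$/$\Set\tau$ nodes. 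These let me move freely between $t_1\gg^n t_2$ and $t_1 c_1\gg^n t_2 c_1$, between corresponding children of $\All\tau$-nodes, and so on. Together with Lemma~\ref{lem_gg_subst} and Lemma~\ref{lem_subst} I also have the tools to push $\gg^n$ across substitutions.

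For (\rmnum{3}), assume $t_1\gg^n t_2\succ^\alpha c$. If this holds by clause~(a) then $t_2\equiv c\in\Sigma$, and then $t_1\gg^n c$ (with distinguished set contained in $\{\epsilon\}$) forces either $t_1\equiv c$ or $\pos{t_1}{\epsilon}\succ c$; in both cases $t_1\succ c$. Otherwise $c\in\tau_2^{\tau_1}$ and for every $c_1\in\tau_1$ there is $t_2''$ with $t_2 c_1\udipreduces{<\alpha}{R_\beta}t_2''\succ^{<\alpha}\Fc(c)(c_1)$. Since $t_1 c_1\gg^n t_2 c_1$, I iterate part~(\rmnum{1}) of the induction hypothesis along this reduction (each step has ordinal $<\alpha$, so the measure drops) to obtain $t_1 c_1\ipreduces t_1''\gg^n t_2''$, and then apply part~(\rmnum{3}) at the strictly smaller ordinal to conclude $t_1''\succ\Fc(c)(c_1)$. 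As $c_1$ was arbitrary, $t_1\succ c$. Part~(\rmnum{1}) itself splits by the clause (1)--(8) deriving $t_2\ipcontr^\alpha t_2'$ crossed with the position of the contracted redex relative to the distinguished set $P$ of $t_1\gg^n t_2$: when the redex is disjoint from or strictly above every distinguished subterm without consuming one, I mirror the contraction in $t_1$ and keep $P$ intact, applying the induction hypothesis to the proper subterms, and using Lemma~\ref{lem_subst}/Lemma~\ref{lem_gg_subst} to transport $P$ across the substitution produced by clause~(4). One checks that no distinguished subterm of $t_2$ other than a canonical function constant can ever be \emph{the} thing consumed (simulators of non-function canonical constants are forced to be equal to them, so such positions are not distinguished), which leaves a single delicate case.

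The hard case is clause~(6), $c\cdot t\ipcontr^\alpha\Fc(c)(c_1)$ with $c\in\tau_2^{\tau_1}$, $t\succ^{<\alpha}c_1$, when the canonical function constant $c$ occupies a distinguished position $p\in P$, so $\pos{t_1}{p}\succ c$ with $\pos{t_1}{p}\not\equiv c$ and the $t_1$-side subterm is $\pos{t_1}{p}\cdot s$ where $s$ is the $t_1$-counterpart of $t$, hence $s\gg^n t\succ^{<\alpha}c_1$. First I use part~(\rmnum{3}) at the smaller ordinal to get $s\succ c_1$. If $\pos{t_1}{p}\succsim c$ I apply the definition of $\succsim$ directly (no inductive call) to reduce $\pos{t_1}{p}\cdot s$ to the canonical form associated with $c$ and $c_1$, whose mutual rank with $\Fc(c)(c_1)$ is $\le\rank(\pos{t_1}{p},c)\le n$; otherwise $\rank(\pos{t_1}{p},c)=\rank(c)=\rank(\tau_2^{\tau_1})\le n$, and I reduce instead via $\pos{t_1}{p}\cdot s\gg^{n''}\pos{t_1}{p}\cdot c_1$ with $n''=\rank(s,c_1)\le\rank(\tau_1)<\rank(\tau_2^{\tau_1})\le n$, so that part~(\rmnum{1}) and part~(\rmnum{3}) of the induction hypothesis apply at a strictly smaller rank bound and, since $\rank(\Fc(c)(c_1))\le\rank(\tau_2)<\rank(\tau_2^{\tau_1})\le n$ by Lemma~\ref{lem_smaller_rank}, the new distinguished position again has mutual rank $\le n$. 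This rank bookkeeping — in particular verifying that consuming a distinguished subterm, which is exactly the situation that forces a detour through the full relation $\succ$, strictly decreases $n$ — is the main obstacle; everything else is routine tiling. Finally, part~(\rmnum{2}) is obtained by running the same case analysis ``in reverse'', tracing how the contraction $t_2\ipcontr^\alpha t_1'$ could have created the canonical constants that $\gg^n$ strips off $t_1$, and recombining the resulting reductions with the help of Corollary~\ref{cor_cr} and Lemma~\ref{lem_hindley_rosen}; the base case $t_1\equiv t_1'$ is covered by reflexivity of $\gg^n$.
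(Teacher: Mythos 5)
Your treatment of parts (\rmnum{1}) and (\rmnum{3}) matches the paper's proof closely: the same simultaneous transfinite induction on $\langle n,\beta,\alpha,\text{height}\rangle$, the same identification of clause~(6) as the critical case, and the same split on whether the simulator at the distinguished position satisfies $\succsim$, with the correct rank bookkeeping via Lemma~\ref{lem_smaller_rank}. The gap is in part~(\rmnum{2}), which you dismiss as ``the same case analysis in reverse''. It is not symmetric, and the asymmetry bites exactly at clause~(6). There you are given $t_2 \equiv c\,r \ipcontr_{R_\beta}^\alpha \Fc(c)(c_1) \equiv t_1'$ and $t_1 \gg^n \Fc(c)(c_1)$ with $t_1 \not\equiv \Fc(c)(c_1)$, i.e.\ $t_1 \succ \Fc(c)(c_1)$, and you must produce $t_2'$ with $t_2' \ipreduces t_1$ and $t_2' \gg^n c\,r$. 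Since $c\,r$ is an application and its argument reduction is fixed, the only useful distinguished position is the head, so $t_2'$ must have the form $q\,r$ for some $q$ with $q \succ c$, $q \not\equiv c$, $\rank(q,c) \le n$, and $q\,r \ipreduces t_1$. No such $q$ is handed to you by reversing anything: the contraction manufactures the constant $\Fc(c)(c_1)$ out of the function table of $c$, while $t_1$ is an arbitrary simulator of that value. The paper has to \emph{construct} $q \equiv \lambda x \,.\, \Cond{(e\,c_1\,x)}{t_1}{(c\,x)}$, where $e \in (\Bool^{\tau_1})^{\tau_1}$ is a discriminator testing equality with $c_1$, and then verify $q \succsim c$ and $\rank(q,c) = \rank(t_1,\Fc(c)(c_1)) \le n$. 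That is a genuinely new idea, not a mirror image of part (\rmnum{1}), and your sketch does not contain it.

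The omission matters because part (\rmnum{2}) is not a dispensable symmetric companion: it is used inside part (\rmnum{1}) itself, in the case of the $\mathrm{Eq}$ rule, where $\Eq{r_1}{r_2} \contr^\epsilon \true$ holds by virtue of $r_1 \ipeqvred_{R_{<\beta}} r_2$ and you must lift the whole conversion $r_1 \ipeqvred r_2$ across $\gg^n$; the backward steps of that conversion require exactly (\rmnum{2}). So without the conditional-term construction the induction does not close.
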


\begin{proof}
  Induction on tuples $\langle n, \beta, \alpha, h(t_2) \rangle$
  ordered lexicographically.

  We first show condition~($\rmnum{1}$). Thus suppose $t_1 \gg^n t_2
  \ipcontr_{R_\beta}^\alpha t_2'$. We consider possible forms of~$t_2$
  according to the definition of~$\ipcontr$.

  If $t_2 \equiv t_2'$ then the claim is obvious. If $t_2
  \contr_{R_\beta}^\epsilon t_2'$ and $t_2 \ipcontr_{R_\beta}^\alpha
  t_2'$ follows by condition~(1) in the definition of~$\ipcontr$, then
  the only non-obvious case is when $t_2 \equiv \Eq{r_1}{r_2}
  \contr_{R_\beta}^\epsilon \true \equiv t_2'$. Then $r_1
  \ipeqvred_{R_{<\beta}} r_2$ and $t_1 \equiv \Eq{r_1'}{r_2'}$ where
  $r_1' \gg^n r_1$ and $r_2' \gg^n r_2$. It follows from
  parts~($\rmnum{1}$) and~($\rmnum{2}$) of the inductive hypothesis
  that $r_1' \ipeqvred r_2'$. Thus $t_1 \ipcontr \true \equiv t_2'$.

  If $t_2 \ipcontr_{R_\beta}^\alpha t_2'$ follows by condition~(2)
  then $t_2 \equiv r_1 r_2$, $t_2' \equiv r_1' r_2'$, $r_1
  \ipcontr_{R_\beta}^\alpha r_1'$ and $r_2 \ipcontr_{R_\beta}^\alpha
  r_2'$. We must also have $t_1 \equiv q_1 q_2$ where $q_1 \gg^n r_1$
  and $q_2 \gg^n r_2$. By the inductive hypothesis ($h(r_1), h(r_2) <
  h(t_2)$) there exist~$q_1'$ and~$q_2'$ such that $q_1 \ipreduces
  q_1'$, $q_2 \ipreduces q_2'$, $q_1' \gg^n r_1'$, $q_2' \gg^n
  r_2'$. Thus $t_1 \equiv q_1 q_2 \ipreduces q_1' q_2' \gg^n r_1' r_2'
  \equiv t_2'$. If $t_2 \ipcontr_{R_\beta}^\alpha t_2'$ follows by
  condition~(3) then the argument is analogous.

  Suppose $t_2 \ipcontr_{R_\beta}^\alpha t_2'$ follows by
  condition~(4). Then $t_2 \equiv (\lambda x \,.\, r_1) r_2$ and $t_2'
  \equiv r_1'[x/r_2']$ where $r_1 \ipcontr_{R_\beta}^\alpha r_1'$ and
  $r_2 \ipcontr_{R_\beta}^\alpha r_2'$. We must also have $t_1 \equiv
  (\lambda x \,.\, q_1) q_2$ where $q_1 \gg^n r_1$ and $q_2 \gg^n r_2$. By
  the inductive hypothesis there exist~$q_1'$ and~$q_2'$ such that
  $q_1 \ipreduces q_1'$, $q_2 \ipreduces q_2'$, $q_1' \gg^n r_1'$,
  $q_2' \gg^n r_2'$. By Lemma~\ref{lem_gg_subst} we obtain
  $q_1'[x/q_2'] \gg^n r_1'[x/r_2']$. Thus $t_1 \equiv (\lambda x
  \,.\, q_1) q_2 \ipreduces q_1'[x/q_2'] \gg^n r_1'[x/r_2'] \equiv t_2'$.

  Suppose $t_2 \ipcontr_{R_\beta}^\alpha t_2'$ follows by
  condition~(5). Then $t_2 \equiv c r_1 \ldots r_m$, $r_i
  \ipcontr_{R_\beta}^\alpha r_i'$, $c r_1' \ldots r_m'
  \contr_{R_\beta}^\epsilon t_2'$. By the definition of~$R_\beta$, the
  constant~$c$ is not a canonical constant. This implies that $t_1
  \equiv c q_1 \ldots q_m$ where $q_i \gg^n r_i$. By the inductive
  hypothesis ($h(r_i) < h(t_2)$) there exist~$q_1',\ldots,q_m'$ such
  that $q_i' \gg^n r_i'$ and $q_i \ipreduces q_i'$. Thus $c q_1'
  \ldots q_m' \gg^n c r_1' \ldots r_m' \contr_{R_\beta}^\epsilon
  t_2'$. But we have already verified in this inductive step that this
  implies that there exists~$t_1'$ such that $c q_1' \ldots q_m'
  \ipreduces t_1' \gg^n t_2'$. Therefore $t_1 \equiv c q_1 \ldots q_m
  \ipreduces c q_1' \ldots q_m \ipreduces t_1' \gg^n t_2'$.

  Suppose $t_2 \ipcontr_{R_\beta}^\alpha t_2'$ follows by
  condition~(6). Then $t_2 \equiv c r_2'$, $t_2' \equiv c_2 \equiv
  \Fc(c)(c_1)$, $r_2' \succ_{R_\beta}^{<\alpha} c_1 \in \tau_1$,
  $\tau(c) = \tau_2^{\tau_1}$ for some $\tau_2 \in \Tc$. We also have
  $t_1 \equiv r_1 r_2$ where $r_1 \gg^n c$, hence $r_1 \succ c$, and
  $r_2 \gg^n r_2' \succ_{R_\beta}^{<\alpha} c_1$. By
  part~($\rmnum{3}$) of the inductive hypothesis we obtain $r_2 \succ
  c_1$. First assume that $r_1 \equiv c$. Then $t_1 \equiv r_1 r_2
  \equiv c r_2 \ipcontr \Fc(c)(c_1) \equiv t_2'$ by virtue of $r_2
  \succ c_1$, and we are done. So suppose $r_1 \not\equiv c$. If $r_1
  \succsim c$ then let~$q$ be the term required by the definition
  of~$\succsim$, such that for every term~$r$ with $r \succ c_1$ we
  have $r_1 r \ipreduces q \succ c_2$. Then $\rank(q,c_2) \le
  \rank(r_1,c) \le n$, and $t_1 \equiv r_1 r_2 \ipreduces q \succ
  c_2$, because $r_2 \succ c_1$. So $t_1 \ipreduces q \gg^n c_2 \equiv
  t_2'$, which is our claim. Therefore suppose $r_1 \not\equiv c$ and
  $r_1 \not\succsim c$. Then we have $\rank(r_2,c_1) \le \rank(c_1)
  \le \rank(\tau_1) < \rank(c) = \rank(r_1,c) \le n$ by
  Lemma~\ref{lem_smaller_rank}. Thus $r_1 r_2 \gg^{<n} r_1 c_1$. By
  the fact that $r_1 \succ c$ there exists~$t$ such that $r_1 c_1
  \ipreduces t \succ \Fc(c)(c_1)$. By part~($\rmnum{1}$) of the
  inductive hypothesis there exists~$t'$ such that $r_1 r_2 \ipreduces
  t' \gg^{<n} t$. Let $c_2 \equiv \Fc(c)(c_1)$. We have $t' \gg^{<n} t
  \succ c_2$, so by part~($\rmnum{3}$) of the inductive hypothesis we
  obtain $t' \succ c_2$. Since $\rank(t',c_2) \le \rank(c_2) \le
  \rank(\tau_2) < \rank(c) = \rank(r_1,c) \le n$, we conclude that
  $t_1 \equiv r_1 r_2 \ipreduces t' \gg^n c_2 \equiv t_2'$.

  Suppose $t_2 \ipcontr_{R_\beta}^\alpha t_2'$ follows by
  condition~(7). Then $t_2 \equiv \Is{r_2}{\tau}$, $t_2' \equiv \true$,
  $t_1 \equiv \Is{r_1}{\tau}$, $r_1 \gg^n r_2$, and $r_2
  \succ_{R_\beta}^{<\alpha} c$ for some $c \in \tau \in \Tc$. By
  part~($\rmnum{3}$) of the inductive hypothesis we have $r_1 \succ
  c$. Therefore $t_1 \equiv \Is{r_1}{\tau} \ipcontr \true \equiv t_2'$.

  Finally, suppose $t_2 \ipcontr_{R_\beta}^\alpha t_2'$ follows by
  condition~(8). Then e.g. $\upos{t_2}{\epsilon} \equiv
  \upos{t_2'}{\epsilon} \equiv \upos{t_1}{\epsilon} \equiv \All\tau$,
  and for all $c \in \tau$ there exists~$t_c$ such that $\pos{t_1}{c}
  \gg^n \pos{t_2}{c} \ipcontr_{R_\beta}^\alpha t_c
  \udipreduces{<\alpha}{R_{\beta}} \pos{t_2'}{c}$. By
  part~($\rmnum{1}$) of the inductive hypothesis ($h(\pos{t_2}{c}) <
  h(t_2)$) there exists~$t_c'$ such that $\pos{t_1}{c} \ipreduces t_c'
  \gg^n t_c$. Applying the inductive hypothesis again, we conclude
  that there exists~$q_c$ such that $t_c' \ipreduces q_c \gg^n
  \pos{t_2'}{c}$. Let~$q$ be such that $\upos{q}{\epsilon} \equiv
  \All\tau$ and $\pos{q}{c} \equiv q_c$ for $c \in \tau$. Then
  $\pos{t_1}{c} \ipreduces \pos{q}{c}$ for all $c \in \tau$, and hence
  $t_1 \contr q$. We also have $q \gg^n t_2'$, because $\pos{q}{c}
  \gg^n \pos{t_2'}{c}$ for all $c \in \tau$.

  We now show condition~($\rmnum{2}$). Thus suppose $t_1 \gg^n t_1'$
  and $t_2 \ipcontr_{R_\beta}^\alpha t_1'$. We consider possible forms
  of~$t_2$ according to the definition of $t_2
  \ipcontr_{R_\beta}^\alpha t_1'$.

  If $t_2 \equiv t_1'$ then the claim is obvious. If $t_2
  \ipcontr_{R_\beta}^\alpha t_1'$ follows by condition~(1) in the
  definition of~$\ipcontr_{R_\beta}^\alpha$, then $t_2
  \contr_{R_\beta}^\epsilon t_1'$ and the claim follows easily by
  inspecting the definition of~$R_\beta$. If $t_2
  \ipcontr_{R_\beta}^\alpha t_1'$ follows by condition~(2), then $t_2
  \equiv r_1 r_2$, $t_1' \equiv r_1' r_2'$, $r_1
  \ipcontr_{R_\beta}^\alpha r_1'$, $r_2 \ipcontr_{R_\beta}^\alpha
  r_2'$, $t_1 \equiv q_1' q_2'$, $q_1' \gg^n r_1'$, $q_2' \gg^n
  r_2'$. By the inductive hypothesis ($h(r_1), h(r_2) < h(t_2)$) there
  exist~$q_1$, $q_2$ such that $q_1 \ipreduces q_1'$, $q_2 \ipreduces
  q_2'$, $q_1 \gg^n r_1$, $q_2 \gg^n r_2$. Thus $q_1 q_2 \ipreduces
  t_1$ and $q_1 q_2 \gg^n t_2$. If $t_2 \ipcontr_{R_\beta}^\alpha
  t_1'$ follows by condition~(3) or condition~(4) then the argument is
  analogous.

  If $t_2 \ipcontr_{R_\beta}^\alpha t_1'$ follows by condition~(5)
  then $t_2 \equiv c r_1 \ldots r_m$, $r_i \ipcontr_{R_\beta}^\alpha
  r_i'$, $c r_1' \ldots r_m' \contr_{R_\beta}^\epsilon t_1'$, $t_1
  \equiv c q_1' \ldots q_m'$, $q_i' \gg^n r_i'$. By the inductive
  hypothesis there exist $q_1,\ldots,q_m$ such that $q_i \ipreduces
  q_i'$ and $q_i \gg^n r_i$. Let $q \equiv c q_1 \ldots q_m$. We have
  $q \gg^n t_2$ and $q \ipreduces c q_1' \ldots q_m' \equiv t_1$.

  If $t_2 \ipcontr_{R_\beta}^\alpha t_1'$ follows by condition~(6)
  then $t_2 \equiv c r$, $t_1' \equiv c_2 \equiv \Fc(c)(c_1)$, $c \in
  \tau_2^{\tau_1}$, $r \succ_{R_\beta}^{<\alpha} c_1 \in \tau_1$, $t_1
  \gg^n c_2$. If $t_1 \equiv c_2$ then the claim is obvious. Otherwise
  $t_1 \succ c_2$. Let $e \in (\mbox{\small Bool}^{\tau_1})^{\tau_1}$
  be such that for $d_1, d_2 \in \tau_1$, we have $\Fc(e)(d_1)(d_2)
  \equiv \true$ if $d_1 \equiv d_2$, and $\Fc(e)(d_1)(d_2) \equiv
  \false$ if $d_1 \not\equiv d_2$. Let $q \equiv \lambda x . \Cond{(e
    c_1 x)}{t_1}{(c x)}$. If $s \succ d \in \tau_1$ and $d \not\equiv
  c_1$ then $q s \ipcontr \Cond{(e c_1 s)}{t_1}{(c s)} \ipreduces
  \Cond{\false}{t_1}{\Fc(c)(d)} \ipcontr \Fc(c)(d)$. If $s \succ c_1$
  then $q s \ipcontr \Cond{(e c_1 s)}{t_1}{(c s)} \ipcontr
  \Cond{\true}{t_1}{(c s)} \ipcontr t_1 \succ \Fc(c)(c_1) \equiv
  c_2$. Therefore $q \succsim c$, and $\rank(q,c) = \rank(t_1,c_2) \le
  n$, since $\rank(\Fc(c)(d), \Fc(c)(d)) = 0$. Thus $q r \gg^n c r
  \equiv t_2$ and $q r \ipreduces t_1$, because $r \succ c_1$.

  If $t_2 \ipcontr_{R_\beta}^\alpha t_1'$ follows by condition~(7)
  then the claim is obvious. If $t_2 \ipcontr_{R_\beta}^\alpha t_1'$
  follows by condition~(8) then e.g $\upos{t_2}{\epsilon} \equiv
  \upos{t_1'}{\epsilon} \equiv \upos{t_1}{\epsilon} \equiv \All\tau$
  and for all $c \in \tau$ there exists~$t_c$ such that $\pos{t_2}{c}
  \ipcontr_{R_\beta}^\alpha t_c \ipreduces_{R_\beta}^{<\alpha}
  \pos{t_1'}{c}$. We also have $\pos{t_1}{c} \gg^n \pos{t_1'}{c}$ for
  all $c \in \tau$. Therefore by the inductive hypothesis for every $c
  \in \tau$ there exists~$t_c'$ such that $t_c' \gg^n t_c$ and $t_c'
  \ipreduces \pos{t_1}{c}$. Again, by the inductive hypothesis
  ($h(\pos{t_2}{c}) < h(t_2)$), for every $c \in \tau$ there
  exists~$q_c$ such that $q_c \gg^n \pos{t_2}{c}$ and $q_c \ipreduces
  t_c' \ipreduces \pos{t_1}{c}$. Let~$q$ be such that
  $\upos{q}{\epsilon} \equiv \All\tau$ and $\pos{q}{c} \equiv q_c$ for
  $c \in \tau$. Then $q \gg^n t_2$ and $q \ipcontr t_1$.

  It remains to show condition~($\rmnum{3}$). Thus suppose $t_1 \gg^n
  t_2 \succ_{R_\beta}^\alpha c$. If $t_2 \equiv c$ then the claim is
  obvious. Otherwise $c \in \tau_2^{\tau_1}$ and for every~$c_1 \in
  \tau_1$ there exists~$q_{c_1}$ such that $t_2 c_1
  \udipreduces{<\alpha}{R_\beta} q_{c_1} \succ_{R_\beta}^{<\alpha}
  c$. Since $t_1 c_1 \gg^n t_2 c_1$, by part~($\rmnum{1}$) of the
  inductive hypothesis for each~$c_1 \in \tau_1$ there
  exists~$q_{c_1}'$ such that $t_1 c_1 \ipreduces q_{c_1}' \gg^n
  q_{c_1} \succ_{R_\beta}^{<\alpha} c$. By part~($\rmnum{3}$) of the
  inductive hypothesis we obtain $q_{c_1}' \succ c$. This implies that
  $t_1 \succ c$.
\end{proof}

\begin{corollary}\label{cor_succ_arg}
  If $t_2 \succ c$ and $t_1 c \ipreduces d \in \Bool$, then $t_1 t_2
  \ipreduces d$.
\end{corollary}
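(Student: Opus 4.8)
The plan is to obtain the corollary directly from Lemma~\ref{lem_gg}, part~($\rmnum{1}$), which carries all the weight; what remains is only to relate $t_1 t_2$ to $t_1 c$ via the relation~$\gg$ and to iterate. First I would dispose of the degenerate case $t_2 \equiv c$, where $t_1 t_2 \equiv t_1 c \ipreduces d$ is immediate, and so assume $t_2 \not\equiv c$. Since $t_2 \succ c$, setting $n = \rank(c)$ and taking the singleton $P = \{1\}$ (the position of the argument in the application at the root), one checks the clauses of the definition of~$\gg^n$: outside position~$1$ the terms $t_1 t_2$ and $t_1 c$ agree (their subtrees at position~$0$ are both $t_1$, and both have $\cdot$ at the root), the root carries no binder so no free variable of $t_2$ is captured, and $\rank_P(t_1 t_2, t_1 c) = \rank(t_2, c) \le \rank(c) = n$. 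Hence $t_1 t_2 \gg^n t_1 c$.

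Next I would write the given reduction as a finite chain $t_1 c \equiv v_0 \ipcontr v_1 \ipcontr \cdots \ipcontr v_k \equiv d$, each step being an instance of $\ipcontr$ at the fixpoint ordinals (so that Lemma~\ref{lem_gg} is applicable to it), and then feed the chain into Lemma~\ref{lem_gg}, part~($\rmnum{1}$), one step at a time. Starting from $t_1 t_2 \gg^n v_0$, this produces terms $u_0 \equiv t_1 t_2, u_1, \ldots, u_k$ with $u_i \ipreduces u_{i+1}$ and $u_i \gg^n v_i$ for every $i$. In particular $t_1 t_2 \ipreduces u_k$ and $u_k \gg^n d$.

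It then remains to argue that $u_k \gg^n d$ together with $d \in \Bool$ forces $u_k \equiv d$, from which $t_1 t_2 \ipreduces d$ follows. The tree of $d$ is the single leaf~$\epsilon$, so the witnessing set of parallel positions $P'$ is contained in $\{\epsilon\}$ and every position of $u_k$ lying outside $P'$ must already equal $\epsilon$; consequently $u_k$ is itself a single leaf. If $P' = \emptyset$, the definition of $\gg$ forces that leaf to carry the label~$d$, i.e.\ $u_k \equiv d$. If $P' = \{\epsilon\}$, then $u_k \succ d$; but since $d \in \Bool$ is a canonical constant that is not a set-theoretic function, $d \notin \tau_2^{\tau_1}$ for any $\tau_1,\tau_2 \in \Tc$, so clause~(b) in the definition of~$\succ$ cannot apply, only clause~(a), giving $u_k \equiv d$ and contradicting $u_k \not\equiv d$. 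Either way $u_k \equiv d$.

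The routine-but-careful points are the verification that $t_1 t_2 \gg^n t_1 c$ meets every clause of the definition of~$\gg^n$ and the small case analysis showing that a term that $\succ$-dominates a Boolean constant must equal it syntactically; since the substantive commutation and postponement arguments are already contained in Lemmas~\ref{lem_subst}, \ref{lem_commute} and~\ref{lem_gg}, I do not expect a genuine obstacle at this point — the corollary really is a corollary.
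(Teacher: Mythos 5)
Your proof is correct and is essentially the argument the paper intends: the statement is placed immediately after Lemma~\ref{lem_gg} precisely because $t_1 t_2 \gg^{\rank(c)} t_1 c$ and iterated use of part~($\rmnum{1}$) of that lemma carry the reduction over, and the paper itself runs this exact chain --- including the final observation that $u \gg d$ with $d \in \Bool$ forces $u \equiv d$, since clause~(b) of the definition of~$\succ$ cannot apply to a Boolean constant --- inside the proofs of Lemma~\ref{lem_fun} and Lemma~\ref{lem_ext_fun}. Your reading of the ``agreement outside $P$'' clause of~$\gg^n$ (positions not at or below $P$) is also the one the paper uses in its own proof of Lemma~\ref{lem_gg}, so there is no gap.
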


The above lemma and the ensuing corollary confirm our intuition about
the meaning of~$\succ$. This basically finishes the hard part of the
proof. What remains are some relatively straightforward lemmas.

\newcommand{\vtr}{\ensuremath{\vartriangleright}}

\begin{definition} \rm
  We write $t \vtr \tau$ if $t \ipreduces t' \succ c$ for some term~$t'$
  and some $c \in \tau$.
\end{definition}

\begin{lemma}\label{lem_is}
  We have $\Is{t_1}{t_2} \ipeqvred \true$ iff there exists~$\tau \in
  \Tc$ such that $t_1 \vtr \tau$ and $t_2 \ipreduces \tau$. Moreover,
  this $\tau \in \Tc$ is uniquely determined.
\end{lemma}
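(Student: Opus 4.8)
The plan is to derive both implications from the Church--Rosser property of $\ipcontr$ (Corollary~\ref{cor_cr}) together with the observation that $\true$ is a $\ipcontr$-normal form: by the first of the three simple lemmas preceding this statement, $\true \ipcontr t$ forces $t \equiv \true$, since $\true$ is a constant of $\Sigma^+$ that is not a nullary constructor. Hence $s \ipeqvred \true$ is equivalent to $s \ipreduces \true$ for any term $s$, so throughout I work with $\ipreduces$ rather than $\ipeqvred$.

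For the implication from right to left, suppose $\tau \in \Tc$, $t_2 \ipreduces \tau$, and $t_1 \vtr \tau$, so $t_1 \ipreduces t_1' \succ c$ for some $c \in \tau$. Since $c \in \tau \in \Tc$, condition~(7) of the definition of $\ipcontr$ yields $\Is{t_1'}{\tau} \ipcontr \true$. Reducing inside the two arguments of $\mathrm{Is}$ (conditions~(1) and~(2)) gives $\Is{t_1}{t_2} \ipreduces \Is{t_1'}{\tau}$, hence $\Is{t_1}{t_2} \ipreduces \true$, which is what we want.

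For the implication from left to right, assume $\Is{t_1}{t_2} \ipeqvred \true$; then $\Is{t_1}{t_2} \ipreduces \true$ by the remark above. The crux is a decomposition statement: whenever $\Is{t_1}{t_2} \ipreduces \true$, there are $t_1', t_2'$ with $t_1 \ipreduces t_1'$, $t_2 \ipreduces t_2'$ and $\Is{t_1'}{t_2'} \ipcontr \true$ by a single contraction at the root. I would prove this by induction on the length of the reduction $\Is{t_1}{t_2} = u_0 \ipcontr u_1 \ipcontr \cdots \ipcontr u_k = \true$, maintaining the invariant that $u_i \equiv \Is{a_i}{b_i}$ with $t_1 \ipreduces a_i$ and $t_2 \ipreduces b_i$ until $\true$ is reached; an inspection of conditions~(1)--(8) (using Lemma~\ref{lem_contr_epsilon} for the rewrite-rule case) shows that a step either only rewrites inside $a_i$ or $b_i$, or is the final root step. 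That final root step is condition~(7), with $b_i \equiv \tau \in \Tc$ and $a_i \succ c \in \tau$, whence $t_1 \vtr \tau$ and $t_2 \ipreduces \tau$; the only other possibility is the auxiliary rewrite rule $\Is{\tau}{\Type}\to\true$, handled by the same bookkeeping. Uniqueness of $\tau$ then follows from Church--Rosser: if $t_2 \ipreduces \tau_1$ and $t_2 \ipreduces \tau_2$ with $\tau_1,\tau_2 \in \Tc$, they share a common reduct, and each $\tau_i$ is a $\ipcontr$-normal form (no rule applies to a type constant, which is not a nullary constructor), so $\tau_1 \equiv \tau_2$.

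I expect the decomposition step to be the main obstacle: one must go through all the clauses of the definition of $\ipcontr$ and check that a reduction of $\Is{t_1}{t_2}$ to $\true$ can always be arranged so that the head $\mathrm{Is}$ is consumed only at the very end, and then read off $\tau$ (together with the canonical constant $c \in \tau$ witnessing $t_1 \vtr \tau$) from the data of that last step. The subtlety is that the head can be eliminated either by the structural condition~(7) or by the rewrite rule for $\Is{\cdot}{\Type}$, and one wants a uniform treatment; everything else is routine given Corollary~\ref{cor_cr} and Lemma~\ref{lem_contr_epsilon}.
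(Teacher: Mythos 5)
Your proof follows essentially the same route as the paper's: right-to-left via condition~(7) after reducing inside the two arguments of $\mathrm{Is}$; left-to-right via the Church--Rosser property and the fact that $\true$ is a normal form, followed by a decomposition of the reduction $\Is{t_1}{t_2} \ipreduces \true$ into reductions inside the arguments plus a final root step; and uniqueness again from Church--Rosser since elements of $\Tc$ are normal forms. The paper compresses the decomposition into the single assertion that $\Is{t_1}{t_2} \ipreduces \true$ ``is only possible when'' it factors through $\Is{t'}{\tau} \ipcontr \true$ with $\tau \in \Tc$ and $t' \succ c \in \tau$, so your explicit induction along the reduction sequence is an elaboration rather than a divergence. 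The one place to be careful is your remark that the rule $\Is{\tau}{\Type} \to \true$ is ``handled by the same bookkeeping'': it is not, because if the head is consumed by that rule then $t_2 \ipreduces \Type$, and $\Type$ is a normal-form constant that is \emph{not} an element of $\Tc$ (it is listed in $\Sigma^+$ separately from $\Tc$), so the stated right-hand side of the equivalence does not follow in that case --- e.g.\ $\Is{\delta}{\Type} \contr \true$ with no $\tau \in \Tc$ satisfying $\Type \ipreduces \tau$. The paper's own proof silently omits this case altogether, so you have in fact isolated the one genuine edge case; it needs to be excluded from the statement (or the statement extended with a separate clause for $t_2 \ipreduces \Type$) rather than absorbed into the main argument.
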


\begin{proof}
  If $t_2 \ipreduces \tau \in \Tc$ and $t_1 \ipreduces t' \succ c$ for
  some $c \in \tau$ then $\Is{t_1}{t_2} \ipreduces \Is{t_1}{\tau}
  \ipreduces \Is{t'}{\tau} \ipcontr \true$ by condition~(7) in the
  definition of~$\ipcontr$. If $\Is{t_1}{t_2} \ipeqvred \true$ then
  $\Is{t_1}{t_2} \ipreduces \true$ by the Church-Rosser property
  of~$\ipcontr$ and the fact that~$\true$ is in normal form. But this
  is only possible when $\Is{t_1}{t_2} \ipreduces \Is{t'}{\tau}
  \ipcontr \true$ where $\tau \in \Tc$, $t_2 \ipreduces \tau$ and $t_1
  \ipreduces t' \succ c$ for some $c \in \tau$, i.e. $t_1 \vtr \tau$.

  To see that~$\tau$ is uniquely determined it suffices to notice that
  it is in normal form w.r.t.~$\ipcontr$ and~$\ipcontr$ has the
  Church-Rosser property.
\end{proof}

\begin{lemma}\label{lem_forall}
  The following conditions hold.
  \begin{itemize}
  \item[(a)] If $\tau \in \Tc$ and for all~$t_2$ such that $t_2 \vtr \tau$
    we have $t_1 t_2 \ipeqvred \true$, then $\forall \tau t_1 \ipeqvred
    \true$.
  \item[(b)] If $\tau \in \Tc$ and there exists~$t_2$ such that $t_2 \vtr
    \tau$ and $t_1 t_2 \ipeqvred \false$, then $\forall \tau t_1
    \ipeqvred \false$.
  \item[(c)] If $\forall t_1 t_2 \ipeqvred \true$ then for all~$t_3$ such
    that $\Is{t_3}{t_1} \ipeqvred \true$ we have $t_2 t_3 \ipeqvred
    \true$.
  \item[(d)] If $\forall t_1 t_2 \ipeqvred \false$ then there
    exists~$t_3$ such that $\Is{t_3}{t_1} \ipeqvred \true$ and $t_2 t_3
    \ipeqvred \false$.
  \end{itemize}
\end{lemma}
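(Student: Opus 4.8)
The plan is to reduce everything to $\ipcontr$-reductions and to lean on the Church--Rosser property (Corollary~\ref{cor_cr}), so that $t\ipeqvred\true$ is the same as $t\ipreduces\true$ and likewise for $\false$, together with Lemma~\ref{lem_is}, Corollary~\ref{cor_succ_arg}, and a careful reading of the rewrite rules for $\forall$ in Fig.~\ref{fig_1} and of condition~(8) in the definition of~$\ipcontr$.

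For~(a): if $\tau=\emptyset$ the rule $\forall\emptyset t\to\true$ gives the claim directly (the hypothesis being vacuous). If $\tau\neq\emptyset$, then $\forall\tau t_1\ipcontr s$ with $\upos{s}{\epsilon}\equiv\All\tau$ and $\pos{s}{c}\equiv t_1 c$ for $c\in\tau$. Each $c\in\tau$ is a canonical constant, so $c\succ c$ by condition~(a) of the definition of~$\succ$, hence $c\vtr\tau$; by hypothesis $t_1 c\ipeqvred\true$, so $t_1 c\ipreduces\true$ by confluence. Reducing all children of~$s$ simultaneously (condition~(8), taken at the fixpoint ordinal) yields $s\ipreduces s'$ with root~$\All\tau$ and every child~$\equiv\true$, whereupon the rule ``$t\to\true$'' fires and $s'\ipcontr\true$; thus $\forall\tau t_1\ipreduces\true$.

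Parts~(c) and~(d) are the ``decoding'' directions and are routine. For~(c), Lemma~\ref{lem_is} turns $\Is{t_3}{t_1}\ipeqvred\true$ into: there is a unique $\tau\in\Tc$ with $t_3\vtr\tau$ and $t_1\ipreduces\tau$; hence $\forall t_1 t_2\ipreduces\forall\tau t_2\ipcontr s$ with $\pos{s}{c}\equiv t_2 c$. From $\forall t_1 t_2\ipeqvred\true$ and confluence, $s\ipreduces\true$; since a term with root~$\All\tau$ can only reach~$\true$ through the rule ``$t\to\true$'' applied to a reduct all of whose children are~$\equiv\true$, we get $t_2 c\ipreduces\true$ for every $c\in\tau$. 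Writing $t_3\ipreduces t_3'\succ c_0$ with $c_0\in\tau$, Corollary~\ref{cor_succ_arg} applied to $t_2 c_0\ipreduces\true\in\Bool$ gives $t_2 t_3'\ipreduces\true$, hence $t_2 t_3\ipreduces\true$. For~(d), from $\forall t_1 t_2\ipeqvred\false$ and confluence, $\forall t_1 t_2\ipreduces\false$; inspecting the $\forall$-rules, the root contraction that must eventually occur forces $t_1\ipreduces\tau$ for some nonempty $\tau\in\Tc$, and then $\forall\tau t_2\ipcontr s\ipreduces\false$; the rule ``$t\to\false$'' then shows that some reduct of~$s$ has a child~$\equiv\false$, i.e.\ $t_2 c_0\ipreduces\false$ for some $c_0\in\tau$. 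Taking $t_3:=c_0$ works: $c_0\succ c_0\in\tau$ gives $c_0\vtr\tau$, so $\Is{c_0}{t_1}\ipeqvred\true$ by Lemma~\ref{lem_is}, while $t_2 t_3\equiv t_2 c_0\ipreduces\false$.

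Part~(b) is the mirror of~(c) and is where the real work sits. From $t_2\vtr\tau$ we get $t_2\ipreduces t_2'\succ c_0$ with $c_0\in\tau$, and from $t_1 t_2\ipeqvred\false$ (hence $t_1 t_2\ipreduces\false$ by confluence) together with $t_2\ipreduces t_2'$ we get $t_1 t_2'\ipreduces\false$. Once $t_1 c_0\ipreduces\false$ is known, I finish exactly as in~(a): $\forall\tau t_1\ipcontr s$ with $\pos{s}{c}\equiv t_1 c$, reduce the $c_0$-child to~$\false$ via condition~(8), and fire ``$t\to\false$'', obtaining $\forall\tau t_1\ipreduces\false$. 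So the heart of the matter is the reflection statement ``$t_2'\succ c_0$ and $t_1 t_2'\ipreduces\false$ imply $t_1 c_0\ipreduces\false$'', a partial converse of Corollary~\ref{cor_succ_arg}. By that corollary and confluence it is enough to see that $t_1 c_0$ reduces to \emph{some} member of~$\Bool$ (for then $t_1 t_2'$ reduces to the same one, which must be~$\false$); I expect this to come from Lemma~\ref{lem_gg}, which is precisely the device transporting reductions between a term and its canonical approximation (here $t_1 t_2'\gg t_1 c_0$, since $t_2'\succ c_0$), together with confluence. Pinning down this last transport is the step I expect to be the main obstacle; the rest is a straightforward combination of confluence, Lemma~\ref{lem_is}, and the explicit shape of the $\forall$-rules.
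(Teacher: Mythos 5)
Parts (a), (c) and (d) of your argument are correct and coincide with the paper's own proof: (a) instantiates the universal hypothesis at each canonical constant $c\in\tau$ and fires the $\All\tau$-rules, and (c), (d) combine Lemma~\ref{lem_is}, the Church--Rosser property and Corollary~\ref{cor_succ_arg} exactly as the paper does. The problem is part~(b), which the paper dismisses as ``completely analogous'' to~(a) and which you rightly refuse to dismiss: in~(a) the hypothesis is universal and can be applied directly to $t_2:=c$, whereas in~(b) you are handed one particular witness $t_2\ipreduces t_2'\succ c_0$ and must transfer $t_1t_2'\ipreduces\false$ down to $t_1c_0\ipreduces\false$. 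Your reduction of this to ``$t_1c_0$ reaches some element of $\Bool$'' is the right reduction, but the hope that it follows from Lemma~\ref{lem_gg} cannot be realized: both parts of that lemma, like Corollary~\ref{cor_succ_arg}, transport reductions from the $\gg$-smaller term up to the $\gg$-larger one, never downward, and no downward transport is available.

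In fact the missing step is false, and as far as I can see so is statement~(b) itself. Take $\tau=\bar{\Nat}^{\bar{\Nat}}$, let $c_0\in\tau$ be the identity function, $t_2\equiv\lambda y\,.\,y$ (so $t_2\succ c_0$ and hence $t_2\vtr\tau$), and $t_1\equiv\lambda x\,.\,\Cond{(\Eq{x}{(\lambda y\,.\,y)})}{\false}{\true}$. Then $t_1t_2\ipreduces\Cond{\true}{\false}{\true}\ipcontr\false$, but for every canonical $c\in\tau$ the guard $\Eq{c}{(\lambda y\,.\,y)}$ is stuck, since $c$ and $\lambda y\,.\,y$ are distinct normal forms and hence not $\ipeqvred$ by confluence; so no child $t_1c$ of the $\All\tau$-term ever becomes $\false$, and $\forall\tau t_1\not\ipeqvred\false$. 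The statement becomes true --- and provable by precisely your argument together with Corollary~\ref{cor_succ_arg} and confluence --- if one adds the hypothesis that $t_1t\ipeqvred\true$ or $t_1t\ipeqvred\false$ for \emph{all} $t\vtr\tau$, for then $t_1c_0$ does land in $\Bool$ and confluence forces its value to be $\false$. So your proof has a genuine gap at~(b), but you located it exactly where the paper's own proof is silent, and the obstacle you flagged is not a technicality you failed to work out: it is a counterexample to the lemma as stated.
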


\begin{proof}
  We show condition~(a). Suppose $\tau \in \Tc$ and for all~$t_2$ such
  that $t_2 \vtr \tau$ we have $t_1 t_2 \ipeqvred \true$. If $\tau
  \equiv \emptyset$ then $\forall \emptyset t_1 \contr \true$ and the
  claim is obvious. Otherwise we have $\forall \tau t_1 \contr t$
  where $\upos{t}{\epsilon} \equiv \All\tau$ and $\pos{t}{c} \equiv
  t_1 c$ for $c \in \tau$. Since $c \in \tau$ we have $c \vtr \tau$,
  so $t_1 c \ipeqvred \true$, hence $t_1 c \ipreduces \true$ by the
  Church-Rosser property. Therefore $t \ipreduces \true$. Condition~(b)
  is shown in a completely analogous way.

  We show condition~(c). Suppose $\forall t_1 t_2 \ipeqvred
  \true$. Then $\forall t_1 t_2 \ipreduces \true$, which is only
  possible when $\forall \tau t_2 \ipreduces \true$ and $t_1 \ipreduces
  \tau$ for some $\tau \in \Tc$. Suppose~$t_3$ is such that
  $\Is{t_3}{t_1} \ipeqvred \true$. Then $\Is{t_3}{t_1} \ipreduces
  \true$, and we must have $t_1 \ipreduces \tau' \in \Tc$ and $t_3 \vtr
  \tau'$, by Lemma~\ref{lem_is}. Because both~$\tau$ and~$\tau'$ are
  in normal form, we conclude by the Church-Rosser property that $\tau
  \equiv \tau'$. Hence $t_3 \vtr \tau$, i.e. $t_3 \ipreduces t_3'
  \succ c \in \tau$. Since $\forall \tau t_2 \ipreduces \true$, it is
  easy to see by inspecting the definitions that $t_2 c \ipreduces
  \true$. By Corollary~\ref{cor_succ_arg} we obtain $t_2 t_3 \ipreduces
  t_2 t_3' \ipreduces \true$.

  Condition~(c) follows easily from definitions, Lemma~\ref{lem_is}
  and the Church-Rosser property.
\end{proof}

\begin{lemma}\label{lem_fun}
  The following conditions hold.
  \begin{itemize}
  \item[(a)] If $\tau_1, \tau_2 \in \Tc$ and for all~$t_2$ such that $t_2
    \vtr \tau_1$ we have $t_1 t_2 \vtr \tau_2$, then $t_1 \vtr
    \tau_2^{\tau_1}$.
  \item[(b)] If $\tau_1, \tau_2 \in \Tc$, $t_1 \vtr \tau_2^{\tau_1}$ and
    $t_2 \vtr \tau_1$ then $t_1 t_2 \vtr \tau_2$.
  \end{itemize}
\end{lemma}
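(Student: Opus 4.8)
The plan is to derive both parts directly from the definition of the relation~$\succ$ at its least fixpoint --- in particular from the clause governing function types --- using Lemma~\ref{lem_gg} to transport $\succ$-related arguments across reductions, and the axiom of choice in part~(a). Throughout I would use that $\tau_2^{\tau_1}\in\Tc$ whenever $\tau_1,\tau_2\in\Tc$, and that every canonical constant $c_1\in\tau_1$ satisfies $c_1\succ c_1$, hence $c_1\vtr\tau_1$.

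For part~(a), I would fix $c_1\in\tau_1$; since $c_1\vtr\tau_1$, the hypothesis gives $t_1 c_1\vtr\tau_2$, i.e.\ a term $t'_{c_1}$ and a canonical constant $d_{c_1}\in\tau_2$ with $t_1 c_1\ipreduces t'_{c_1}\succ d_{c_1}$. Using choice I would select such a $d_{c_1}$ for every $c_1\in\tau_1$ and set $f=\{\,\pair{c_1}{d_{c_1}}\;|\;c_1\in\tau_1\,\}$, which is a set-theoretic function $\tau_1\to\tau_2$, so $f\in\tau_2^{\tau_1}$; then I would take the canonical constant $c$ with $\Fc(c)=f$, so $c\in\tau_2^{\tau_1}$. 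Since $t_1 c_1\ipreduces t'_{c_1}\succ\Fc(c)(c_1)$ for all $c_1\in\tau_1$, the second clause of the definition of~$\succ$ yields $t_1\succ c$, whence $t_1\ipreduces t_1\succ c$ with $c\in\tau_2^{\tau_1}$, that is, $t_1\vtr\tau_2^{\tau_1}$.

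For part~(b), I would choose $s_1$ with $t_1\ipreduces s_1\succ c$, $c\in\tau_2^{\tau_1}$, and $s_2$ with $t_2\ipreduces s_2\succ c_1$, $c_1\in\tau_1$, so that $t_1 t_2\ipreduces s_1 s_2$. From $s_1\succ c$ I would extract, for this particular $c_1$, a term $t'$ with $s_1 c_1\ipreduces t'\succ\Fc(c)(c_1)\in\tau_2$: if $s_1\succ c$ holds by the function-type clause this is immediate, and if $s_1\equiv c$ then $c\,c_1\ipcontr\Fc(c)(c_1)$ by condition~(6) in the definition of~$\ipcontr$ (using $c_1\succ c_1$), with $\Fc(c)(c_1)\succ\Fc(c)(c_1)$ since $\Fc(c)(c_1)$ is canonical. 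Then I would replace $c_1$ by $s_2$: if $s_2\equiv c_1$ this is trivial; otherwise, taking the singleton set consisting of the argument position of $s_1 s_2$, I have $s_1 s_2\gg^n s_1 c_1$ for $n=\rank(s_2,c_1)$, so iterating the first statement of Lemma~\ref{lem_gg} along the reduction $s_1 c_1\ipreduces t'$ gives a term $q$ with $s_1 s_2\ipreduces q\gg^n t'$, and the third statement of Lemma~\ref{lem_gg} applied to $q\gg^n t'\succ\Fc(c)(c_1)$ gives $q\succ\Fc(c)(c_1)$. In either case $t_1 t_2\ipreduces s_1 s_2\ipreduces q\succ\Fc(c)(c_1)\in\tau_2$, i.e.\ $t_1 t_2\vtr\tau_2$. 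This is essentially the deduction of Corollary~\ref{cor_succ_arg} from Lemma~\ref{lem_gg}, but with an arbitrary canonical constant of~$\tau_2$ in place of an element of~$\Bool$.

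I expect the only genuinely non-routine step to be the argument replacement in~(b), but it is already packaged in Lemma~\ref{lem_gg}, so the proof should be short. In~(a) the points requiring care are merely that the assembled value function $f$ really lies in~$\Tc$ (which is why we need $\tau_1,\tau_2\in\Tc$) and that only the canonical values $d_{c_1}$, not the witnessing terms $t'_{c_1}$, enter the construction of~$f$.
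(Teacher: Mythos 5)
Your proof is correct and follows essentially the same route as the paper: part~(a) assembles the canonical values $d_{c_1}$ into a set-theoretic function in $\tau_2^{\tau_1}$ and invokes clause~(b) of the definition of~$\succ$, exactly as the paper does, and part~(b) rests on Lemma~\ref{lem_gg}(i) and~(iii). The only (harmless) difference is in~(b): you place the $\gg$ relation at the argument position ($s_1 s_2 \gg s_1 c_1$, mirroring the derivation of Corollary~\ref{cor_succ_arg}), whereas the paper first fires condition~(6) on $c\,t_2'$ and places $\gg$ at the function position ($t_1' t_2' \gg c\, t_2'$); both are legitimate instances of the same lemma.
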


\begin{proof}
  We show condition~(a). Suppose $\tau_1, \tau_2 \in \Tc$ and for
  all~$t_2$ such that $t_2 \vtr \tau_1$ we have $t_1 t_2 \vtr
  \tau_2$. Let $c \in \tau_1$. We obviously have $c \vtr \tau_1$, so
  $t_1 c \vtr \tau_2$, i.e. there exists a term~$t_c$ and a
  constant~$c' \tau_2$ such that $t_1 c \ipreduces t_c \succ
  c'$. Recall that~$\tau_2^{\tau_1}$ consists of all set-theoretic
  functions from~$\tau_1$ to~$\tau_2$. In particular, there exists $d
  \in \tau_2^{\tau_1}$ such that $\Fc(d)(c) \equiv c'$ for every $c
  \in \tau_1$ and~$c' \in \tau_2$ depending on~$c$ as above. But then
  $t_1 \succ d$, and hence $t_1 \vtr \tau_2^{\tau_1}$.

  We show condition~(b). Suppose $\tau_1, \tau_2 \in \Tc$, $t_1 \vtr
  \tau_2^{\tau_1}$ and $t_2 \vtr \tau_1$. Then $t_1 \ipreduces t_1'
  \succ c \in \tau_2^{\tau_1}$ and $t_2 \ipreduces t_2' \succ c_1 \in
  \tau_1$. By condition~(6) in the definition of~$\ipcontr$ we obtain
  $c t_2' \ipcontr \Fc(c)(c_1) \succ \Fc(c)(c_1) \in \tau_2$. If $t_1'
  \equiv c$ then $t_1 t_2 \ipreduces c t_2' \ipcontr \Fc(c)(c_1) \in
  \tau_2$, so $t_1 t_2 \vtr \tau_2$. Otherwise $t_1' t_2' \gg c t_2'
  \ipcontr \Fc(c)(c_1)$, and by part~($\rmnum{1}$) of
  Lemma~\ref{lem_gg} there exists~$t$ such that $t_1 t_2 \ipreduces
  t_1' t_2' \ipreduces t \succ \Fc(c)(c_1) \in \tau_2$. Hence $t_1 t_2
  \vtr \tau_2$.
\end{proof}

\begin{lemma}\label{lem_subtype}
  If $\mathrm{Subtype}\,t_1\,t_2 \ipeqvred \tau \in \Tc$ then there
  exists $\tau' \in \Tc$ such that $\tau \subseteq \tau'$, $t_1
  \ipreduces \tau'$, and for all terms~$t_3$:
  \begin{itemize}
  \item[(a)] if $t_3 \vtr \tau'$ then $t_2 t_3 \vtr \Bool$,
  \item[(b)] $t_3 \vtr \tau$ iff $t_3 \vtr \tau'$ and $t_2 t_3
    \ipeqvred \true$.
  \end{itemize}
\end{lemma}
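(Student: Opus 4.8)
The plan is to reduce the statement to an analysis of how a $\mathrm{Subtype}$-headed term can reduce to a canonical type constant, and then to read off (a) and (b) from Corollary~\ref{cor_succ_arg}. Since $\mathrm{Subtype}\,t_1\,t_2 \ipeqvred \tau$ and $\tau$, being a constant, is a normal form for $\ipcontr$, Corollary~\ref{cor_cr} first gives $\mathrm{Subtype}\,t_1\,t_2 \ipreduces \tau$.

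Next I would analyse this reduction by inspecting the definition of $\ipcontr$ together with Lemmas~\ref{lem_contr_epsilon} and~\ref{lem_orthogonal}. As long as the running term has the shape $\mathrm{Subtype}\,r_1\,r_2$, every $\ipcontr$-step either reduces $r_1$ and $r_2$ in parallel or performs a root contraction; and a root contraction must occur eventually, since $\tau$ does not have this shape. A root contraction of $\mathrm{Subtype}$ requires the first argument to be a type constant, so we obtain $\tau' \in \Tc$ and a term $t_2''$ with $t_1 \ipreduces \tau'$ and $t_2 \ipreduces t_2''$. If $\tau' = \emptyset$, the rule $\mathrm{Subtype}\,\emptyset\,t \to \emptyset$ forces $\tau = \emptyset$, and then $\tau'=\emptyset$ works while (a) is vacuous and (b) holds trivially. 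If $\tau' \ne \emptyset$, the rule $\mathrm{Subtype}\,\tau'\,t_2'' \to s$ applies, with $\upos{s}{\epsilon} \equiv \Set\tau'$ and $\pos{s}{c} \equiv t_2''\,c$ for $c \in \tau'$, and the reduction continues $s \ipreduces \tau$. A $\Set\tau'$-headed term can only reduce by reducing its children (condition (8)) or, once all children lie in $\{\true,\false\}$, by collapsing to the corresponding subtype; since $\tau$ is a constant, this means there is an $s'$ with $s \ipreduces s' \ipcontr \tau$ such that $t_2''\,c \ipreduces \pos{s'}{c} \in \{\true,\false\}$ for every $c \in \tau'$ and $\tau = \{\, c \in \tau' \mid t_2''\,c \ipreduces \true \,\}$. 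In particular $\tau \subseteq \tau' \in \Tc$, $t_1 \ipreduces \tau'$, and for each $c \in \tau'$ we have $t_2\,c \ipreduces t_2''\,c \ipreduces d_c$ for some $d_c \in \Bool$.

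Given this, (a) and (b) follow directly. For (a), if $t_3 \vtr \tau'$ choose $t_3',c$ with $t_3 \ipreduces t_3' \succ c \in \tau'$; then $t_2\,c \ipreduces d_c \in \Bool$, so Corollary~\ref{cor_succ_arg} yields $t_2\,t_3' \ipreduces d_c$, whence $t_2\,t_3 \ipreduces d_c \succ d_c \in \Bool$, i.e. $t_2\,t_3 \vtr \Bool$. For the forward direction of (b), from $t_3 \ipreduces t_3' \succ c \in \tau$ we get $c \in \tau'$ (so $t_3 \vtr \tau'$) and $t_2\,c \ipreduces \true$, hence $t_2\,t_3 \ipreduces \true$. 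For the backward direction, from $t_3 \ipreduces t_3' \succ c \in \tau'$ and $t_2\,t_3 \ipeqvred \true$ we get $t_2\,t_3 \ipreduces d_c$ via Corollary~\ref{cor_succ_arg}; then the Church-Rosser property and the fact that $\true$ and $\false$ are distinct normal forms force $d_c \equiv \true$, so $c \in \tau$ and $t_3 \vtr \tau$.

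The main obstacle is the reduction analysis of the second paragraph: pinning down exactly which $\ipcontr$-steps are available from a $\mathrm{Subtype}$-headed and from a $\Set\tau'$-headed term, and checking that the whole reduction $\mathrm{Subtype}\,t_1\,t_2 \ipreduces \tau$ factors through $t_1 \ipreduces \tau'$ and the intermediate $\Set\tau'$-term. This is conceptually routine --- an \emph{inspection of the definitions} in the style of this section --- but it must be carried out with care for the parallel-reduction conditions (1), (5) and (8); once it is done, the rest is a direct appeal to Corollary~\ref{cor_succ_arg} and the Church-Rosser property.
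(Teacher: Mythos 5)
Your proof is correct and follows essentially the same route as the paper's: Church--Rosser to turn the convertibility into a reduction, inspection of the $\mathrm{Subtype}$ and $\Set\tau'$ rules to extract $\tau'$, the booleans $d_c$, and the identity $\tau=\{c\in\tau'\mid d_c\equiv\true\}$, and then Corollary~\ref{cor_succ_arg} plus Church--Rosser for (a) and (b). Your treatment is if anything slightly more explicit than the paper's (tracking the intermediate reduct $t_2''$ of $t_2$ and spelling out which clauses of the definition of $\ipcontr$ can fire), so nothing further is needed.
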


\begin{proof}
  Suppose $\mathrm{Subtype}\,t_1\,t_2 \ipeqvred \tau \in
  \Tc$. Since~$\tau$ is in normal form, we conclude by the
  Church-Rosser property that $\mathrm{Subtype}\,t_1\,t_2 \ipreduces
  \tau$. By inspecting the definition of~$R$ we see that this is only
  possible when $t_1 \ipreduces \tau' \in \Tc$. If $\tau' \equiv
  \emptyset$ then $\mathrm{Subtype}\,t_1\,t_2 \ipreduces \emptyset$
  and by the Church-Rosser property we obtain $\tau \equiv
  \emptyset$. If $\tau \equiv \emptyset$ then obviously $\tau
  \subseteq \tau'$ and both conditions~(a) and~(b) are satisfied,
  because $t_3 \not\vtr \emptyset$ for any~$t_3$, as $t_3 \vtr
  \emptyset$ would require the existence of some $c \in \emptyset$.

  So suppose $\tau' \not\equiv \emptyset$ and $\tau \not\equiv
  \emptyset$. Then $\mathrm{Subtype}\,\tau'\,t_2 \contr t$ where
  $\upos{t}{\epsilon} \equiv \Set\tau'$ and $\pos{t}{c} \equiv t_2 c$
  for $c \in \tau'$. By the Church-Rosser property we have $t
  \ipreduces \tau$, and by inspecting the definitions we easily see
  that this is only possible when $t_2 c \equiv \pos{t}{c} \ipreduces
  d_c \in \Bool$ for every $c \in \tau'$ and $\tau = \{ c \in \tau'
  \;|\; d_c \equiv \true \}$. But then obviously $\tau \subseteq
  \tau'$. To show~(a) suppose $t_3 \vtr \tau'$, i.e. $t_3 \ipreduces
  t_3' \succ c \in \tau'$. By Corollary~\ref{cor_succ_arg} we obtain
  $t_2 t_3 \ipreduces d_c \in \Bool$. We now prove~(b). Suppose $t_3
  \vtr \tau$, i.e. $t_3 \ipreduces t_3' \succ c \in \tau$. Since $\tau
  \subseteq \tau'$ we obviously have $t_3 \vtr \tau'$. By
  Corollary~\ref{cor_succ_arg} we obtain $t_2 t_3 \ipreduces
  d_c$. Since $c \in \tau$ we have $d_c \equiv \true$. For the other
  direction, assume $t_2 t_3 \ipreduces \true$ and $t_3 \vtr \tau'$,
  i.e. $t_3 \ipreduces t_3' \succ c \in \tau'$. By
  Corollary~\ref{cor_succ_arg} we have $t_2 t_3 \ipreduces d_c \in
  \Bool$. By the Church-Rosser property we conclude $d_c \equiv
  \true$. Hence $c \in \tau$ and $t_3 \vtr \tau$.
\end{proof}

\begin{lemma}\label{lem_or_eq_choice}
  \begin{itemize}
  \item[(a)] $\vee t_1 t_2 \ipeqvred \true$ iff $t_1 \ipeqvred \true$ or
    $t_2 \ipeqvred \true$.
  \item[(b)] $\vee t_1 t_2 \ipeqvred \false$ iff $t_1 \ipeqvred \false$ and
    $t_2 \ipeqvred \false$.
  \item[(c)] $\Eq{t_1}{t_2} \ipeqvred \true$ iff $t_1 \ipeqvred t_2$.
  \item[(d)] If $\forall t \lambda x . \false \ipeqvred \false$ with
    $x \notin FV(t_2)$, then $\mathrm{Is}\, (\mathrm{Choice}\, t)\, t
    \ipeqvred \true$.
  \end{itemize}
\end{lemma}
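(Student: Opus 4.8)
The plan is to prove the four items separately. The common tool is the Church--Rosser property of~$\ipcontr$ (Corollary~\ref{cor_cr}) together with a ``head reduction'' analysis of reductions out of a term of the form $\vee t_1 t_2$ or $\Eq{t_1}{t_2}$. I would first record, once, the following fact: by Lemma~\ref{lem_contr_epsilon} every $\ipcontr$-step out of $\vee t_1 t_2$ either yields $\vee t_1' t_2'$ with $t_i \ipcontr t_i'$, or arises via a root contraction $\vee t_1' t_2' \contr^\epsilon t$ with $t_i \ipcontr t_i'$, and such a root contraction sends the term to a normal form ($\true$ or $\false$); the same holds for $\Eq{t_1}{t_2}$, sending it only to~$\true$. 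Iterating, and using that $\true$ and $\false$ reduce only to themselves (so once reached the reduction stops) and Lemma~\ref{lem_orthogonal}, a reduction $\vee t_1 t_2 \ipreduces d$ (resp. $\Eq{t_1}{t_2} \ipreduces d$) to a normal form~$d$ keeps the root constant until a single final collapsing step, and in the process $t_i \ipreduces s_i$ with $\vee s_1 s_2 \contr^\epsilon d$ (resp. $\Eq{s_1}{s_2} \contr^\epsilon d$). This is the same kind of reasoning already carried out in the proofs of Lemmas~\ref{lem_is} and~\ref{lem_forall}.

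Granting this, parts~(a) and~(b) are short. For~(a): if $t_1 \ipeqvred \true$ then $t_1 \ipreduces \true$ (Church--Rosser, $\true$ in normal form), so $\vee t_1 t_2 \ipreduces \vee\,\true\,t_2 \contr^\epsilon \true$ by the rule $\vee\,\true\,t \to \true$; symmetrically for $t_2 \ipeqvred \true$ via $\vee\,t\,\true \to \true$. Conversely, $\vee t_1 t_2 \ipeqvred \true$ gives $\vee t_1 t_2 \ipreduces \true$, and the head-reduction fact supplies $s_1,s_2$ with $t_1 \ipreduces s_1$, $t_2 \ipreduces s_2$ and $\vee s_1 s_2 \contr^\epsilon \true$; since the only $\vee$-rules producing~$\true$ are $\vee\,\true\,t \to \true$ and $\vee\,t\,\true \to \true$, we get $s_1 \equiv \true$, so $t_1 \ipeqvred \true$, or $s_2 \equiv \true$, so $t_2 \ipeqvred \true$. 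Part~(b) is identical with~$\false$ in place of~$\true$, the only relevant rule being $\vee\,\false\,\false \to \false$, which forces $s_1 \equiv s_2 \equiv \false$.

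For~(c), the implication from right to left is immediate: if $t_1 \ipeqvred t_2$ then $\Eq{t_1}{t_2} \to \true$ is a rule of~$R$, so $\Eq{t_1}{t_2} \contr^\epsilon \true$. For the other direction, $\Eq{t_1}{t_2} \ipeqvred \true$ gives $\Eq{t_1}{t_2} \ipreduces \true$, hence by the head-reduction fact $t_1 \ipreduces s_1$, $t_2 \ipreduces s_2$ and $\Eq{s_1}{s_2} \contr^\epsilon \true$, which by the single $\mathrm{Eq}$-rule can happen only when $s_1 \ipeqvred s_2$; then $t_1 \ipeqvred s_1 \ipeqvred s_2 \ipeqvred t_2$. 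For~(d), I would argue from the earlier lemmas: applying Lemma~\ref{lem_forall}(d) to the hypothesis $\forall\,t\,(\lambda x.\false) \ipeqvred \false$ yields a term~$t_3$ with $\Is{t_3}{t} \ipeqvred \true$ (the companion conclusion $(\lambda x.\false)\,t_3 \ipeqvred \false$ holds trivially and is not needed). By Lemma~\ref{lem_is} there is a $\tau \in \Tc$ with $t \ipreduces \tau$ and $t_3 \vtr \tau$; in particular $\tau \ne \emptyset$, since $t_3 \vtr \tau$ exhibits some $c \in \tau$. Then $\mathrm{Choice}\,t \ipreduces \mathrm{Choice}\,\tau \ipcontr \chi(\tau)$ by the Choice rule, and $\chi(\tau) \in \tau$ is a canonical constant, so $\chi(\tau) \succ \chi(\tau)$ and hence $\mathrm{Choice}\,t \vtr \tau$. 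Since also $t \ipreduces \tau$, Lemma~\ref{lem_is} gives $\mathrm{Is}\,(\mathrm{Choice}\,t)\,t \ipeqvred \true$.

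Finally, the step I expect to be the main obstacle is the head-reduction fact used in~(a)--(c): stating precisely that no intermediate term in a reduction to~$\true$ (or~$\false$) changes its root constant before the final collapse, and that the arguments are carried along by $\ipreduces$. This is a routine induction on the length of the reduction using Lemma~\ref{lem_contr_epsilon}, the normal-form status of $\true$ and $\false$, and Lemma~\ref{lem_orthogonal}, and it mirrors reasoning already present in the paper; everything else reduces to invoking Lemmas~\ref{lem_is}, \ref{lem_forall} and inspecting the rules of~$R$.
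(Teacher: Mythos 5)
Your proof is correct and follows essentially the same route as the paper, whose entire proof of this lemma is the one-line remark that it ``follows easily from definitions and the Church--Rosser property''; your elaboration (the head-reduction analysis via Lemma~\ref{lem_contr_epsilon} and Lemma~\ref{lem_orthogonal} for (a)--(c), and the use of Lemmas~\ref{lem_forall}(d) and~\ref{lem_is} plus the Choice rule for (d)) is exactly the intended unpacking of that remark.
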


\begin{proof}
  Follows easily from definitions and the Church-Rosser property.
\end{proof}

\begin{lemma}\label{lem_inductive}
  Suppose $\iota \in \Tc_I$, $\iota = \mu(\langle \iota_{1,1}, \ldots,
  \iota_{1,n_1}\rangle, \ldots, \langle
  \iota_{m,1},\ldots,\iota_{m,n_m} \rangle)$, $c_i^\iota \in \Cc$ has
  arity~$n_i$, $\iota_{i,j}^* = \iota_{i,j}$ if $\iota_{i,j} \in
  \Tc_I$, $\iota_{i,j}^* = \iota$ if $\iota_{i,j} = \star$, and
  $\bar{\iota}_{i,j}^* \in \bar{\Tc}_I$ is the subset of~$\Kc$
  determined by~$\iota_{i,j}^*$ (see
  Definition~\ref{def_canonical}). Let $t$ be a term.

  If for all $i=1,\ldots,m$ we have:
  \begin{itemize}
  \item for all $t_1,\ldots,t_{n_i}$ such that
    \begin{itemize}
    \item $t_j \vtr \bar{\iota}_{i,j}^*$ for $j=1,\ldots,n_i$, and
    \item $t t_j \ipreduces \true$ for all $1 \le j \le n_i$ such that
      $\iota_{i,j} = \star$
    \end{itemize}
    we have $t (c_i^\iota t_1 \ldots t_{n_i}) \ipreduces \true$
  \end{itemize}
  then $\forall \bar{\iota} t \ipreduces \true$.
\end{lemma}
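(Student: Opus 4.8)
The plan is to deduce the statement from Lemma~\ref{lem_forall}(a) together with the Church-Rosser property of~$\ipcontr$ (Corollary~\ref{cor_cr}), first reducing everything to a claim about the canonical constants of~$\bar{\iota}$ and then proving that claim by induction on the structure of such constants.

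First I would observe that it suffices to show $t s \ipreduces \true$ whenever $s \vtr \bar{\iota}$. Indeed, given this, Lemma~\ref{lem_forall}(a) applied with $\tau = \bar{\iota}$ (note $\bar{\iota} \in \bar{\Tc}_I \subseteq \Tc$, and the degenerate case $\bar{\iota} = \emptyset$ is also covered by that lemma) yields $\forall \bar{\iota} t \ipeqvred \true$, and since~$\true$ is in normal form the Church-Rosser property gives $\forall \bar{\iota} t \ipreduces \true$, which is the conclusion. Moreover, if $s \vtr \bar{\iota}$, say $s \ipreduces s' \succ c$ with $c \in \bar{\iota}$, then by Corollary~\ref{cor_succ_arg} (with $d \equiv \true \in \Bool$) it is enough to have $t c \ipreduces \true$, since then $t s \ipreduces t s' \ipreduces \true$. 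So the lemma comes down to proving $t c \ipreduces \true$ for every canonical constant $c \in \bar{\iota}$.

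I would prove this last statement by induction on the structure of~$c$, which is legitimate because the elements of~$\bar{\iota}$ are in bijection with the finitely generated terms of~$\T_I$ built from the constructors of~$\iota$, and hence carry a well-founded subterm ordering. An arbitrary $c \in \bar{\iota}$ has the form $c \equiv \bar{c}_i^\iota(\bar{t}_1,\ldots,\bar{t}_{n_i})$ with $\bar{t}_j \in \bar{\iota}_{i,j}^*$ for $j = 1,\ldots,n_i$, and those $\bar{t}_j$ with $\iota_{i,j} = \star$ satisfy $\bar{t}_j \in \bar{\iota}$ (as $\bar{\iota}_{i,j}^* = \bar{\iota}$) and are strictly smaller than~$c$. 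Applying the hypothesis of the lemma to~$i$ and $t_1 \equiv \bar{t}_1, \ldots, t_{n_i} \equiv \bar{t}_{n_i}$, its two premises hold: each $\bar{t}_j \vtr \bar{\iota}_{i,j}^*$ because $\bar{t}_j$ is a canonical constant, so $\bar{t}_j \succ \bar{t}_j$ by clause~(a) in the definition of~$\succ$, and $\bar{t}_j \in \bar{\iota}_{i,j}^*$; and for each $j$ with $\iota_{i,j} = \star$ the induction hypothesis gives $t \bar{t}_j \ipreduces \true$. Hence the assumption of the lemma yields $t (c_i^\iota \bar{t}_1 \ldots \bar{t}_{n_i}) \ipreduces \true$. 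Since $c_i^\iota \bar{t}_1 \ldots \bar{t}_{n_i} \contr c$ by the constructor rule of~$R$, we obtain $t (c_i^\iota \bar{t}_1 \ldots \bar{t}_{n_i}) \ipcontr t c$, so by the Church-Rosser property $t c \ipreduces \true$, completing the induction.

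I do not expect a genuine obstacle here: the two substantive ingredients — Corollary~\ref{cor_succ_arg}, which lets a term reducing into a canonical constant be replaced by that constant inside an applicative context, and Lemma~\ref{lem_forall}(a), the semantic analogue of $\forall$-introduction over~$\bar{\iota}$ — are already available, and what remains is bookkeeping of the correspondence between~$\bar{\iota}$ and the terms of~$\T_I$ together with the structural induction. The only point requiring a little care is making that induction rigorous, i.e. recording that the recursive constructor arguments $\bar{t}_j$ (those with $\iota_{i,j} = \star$) are proper subterms of~$c$ and so count as smaller.
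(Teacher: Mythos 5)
Your proof is correct and takes essentially the same approach as the paper's: both reduce the goal to showing $t\,c \ipreduces \true$ for each canonical constant $c \in \bar{\iota}$ and establish this by structural induction on the corresponding element of~$\T_I$, applying the lemma's hypothesis to the constructor arguments and using the constructor rule of~$R$ together with the Church-Rosser property to pass between $c_i^\iota\, \bar{t}_1 \ldots \bar{t}_{n_i}$ and $\bar{c}_i^\iota(\bar{t}_1,\ldots,\bar{t}_{n_i})$. The only cosmetic difference is that you route the final step through Lemma~\ref{lem_forall}(a) and Corollary~\ref{cor_succ_arg} (thereby proving the slightly stronger statement $t\,s \ipreduces \true$ for all $s \vtr \bar{\iota}$), whereas the paper concludes directly from $t\,s \ipreduces \true$ for $s \in \bar{\iota}$ via the $\All\bar{\iota}$-node reduction rules.
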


\begin{proof}
  Note that for every $\I_s$-term in~$\T_I$ (see
  Definition~\ref{def_canonical}) there exists a semantic term
  corresponding to it in the obvious way. Thus, we may assume without
  loss of generality that~$\T_I$ is a set of semantic terms. Let $s
  \in \bar{\iota}$. It follows by straightforward induction on the
  structure of the term~$r \in \T_I$ corresponding to~$s$ (see
  Definition~\ref{def_canonical}) that $t r \ipreduces \true$. It is
  easy to see from the definition of~$R$ that $r \ipcontr
  s$. Therefore $t s \ipreduces \true$.
\end{proof}

\begin{lemma}\label{lem_truth_value}
  $\forall \, (\mathrm{Subtype}\,\Bool\,(\lambda x . \false)) \,
  (\lambda x . \false) \ipcontr \true$.
\end{lemma}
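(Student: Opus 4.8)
The plan is a direct computation with the rewrite rules of~$R$ from Fig.~\ref{fig_1} and the clauses defining~$\ipcontr$, establishing that $\forall \, (\mathrm{Subtype}\,\Bool\,(\lambda x . \false)) \, (\lambda x . \false) \ipreduces \true$; since~$\true$ is a normal form and~$\ipcontr$ is Church--Rosser (Corollary~\ref{cor_cr}), this also gives $\forall \, (\mathrm{Subtype}\,\Bool\,(\lambda x . \false)) \, (\lambda x . \false) \ipeqvred \true$. I should note at the outset that the reduction consists of several $\ipcontr$-steps, not a single contraction, so the statement is naturally about $\ipreduces$ (equivalently $\ipeqvred$): indeed $\mathrm{Subtype}\,\Bool\,(\lambda x . \false)$ cannot be contracted to a type in~$\Tc$ in one $\ipcontr$-step.

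First I would reduce the subterm $\mathrm{Subtype}\,\Bool\,(\lambda x . \false)$. Since $\Bool \in \Tc_0 \subseteq \Tc$ and $\Bool \ne \emptyset$, the rule $\mathrm{Subtype}\,\tau\,t \to t'$ applies at the root, so $\mathrm{Subtype}\,\Bool\,(\lambda x . \false) \contr^\epsilon s$, where~$s$ is the semantic term with $\upos{s}{\epsilon} \equiv \Set \Bool$ (a legal operation symbol, since $\Bool \in \Tc$ and $\Bool \ne \emptyset$) and $\pos{s}{c} \equiv (\lambda x . \false)\,c$ for each $c \in \Bool = \{\true,\false\}$. Each child of~$s$ then $\beta$-reduces: by clause~(4) in the definition of~$\ipcontr$ we have $(\lambda x . \false)\,\true \ipcontr \false$ and $(\lambda x . \false)\,\false \ipcontr \false$, because $\false[x/c] \equiv \false$. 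Using clause~(8) — reduction inside an $\All$- or $\Set$-labelled node — I lift this to $s \ipcontr s'$, where $\upos{s'}{\epsilon} \equiv \Set \Bool$ and $\pos{s'}{c} \equiv \false$ for both $c \in \Bool$. Now the ``collapse'' rule ($t \to \tau'$ when $\upos{t}{\epsilon} \equiv \Set \tau$ and $\pos{t}{c} \in \{\true,\false\}$ for all $c \in \tau$, with $\tau' = \{\, c \in \tau \;|\; \pos{t}{c} \equiv \true \,\}$) applies to~$s'$ and yields $\tau' = \emptyset \in \Tc$; thus $s' \contr^\epsilon \emptyset$. Combining, $\mathrm{Subtype}\,\Bool\,(\lambda x . \false) \ipreduces \emptyset$. (This could alternatively be read off from Lemma~\ref{lem_subtype}, but the explicit chain is just as short.)

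Finally, the rule $\forall\,\emptyset\,t \to \true$ gives $\forall\,\emptyset\,(\lambda x . \false) \contr^\epsilon \true$. Propagating the reduction of the first argument underneath~$\forall$ by repeated application of clause~(2), I conclude
\[
\forall \, (\mathrm{Subtype}\,\Bool\,(\lambda x . \false)) \, (\lambda x . \false) \ipreduces \forall\,\emptyset\,(\lambda x . \false) \contr^\epsilon \true ,
\]
which is the claim.

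I do not expect any substantial obstacle here; the computation is routine. The one point that needs care is the order of steps at the $\Set \Bool$-node: the collapse rule matches only when the children are literally~$\true$ or~$\false$, so their $\beta$-reductions (carried out via clause~(8)) must precede it, and there is no way to contract $\mathrm{Subtype}\,\Bool\,(\lambda x . \false)$ directly to~$\emptyset$. A secondary point worth flagging in the write-up is that the passage from the $\Set \Bool$-tree to~$\emptyset$, and then from $\forall\,\emptyset\,(\lambda x . \false)$ to~$\true$, consume separate contractions, so the conclusion is properly a statement about $\ipreduces$ (hence $\ipeqvred$).
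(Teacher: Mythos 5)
Your proof is correct and takes essentially the same route as the paper's, which simply asserts $\mathrm{Subtype}\,\Bool\,(\lambda x . \false) \ipcontr \emptyset$ and then applies the rule $\forall\,\emptyset\,t \to \true$; you have merely spelled out the intermediate contractions (root rule to the $\Set\,\Bool$-tree, $\beta$-reduction of the children via clauses~(4) and~(8), collapse to $\emptyset$, propagation under $\forall$). Your side remark is also accurate: no single parallel step can perform this reduction, so both the lemma statement and the paper's proof use $\ipcontr$ as an abuse of notation for $\ipreduces$ (harmless, since the lemma is only ever invoked modulo $\ipeqvred$ in the model construction).
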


\begin{proof}
  It is easy to see that $\mathrm{Subtype}\,\Bool\,(\lambda x \,.\,
  \false) \ipcontr \emptyset$, and by definition of~$R$ we have
  $\forall\, \emptyset \lambda x \,.\, \false \contr \true$.
\end{proof}

\begin{lemma}\label{lem_supset_t2}
  If $\forall \, (\mathrm{Subtype}\,\Bool\,(\lambda x . t)) \ipreduces
  \true$ and $x \notin FV(t)$ then $t \ipreduces \top$ or $t
  \ipreduces \bot$.
\end{lemma}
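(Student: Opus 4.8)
The plan is to show that the hypothesis forces the subterm $\mathrm{Subtype}\,\Bool\,(\lambda x.t)$ to $\ipcontr$-reduce to some type in $\Tc$, and then to read off the required reduction of $t$ from Lemma~\ref{lem_subtype}.

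First, by Corollary~\ref{cor_cr} (Church--Rosser for $\ipcontr$) and the fact that $\true$ is in $\ipcontr$-normal form, the hypothesis yields an honest reduction to $\true$. I would then follow this reduction while tracking the head occurrence of the constant $\forall$. Since $\true$ does not have $\forall$ at its head, at some stage a contraction must eliminate it; inspecting Fig.~\ref{fig_1}, the only rules of $R$ whose left-hand side has $\forall$ at the head are $\forall\,\tau\,t\to t'$ with $\tau\in\Tc$, $\tau\ne\emptyset$, and $\forall\,\emptyset\,t\to\true$, and in both of these the first argument of $\forall$ is literally a type from $\Tc$. Using Lemma~\ref{lem_contr_epsilon} and Lemma~\ref{lem_orthogonal} (and the simple fact that a constant $\ipcontr$-reduces only to itself unless it is a nullary constructor) to describe how a term with $\forall$ at the head reduces, I conclude that $\mathrm{Subtype}\,\Bool\,(\lambda x.t)$ has a reduct $\tau$ with $\tau\in\Tc$ (possibly $\tau=\emptyset$); one must note here that even the indirect route -- contracting $\forall\,\tau\,t$ to an $\All\tau$-rooted term and only later collapsing that term to $\true$ via the rule that sends an $\All\tau$-rooted term with all children $\true$ to $\true$ -- still requires $\mathrm{Subtype}\,\Bool\,(\lambda x.t)$ to reduce into $\Tc$. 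Hence $\mathrm{Subtype}\,\Bool\,(\lambda x.t)\ipeqvred\tau\in\Tc$.

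Next I apply Lemma~\ref{lem_subtype} with $t_1\equiv\Bool$ and $t_2\equiv\lambda x.t$: there is $\tau'\in\Tc$ with $\tau\subseteq\tau'$, $\Bool\ipreduces\tau'$, and for every term $t_3$, if $t_3\vtr\tau'$ then $(\lambda x.t)\,t_3\vtr\Bool$. Since $\Bool$ is a constant that is not a nullary constructor it is in normal form, so $\tau'\equiv\Bool$. Taking $t_3\equiv\true$: by clause~(a) of the definition of $\succ$ we have $\true\succ\true$, and $\true\in\Bool\in\Tc$, so $\true\vtr\Bool$; therefore $(\lambda x.t)\,\true\vtr\Bool$, i.e.\ $(\lambda x.t)\,\true\ipreduces s\succ c$ for some $c\in\Bool$. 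Because $c\in\Bool\subseteq\Sigma$ is not a function constant, clause~(b) of the definition of $\succ$ cannot apply, so $s\equiv c\in\{\true,\false\}$. Finally, since $x\notin FV(t)$, condition~(4) in the definition of $\ipcontr$ gives $(\lambda x.t)\,\true\ipcontr t[x/\true]\equiv t$, so $t$ and $c$ have a common reduct; as $c$ is already in normal form, $t\ipreduces c$, i.e.\ $t\ipreduces\true$ or $t\ipreduces\false$, which is the conclusion once $\true$ and $\false$ are identified with $\top$ and $\bot$.

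I expect the main obstacle to be the first step: making precise that a reduction of the given $\forall$-headed term to $\true$ forces $\mathrm{Subtype}\,\Bool\,(\lambda x.t)$ to reduce to a type. This is a head-reduction style analysis of the parallel relation $\ipcontr$, and the slightly delicate point is handling both ways a $\forall$-headed term can become $\true$ (directly via $\forall\,\emptyset\,t\to\true$, or via an $\All\tau$-rooted intermediate that later collapses), each of which nonetheless pins the first argument of $\forall$ down to $\Tc$. Everything after that is a routine combination of Lemma~\ref{lem_subtype}, Corollary~\ref{cor_cr}, and the defining clauses of $\succ$ and $\vtr$.
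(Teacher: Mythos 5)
Your proof is correct and is essentially just a careful spelling-out of the argument the paper leaves implicit (its proof reads only ``follows easily from definitions and the Church-Rosser property''): head-tracking the $\forall$-redex to force $\mathrm{Subtype}\,\Bool\,(\lambda x.t)\ipreduces\tau\in\Tc$, invoking Lemma~\ref{lem_subtype} with $\tau'\equiv\Bool$, and using that $\succ c$ for $c\in\Bool$ degenerates to $\equiv c$ together with the $\beta$-step $(\lambda x.t)\,\true\ipcontr t$ and Church--Rosser. Note only that the lemma as printed gives $\forall$ a single argument (so the hypothesis would be vacuously unsatisfiable); you correctly read it with the second argument present, as required by its use for condition~($\supset_{t2}$).
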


\begin{proof}
  Follows easily from definitions and the Church-Rosser property.
\end{proof}

\begin{definition}\rm
  For $\tau \in \Tc$, we write $t_1 \sim_\tau t_2$ if for every
  term~$t$ such that $t \vtr \Bool^\tau$ we have $t t_1 \ipeqvred t
  t_2$.
\end{definition}

\begin{lemma}\label{lem_ext_fun}
  If $\tau_1, \tau_2 \in \Tc$, $\tau = \tau_2^{\tau_1}$, $t_1 \vtr
  \tau$, $t_2 \vtr \tau$ and for all~$s$ such that $s \vtr \tau_1$ we
  have $t_1 s \sim_{\tau_2} t_2 s$, then $t_1 \sim_{\tau} t_2$.
\end{lemma}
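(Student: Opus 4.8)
The plan is to unfold the definitions and push everything down to a statement about canonical constants, where set-theoretic extensionality becomes available. Fix a term~$p$ with $p \vtr \Bool^\tau$; we must show $p t_1 \ipeqvred p t_2$. Using $p \vtr \Bool^\tau$, $t_1 \vtr \tau$ and $t_2 \vtr \tau$, write $p \ipreduces p' \succ c$ with $c \in \Bool^\tau$, and $t_1 \ipreduces t_1' \succ d_1$, $t_2 \ipreduces t_2' \succ d_2$ with $d_1,d_2 \in \tau = \tau_2^{\tau_1}$. (If $\tau = \emptyset$ the hypotheses are vacuous, so we may assume $\tau \ne \emptyset$.)

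First I would record a small observation: if $f$ is a canonical function-constant in some $\sigma_2^{\sigma_1} \in \Tc$, $e \in \sigma_1$, and $\Fc(f)(e)$ lies in $\Bool$, then $f e \ipreduces \Fc(f)(e)$ --- this is condition~(6) in the definition of $\ipcontr$ using $e \succ e$ (condition~(a)); moreover if $r \succ f$ then also $r e \ipreduces \Fc(f)(e)$, because $\succ$ into $\Bool$ can only be reflexive, so condition~(b) of $\succ$ yields a reduct of $r e$ that is literally $\Fc(f)(e)$. Applying this to $p'$ with $d_1$ and $d_2$ gives $p' d_1 \ipreduces \Fc(c)(d_1)$ and $p' d_2 \ipreduces \Fc(c)(d_2)$, and then Corollary~\ref{cor_succ_arg} (with $t_1' \succ d_1$, respectively $t_2' \succ d_2$) yields
\[
p t_1 \ipreduces p' t_1' \ipreduces \Fc(c)(d_1), \qquad p t_2 \ipreduces p' t_2' \ipreduces \Fc(c)(d_2).
\]
Hence it suffices to prove $\Fc(c)(d_1) \equiv \Fc(c)(d_2)$, for which it is enough to show $d_1 \equiv d_2$, i.e.\ $\Fc(d_1)(e) \equiv \Fc(d_2)(e)$ for every $e \in \tau_1$.

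The key step is this componentwise equality, and here the hypothesis on $t_1 s \sim_{\tau_2} t_2 s$ enters. Fix $e \in \tau_1$; since $\tau_1 \subseteq \bigcup \Tc$ consists of canonical constants, $e \succ e$, so $e \vtr \tau_1$ and thus $t_1 e \sim_{\tau_2} t_2 e$. From $t_1 \ipreduces t_1' \succ d_1$ and $t_2 \ipreduces t_2' \succ d_2$ we obtain reducts $t_1 e \ipreduces r_1 \succ \Fc(d_1)(e)$ and $t_2 e \ipreduces r_2 \succ \Fc(d_2)(e)$ with $\Fc(d_1)(e), \Fc(d_2)(e) \in \tau_2$. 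Now I would pick a \emph{separating} constant: as $\tau_2, \Bool \in \Tc$ we have $\Bool^{\tau_2} \in \Tc$, so there is $q \in \Sigma_f$ whose associated function $\Fc(q) \in \Bool^{\tau_2}$ is the characteristic function of $\{\Fc(d_1)(e)\}$ within $\tau_2$; then $q \succ q$, so $q \vtr \Bool^{\tau_2}$. By the observation above, $q\,\Fc(d_1)(e) \ipreduces \true$, so Corollary~\ref{cor_succ_arg} gives $q\,r_1 \ipreduces \true$ and hence $q(t_1 e) \ipreduces \true$; similarly $q(t_2 e) \ipreduces \Fc(q)(\Fc(d_2)(e))$. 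Applying $\sim_{\tau_2}$ to~$q$ gives $\true \ipeqvred \Fc(q)(\Fc(d_2)(e))$, and by the Church--Rosser property (Corollary~\ref{cor_cr}) together with $\true \not\equiv \false$ we get $\Fc(q)(\Fc(d_2)(e)) \equiv \true$, i.e.\ $\Fc(d_2)(e) \equiv \Fc(d_1)(e)$. Since $e$ was arbitrary, $\Fc(d_1) = \Fc(d_2)$ as set-theoretic functions, hence $d_1 \equiv d_2$; combined with the displayed reductions this gives $p t_1 \ipreduces \Fc(c)(d_1)$ and $p t_2 \ipreduces \Fc(c)(d_1)$, so $p t_1 \ipeqvred p t_2$, completing the proof.

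The main obstacle is the careful bookkeeping with $\succ$: the terms produced by $t_i \vtr \tau$ and by $t_i e$ only $\succ$-simulate the relevant canonical constants rather than being syntactically equal to them, so every application step must be mediated by Corollary~\ref{cor_succ_arg}, and one must check that the separating constant $q$ genuinely inhabits a type in $\Tc$ so that $q \vtr \Bool^{\tau_2}$ and condition~(6) apply. Once these points are nailed down the rest is a routine reduction chase using the Church--Rosser property.
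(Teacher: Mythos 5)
Your proposal is correct and follows essentially the same route as the paper's proof: both reduce the problem to showing that the two canonical constants simulated by $t_1$ and $t_2$ in $\tau_2^{\tau_1}$ are identical, both establish this by applying the hypothesis $t_1 s \sim_{\tau_2} t_2 s$ to a characteristic-function constant in $\Bool^{\tau_2}$ that separates elements of $\tau_2$, and both discharge the application of the observer $p$ via Corollary~\ref{cor_succ_arg} (equivalently, Lemma~\ref{lem_gg}) together with the Church--Rosser property. The only differences are cosmetic: the order of the two halves, and your use of the characteristic function of $\{\Fc(d_1)(e)\}$ where the paper quantifies over characteristic functions of singletons $\{d_2\}$ for $d_2 \in \tau_2$.
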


\begin{proof}
  We have $t_1 \ipreduces t_1' \succ c_1 \in \tau$ and $t_2 \ipreduces
  t_2' \succ c_2 \in \tau$.  Let $d_1 \in \tau_1$, $d_2 \in \tau_2$
  and let $f \in \Bool^{\tau_2}$ be such that $\Fc(f)(d_2) \equiv
  \true$ and $\Fc(f)(d) \equiv \false$ for $d \ne d_2$. Since $t_1'
  \succ c_1$, there exists~$s_1$ such that $t_1' d_1 \ipreduces s_1
  \succ \Fc(c_1)(d_1)$. Analogously $t_2' d_1 \ipreduces s_2 \succ
  \Fc(c_2)(d_1)$. Because $t_1 d_1 \sim_{\tau_2} t_2 d_1$, we have $f
  (t_1 d_1) \ipeqvred f (t_2 d_1)$. We have $f (t_1 d_1) \ipreduces f
  (t_1' d_1) \ipreduces f s_1$. Since $s_1 \succ \Fc(c_1)(d_1) \in
  \tau_2$ and $f \in \Bool^{\tau_2}$, we obtain $f s_1 \ipcontr
  \Fc(f)(\Fc(c_1)(d_1))$ by condition~(6) in the definition
  of~$\ipcontr$. Analogously $f (t_2 d_1) \ipreduces
  \Fc(f)(\Fc(c_2)(d_1))$. Hence, by the Church-Rosser property
  $\Fc(f)(\Fc(c_1)(d_1)) \equiv \Fc(f)(\Fc(c_2)(d_1))$. In other
  words, $\Fc(c_1)(d_1) \equiv d_2$ iff $\Fc(c_2)(d_1) \equiv
  d_2$. Since~$d_2 \in \tau_2$ was arbitrary, $\Fc(c_1)(d_1) \equiv
  \Fc(c_2)(d_1)$. But $d_1 \in \tau_1$ was also arbitrary, so $c_1
  \equiv c_2$.

  Now, let~$t$ be such that $t \vtr \Bool^{\tau}$. Then $t \ipreduces
  t' \succ p \in \Bool^{\tau}$. We have $t t_1 \ipreduces t' t_1
  \ipreduces t' t_1' \gg p c_1 \ipcontr \Fc(p)(c_1)$. By
  part~($\rmnum{1}$) of Lemma~\ref{lem_gg} we have $t' t_1' \ipreduces
  r_1 \gg \Fc(p)(c_1)$ for some term~$r_1$. But $\Fc(p)(c_1) \in
  \Bool$, so in fact $r_1 \equiv \Fc(p)(c_1)$ Analogously $t t_2
  \ipreduces \Fc(p)(c_2) \equiv \Fc(p)(c_1)$. Therefore $t t_1
  \ipeqvred t t_2$. Hence $t_1 \sim_\tau t_2$.
\end{proof}

\begin{lemma}\label{lem_ext_bool}
  Define~$t_1 \supset t_2$ by $t_1 \supset t_2 \equiv \forall
  (\mathrm{Subtype}\, \Bool\, \lambda x . t_1) \lambda y . t_2$, where
  $x,y \notin FV(t_1,t_2)$. If $t_1 \supset t_2 \ipreduces \true$ and
  $t_2 \supset t_1 \ipreduces \true$, then $t_1 \ipeqvred t_2$.
\end{lemma}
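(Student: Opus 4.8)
The plan is to reduce the lemma to the one-sided claim that $t_1 \supset t_2 \ipreduces \true$ already forces either $t_1 \ipreduces \false$ or ($t_1 \ipreduces \true$ and $t_2 \ipreduces \true$); the symmetric hypothesis then pins down $t_1 \ipeqvred t_2$ by a short case analysis.

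For the one-sided claim, set $\sigma \equiv \mathrm{Subtype}\,\Bool\,(\lambda x . t_1)$, so that $t_1 \supset t_2 \equiv \forall\,\sigma\,(\lambda y . t_2)$. Since $\true$ is in normal form and $\ipcontr$ is Church--Rosser (Corollary~\ref{cor_cr}), any reduction $\forall\,\sigma\,(\lambda y . t_2) \ipreduces \true$ must at some stage contract at the root by a $\forall$-rule; inspecting the rules of $R$ exactly as in the proof of Lemma~\ref{lem_forall}, this is possible only if $\sigma \ipreduces \tau$ for some $\tau \in \Tc$ and $\forall\,\tau\,(\lambda y . t_2) \ipreduces \true$. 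I would then apply Lemma~\ref{lem_subtype} to the reduction $\mathrm{Subtype}\,\Bool\,(\lambda x . t_1) \ipreduces \tau$, obtaining $\tau' \in \Tc$ with $\tau \subseteq \tau'$ and $\Bool \ipreduces \tau'$; as $\Bool$ is in normal form, $\tau' \equiv \Bool$. Taking $t_3 := \true$ (we have $\true \succ \true \in \Bool$, hence $\true \vtr \Bool$), part~(a) of Lemma~\ref{lem_subtype} gives $(\lambda x . t_1)\,\true \vtr \Bool$, and since $x \notin FV(t_1)$ we have $(\lambda x . t_1)\,\true \ipcontr t_1$, so, using confluence and the fact that $\vtr \Bool$ is inherited by reducts, $t_1 \vtr \Bool$. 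But the only elements of $\Bool$ are the canonical constants $\true$ and $\false$, and by clause~(a) of the definition of $\succ$ a term is related by $\succ$ to $\true$ (resp. $\false$) only if it is identical to $\true$ (resp. $\false$); hence $t_1 \ipreduces \true$ or $t_1 \ipreduces \false$.

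If $t_1 \ipreduces \false$, the one-sided claim holds. If $t_1 \ipreduces \true$, then $(\lambda x . t_1)\,\true \ipcontr t_1 \ipreduces \true$, so part~(b) of Lemma~\ref{lem_subtype} (again with $t_3 := \true$) yields $\true \vtr \tau$, in particular $\tau \ne \emptyset$. Choosing any $c \in \tau$ we have $c \succ c$, hence $c \vtr \tau$, and Lemma~\ref{lem_is} gives $\Is{c}{\tau} \ipeqvred \true$; applying Lemma~\ref{lem_forall}(c) to $\forall\,\tau\,(\lambda y . t_2) \ipeqvred \true$ we obtain $(\lambda y . t_2)\,c \ipeqvred \true$, which, since $y \notin FV(t_2)$ and therefore $(\lambda y . t_2)\,c \ipcontr t_2$, means $t_2 \ipeqvred \true$, i.e. $t_2 \ipreduces \true$ by Corollary~\ref{cor_cr}. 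This establishes the one-sided claim.

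Finally I would combine the two hypotheses. From $t_1 \supset t_2 \ipreduces \true$ the one-sided claim gives $t_1 \ipreduces \false$, or both $t_1 \ipreduces \true$ and $t_2 \ipreduces \true$; in the second case $t_1 \ipeqvred \true \ipeqvred t_2$. In the first case, the one-sided claim applied to $t_2 \supset t_1 \ipreduces \true$ gives $t_2 \ipreduces \false$, or both $t_2 \ipreduces \true$ and $t_1 \ipreduces \true$; the latter would make $t_1$ reduce to both $\true$ and $\false$, impossible by Corollary~\ref{cor_cr} since these are distinct normal forms, so $t_2 \ipreduces \false$ and $t_1 \ipeqvred \false \ipeqvred t_2$. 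The only genuinely delicate point is the structural claim that a reduction of $\forall\,\sigma\,(\lambda y . t_2)$ to $\true$ forces $\sigma$ to reduce to a type of $\Tc$, but this is precisely the analysis already carried out in the proof of Lemma~\ref{lem_forall}; everything else is a routine combination of Lemmas~\ref{lem_subtype}, \ref{lem_forall}, \ref{lem_is} and the Church--Rosser property.
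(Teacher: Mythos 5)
Your proof is correct and follows essentially the same route as the paper's: both hinge on the observation that $\mathrm{Subtype}\,\Bool\,(\lambda x . t_1)$ reducing to a type in $\Tc$ forces $t_1$ to reduce to $\true$ or $\false$, and then conclude by the Church--Rosser property. The only difference is organizational: you package the endgame as a one-sided modus-ponens-style claim proved via Lemmas~\ref{lem_subtype} and~\ref{lem_forall}(c), whereas the paper symmetrically shows that both $t_1$ and $t_2$ reduce to Booleans and rules out a mismatch by noting that one of the two implications would then reduce to $\false$.
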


\begin{proof}
  Suppose $t_1 \supset t_2 \ipreduces \true$ and $t_2 \supset t_1
  \ipreduces \true$. We thus have
  \[
  \forall (\mathrm{Subtype}\, \Bool\, \lambda x . t_1) \lambda y . t_2
  \ipreduces \true
  \]
  which is only possible when $(\mathrm{Subtype}\, \Bool\, \lambda x
  . t_1) \ipreduces \tau \in \Tc$. But this in turn is only possible
  when $t_1 \ipreduces t_1' \in \{\true, \false\}$. By an analogous
  argument $t_2 \ipreduces t_2' \in \{\true, \false\}$. Now it is easy
  to see that if $t_1' \ne t_2'$ then $t_1 \supset t_2 \ipreduces
  \false$ or $t_2 \supset t_1 \ipreduces \false$.
\end{proof}

From now on, to avoid confusion, we use $r$, $r_1$, $r_2$, etc. to
denote terms of~$\I_s$, and $t$, $t_1$, $t_2$, etc. to denote semantic
terms.

\newcommand{\transla}[1]{\ensuremath{\llceil #1 \rrceil}}

\begin{definition}\label{def_model} \rm
  We define an $\I_s$-structure $\Mc = \langle A, \cdot,
  \valuation{}{}{} \rangle$ as follows. As~$A$ we take the set of
  equivalence classes of~$\ipeqvred$ on semantic terms. We denote the
  equivalence class of a semantic term~$t$ by $[t]$. We define $[t_1]
  \cdot [t_1] = [t_1 t_2]$. This is well-defined because $t_1
  \ipeqvred t_1'$ and $t_2 \ipeqvred t_2'$ imply $t_1 t_2 \ipeqvred
  t_1' t_2'$. To save on notation, we sometimes confuse~$[t]$
  with~$t$, where it does not lead to ambiguities.

  Let $v : V \to A$ be an $\Mc$-valuation. We define a function~$u$
  from~$V$ to semantic terms by $u(x) = t$ where~$t$ is an arbitrary
  but fixed semantic term such that $[t] = v(x)$. By $u[x/t]$ we
  denote a function~$w$ from~$V$ to semantic terms such that $w(y) =
  u(y)$ for $y \ne x$, and $w(x) = t$. By induction on the structure
  of an $\I_s$-term~$r$, we define a translation~$\transla{r}^u$ from terms
  of~$\I_s$ to semantic terms, parametrized by a function~$u$
  from~$V$ to semantic terms:
  \begin{itemize}
  \item $\transla{\mathrm{Is}} = \mathrm{Is}$,
    $\transla{\mathrm{Subtype}} = \mathrm{Subtype}$,
    $\transla{\mathrm{Fun}} = \mathrm{Fun}$, $\transla{\forall} =
    \forall$, $\transla{\vee} = \vee$, $\transla{\bot} = \false$,
    $\transla{\epsilon} = \mathrm{Choice}$, $\transla{\mathrm{Eq}} =
    \mathrm{Eq}$, $\transla{\mathrm{Cond}} = \mathrm{Cond}$,
    $\transla{\mathrm{Type}} = \mathrm{Type}$,
    $\transla{\mathrm{Prop}} = \mathrm{Bool}$, $\transla{\iota} =
    \bar{\iota}$ for $\iota \in \Tc_I$, $\transla{c_i^\iota} =
    c_i^\iota$ for $c_i^\iota \in \Cc$, $\transla{d_{i,j}^\iota} =
    d_{i,j}^\iota$ for $d_{i,j}^\iota \in \Dc$, $\transla{o_i^\iota} =
    o_i^\iota$ for $o_i^\iota \in \Oc$,
  \item $\transla{x}^u = u(x)$ for $x \in V$,
  \item $\transla{r_1 r_2}^u = \transla{r_1}^u \cdot \transla{r_2}^u$,
  \item $\transla{\lambda x \,.\, r}^u = \lambda y \,.\, \transla{r}^{u[x/y]}$
    where $x \in V$ and $y \in V^+$ is a fresh variable.
  \end{itemize}

  Now the interpretation~$\valuation{}{}{}$ is defined by
  \[
  \valuation{r}{v}{} = [\transla{r}^u]
  \]
  where~$u$ is the function from~$V$ to semantic terms, corresponding
  to~$v$, as defined above.
\end{definition}

\begin{theorem}\label{thm_consistent}
  The system~$e\I_s$ is consistent, i.e. $\not\proves_{e\I_s} \bot$.
\end{theorem}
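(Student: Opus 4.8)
The plan is to read off consistency from the soundness theorem together with the concrete model just constructed. By Theorem~\ref{thm_sound}, $\proves_{e\I_s}\bot$ would imply $\models_{e\I_s}\bot$, i.e.\ that in \emph{every} $e\I_s$-model $\Ac$ one has $\valuation{\bot}{}{\Ac}=\valuation{\top}{}{\Ac}$ (recall $\top\equiv\bot\supset\bot$, and that $\Ac,u\models\varphi$ abbreviates $\valuation{\varphi}{u}{\Ac}=\valuation{\top}{}{\Ac}$). So it suffices to exhibit one $e\I_s$-model in which these two interpretations differ, and the claim is that the $\I_s$-structure $\Mc$ of Definition~\ref{def_model} does the job.

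The bulk of the work is verifying that $\Mc$ is an $e\I_s$-model, i.e.\ that it validates every clause of Definition~\ref{def_I_s_model} as well as (ef) and (eb). Clauses (var), (app), ($\beta$), (fv) are immediate from the definition of the translation $\transla{-}$, of the operation $\cdot$ on $\ipeqvred$-classes, and from the fact that $\ipcontr$ subsumes $\beta$-reduction; ($\xi$) follows since, evaluating at a fresh variable, two $\lambda$-terms that agree on all arguments must be $\ipeqvred$-equivalent. Clauses (pt), (it), (o1), (o2), (d1), (c1), (c2) are read off directly from the rules of $R$. For the remaining, $\mathrm{Is}$-related clauses, the pivotal fact is Lemma~\ref{lem_is}: for $\tau\in\Tc$, $\mathrm{Is}\cdot[t]\cdot[\tau]=\top$ iff $t\vtr\tau$; combined with the rule $\Is{\tau}{\Type}\to\true$ this yields $\Tc^\Mc=\{[\tau]\mid\tau\in\Tc\}$. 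Then ($\forall_\top$)--($\forall_e'$) follow from Lemma~\ref{lem_forall}; ($\to_i$), ($\to_e$), ($\to_t$) from Lemma~\ref{lem_fun}; (s1)--(s4) from Lemma~\ref{lem_subtype}; (i1), (i2) from Lemma~\ref{lem_inductive} and the constructor/destructor rules (using that $\succ$ into an atomic type admits no nontrivial instances); (pr) and ($\supset_{t2}$) from Lemma~\ref{lem_is} and Lemma~\ref{lem_supset_t2}; ($\vee_1$), ($\vee_2$), (eq) from Lemma~\ref{lem_or_eq_choice}, and ($\epsilon$) from Lemmas~\ref{lem_forall} and~\ref{lem_or_eq_choice}; ($\bot$) holds vacuously by the Church--Rosser property (Corollary~\ref{cor_cr}), since $[\false]\ne[\true]$. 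Finally (ef) follows from Lemma~\ref{lem_ext_fun} together with Lemma~\ref{lem_I_s_ef_condition}, and (eb) from Lemma~\ref{lem_ext_bool}. The engine behind all the $\forall$/$\to$/$\mathrm{Subtype}$ cases is Corollary~\ref{cor_succ_arg}, that quantification over canonical constants of a type coincides with quantification over all terms of it, which itself rests on Corollary~\ref{cor_cr}.

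Granting that $\Mc$ is an $e\I_s$-model, nontriviality is short. Unwinding Definition~\ref{def_model}, $\valuation{\bot}{}{\Mc}=[\false]$, whereas $\valuation{\top}{}{\Mc}=\valuation{\bot\supset\bot}{}{\Mc}=[\forall\,(\mathrm{Subtype}\,\Bool\,(\lambda x.\false))\,(\lambda x.\false)]=[\true]$ by Lemma~\ref{lem_truth_value}. Since $\true$ and $\false$ are distinct $\ipcontr$-normal forms, the Church--Rosser property (Corollary~\ref{cor_cr}) gives $\true\not\ipeqvred\false$, hence $[\true]\ne[\false]$. Therefore $\Mc\not\models_{e\I_s}\bot$, and by Theorem~\ref{thm_sound} we conclude $\not\proves_{e\I_s}\bot$; a fortiori $\not\proves_{\I_s}\bot$ as well.

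The hard part is not any single step but the cumulative bookkeeping of the first one: matching each of the many model conditions to the appropriate lemma, and first establishing the routine prerequisites — that $\transla{-}$ commutes with substitution and behaves correctly under the passage between an $\Mc$-valuation $v$ and its associated function $u:V\to\,$(semantic terms) — which are needed even to make sense of the conditions. All the genuine difficulty has already been absorbed into the commutation and postponement lemmas (Corollaries~\ref{cor_cr}, \ref{cor_succ_arg} and Lemmas~\ref{lem_is}--\ref{lem_ext_bool}); once $\Mc$ is certified to be an $e\I_s$-model, nontriviality and hence consistency drop out at once.
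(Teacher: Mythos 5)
Your proposal is correct and follows essentially the same route as the paper: verify that the structure $\Mc$ of Definition~\ref{def_model} satisfies all clauses of Definition~\ref{def_I_s_model} (plus (ef), (eb)) by appeal to the same battery of lemmas, then combine soundness (Theorem~\ref{thm_sound}) with $[\true]\ne[\false]$, which the Church--Rosser property guarantees since $\true$ and $\false$ are distinct normal forms. The minor differences in which lemma is cited for which clause (e.g.\ deriving (pr) from Lemma~\ref{lem_is} rather than Lemma~\ref{lem_truth_value}) are immaterial.
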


\begin{proof}
  We show that~$\Mc$ is an $e\I_s$-model. We need to check the
  conditions in Definition~\ref{def_I_s_model}. Conditions~(var)
  and~(app) follow directly from the definition of~$\Mc$. For
  condition~($\beta$) note that $\valuation{\lambda x \,.\, r}{v}{}
  \cdot [t] = [\transla{\lambda x \,.\, r}^u] \cdot [t] = [(\lambda y
    \,.\, \transla{r}^{u[x/y]}) t] = [\transla{r}^{u[x/t]}] =
  \valuation{r}{v[x/[t]]}{}$ where~$u$ is like in
  Definition~\ref{def_model}. Condition~(fv) is obvious from the
  definition of~$\valuation{}{}{}$. For condition~($\xi$) suppose
  $\valuation{\lambda x \,.\, r_1}{v}{} \cdot [t] = \valuation{\lambda
    x \,.\, r_2}{v}{} \cdot [t]$ for every semantic term~$t$. Then in
  particular this holds for any variable $y \in V^+$. We have
  $\valuation{\lambda x \,.\, r_1}{v}{} \cdot [t] = [(\lambda y \,.\,
    \transla{r_1}^{u[x/y]}) y] = [\transla{r_1}^{u[x/y]}]$ and
  $\valuation{\lambda x \,.\, r_2}{v}{} \cdot [t] = [(\lambda y \,.\,
    \transla{r_2}^{u[x/y]}) y] = [\transla{r_2}^{u[x/y]}]$, where~$u$
  is like in Definition~\ref{def_model}. Hence
  $[\transla{r_1}^{u[x/y]}] = [\transla{r_2}^{u[x/y]}]$. But
  $\valuation{\lambda x \,.\, r_1}{v}{} = [\lambda y \,.\,
    \transla{r_1}^{u[x/y]}]$ and $\valuation{\lambda x \,.\, r_2}{v}{}
  = [\lambda y \,.\, \transla{r_2}^{u[x/y]}]$.

  Condition~(pr) follows from
  Lemma~\ref{lem_truth_value}. Condition~(pt) follows directly from
  the definition of~$R$.  Conditions~($\forall_\top$),
  ($\forall_\bot$), ($\forall_e$) and~($\forall_e'$) follow from
  Lemma~\ref{lem_forall} and Lemma~\ref{lem_is}. Conditions~($\vee_1$)
  and~($\vee_2$) follow from
  Lemma~\ref{lem_or_eq_choice}. Condition~($\supset_{t2}$) follows
  from Lemma~\ref{lem_supset_t2}. Condition~($\bot$) follows from
  Lemma~\ref{lem_truth_value} and the Church-Rosser property.
  Conditions~($\to_i$) and~($\to_e$) follow from Lemma~\ref{lem_fun}
  and Lemma~\ref{lem_is}. Condition~($\to_t$) follows from the
  definition of~$R$ and the Church-Rosser
  propery. Conditions~(s1)-(s3) follow from Lemma~\ref{lem_subtype}
  and Lemma~\ref{lem_is}. Condition~(s4) follows easily from
  definitions and Lemma~\ref{lem_is}. Conditions~(o1), (o2), (d1) and
  (i2) follow directly from definitions. Condition~(i1) follows from
  Lemma~\ref{lem_inductive}. Conditions~($\epsilon$) and~(eq) follow
  from Lemma~\ref{lem_or_eq_choice}. Conditions~(c1) and~(c2) are
  obvious from the definition of~$R$ and the Church-Rosser
  property. Condition~(ef) follows from
  Lemma~\ref{lem_ext_fun}. Condition~(eb) follows from
  Lemma~\ref{lem_ext_bool}.

  Now suppose $\proves_{e\I_s} \bot$. Then by Theorem~\ref{thm_sound}
  we have $\Mc \models \bot$, so $\valuation{\bot}{}{} = [\true]$. But
  we have $\valuation{\bot}{}{} = [\false]$, which implies $\true
  \ipeqvred \false$. This is impossible by the Church-Rosser property
  of~$\ipcontr$.
\end{proof}

\newpage
\renewcommand{\thesection}{Appendix~\Alph{section}}
\tocless\section{Complete Translation of First-Order Logic}\label{sec_embedding}
\renewcommand{\thesection}{\Alph{section}}
\addcontentsline{toc}{section}{\thesection\hspace{0.5em} Complete
  Translation of First-Order Logic}

In this appendix we show that the translation from
Sect.~\ref{sec_consistent} restricted to first-order logic is
complete, i.e. we prove Theorem~\ref{thm_embedding_01}. The method of
the proof is essentially the same as in~\cite{Czajka2011}, and it is a
relatively simple application of the construction from the previous
appendix.

First, let us precisely state the definition of the
system~$\mathrm{FO}$ of classical first-order logic.

\newcommand{\FO}{{\ensuremath{{\mathrm{FO}}}}}

\begin{definition} \rm
  \begin{itemize}
  \item The \emph{types} of~$\FO$ are given by
    \[
    \Tc \;\; ::= \;\; o \;|\; \delta \;|\; \delta \to \Tc
    \]
  \item The set of terms of $\FO$ of type~$\tau$, denoted~$T_\tau$, is
    defined as follows:
    \begin{itemize}
    \item $V_\delta \subseteq T_\delta$,
    \item $\Sigma_\tau \subseteq T_\tau$,
    \item if $t_1 \in T_{\delta\to\tau}$ and $t_2 \in T_\delta$ then
      $t_1 t_2 \in T_\tau$,
    \item $\bot \in T_o$,
    \item if $\varphi, \psi \in T_o$ then $\varphi \supset \psi \in
      T_o$,
    \item if $x \in V_\delta$ and $\varphi \in T_o$ then $\forall x :
      \delta \,.\, \varphi \in T_o$,
    \end{itemize}
    where~$V_\delta$ is a countable set of variables and for each
    $\tau \in \Tc \setminus \{o\}$ the set~$\Sigma_\tau$ is a
    countable set of constants. We assume that the sets $\Sigma_\tau$
    are all pairwise disjoint, and disjoint with~$V_\delta$. Terms of
    type~$o$ are \emph{formulas}. As usual, we omit spurious brackets
    and assume that application associates to the left. We identify
    $\alpha$-equivalent terms, i.e. terms differing only in the names
    of bound variables are considered identical.
  \item The system~$\FO$ is given by the following rules and axioms,
    where~$\Delta$ is a finite set of formulas, $\varphi, \psi$ are
    formulas. The notation $\Delta, \varphi$ is a shorthand for
    $\Delta \cup \{\varphi\}$.

    \smallskip

    {\bf Axioms}
    \begin{itemize}
    \item $\Delta, \varphi \proves \varphi$
    \item $\Delta \proves ((\varphi \supset \bot) \supset \bot)
      \supset \varphi$
    \end{itemize}

    {\bf Rules}

    \begin{center}
      \begin{tabular}{lr}
        \(
          {\supset_i^{\FO}:}\; \inferrule{\Delta, \varphi \proves
            \psi}{\Delta \proves \varphi \supset \psi}
        \)
        &
        \(
          {\supset_e^{\FO}:}\; \inferrule{\Delta \proves \varphi \supset
            \psi \;\;\; \Delta \proves \varphi}{\Delta \proves \psi}
        \)
        \\
        & \\
        \(
          {\forall_i^{\FO}:}\; \inferrule{\Delta \proves
            \varphi}{\Delta \proves \forall x : \delta \,.\, \varphi} \; x \notin FV(\Delta)
        \)
        \;\;\;
        &
        \;\;\;
        \(
          {\forall_e^{\FO}:}\; \inferrule{\Delta \proves \forall x : \delta \,.\, \varphi}{\Delta
            \proves \varphi[x/t]}\; t \in T_\delta
        \)
      \end{tabular}
    \end{center}
  \end{itemize}
  Note that the system~$\FO$ is a restriction of~\mbox{CPRED$\omega$}.

  A first-order structure~$\Ac$ is a pair $\langle A, \{ f_c \;|\; c
  \in \Sigma_\tau, \tau \in \Tc \} \rangle$ where~$A$ is the non-empty
  universe and~$f_c$ are interpretations of constants (functions or
  relations on~$A$ of appropriate arity, or elements
  of~$A$). By~$\valuation{c}{}{\Ac}$ we denote the interpretation
  of~$c$ in~$\Ac$. A first-order valuation~$u$ is a function from
  variables to the universe of a first-order structure. We define the
  relations of satisfaction $\Ac \models_\FO \varphi$ and semantic
  consequence $\Delta \models_\FO \varphi$ in an obvious way.
\end{definition}

The following is a well-known result from elementary logic.

\begin{theorem}
  $\Delta \proves_\FO \varphi$ iff $\Delta \models_\FO \varphi$
\end{theorem}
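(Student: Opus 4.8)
The plan is to establish the two directions separately: soundness is a routine induction, and completeness is the classical Henkin construction.

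For the soundness direction, $\Delta \proves_\FO \varphi \Rightarrow \Delta \models_\FO \varphi$, I would induct on the length of the derivation. The axiom $\Delta, \varphi \proves \varphi$ is immediate. The axiom $((\varphi \supset \bot) \supset \bot) \supset \varphi$ holds in every first-order structure because the two-valued interpretation of $\supset$ and $\bot$ validates double-negation elimination. The rules $\supset_i^\FO$ and $\supset_e^\FO$ are handled by the usual case analysis on the truth value of the antecedent; $\forall_i^\FO$ uses that $x \notin FV(\Delta)$, so $\Ac, u \models \Delta$ is unaffected by changing the value assigned to $x$; and $\forall_e^\FO$ uses the substitution lemma $\valuation{\varphi[x/t]}{u}{\Ac} = \valuation{\varphi}{u[x \mapsto \valuation{t}{u}{\Ac}]}{\Ac}$, itself proven by induction on $\varphi$.

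For completeness, $\Delta \models_\FO \varphi \Rightarrow \Delta \proves_\FO \varphi$, I would argue contrapositively. If $\Delta \notproves_\FO \varphi$ then $\Delta \cup \{\varphi \supset \bot\}$ is consistent, since otherwise one derives $\Delta \proves ((\varphi \supset \bot) \supset \bot)$ by $\supset_i^\FO$, and then $\Delta \proves \varphi$ by the double-negation axiom and $\supset_e^\FO$. So it suffices to show that every consistent set $\Gamma$ of formulas has a model. I would enlarge the language of $\FO$ by countably many fresh constants of type $\delta$, enumerate the formulas of the form $\forall x : \delta \,.\, \neg\psi$, and successively add for each a Henkin axiom $(\neg\forall x : \delta \,.\, \neg\psi) \supset \psi[x/c]$ with a fresh witness constant $c$, checking that consistency is preserved at each stage (if not, freshness of $c$ and $\forall_i^\FO$ would yield a contradiction inside $\Gamma_n$). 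Taking the union over all stages and extending to a maximal consistent set $\Gamma^*$ by a Lindenbaum construction, I would define the term structure $\Ac$ whose universe is the (non-empty) set of closed terms of type $\delta$ in the extended language, interpreting each function constant syntactically and each $n$-ary relation constant $r$ by declaring $r\,t_1\ldots t_n$ to hold iff $r\,t_1\ldots t_n \in \Gamma^*$. Since $\FO$ has no equality symbol, no quotient is needed.

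The main obstacle, and the only genuinely substantive step, is the truth lemma: for every sentence $\chi$ over the extended language, $\Ac \models_\FO \chi$ iff $\chi \in \Gamma^*$, proven by induction on the structure of $\chi$ — the atomic case being the definition of $\Ac$, the $\supset$ and $\bot$ cases using maximality and consistency of $\Gamma^*$, and the $\forall$ case using closure of $\Gamma^*$ under $\forall_e^\FO$ for one direction and the Henkin witnesses for the other. Given the truth lemma, $\Ac$ restricted to the original signature satisfies $\Gamma$, hence $\Delta$, while $\varphi \notin \Gamma^*$ gives $\Ac \not\models_\FO \varphi$, so $\Delta \not\models_\FO \varphi$. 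Since the whole argument is entirely standard, in the actual write-up I would simply cite a textbook treatment of Henkin-style completeness rather than reproduce the details.
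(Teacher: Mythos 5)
Your proposal is correct: soundness by induction on derivations and completeness via a Henkin term-model construction is the standard argument, and the only wrinkle specific to this formulation (witnesses phrased through $\neg\forall x:\delta\,.\,\neg\psi$, requiring a provable double-negation shuffle in the $\forall$-case of the truth lemma) is routine. The paper itself offers no proof at all --- it states this as ``a well-known result from elementary logic'' --- so your concluding remark that one would simply cite a textbook treatment matches exactly what the authors do.
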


The mappings~$\transl{-}$ and~$\Gamma(-)$ are defined exactly as in
Sect.~\ref{sec_consistent}, restricting to terms of~$\FO$. Recall that
we assume $V_\delta \subseteq V$ and $\Sigma_\tau \subseteq \Sigma_s$
for~$\tau \in \Tc$.

\begin{theorem}
  If $\Delta \proves_\FO \varphi$ then $\Gamma(\Delta, \varphi),
  \transl{\Delta} \proves_{\I_s} \transl{\varphi}$.
\end{theorem}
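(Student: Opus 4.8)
The statement is, in essence, a corollary of Theorem~\ref{thm_sound_hol}, since $\FO$ is a sub-system of $\mathrm{CPRED}\omega$. The plan is first to fix the set of base types of $\mathrm{CPRED}\omega$ to be $\B=\{\delta\}$, so that every $\FO$-type is literally a $\mathrm{CPRED}\omega$-type and every $\FO$-term of type $\tau$ is a $\mathrm{CPRED}\omega$-term of type $\tau$; the only point of friction is that $\FO$ takes $\bot$ as a primitive constant of type $o$ while in $\mathrm{CPRED}\omega$ (and in the translation $\transl{-}$) the symbol $\bot$ abbreviates $\forall p:o\,.\,p$. I would resolve this either by identifying $\FO$'s primitive $\bot$ with that abbreviation, or, equivalently, by extending the translation of Sect.~\ref{sec_consistent} to $\FO$ by the single clause $\transl{\bot}=\bot$ (the $\I_s$-primitive) — the natural reading of ``$\transl{-}$ restricted to terms of $\FO$''. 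In either case $\transl{-}$ and $\Gamma(-)$ on $\FO$-terms coincide with the restrictions of the maps from Sect.~\ref{sec_consistent}, and in particular Lemmas~\ref{lem_transl_is_type}, \ref{lem_transl_typed}, \ref{lem_transl_subst} and~\ref{lem_inhabited} apply to them.

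Next I would check that each axiom and rule of $\FO$ is available in $\mathrm{CPRED}\omega$ over the same finite set of formulas. The axiom $\Delta,\varphi\proves\varphi$ is an axiom of $\mathrm{CPRED}\omega$ verbatim. The double-negation axiom $((\varphi\supset\bot)\supset\bot)\supset\varphi$ follows from the classical axiom $\forall p:o\,.\,((p\supset\bot)\supset\bot)\supset p$ of $\mathrm{CPRED}\omega$ by one application of $\forall_e^P$ with $p:=\varphi$, using $\varphi\in T_o$. The rules $\supset_i^\FO$, $\supset_e^\FO$, $\forall_i^\FO$, $\forall_e^\FO$ are precisely $\supset_i^P$, $\supset_e^P$, $\forall_i^P$, $\forall_e^P$, the last two instantiated at $\tau=\delta$ (and $\FO$-variables, resp.\ $\FO$-terms, of type $\delta$ are $\mathrm{CPRED}\omega$-variables, resp.\ terms, of type $\delta$). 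Hence, by a trivial induction on the length of the derivation, $\Delta\proves_\FO\varphi$ implies $\Delta\proves_{\mathrm{CPRED}\omega}\varphi$, and Theorem~\ref{thm_sound_hol} then gives $\Gamma(\Delta,\varphi),\transl{\Delta}\proves_{\I_s}\transl{\varphi}$.

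If one prefers a self-contained argument, the alternative is a direct induction on the $\FO$-derivation mirroring the proof of Theorem~\ref{thm_sound_hol}: the cases of $\supset_i^\FO$, $\supset_e^\FO$, $\forall_i^\FO$, $\forall_e^\FO$ are handled exactly as in Appendix~C, using the derived $\I_s$-rules $\supset_i$, $\supset_e$, axiom~1 of $\I_s$, Lemmas~\ref{lem_transl_is_type}, \ref{lem_transl_typed} and~\ref{lem_transl_subst}, Lemma~\ref{lem_inhabited} (to discharge the free $\FO$-variables that occur in a $\supset_e^\FO$-premise but not in the conclusion), and weakening. For the identity axiom one uses axiom~1 of $\I_s$. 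For the double-negation axiom one notes $\Gamma(\varphi)\proves_{\I_s}\transl{\varphi}:\Prop$ by Lemma~\ref{lem_transl_typed}, and then derives $\neg\neg\transl{\varphi}\supset\transl{\varphi}$ in $\I_s$ from the law of excluded middle (axiom~$c$) by $\vee_e$, $\supset_i$ and $\bot_e$, which is exactly $\transl{((\varphi\supset\bot)\supset\bot)\supset\varphi}$ under the $\bot$-convention fixed above; here $\bot:\Prop$ comes from axiom~$\bot_t$.

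There is no genuinely hard step: $\FO$ being a restriction of $\mathrm{CPRED}\omega$, the theorem is immediate from Theorem~\ref{thm_sound_hol}. The only thing that needs care is the bookkeeping around $\bot$ (primitive in $\FO$ versus defined in $\mathrm{CPRED}\omega$) and making sure the constant/variable declarations and the non-emptiness witness $y:\delta$ collected in $\Gamma(\Delta,\varphi)$ line up so that Lemmas~\ref{lem_transl_typed} and~\ref{lem_inhabited} apply verbatim; both are routine.
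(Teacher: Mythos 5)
Your proposal is correct and follows the paper's own route: the paper proves this theorem simply by observing that it is a special case of Theorem~\ref{thm_sound_hol}, since $\FO$ is a restriction of CPRED$\omega$. Your additional bookkeeping about the primitive versus defined $\bot$ and the instantiation of the classical axiom is a reasonable elaboration of a point the paper leaves implicit, but it does not change the argument.
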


\begin{proof}
  This is a special case of Theorem~\ref{thm_sound_hol}.
\end{proof}

In this section we use $s$, $s_1$, $s_2$, etc. for terms of~$\I_s$. We
assume that there is a fresh constant~$\delta \in \Sigma_s$. This
constant will represent the first-order universe. We use the notation
$\tau^n\to\tau$ for $\tau\to\ldots\to\tau$ where~$\tau$ occurs~$n+1$
times, where $n > 0$.

\begin{theorem}
  If $\Gamma(\Delta, \varphi), \transl{\Delta} \proves_{e\I_s}
  \transl{\varphi}$ then $\Delta \proves_\FO \varphi$.
\end{theorem}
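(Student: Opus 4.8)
The plan is to argue by contraposition, using soundness of the semantics (Theorem~\ref{thm_sound}) together with the concrete $e\I_s$-model of Appendix~\ref{sec_construction}. Suppose $\Delta \not\proves_{\FO} \varphi$. By completeness of classical first-order logic there are a first-order structure $\Ac$ with non-empty universe $A$ and a valuation $u_0$ such that $\Ac, u_0 \models_{\FO} \Delta$ but $\Ac, u_0 \not\models_{\FO} \varphi$. Without loss of generality $A$ is a set of fresh constants, disjoint from every symbol used so far. I would instantiate the model construction of Appendix~\ref{sec_construction} with parameter set $\delta := A$, obtaining the $e\I_s$-model $\Mc$ of Definition~\ref{def_model} (an $e\I_s$-model by Theorem~\ref{thm_consistent}), and additionally fix the interpretation of the first-order constants in $\Mc$ as follows: for $c \in \Sigma_\delta$ put $\valuation{c}{}{\Mc} = [c']$ where $c' \in \delta$ names $\valuation{c}{}{\Ac} \in A$; for a function constant $c$ of arity $n$ put $\valuation{c}{}{\Mc} = [e_c]$ where $e_c \in \Sigma_f$ is the canonical constant whose associated function is the currying of $\valuation{c}{}{\Ac}$, an element of $\delta^{(\delta,\ldots,\delta)} \in \Tc$; and for a relation constant $c$ of arity $n$ put $\valuation{c}{}{\Mc} = [e_c]$ with $\Fc(e_c) \in \Bool^{(\delta,\ldots,\delta)} \in \Tc$ the characteristic function of $\valuation{c}{}{\Ac}$ (with $\true,\false$ for holds/does-not-hold). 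Since no condition in Definition~\ref{def_I_s_model} mentions constants from $\Sigma_s$, $\Mc$ remains an $e\I_s$-model after this extension.

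The key observation is that $\delta^\Mc = \{[c] \mid c \in A\}$ and that $c \mapsto [c]$ is a bijection onto $\delta^\Mc$: by Lemma~\ref{lem_is}, $[t] \in \delta^\Mc$ iff $t \vtr \delta$, and since a constant of $\delta$ is not of function type, $t \ipreduces t' \succ c \in \delta$ forces $t' \equiv c$ by clause~(a) of the definition of $\succ$, so $[t] = [c]$; injectivity is Church--Rosser (distinct constants are distinct normal forms). Let $v$ be the $\Mc$-valuation with $v(x) = [u_0(x)]$ on the free variables of $\Delta,\varphi$ (identifying an element of $A$ with its naming constant), and $v(y) = [a_0]$ for a fixed $a_0 \in A$ elsewhere. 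I would then establish the correspondence lemma by induction: (i) for every first-order term $t \in T_\delta$, $\valuation{\transl{t}}{v}{\Mc} = [c]$ where $c \in \delta$ names $\valuation{t}{}{\Ac,u_0}$ --- the variable and constant cases are immediate, and the application case uses clause~(6) of $\ipcontr$ together with $e_c \succ e_c$; and (ii) for every first-order formula $\psi$, $\Ac, u_0 \models_{\FO} \psi$ iff $\valuation{\transl{\psi}}{v}{\Mc} = [\true]$, the failing case always yielding $[\false] \ne [\true]$. For an atomic $\psi$ this again combines (i) with clause~(6); for $\bot$ (which translates ultimately to $\false$) it is trivial; for $\psi_1 \supset \psi_2$ it follows from the induction hypothesis and condition~$(\supset_1)$ of Lemma~\ref{lem_I_s_model_conditions}, which on the two-element set $\{[\true],[\false]\}$ is exactly classical implication; for $\forall x : \delta\,.\,\psi_1$ it follows from Lemma~\ref{lem_forall}(a),(b) and the fact, noted above, that the arguments $t_2 \vtr \delta$ range modulo $\ipeqvred$ precisely over $\{[c] \mid c \in A\}$.

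Granting the correspondence lemma, the conclusion is routine. From $\Ac, u_0 \models_{\FO} \Delta$ and (ii) we get $\Mc, v \models \transl{\Delta}$, and from $\Ac, u_0 \not\models_{\FO} \varphi$ we get $\valuation{\transl{\varphi}}{v}{\Mc} \ne [\true]$, i.e.\ $\Mc, v \not\models \transl{\varphi}$. It remains to verify $\Mc, v \models \Gamma(\Delta,\varphi)$: the assertions $x : \delta$ and $c : \delta$ hold because the corresponding values lie in $\delta^\Mc$; the assertion $\delta : \Type$ holds since $\Is{\delta}{\Type} \ipcontr \true$ by the last rule of $R$; the witness assertion $y : \delta$ holds because $A$ is non-empty (this is precisely the "domains are non-empty" clause of $\Gamma(-)$); and each assertion $c : \transl{\delta^n \to \delta}$ resp.\ $c : \transl{\delta^n \to o}$ holds because its translate $\ipreduces$ the set-theoretic function-space type $\delta^{(\delta,\ldots,\delta)}$ resp.\ $\Bool^{(\delta,\ldots,\delta)}$ in which $e_c$ was placed, so $e_c \vtr$ that type by $e_c \succ e_c$ and Lemma~\ref{lem_is} applies. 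Thus $\Mc, v$ satisfies $\Gamma(\Delta,\varphi), \transl{\Delta}$ but not $\transl{\varphi}$; were $\Gamma(\Delta,\varphi), \transl{\Delta} \proves_{e\I_s} \transl{\varphi}$, Theorem~\ref{thm_sound} would give the contradictory $\Gamma(\Delta,\varphi), \transl{\Delta} \models_{e\I_s} \transl{\varphi}$. Hence $\Delta \proves_{\FO} \varphi$. I expect the main obstacle to be the correspondence lemma --- specifically the bookkeeping of currying first-order functions and relations into canonical constants of the iterated function-space types of $\Tc$, and the verification that $\delta^\Mc$ equals $\{[c] \mid c \in A\}$ so that $\forall$ over $\delta$ in $\Mc$ mirrors first-order quantification; the remaining steps are direct appeals to Appendix~\ref{sec_construction}.
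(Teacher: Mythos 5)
Your proposal is correct and follows essentially the same route as the paper's proof: refute $\Delta \proves_\FO \varphi$ semantically, instantiate the model construction of Appendix~F with $\delta$ equal to the universe of the countermodel, interpret function and relation constants by curried canonical constants in $\delta^{(\delta,\ldots,\delta)}$ and $\Bool^{(\delta,\ldots,\delta)}$, prove the term/formula correspondence by induction using the identification of $\delta^\Mc$ with the naming constants, and conclude via Theorem~\ref{thm_sound}. The only differences are cosmetic (contraposition versus reductio, and your more explicit statement of the bijection $A \cong \delta^\Mc$, which the paper justifies in one line from Lemma~\ref{lem_is}).
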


\begin{proof}
  Suppose $\Gamma(\Delta, \varphi), \transl{\Delta} \proves_{e\I_s}
  \transl{\varphi}$ but $\Delta \not\proves_\FO \varphi$. Then $\Delta
  \not\models_\FO \varphi$, so there exist a first-order
  structure~$\Ac$ and a first-order valuation~$u$ such that $\Ac, u
  \models_\FO \Delta$, but $\Ac, u \not\models_\FO \varphi$.

  We use the construction of Definition~\ref{def_model} to
  transform~$\Ac$ into an $e\I_s$-mo\-del~$\Mc$, by taking the set of
  constants~$\delta$ of~$\Mc$ to consist of the elements of the
  universe of~$\Ac$. We also need to extend the definition of the
  translation~$\transla{-}$ from Definition~\ref{def_model} to
  interpret the new constants that we added to the language
  of~$e\I_s$. We set $\transla{\delta} = \delta$. If $c \in
  \Sigma_{\delta}$ then we set $\transla{c} =
  \valuation{c}{}{\Ac}$. If~$f \in \Sigma_{\delta^n\to\delta}$ then
  let $c_f \in \delta^{(\delta,\ldots,\delta)}$ be such that
  $\Fc(c_f)(a_1)(a_2)\ldots(a_n) = \valuation{f}{}{\Ac}(a_1, \ldots,
  a_n)$ for any $a_1,\ldots,a_n \in \delta$. We set $\transla{f} =
  c_f$. Similarly, if~$r \in \Sigma_{\delta^n\to o}$ then let $c_r \in
  \Bool^{(\delta,\ldots,\delta)}$ be such that for all $a_1,\ldots,a_n
  \in \delta$ we have $\Fc(c_r)(a_1)(a_2)\ldots(a_n) = \true$ iff
  $\valuation{r}{}{\Ac}(a_1, \ldots, a_n)$ holds.

  Note that in~$\Mc$ we have $\delta = \{ a \in \Mc \;|\; \mathrm{Is}
  \cdot a \cdot \delta = \top \}$. This follows directly from
  Lemma~\ref{lem_is} and the fact that for $a \in \delta$ we have $a
  \notin \tau_2^{\tau}$, and hence if $t \vtr a$ then $t \ipeqvred
  a$. Note also that any first-order $\Ac$-valuation~$v$ is also an
  $\Mc$-valuation, if for a variable~$x$ we interpret~$v(x)$ as an
  element of the set~$\delta$ in~$\Mc$.

  Now it is easy to show by induction on the structure of a term~$t
  \in T_\tau$ that $\valuation{\transl{t}}{v}{\Mc} =
  \valuation{t}{v}{\Ac} \in \delta$ for any first order-valuation~$v$,
  where $\tau \in \Tc \setminus \{o\}$. Using this we verify by
  straightforward induction on the structure of a first-order
  formula~$\varphi$ that for any first-order valuation~$v$ we have
  $\valuation{\transl{\varphi}}{v}{\Mc} = \true$ iff $\Ac, v \models
  \varphi$, and $\valuation{\transl{\varphi}}{v}{\Mc} = \false$ iff
  $\Ac, v \not\models \varphi$. For instance, suppose $\varphi \equiv
  \forall x : \delta \,.\, \psi$. Then $\transl{\varphi} = \forall x :
  \delta \,.\, \transl{\psi}$. Since $\valuation{\delta :
    \Type}{v}{\Mc} = \true$, and $a \in \Ac$ iff $\mathrm{Is} \cdot a
  \cdot \delta = \true$, we have $\valuation{\transl{\varphi}}{v}{\Mc}
  = \true$ iff $\valuation{\transl{\psi}}{v[x/a]}{\Mc} = \true$ for
  every $a \in \A$ iff $\Ac, v[x/a] \models \psi$ for every $a \in \A$
  iff $\Ac, v \models \forall x : \delta \,.\, \psi \equiv \varphi$,
  where we use the inductive hypothesis in the second equivalence.

  It is easy to check that $\Mc, u \models_{e\I_s} \Gamma(\Delta,
  \varphi)$. Hence $\Mc, u \models_{e\I_s} \Gamma(\Delta, \varphi),
  \transl{\Delta}$, because $\Ac, u \models_\FO \Delta$. Since $\Ac, u
  \not\models_\FO \varphi$, we also have $\Mc, u \not\models_{e\I_s}
  \transl{\varphi}$. But by Theorem~\ref{thm_sound} and
  $\Gamma(\Delta, \varphi), \transl{\Delta} \proves_{e\I_s}
  \transl{\varphi}$ we have $\Mc, u \models_{e\I_s}
  \transl{\varphi}$. Contradiction.
\end{proof}

\end{document}